\algrenewcommand\algorithmicrequire{\textbf{Input:}}
\newcolumntype{C}[1]{>{\centering\let\newline\\\arraybackslash}m{#1}}
\newtheorem{prop}{Proposition}%[section]
\newtheorem{clm}{Claim}
\newtheorem{observation}{Observation}%[section]
\newtheorem{reduction}{Reduction Rule}
\newcommand{\OO}{{\mathcal O}}
\newcommand{\nph}{\textsf{NP}-hard\xspace}
\newcommand{\yes}{{\sc Yes}\xspace}
\newcommand{\fpt}{{\sf FPT}\xspace}
\newtheorem{branching rule}{Branching Rule}
\newtheorem{reduction rule}{Reduction Rule}
\newcommand{\FPT}{\textsf{\textup{FPT}}\xspace}
\newcommand{\WOH}{\textsf{\textup{W[1]-hard}}\xspace}
\newcommand{\W}{\textsf{\textup{W}}}
\newcommand{\Co}[1]{\ensuremath{\mathcal{#1}}\xspace}
\newcommand{\X}[1]{\ensuremath{\mathscr{#1}}\xspace}
\newcommand{\sse}{\subseteq}
\newcommand{\wtkp}{{\sf 2D-KP}\xspace}
\newcommand{\fit}{\text{\sc Size-Fitting}\xspace}
\newcommand{\phase}{\text{Phase}\xspace}
\newcommand{\smp}{{\sc Stable Marriage}\xspace}
\newcommand{\ie}{i.e.\xspace}
\newcommand{\wrt}{{with respect to}\xspace}
\renewcommand{\wp}{{with probability}\xspace}
\newcommand{\asm}{{\sc ASM}\xspace}
\newcommand{\amsfullnew}{{\sc Local Search-ASM}\xspace}
\newcommand{\ams}{{\sc LS-ASM}\xspace}
\newcommand{\mcq}{{\sc Multicolored Clique}\xspace}
\newcommand{\mcqsmall}{{\sc MCQ}\xspace}
\newcommand{\wtknapsack}{{\sc Two-Dimensional Knapsack}\xspace}
\newcommand{\defproblem}[3]{
  \vspace{1mm}
\noindent\fbox{
  \begin{minipage}{.96\textwidth}
  \begin{tabular*}{\textwidth}{@{\extracolsep{\fill}}lr} #1 \\ \end{tabular*}
  {\bf{Input:}} #2  \\
  {\bf{Question:}} #3
  \end{minipage}
  }
  \vspace{1mm}
}
\newcommand{\Ma}[1]{\textcolor{magenta}{#1}}
\newcommand{\hide}[1]{}
\newcommand{\il}[1]{\todo[inline, color={yellow!60}]{\small{#1}}}
\newcommand{\ma}[1]{\todo[color={green!70}]{\tiny{#1}}}
\definecolor{rvwvcq}{rgb}{0.08235294117647059,0.396078431372549,0.7529411764705882}
\definecolor{dtsfsf}{rgb}{0.8274509803921568,0.1843137254901961,0.1843137254901961}
\definecolor{mygreen}{rgb}{0.1803921568627451,0.49019607843137253,0.19607843137254902}
\newcommand{\pref}[1]{\ensuremath{\langle #1 \rangle}}
\newcommand{\roof}[1]{\ensuremath{\lceil #1 \rceil}}
\newcommand{\basevertex}[2]{\ensuremath{#1_{#2}}}
\newcommand{\specialvertex}[3]{\ensuremath{#1_{#2}^{#3}}}
\begin{document}

\title{On the (Parameterized) Complexity of Almost Stable Marriage %\thanks{A full paper containing all proofs and explanations has been appended to the paper.}
}
%
%\titlerunning{Abbreviated paper title}
% If the paper title is too long for the running head, you can set
% an abbreviated paper title here
%
\author{Sushmita Gupta\inst{1}\and
Pallavi Jain\inst{2}\and
Sanjukta Roy\inst{3}\and
Saket Saurabh\inst{3,4}\and
Meirav Zehavi\inst{2}}
\authorrunning{S. Gupta et al.}
% First names are abbreviated in the running head.
% If there are more than two authors, 'et al.' is used.
%
\institute{National Institute for Science Education and Research, HBNI, Bhubaneswar, India.\\
\email{sushmitagupta@niser.ac.in} \and
Ben-Gurion University of the Negev, Be’er Sheva, Israel\\
\email{pallavi@post.bgu.ac.il, meiravze@bgu.ac.il}\and
The Institute of Mathematical Sciences, HBNI, Chennai, India.\\
\email{sanjukta@imsc.res.in, saket@imsc.res.in}\and
University of Bergen, Norway
}
\maketitle              % typeset the header of the contribution

\begin{abstract}
% !TEX root = NASM-main.tex
In the {\sc Stable Marriage} problem, when the preference lists are complete, all agents of the smaller side can be matched. However, this need not be true when preference lists are incomplete. In most real-life situations, where agents participate in the matching market voluntarily and submit their preferences, it is natural to assume that each agent wants to be matched to someone in his/her preference list as opposed to being unmatched. In light of the Rural Hospital Theorem, we have to relax the ``no blocking pair'' condition for stable matchings in order to match more agents. In this paper, we study the question of matching more agents with fewest possible blocking edges. In particular, we find a matching whose size exceeds that of stable matching in the graph by at least $t$ and has at most $k$ blocking edges. 
We study this question in the realm of parameterized complexity with respect to several natural parameters, $k,t,d$, where $d$ is the maximum length of a preference list. Unfortunately, the problem remains intractable even for the combined parameter $k+t+d$. Thus, we extend our study to the local search variant of this problem, in which we search for a matching that not only fulfills each of the above conditions but is ``closest'', in terms of its symmetric difference to the given stable matching, and obtain an \FPT algorithm.  
\end{abstract}

% !TEX root = NASM-main.tex

\section{Introduction}\label{sec:intro}

%\il{To include: 1. Why are we staying close to the men optimal?(In Introduction)\Hi{Ans:} We are not. This was said only in reference to a question posed by a reviewer.....Mechanism designer knows all the preferences. Hence, it knows the men-optimal matching as well. Therefore, the goal of the mechanism designer is to produce a matching that is close to men-optimal matching with larger size and bounded in-stability.}

Matching various entities to available resources is of great practical importance, exemplified in matching college applicants to college seats, medical residents to hospitals, preschoolers to kindergartens, unemployed workers to jobs, organ donors to recipients, and so on. It is noteworthy that in the applications mentioned above, it is not enough to merely match an entity to any of the available resources. It is imperative, in fact, mission-critical, to create matches that fulfil some predefined notions of compatibility, suitability, acceptability, and so on. Gale and Shapley introduced the fundamental theoretical framework to study such two-sided matching markets in the 1960s. They envisioned a matching outcome as a {\it marriage} between the members of the two sides, and a desirable outcome representing a {\it stable marriage}. The algorithm proffered  by them has since attained wide-scale recognition as the Gale-Shapley stable marriage/matching algorithm \cite{GS62}. Stable marriage (or stable matching, in general) is one of the acceptability criteria for matching in which an unmatched pair of agent should not prefer each other over their matched partner. 
\par
%\il{See response to R2, possibly include here.}
Of the many characteristic features of the two-sided matching markets, there are certain aspects that stand out and are supported by both theoretical and empirical evidence--particularly notable is the curious aspect that for a given market with strict preferences on both sides,\footnote{In most real-life applications, it is unreasonable if not unrealistic to expect each of the agents to rank all the agents on the other side. That is, the graph $G$ is highly unlikely to be complete.} no matter what the stable matching outcome is, the specific {\em number} of resources matched on either side always remains the same. This fact encapsulated by The Rural Hospital's Theorem \cite{RHT1,RHT2} states that  no matter what stable matching algorithm is deployed, the exact {\em set} (rather than only the number) of resources that are matched on either side is the same. In other words, {\it there is a trade-off between size and stability such that any increase in size must be paid for by sacrificing stability.} Indeed, it is not hard to find instances in which as much as half of the available resources are unmatched in every stable matching. Such gross underutilization of critical and potentially expensive resources has not gone unaddressed by researchers. In light of the Rural Hospital Theorem, many variations have been considered, some important  ones being:  enforcing lower and upper capacities, forcing some matches, forbidding some matches, relaxing the notion of stability, and finally foregoing stability altogether in favor of size~\cite{BFWM10j,BMM10j,CM16j,IrvingManlove09b,MMT17j,Tomoeda18j}. %\ma{Reduce upto here}
\par
%
%\il{State the ALmost Stable Matching problem here and state the hardness result, following that introduce the local search variant}
%
We formalize the  trade-off mentioned above between size and stability in terms of the {\sc Almost Stable Marriage} problem.  The classical {\sc Stable Marriage} problem takes as an instance, a bipartite graph $G = (A \cup B, E)$, where $A$ and $B$ denote the set of vertices representing the agents on the two sides and $E$ denotes the set of edges representing acceptable matches between vertices on different sides, and a preference list of every vertex in $G$ over its neighbors.  Thus, the length of the preference list of a vertex is same as its degree in the graph.
A {\it matching} is defined as a subset of the set of edges $E$ such that no vertex appears in more than one edge in the matching. An edge in a matching represents a match such that the endpoints of a matching edge are said to be the {\it matching partners} of each other, and an unmatched vertex is deemed to be self-matched.  A matching $\mu$ is said to be {\it stable} in $G$ if there does not exist a {\it blocking edge} \wrt $\mu$, defined to be an edge $e\in E\setminus \mu$ whose endpoints rank each other higher (in their respective preference lists) than their matching partners in $\mu$.\footnote{Every candidate is assumed to prefer being matched to any of its neighbors to being self-matched.} The goal of the \smp problem is to find a stable matching. We define the {\sc Almost Stable Marriage} problem as follows.

\defproblem{{\sc Almost Stable Marriage} ({\sc ASM})}{A bipartite graph $G = (A\cup B, E)$, a set $\mathcal{L}$ containing the preference list of each vertex, and non-negative integers $k$ and $t$.}{Does there exist a matching whose size is at least $t$ more than the size of a stable matching in $G$ such that the matching has at most $k$ blocking edges?} 

% most $k$ blocking edges such that the symmetric difference $|\mu \triangle \eta | \leq q$?}
In {\sc ASM}, we are happy with a matching that is larger than a stable matching but may contain some blocking edges. The above problem quantifies these two variables: $t$ denotes the minimum increase in size, and $k$ denotes the maximum number of blocking edges we may tolerate. 
\par
We note that Bir\'o et al.~\cite{BMM10j} considered the problem of finding, among all matchings of the maximum size, one that has the fewest blocking edges, and showed the {\sf NP}-hardness of the problem even when the length of every preference list is at most three.  Since one can find a maximum matching and a stable matching in the given graph in polynomial time~\cite{micali1980v,GS62}, their {\sf NP}-hardness result implies  {\sf NP}-hardness for {\sc ASM} even when the length of every preference list is at most three by setting $t=\texttt{size of a maximum matching}- \texttt{size of a stable matching}$. %, we obtained that ${\sc ASM}$ is  {\sf NP}-hard even when the length of every preference list is at most three using the When viewed as an optimization problem, {\sc ASM} is the problem of finding a maximal matching, among all maximal matchings of a specified size, that has the fewest blocking edges. Thus, their {\sf NP}-hardness result implies {\sf NP}-hardness for {\sc ASM} even when the length of every preference list is at most three. 
We study the parameterized complexity of {\sc ASM}  with respect to parameters, $k$ and $t$, % and $d$ (the maximum degree of the graph), 
which is not implied by their reduction. Our first result exhibits a strong guarantee of intractability.
\begin{theorem}
\label{thm:asm}
{\sc ASM} is \W$[1]$-hard \wrt $k+t$, even when the  maximum degree is at most four. 
\end{theorem}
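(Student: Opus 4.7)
I would prove this by a parameterized reduction from \mcq, which is \WOH with respect to the number of colors $\ell$. Given an instance $(H, V_1, \ldots, V_\ell)$ of \mcq, the goal is to build an instance $(G, \mathcal{L}, k, t)$ of \asm with $k, t = \OO(\ell^2)$ in which every vertex has degree at most four, such that the answer to \asm is \yes iff $H$ has a multicolored clique.

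The construction would use two kinds of gadgets. For each color class $V_i$, a \emph{selection gadget} encodes the choice of a vertex $v \in V_i$: its canonical stable matching leaves exactly one distinguished endpoint unmatched, and for every $v \in V_i$ there is an augmentation that increases the size by one while incurring a fixed number of blocking edges internal to the gadget, and also ``activates'' certain interface edges that represent the incidences of $v$. For every pair $\{i,j\}$ with $i < j$, an \emph{edge gadget} admits an analogous $+1$ augmentation that is cheap (in blocking edges) precisely when the augmentations used in the $V_i$- and $V_j$-selection gadgets correspond to vertices $u \in V_i$ and $v \in V_j$ with $uv \in E(H)$, and strictly more expensive otherwise. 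To respect the degree bound of four, the single vertex of $H$ that must broadcast its selection to $\ell-1$ different edge gadgets is spread over a chain of auxiliary agents, each of degree at most four, whose preferences force consistent propagation of the choice along the chain.

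I would then set $t := \ell + \binom{\ell}{2}$ (one unit of gain from each selection gadget plus one from each edge gadget) and $k$ to the corresponding sum of internal blocking edges over all gadgets; both are $\OO(\ell^2)$. In the forward direction, a multicolored clique in $H$ prescribes exactly which augmentation to perform in each gadget, simultaneously meeting the size target $t$ and staying within the blocking budget $k$. For the converse, the size target $t$ is tight enough that every gadget must be augmented, while the blocking budget $k$ is tight enough to forbid any incompatible choice in any edge gadget; hence the selected vertices must pairwise agree with an edge of $H$, yielding the clique. The main obstacle I anticipate is calibrating the preferences on the propagation chains so that the degree-four constraint is respected while ensuring that any ``mixed'' behavior — a partial rotation in a selection gadget, or an augmentation in an edge gadget paired with inconsistent selections — strictly exceeds the blocking-edge budget, thereby ruling out cheating strategies that would otherwise fit the size target $t$ without corresponding to a clique.
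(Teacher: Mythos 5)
Your plan follows essentially the same route as the paper: a parameterized reduction from \mcq (the paper uses the regular-graph variant) with per-color selection gadgets and per-pair edge gadgets, degree four maintained by spreading each vertex's incidences along a path of auxiliary agents (plus binary-tree selectors in the paper for the interface vertices), the targets set to $t=k'=\ell+\binom{\ell}{2}$ with exactly one static blocking edge per activated gadget, and a reverse direction in which the tight size bound forces a perfect matching, which in turn forces one selection per gadget and consistency between vertex and edge selections on pain of an extra blocking edge. One small correction: each gadget must leave \emph{two} free endpoints under the stable matching so that its activation is a genuine augmenting path; otherwise the architecture and the calibration obstacles you identify are exactly those the paper's proof resolves.
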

%\vspace{-5mm}

%
We prove Theorem~\ref{thm:asm}, by showing a polynomial-time many-to-one parameter preserving reduction from the \mcq (\mcqsmall) problem on the regular graphs to {\sc ASM}. In a {\em regular graph}, the degree of every vertex is the same. In the \mcq problem on regular graphs, given a regular graph $G=(V,E)$ and a partition of $V(G)$ into $k$ parts, say $V_1,\ldots,V_k$; the goal is to decide the existence of a set $X\subseteq V(G)$ such that $\lvert X\cap V_i\rvert =1$, for all $i\in [k]$, and $G[X]$ induces a clique, that is, there is an edge between every pair of vertices in $G[X]$. \mcqsmall is known to be \W$[1]$-hard on regular graphs~\cite{cai2008parameterized}. 
\par
\begin{comment}
If given a matching $\mu$ in $G$, we wanted a matching of larger size, say $|\mu|+1$, then we need $\mu$-augmenting path(s), defined as a path in $G$ whose endpoints are unmatched by $\mu$ and every alternate edge is in $\mu$. If $|\eta|\geq t+|\mu|$, then we would require at least $t$ $\mu$-augmenting paths. Based on this we have $k \geq t$ because each $\mu$-augmenting path contains at least one blocking edge for $\eta$. Thus, Theorem~\ref{thm:asm} is also a statement of hardness \wrt $k$ alone. \todo{Hardness wrt combined parameter implies this.}

 As {\sf NP-hardness} implies that a problem is unlikely to admit a polynomial-time algorithm, similarly, {\sf W}-hardness   implies that the problem is unlikely to admit a \FPT algorithm. That is, 
Theorem~\ref{thm:asm} implies that {\sc ASM} is unlikely to admit an algorithm with running time $f(k,t)n^{\OO(1)}$, for any function $f$ that only depends on $k$ and $t$. We refer to~ \cite{DowneyFbook13,niedermeier06b,ParamAlgorithms15b} for further details on parameterized complexity. \todo{we can also skip this para.}
\end{comment}

In light of the intractability result in Theorem~\ref{thm:asm}, we are hard pressed to recalibrate our expectations of what is algorithmically feasible in an efficient manner. Therefore, we consider local search approach for this problem, in which, instead of finding any matching whose size is at least $t$ larger than the size of stable matching, we also want this matching to be ``closest'', in terms of its symmetric difference, to a stable matching. Such framework of local search has also been studied for other variants of the {\sc Stable Marriage} problem by Marx and Schlotter~\cite{DBLP:journals/disopt/MarxS11,DBLP:journals/algorithmica/MarxS10}. It has also been studied for several other optimization problems~\cite{FFLRSV12j,KBP03,KrokhinMarx08,MarxPCN,Marx08,MarxSchlotter10,MarxSchlotter11,Szeider11}.  This question is formally defined as follows.

\defproblem{\amsfullnew(\ams)}{A bipartite graph $G = (A\cup B, E)$, a set $\mathcal{L}$ containing the preference list of every vertex, a stable matching $\mu$, and non-negative integers $k$, $q$, and $t$.}{Does there exist a matching $\eta$ of size at least $|\mu|+t$ with at most $k$ blocking edges such that the symmetric difference between $\mu$ and $\eta$ is at most $q$?}

\newcommand{\Cal}[1]{\ensuremath{\mathcal{#1}}\xspace}

%\subsection{Our Contribution}
%\paragraph{Our Contribution.} 

\smallskip 
\begin{comment}
\noindent 
{\bf Our Contribution.}
The technical contribution of our study is threefold, consisting of several hardness results and a parameterized algorithm that complement each other. 

%\il{Place this somewhere appropriate}
%\todo[inline]{Write idea of reduction here if it doesn't break flow. Maybe something of following nature is enough.}

To prove Theorem~\ref{thm:asm}, the hardness of {\sc ASM}, we give a polynomial time reduction from the well-known \WOH problem \mcq  on regular graphs. The problem takes as input a graph $G$,  an integer $k$, and a partition of $V(G)$ into $k$ sets $V_1,V_2,\cdots,V_k$ such that each $V_i$ is an independent set. The goal is to decide if there exists a set $Z \subseteq V(G)$ such that $G[Z]$ is a clique (complete graph). 
%The reduction works by establishing a relationship between the vertices and edges of the instance of \mcq (call it \Cal{I}) to the edges of the instance of {\sc ASM} (call it \Cal{J}) in a manner that if there exists a solution for {\sc ASM} then the set of blocking edges yields a set of $k$ vertices and ${k\choose 2}$ edges that define a clique in \Cal{I}. 
This reduction is fairly technical and forms the backbone for all the other hardness reductions in this paper.  
\end{comment}
 Unsurprisingly perhaps, the existence of a stable matching in the proximity of which we wish to find a solution, does not readily mitigate the computational hardness of the problem, as evidenced by Theorem~\ref{thm:whard-k-and-t}, which is implied by the construction of an instance in the proof of Theorem~\ref{thm:asm} itself.

 %Using the same reduction, we obtain the following result.
\begin{theorem}\label{thm:whard-k-and-t}
\ams is \W$[1]$-hard \wrt $k+t$, even when maximum degree is at most four. 
\end{theorem}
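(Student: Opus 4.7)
The plan is to piggyback on the reduction constructed to prove Theorem~\ref{thm:asm}. That reduction takes an instance $(G,k)$ of \mcq on regular graphs and produces, in polynomial time, an \asm instance $(G',\mathcal{L},k',t')$ in which the maximum degree of $G'$ is at most four and $k'+t'$ is bounded by a function of $k$. To obtain an \ams instance, I would additionally compute a stable matching $\mu$ of $G'$ using the Gale--Shapley algorithm in polynomial time, and set the extra parameter $q$ to an upper bound on $|\mu\,\triangle\,\eta|$, where $\eta$ is the canonical matching produced by the forward direction of that reduction.

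In the forward direction, if $(G,k)$ is a \yes-instance of \mcq, then the matching $\eta$ built in Theorem~\ref{thm:asm} already satisfies $|\eta|\ge|\mu|+t'$ and has at most $k'$ blocking edges; a direct inspection of the gadgets must then confirm that the edits transforming $\mu$ into $\eta$ are confined to the gadgets corresponding to the $k$ chosen vertices and the $\binom{k}{2}$ clique edges, so the resulting symmetric difference is bounded by a function of $k$, and hence is at most $q$. In the reverse direction, any matching $\eta'$ witnessing the \ams instance trivially witnesses the underlying \asm instance (the symmetric-difference constraint is only an extra restriction, and $|\mu|$ coincides with the size of every stable matching of $G'$ by the Rural Hospital Theorem), so by the backward direction of Theorem~\ref{thm:asm} one recovers a multicolored clique in $G$.

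The main obstacle is the locality claim needed to bound $q$ by a function of $k$: one must argue that, outside the gadgets touched by the selected clique, $\eta$ can be taken to coincide with $\mu$. Since the reduction of Theorem~\ref{thm:asm} uses gadgets of bounded size and the clique affects only $O(k^2)$ of them, this locality is inherent to the construction and the bound on $q$ follows. Combined with the \WOH{}ness of \mcq on regular graphs~\cite{cai2008parameterized}, this yields \W$[1]$-hardness of \ams with respect to $k+t$, and the degree bound of four is inherited verbatim from $G'$.
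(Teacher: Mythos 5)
Your overall strategy is exactly the paper's: reuse the construction of Theorem~\ref{thm:asm} verbatim, supply the stable matching $\mu$ as part of the input, set $q$ to the symmetric difference realized by the canonical solution $\eta$, and note that the backward direction is inherited unchanged because the constraint $\lvert\mu\triangle\eta\rvert\le q$ only restricts the solution space. The forward direction and the degree bound of four carry over as you say.

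However, your ``main obstacle'' paragraph rests on a false premise. The gadgets of Theorem~\ref{thm:asm} are \emph{not} of bounded size: each vertex gadget contains a path on $2r+2$ vertices (where $r$ is the degree of the regular \mcqsmall instance) together with a binary-tree structure of depth $\log_2(\nicefrac{n}{2})$, and each edge gadget has an analogous tree of depth $\log_2(\nicefrac{m}{2})$. Consequently the edits confined to a single touched gadget number $\Theta(r+\log n)$ (resp.\ $\Theta(\log m)$), and the realized symmetric difference is $(2r+3)k+\nicefrac{3k(k-1)}{2}+4k\log_2(\nicefrac{n}{2})+2k(k-1)\log_2(\nicefrac{m}{2})$, which is emphatically not a function of $k$ alone. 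The paper remarks on this explicitly: it is precisely because $q$ cannot be bounded by a function of $k$ in this construction that Theorem~\ref{thm:smd w-hard} (parameter $k+q+t$) requires a separate reduction, one with shorter augmenting paths at the cost of unbounded degree. Fortunately, your false sub-claim is also unnecessary: the parameter in Theorem~\ref{thm:whard-k-and-t} is $k+t$ only, so $q$ need not be bounded by the parameter at all --- it may be any polynomial-time computable value, and setting it to the exact symmetric difference above (as your opening paragraph in effect already proposes) suffices for the forward direction, while the reverse direction never uses $q$. With that correction, your argument coincides with the paper's proof.
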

In our quest for a parameterization that makes the problem tractable, we investigate \ams\ \wrt $k+q+t$. %,  and %the maximum length of the preference list in the given instance, and 
%for that we have the following. 

% \Ma{Moreover, if \Cal{I} has a $k$-clique, then there exists a solution in \Cal{J} ....}

% In the reduction corresponding to each vertex and each edge of the  instance $(G,V_1,\cdots,V_k)$ of \mcqsmall, we add an edge in the instance of  {\sc ASM} (Of course, we add many more dummy edges). Due to the size constraint on the solution matching, we will introduce some blocking edges. To construct the solution of \mcqsmall, if the blocking edge corresponds to vertex of $V(G)$, then we pick that vertex in the solution of \mcq, and if it corresponds to edge, then we pick corresponding edge of $G$. We ensure that we have picked exactly one vertex from each set $V_i$, $i\in [k]$, and $\binom{k}{2}$ edges of $G$, otherwise either we will exceed the budget blocking edges or we will not attain the required size of matching.

%We begin our multivariate analysis of \ams with the parameterization of the problem by the number of blocking edges $k$, the increase in the size of the matching $t$ and the size of the symmetric difference $q$ between a solution matching $\eta$ and the given stable matching~$\mu$. In this setting, we derive the following theorem.

\begin{theorem}\label{thm:smd w-hard}\label{thm:whard-k-q-t}
\ams is \W$[1]$-hard \wrt $k+q+t$. %, where a vertex may have $\OO(n)$ number of neighbors. 
%\todo[inline]{We can write here that the similar reduction follows here. In earlier reduction $q$ is $O(n)$. To bound  that, we had to pay degree.}
%the length of a preference list may be unbounded. %, and $t=|\eta|-|\mu|$. 
\end{theorem}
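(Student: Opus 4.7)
The plan is to obtain this strengthening essentially for free from the reduction already used to prove Theorem~\ref{thm:asm} (and hence Theorem~\ref{thm:whard-k-and-t}), by giving a tighter structural analysis of the symmetric difference $q = |\mu \triangle \eta|$ between the given stable matching and any solution matching. Recall that the reduction takes an instance of \mcqsmall on a regular graph with color classes $V_1,\dots,V_k$ and produces an \ams instance in which a solution $\eta$ corresponds to choosing exactly one vertex from each $V_i$ together with the $\binom{k}{2}$ edges of the multicolored clique induced in $G$. Since this correspondence already bounds the ``size-increase'' parameter $t$ and the ``blocking-edge'' parameter $k$ by polynomial functions of $k$, it suffices to show that any valid solution $\eta$ in the constructed instance agrees with $\mu$ outside a set of $O(k^2)$ gadgets of constant size each.

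First, I would classify the gadgets in the construction into vertex-selection gadgets (one per vertex of $G$) and edge-selection gadgets (one per edge of $G$). For each gadget I would isolate its two ``canonical'' internal matching states — an ``inactive'' state matching the restriction of $\mu$, and an ``active'' state that contributes $+1$ to the global matching size and whose only interaction with the rest of the graph is through a small set of interface edges. The verification that both canonical states are internally stable (no internal blocking edges) is mechanical and can be inherited from the proof of Theorem~\ref{thm:asm}.

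Next comes the main step: showing that any solution $\eta$ must leave all but at most $k + \binom{k}{2}$ gadgets in their inactive states. This is where the budget constraints pay off. If a gadget not corresponding to a chosen vertex or a chosen clique edge were perturbed, then either its perturbation creates an internal blocking edge (charging against the budget $k$), or it fails to yield an augmenting path and so consumes ``symmetric-difference mass'' without contributing to the required $+t$ size increase, which can be ruled out by counting: the number of augmenting paths in $\mu \triangle \eta$ is exactly $|\eta| - |\mu| \ge t$, and the reduction is designed so each augmenting path is fully contained in one gadget. Putting these observations together yields $q \le c(k + \binom{k}{2})$ for a constant $c$ equal to the maximum gadget size, hence $q = O(k^2)$.

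The main obstacle is establishing the rigidity claim in the previous paragraph in a way that simultaneously rules out ``non-local'' alternating structures — e.g.\ long alternating cycles that thread through many gadgets without creating blocking edges. The cleanest way I see to handle this is to argue that the interface edges between gadgets are always ranked strictly below the canonical matching edges at both endpoints, so that any alternating cycle or alternating path crossing gadget boundaries either uses an interface edge (producing a blocking edge against an adjacent canonical state) or stays within a single gadget. Once this is in place, the parameterized reduction from \mcqsmall on regular graphs gives an instance whose combined parameter $k+q+t$ is bounded by a function of the \mcqsmall parameter, and \W$[1]$-hardness transfers immediately.
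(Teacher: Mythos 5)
Your plan rests on a premise that is false for the construction used in Theorem~\ref{thm:asm}: the gadgets there are \emph{not} of constant size, and no ``tighter structural analysis'' can make $q$ a function of $k$. In that reduction, each vertex $u$ of $G$ is represented by a path on $2r+2$ vertices ($r$ being the degree of the regular graph) hung between two binary-tree-like structures on the $p,a,b$ (and $\tilde p,\tilde a,\tilde b$) vertices of depth $\log_2(n/2)$; selecting $u$ for the clique forces a single $\mu$-augmenting path that runs from the unsaturated root $b^i_{\log_2(n/2),1}$ down one tree, across the entire $2r+2$-vertex path, and up the mirrored tree. This flips $2r+4\log_2(n/2)+3$ edges, and similarly each selected edge gadget flips $\Theta(\log m)$ edges. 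Hence \emph{every} solution satisfies $q=\Theta\bigl(kr+k\log n+k^2\log m\bigr)$ --- exactly the value the paper is forced to set in the proof of Theorem~\ref{thm:whard-k-and-t} --- which is not bounded by any function of $k$ alone. This is a lower bound on the symmetric difference of any solution, not slack in the analysis, so the claim that ``each augmenting path is fully contained in one gadget of constant size'' cannot be established; the gadget sizes were deliberately inflated (long paths, logarithmic-depth selector trees) precisely to keep the maximum degree at four.

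What is actually needed, and what the paper does, is a \emph{new} construction in which the augmenting paths have constant length: each vertex $u$ gets only a four-vertex path $u_1,\dots,u_4$, each class $V_i$ gets two port vertices $p_1^i,p_2^i$ adjacent to \emph{all} of $V_i$'s endpoints, and each pair $\{i,j\}$ gets ports $q_1^{ij},q_2^{ij}$ adjacent to all of $E_{ij}$. Then selecting a vertex costs $5$ units of symmetric difference and selecting an edge costs $3$, giving $q=5k+3k(k-1)/2$. The price is that the ports have degree $n$ or $m$, so the maximum degree is unbounded; the paper explicitly remarks that this tradeoff between bounding $q$ and bounding the degree is unavoidable in their scheme (and it is consistent with Theorem~\ref{th:fptAlgo}, which shows the problem is \FPT in $q+d$, so no reduction can bound both $q$ and $d$ by functions of $k$ unless \FPT $=$ \W$[1]$). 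If you want to salvage your write-up, you must replace the inherited gadgets with constant-diameter ones and redo the stability and rigidity arguments for the high-degree port vertices; the rigidity step (each port must be saturated in a perfect solution matching, which forces exactly one static blocking edge per class and per class-pair) is where the new construction earns the bound $k'=k+\binom{k}{2}$ on the blocking edges.
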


To prove Theorem~\ref{thm:smd w-hard},  we again give a polynomial-time many-to-one parameter preverving reduction from the \mcqsmall problem to \ams. We wish to point out here that in the instance which was constructed to prove Theorem~\ref{thm:asm}, $q$ is not a function of $k$. Thus, we mimic the idea of gadget construction in the proof of Theorem~\ref{thm:asm} and reduces $q$ to a function of $k$. However, in this effort, degree of the graph increases. Therefore, the result in Theorem~\ref{thm:smd w-hard} does not hold for constant degree graph or even when the degree is a function of $k$. This tradeoff between $q$ and the degree of the graph in the construction of instances to prove intractability results is not a coincidence as implied by our next result.
%That is, (under the most widely believed assumption in Parameterized Complexity) there does not exist an algorithm for \ams that runs in time $f(k+q+t) n^{\OO(1)}$, for any computable function $f(\cdot)$. (Here, $n$ denotes the number of vertices in the graph $G$.) %\Ma{That is, in the language of Parameterized Complexity, \ams is not fixed parameter tractable (\FPT) parameterized by $k+q+t$ unless \FPT = \WOH.}\ma{Need this?}

\begin{supress}
\il{***********Don't need this now**********}
%With respect to the role of $t$, the target increase in size, note that the symmetric difference of the given stable matching  $\mu$ and any solution matching $\eta$ is a set of vertex disjoint paths and cycles alternating between the edges of $\mu$ and $\eta$. Since $|\eta|\geq |\mu|+t$, the symmetric difference between $\mu$ and $\eta$ must contain at least $t$ paths that start and end with edges of $\eta$. Since $\mu$ is a stable matching in $G$, each such path must contain a blocking edge with respect to $\eta$. Therefore, we may assume that $t \leq k$, or else there can be no solution matching. In light of Theorem \ref{thm:smd w-hard}, this implies that we cannot expect \ams to be \FPT\ {\em even when parameterized by $t$, $k$ and $q$ simultaneously}, as this would be equivalent \ams\ being  \FPT\ parameterized by $k+q$.

\begin{corollary}
\ams is \WOH \wrt $k+q$. 
\end{corollary}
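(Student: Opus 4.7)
The plan is to adapt the reduction from \mcqsmall on regular graphs used in the proof of Theorem~\ref{thm:asm} so that the symmetric difference $q$ between the given stable matching $\mu$ and the sought matching $\eta$ is bounded by a function of the \mcqsmall parameter, which I shall denote $k^\star$ to avoid clashing with the \ams parameter $k$. Let $(G=(V,E), V_1\uplus\cdots\uplus V_{k^\star})$ be an instance of \mcqsmall on a $d$-regular graph. I would construct a bipartite instance $(H = A\cup B, \Cal L, \mu, k, q, t)$ of \ams in which $\mu$ is a perfect matching on a carefully designed skeleton, and any transformation of $\mu$ into a larger matching $\eta$ with few blocking edges must encode (i)~one chosen vertex $x_i\in V_i$ per color class and (ii)~a verification that every $x_ix_j$ lies in $E(G)$.

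The core design principle is one selection gadget per color class $V_i$ and one verification gadget per unordered pair $\{i,j\}$. A selection gadget is built around a pair of constant-size ``switches'' on the two sides of the bipartition, hooked into a high-degree hub whose preference list enumerates all of $V_i$; flipping this switch from the default $\mu$-edge to a single alternative edge costs $O(1)$ symmetric difference and simultaneously exposes exactly one blocking edge unless the adjacent verification gadgets are switched consistently. Each verification gadget has two states (``$x_i$ is chosen'' vs.\ ``$x_j$ is chosen'') that flip at $O(1)$ symmetric-difference cost, and the absence of the edge $x_ix_j$ in $G$ is encoded so that the corresponding pair of states leaves an internal blocking edge. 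This yields $t = \Theta(k^\star)$ (one augmenting path per color class), $q = \Theta((k^\star)^2)$, and a target $k$ equal to the number of designed baseline blocking edges. Because the hubs absorb entire $V_i$ and the edge-sets $E(V_i,V_j)$, the maximum degree of $H$ is no longer bounded, explaining why this construction, in contrast to that of Theorem~\ref{thm:whard-k-and-t}, does not yield a constant-degree hardness result.

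Correctness then has two halves. In the forward direction, a multicolored clique $\{x_1,\ldots,x_{k^\star}\}$ tells us how to switch each selection and each verification gadget consistently; the $k^\star$ augmenting paths raise the matching size by exactly $t$, the symmetric difference equals $q$, and the only blocking edges are the $k$ baseline ones. In the backward direction, any feasible $\eta$ must contain $k^\star$ vertex-disjoint augmenting paths in order to reach size $|\mu|+t$, and the tight $q$-budget forces each such path to be the canonical constant-length one inside a distinct selection gadget, uniquely identifying some $x_i\in V_i$; the $k$-budget then rules out verification gadgets set to a pair not in $E(G)$, so $\{x_1,\ldots,x_{k^\star}\}$ must induce a clique.

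The main obstacle I expect is the soundness direction: ruling out creative rerouting of matching edges that could satisfy the size, $q$- and $k$-budgets in unintended ways. To obtain the required rigidity I would employ long, carefully ordered preference lists on the hubs so that every deviation from the intended switch either fails to gain size, immediately produces many blocking edges at the hub, or spills into the $\mu$-matched skeleton of another color class and overshoots the $q$-budget. Calibrating the preference orders so that these three failure modes jointly cover every alternative structure of $\eta$ is the delicate step; everything else, including verifying that $\mu$ itself is stable in $H$, is a routine check built into the gadget definitions.
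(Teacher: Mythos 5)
You have missed that this statement requires no new construction at all: it is an immediate consequence of Theorem~\ref{thm:smd w-hard}. The reduction proving that theorem already produces instances in which $k'=k^\star+\binom{k^\star}{2}$, $q=5k^\star+\frac{3k^\star(k^\star-1)}{2}$ and $t=k'$ are all bounded by functions of the \mcqsmall parameter $k^\star$; since $k+q\le k+q+t$, the very same reduction is a parameterized reduction for the parameter $k+q$, and the corollary follows in one line. Your proposal instead sets out to rebuild that reduction from scratch, and the gadgetry you sketch (length-four paths per vertex, a pair $\basevertex{e}{},\basevertex{\tilde e}{}$ per edge, high-degree hubs per colour class and per pair of classes, $q=\Theta((k^\star)^2)$, unbounded degree) is essentially the construction the paper already uses for Theorem~\ref{thm:smd w-hard}.

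As a standalone proof, however, the proposal has a genuine gap exactly where you flag it. You propose to obtain soundness via ``long, carefully ordered preference lists on the hubs'' and concede that calibrating these orders is the delicate, unfinished step. The paper does not obtain rigidity this way at all: in its construction the hub preference lists are in \emph{arbitrary} order. Rigidity comes instead from a counting argument --- the size target $|\mu|+t$ equals half of $|V(G')|$, so any feasible $\eta$ must be a perfect matching; perfection forces every hub $p_1^i$ and $q_1^{ij}$ to be saturated, which (because $\basevertex{u}{1}\basevertex{u}{2}$ and $\basevertex{e}{}\basevertex{\tilde e}{}$ are static edges) forces at least one $u$-type and one $e$-type blocking edge per gadget; the budget $k'$ then yields uniqueness, and a consistency claim ties the selected edges to the selected endpoints. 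Without this perfect-matching argument (or a worked-out substitute), your backward direction --- in particular the claim that the $q$-budget forces each augmenting path to be the canonical one and that no creative rerouting survives --- is asserted rather than proved. A smaller inaccuracy: with one augmenting path per verification gadget as well, the natural value of $t$ is $k^\star+\binom{k^\star}{2}=\Theta((k^\star)^2)$, not $\Theta(k^\star)$.
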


\il{*******************}
\end{supress}

%In our quest for a parameterization that makes the problem tractable, we investigate the computational complexity of \ams with respect to $d$, the maximum length of the preference list in the given instance, and for that we have the following.  %We show that \ams is \nph even when each vertex has at most four neighbors in $G$.%, i.e., $d\leq 5$. 

%\begin{theorem}\label{thm:whard-k-and-t}
%\ams is \WOH with respect to $k+t$, even when every vertex has at most four neighbors. 
%\end{theorem}
% \todo[inline]{The same reduction gives us this result.}

\begin{supress}

The reduction that yields the above theorem, can be executed in polynomial time, and thus also yields an NP-hardness result for \ams, as stated below.

\begin{theorem}\label{thm:nph}
\ams is \nph even when every vertex has at most four neighbors. 
\end{theorem}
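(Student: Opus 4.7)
The plan is to observe that the many-to-one reduction employed in the proof of Theorem~\ref{thm:whard-k-and-t}---itself lifted from the construction in the proof of Theorem~\ref{thm:asm}---is already a \emph{polynomial-time} reduction that outputs \ams instances in which every vertex has degree at most four. Hence, to derive Theorem~\ref{thm:nph} it suffices to exhibit an \nph source problem that this reduction accepts, namely \mcqsmall on regular graphs.

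First, I would argue that \mcqsmall on regular graphs is \nph in the classical sense. Classical \textsc{Clique} is \nph on regular graphs via a standard padding argument: given an arbitrary input graph $G$ of maximum degree $\Delta$, one attaches dummy neighbours to raise every vertex's degree to $\Delta$, and then appends a disjoint $\Delta$-regular filler of sufficiently large girth, so that no new clique of the sought size can be introduced. A routine polynomial-time colour-coded blow-up then turns \textsc{Clique} on regular graphs into \mcqsmall on regular graphs while preserving regularity, so \mcqsmall on regular graphs is \nph.

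Combining this with the reduction from \mcqsmall on regular graphs to \ams---already shown in the proof of Theorem~\ref{thm:whard-k-and-t} to run in polynomial time and to produce bipartite instances of maximum degree four---yields Theorem~\ref{thm:nph}. The only subtlety, and hence the main (very mild) obstacle, is to confirm that the original reduction's running time really is polynomial in the size of the source instance, rather than of the more permissive $f(k)\cdot n^{\OO(1)}$ form allowed for parameterised reductions. Since the excerpt explicitly describes the construction as a ``polynomial-time many-to-one parameter-preserving reduction,'' and since each of its gadgets has constant size with the total number of gadgets polynomial in $|V(G)|+|E(G)|$ independently of $k$, this check is immediate and Theorem~\ref{thm:nph} follows at once.
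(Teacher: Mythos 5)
Your proposal matches the paper's own argument: the paper derives this theorem in exactly the same way, by observing that the reduction behind Theorem~\ref{thm:whard-k-and-t} (built on the degree-four construction from Theorem~\ref{thm:asm}) runs in polynomial time, so \nph{}ness of the source problem transfers directly. Your only addition is the explicit check that \mcqsmall on regular graphs is classically \nph, which the paper leaves implicit and which is indeed the standard fact your sketch invokes.
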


That is, it is highly unlikely to have an algorithm for \ams with running time $\OO(n^{f(d)})$ (and hence also $f(d) n^{\OO(1)}$) for any computable function $f$ of $d$, where $d$ denotes the maximum degree of the graph. Phrased in the language of Parameterized Complexity, \ams is para-\nph\ when parameterized by $d$.

\end{supress}

%The most natural follow-up question is to consider the parameterized complexity of \ams with respect to $q+d$. For this, we derive a positive result. %by showing that \ams is \FPT with respect to $k+q+d$. %Since $k\leq dq$, we can claim the following. 

\begin{theorem}\label{th:fptAlgo}
There exists an algorithm that given an instance of  \ams, solves the instance in $2^{\OO(q \log d)+o(dq)}n^{\OO(1)}$ time, where $n$ is the number of vertices in the given graph, and $d$ is the maximum degree of the given graph.
%\ams is \FPT\ \wrt $q+d$. In particular, \ams is solvable in time $2^{\OO(q \log d)+o(dq)}n^{\OO(1)}$.
\end{theorem}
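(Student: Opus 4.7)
The strategy is to enumerate the symmetric difference $S = \mu \triangle \eta$ of a hypothetical solution $\eta$. Since $\mu$ and $\eta$ are both matchings, $S$ is a disjoint union of $\mu$-alternating paths and cycles (edges alternating between $\mu$ and $E(G) \setminus \mu$); the bound $|\mu \triangle \eta| \leq q$ gives $|S| \leq q$ and hence $|V(S)| \leq 2q$. The condition $|\eta| \geq |\mu| + t$ further requires that the number of augmenting paths in $S$ (odd-length components starting and ending with non-$\mu$ edges) exceeds the number of de-augmenting paths by at least $t$. Once $S$ is known, $\eta = \mu \triangle S$ is determined, and every constraint can be checked directly.

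The core algorithm would first enumerate all $\mu$-alternating paths and cycles of length at most $q$ in $G$. At each alternating step from a fixed vertex, either the next edge must be the unique $\mu$-edge (at most one option) or a non-$\mu$-edge (at most $d$ options), so from any starting vertex there are at most $d^{\OO(q)}$ such structures; summed over all vertices this yields a collection $\mathcal{F}$ of size $n \cdot d^{\OO(q)}$. To assemble a valid $S$ — a vertex-disjoint subcollection of $\mathcal{F}$ with total edge count at most $q$ and net size change at least $t$ — I would apply color coding: color $V(G)$ with $2q$ colors so that $V(S)$ is colorful with probability $2^{-\OO(q)}$, then derandomize using a suitably-sized $(n, 2q)$-perfect hash family, whose size bound contributes the $2^{o(dq)}$ factor in the running time. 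For each coloring, a dynamic program on states $(C, \delta)$ — where $C \subseteq \{1,\ldots,2q\}$ is the set of already-used colors and $\delta$ is the cumulative size change — solves a packing subproblem: each transition appends a colorful structure from $\mathcal{F}$ whose color set avoids $C$, updating $C$ and $\delta$.

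The main obstacle is the blocking-edge constraint, which is a \emph{global} property of $\eta$ and is not obviously decomposable across components of $S$. The key observation here is that, because $\mu$ is stable, an edge whose endpoints both lie outside $V(S)$ has identical matching partners in $\mu$ and in $\eta$, and therefore cannot block $\eta$. Thus the set of candidate blocking edges is confined to the at most $2qd$ edges incident to $V(S)$, and can be counted in $\OO(qd)$ time once $S$ is fixed. This verification can either be attached to the DP state as an additional coordinate, or performed by backtracking over each $S$ recovered by the DP via stored pointers. Combining the enumeration of $\mathcal{F}$, the color-coding–based DP, and this incident-edge verification step yields an algorithm with the claimed running time $2^{\OO(q \log d) + o(dq)} n^{\OO(1)}$.
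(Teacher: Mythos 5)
Your overall strategy---enumerate the alternating paths/cycles that make up $S=\mu\triangle\eta$ (of which there are $n\cdot d^{\OO(q)}$), then pack vertex-disjoint ones subject to the budgets on $|S|$, the net size gain, and the blocking edges---is a genuinely different route from the paper, which never enumerates alternating structures explicitly: it instead applies two rounds of random separation (vertices, then edges), so that after the ``good'' coloring the components of the surviving graph \emph{are} the alternating paths/cycles of the hypothetical solution, and then solves a two-dimensional knapsack over these components. Both approaches rest on the same two structural facts you correctly identify: $\mu\triangle\eta$ decomposes into at most $q$ alternating edges, and (since $\mu$ is stable) every blocking edge of $\eta$ has an endpoint in $V(\mu\triangle\eta)$.

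The genuine gap is in your treatment of the blocking-edge constraint, which you yourself flag as ``the main obstacle'' and then do not actually resolve. Confining candidate blocking edges to the $\OO(qd)$ edges incident to $V(S)$ is correct but not sufficient: the blocking count is not additive over the structures your DP packs. If $uv\in E(G)$ has $u$ in one selected structure $P$ and $v$ in another selected structure $P'$, then whether $uv$ blocks $\eta$ depends on both $\eta(u)$ and $\eta(v)$, i.e.\ on both $P$ and $P'$; when your DP appends $P$ it cannot yet decide the status of $uv$. Hence neither of your two proposed fixes works as stated: an additional additive coordinate in the state is simply not a correct accounting, and backtracking ``over each $S$ recovered via stored pointers'' checks only one representative $S$ per DP state $(C,\delta)$, whereas many partial solutions with the same color set and size change can have different blocking counts, so a valid $S$ may be discarded. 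Your $2q$-color coding enforces disjointness of $V(S)$ but does nothing to separate $V(S)$ from its own neighborhood, which is exactly what is needed for locality of blocking. This is the role of the paper's vertex-separation phase: the successful coloring certifies that every neighbor of a solution component, \emph{and that neighbor's $\mu$-partner}, lies outside $V(\mu\triangle\eta)$ and hence keeps its $\mu$-partner in $\eta$; only then is the per-component blocking count $k_i$ (computed against the local matching $\eta_i$) a faithful, summable quantity that can be fed to a knapsack/DP. To repair your proof you would need to add an analogous separation guarantee (e.g.\ extend the color-coding so that $N_G(V(S))$ and the $\mu$-partners of those neighbors receive a reserved color, which is where a lopsided-universal-family bound of size $2^{\OO(q\log d)+o(dq)}$ genuinely enters) rather than rely on a plain $(n,2q)$-perfect hash family, whose size is $2^{\OO(q)}\log n$ and which does not provide this property.
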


To prove Theorem~\ref{th:fptAlgo}, we use the technique of {\em random separation} based on color coding, in which the underlying idea is to highlight the solution that we are looking for with good probability. Suppose that $\eta$ is a hypothetical solution to the given instance of \ams. Note that to find the matching $\eta$, it is enough to find the edges that are in the symmetric difference of $\mu$ and $\eta$ ($\mu \triangle \eta$). Thus, using the technique of random separation, we wish to highlight the edges in $\mu \triangle \eta$. We achieve this goal %by
%
%...........
%
%The proof is based on the idea of looking at edges that are in the symmetric difference of $\mu$ and $\eta$ ($\mu \triangle \eta$) and 
%separating them out 
using two layers of randomization. The first one separates vertices that appear in  $\mu \triangle \eta$, denoted by the set $V(\mu \triangle \eta)$, from its neighbors, by independently  coloring vertices $1$ or $2$. Let the vertices appearing in $V(\mu \triangle \eta)$ be colored $1$ and its neighbors that are not in $V(\mu \triangle \eta)$ be colored $2$. Observe that the matching partner of the vertices which are not in $V(\mu \triangle \eta)$ is same in both $\mu$ and $\eta$. % or a vertex $u \notin$ which is neighbor of $V(\mu \triangle \eta)$ which is not in $V(\mu \triangle \eta)$, then its matching partner  is same in both $\mu$ and $\eta$.
%Then, using the fact that the matching partner of neighbors of $V(\mu \triangle \eta)$ is same in both $\mu$, and $\eta$ and 
%
Therefore, we search for a solution locally in vertices that are colored $1$. Let $G_1$ be the graph induced on the vertices that are colored $1$. At this stage we use a second layer of randomization on edges of $G_1$, and independently color each edge with $1$ or $2$. This separates edges that belong to $\mu \triangle \eta$ (say colored $1$) from those that do not belong to $\mu \triangle \eta$. Now for each component of $G_1$,  
we look at the edges that have been colored $1$, and compute the number of blocking edges, the increase in size and increase in the symmetric difference, if we modify using the  $\mu$-alternating paths/cycle that are present in this component. This leads to an instance of the \wtknapsack (\wtkp) problem, which we solve in polynomial time using a known pseudo-polynomial time algorithm for the \wtkp problem~\cite{DBLP:books/daglib/0010031}.  We  derandomize this algorithm using the 
notion of an $n$-$p$-$q$-{\em lopsided universal} family~\cite{FominLPS16}.  
% !TEX root = NASM-main.tex

\smallskip

\noindent{\bf Related Work:} 
We present here some variants of the {\sc Stable Marriage} problem which are closely related to our model. For some other variants of the problem, we refer the reader to ~\cite{chen2019matchings,david2013algorithmics,gusfield1989stable,knuth1976marriages}.
\par
In the past, the notion of ``almost stability'' is defined for the {\sc Stable Roommate} problem~\cite{abraham2005almost}. In the {\sc Stable Roommate} problem, the goal is to find a stable matching in an arbitrary graph. As opposed to {\sc Stable Marriage}, in which the graphs is a bipartite graph, an instance of {\sc Stable Roommate} might not admit a stable matching. Therefore, the notion of almost stability is defined for the {\sc Stable Roommate} problem, in which the goal is to find a matching with a minimum number of blocking edges. This problem is known as  the {\sc Almost Stable Roommate} problem. Abraham et al.~\cite{abraham2005almost} proved that the {\sc Almost Stable Roommate} problem is \nph. Biro et al.~\cite{biro2012almost} proved that the problem remains \nph even for constant-sized preference lists and studied it in the realm of approximation algorithms. Chen et al.~\cite{DBLP:conf/icalp/ChenHSY18} studied this problem in the realm  of parameterized complexity and showed that the problem is \W[1]-hard with respect to the number of blocking edges even when the maximum length of every preference list is five.
\par
Later in $2010$, Bir\'o et al.~\cite{BMM10j} considered the problem of finding, among all matchings of the maximum size, one that has the fewest blocking edges, in a bipartite graph and showed that the problem is \nph and not approximable within $n^{1-\epsilon}$, for any $\epsilon >0$ unless {\sf P}={\sf NP}. 
\par
The problem of finding the maximum sized stable matching in the presence of ties and incomplete preference lists, max{\sc SMTI}, has striking resemblance with \asm. In max{\sc SMTI}, the decision of resolving each tie comes down to deciding who should be at the top of each of tied lists, mirrors the choice we have to make in \asm in rematching the vertices who will be part of a blocking edge in the new matching. Despite this similarity, the \W$[1]$-hardness result presented in \cite[Theorem 2]{MarxSchlotter11} does not yield the hardness result of  \asm  and \ams% and \ams % Theorems~\ref{thm:whard-k-q-t} and \ref{thm:whard-k-and-t} in a straightforward manner, 
as the reduction is not likely to be parameteric in terms of $k+t$ and $k+t+q$, or have the degree bounded by a constant. %\todo{check if this is true also for ASM also. Earlier, we had these statements for NASM}
\section{Preliminaries}
{\em Sets.} We denote the set of natural numbers $\{1,\ldots,\ell\}$ by $[\ell]$.  For two sets $X$ and $Y$, we use notation $X\triangle Y$ to denote the symmetric difference between $X$ and $Y$. We denote the union of two disjoint sets $X$ and $Y$ as $X \uplus Y$. For any ordered set $X$, and an appropriately defined value $t$, $X(t)$ denotes the $t^{\text{th}}$ element of the set $X$. Conversely, suppose that $x$ is $t^{\text{th}}$ element of the set $X$, then $\sigma(x,X)=t$. \par 
\noindent{\em Graphs.} Let $G$ be an undirected graph. We denote the vertex set and the edge set of $G$ by $V(G)$ and $E(G)$ respectively.  We denote an edge between $u$ and $v$ as $uv$, and refer $u$ and $v$ as the {\it endpoints} of the edge $uv$.  
%For $X\subseteq V(G)$, $G-X$ and $G[X]$ denote subgraphs of $G$ induced on the vertex set $V(G)\setminus X$ and $X$, respectively. For $Y\subseteq E(G)$, $G\cup Y$ is a graph on vertex set $V(G)\cup\{u,v\colon uv\in Y\}$ and edge set $E(G)\cup Y$. Let $E'$ be a subset of edges of graph $G$. Then, $G[E']$ denote the subgraph of $G$ spanned by $E'$, that is, the graph with the vertex set $V(G)$ and the edge set $E'$. 
The {\em neighborhood} of a vertex $v$, denoted by $N_G(v)$, is the set of all vertices adjacent to it. Analogously, the {\em (open) neighborhood} of a subset $S\sse V$, denoted by $N_{G}(S)$, is the set of vertices outside $S$ that are adjacent to some vertex in $S$. Formally, $N_G(S)=\cup_{v\in S}N_G(v)$. The degree of a vertex $v$ is the graph $G$ is the number of vertices in $N_G(v)$. The maximum degree of a graph is the maximum degree of its vertices, that is, for the graph $G$, the maximum degree is $\max_{v\in V(G)}|N_G(v)|$. A graph is called a {\em regular graph} if the degree of all the vertices in the graph is the same. For regular graph, we call the maximum degree of the graph as the degree of the graph.  A \emph{component} of $G$ is a maximal subgraph in which any two vertices are connected by a path. For a component $C$, $N_G(C)=N_G(V(C))$. The subscript in the notation may be omitted if the graph under consideration is clear from the context. \par
 In the preference list of a vertex $u$, if $v$ appears before $w$, then we say that $u$ prefers $v$ more than $w$, and denote it as $v \succ_u w$. We call an edge in the graph as {\em static edge} if its endpoints prefer each other over any other vertex in the graph.
For a matching $\mu$, $V(\mu)=\{u,v\colon uv\in \mu\}$. If an edge $uv\in \mu$, then $\mu(u)=v$ and $\mu(v)=u$. A vertex is called {\em saturated} in a matching $\mu$, if it is an endpoint of one of the edges in the matching $\mu$, otherwise it is an {\em unsaturated} vertex in $\mu$. If $u$ is an unsaturated vertex in a matching $\mu$, then we say $\mu(u)=\emptyset$. For a matching $\mu$ in $G$, a $\mu$-alternating path(cycle) is a path(cycle) that starts with an unsaturated vertex and whose edges alternates between matching edges of $\mu$ and non-matching edges. A $\mu$-augmenting path is a $\mu$-alternating path that starts and ends at an unmatched vertex in $\mu$. \par
 Unless specified, we will be using all general graph terminologies from the book of Diestel~\cite{diestel-book}. For parameterized complexity related definitions, we refer the reader to~\cite{ParamAlgorithms15b,DowneyFbook13,niedermeier06b}.

 \begin{prop}\label{cl:bp vertices}
Let $\mu$ and $\mu'$ denote two matchings in $G$ such that $\mu$ is stable and $\mu'$ is not.  
Then, for each blocking edge with respect to $\mu'$ we know that at least one of the endpoints has different matching partners in $\mu$ and $\mu'$. 
 \end{prop}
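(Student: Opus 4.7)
The plan is to prove the contrapositive: if both endpoints of some edge $uv$ have identical partners in $\mu$ and $\mu'$, then $uv$ cannot be a blocking edge with respect to $\mu'$ (given that $\mu$ is stable). This is a short, direct argument.

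First I would fix an arbitrary blocking edge $uv$ with respect to $\mu'$. By the definition of a blocking edge, $uv \notin \mu'$ and both $u$ and $v$ strictly prefer each other to their current partners in $\mu'$; that is, $v \succ_u \mu'(u)$ and $u \succ_v \mu'(v)$ (adopting the convention from the Preliminaries that every vertex prefers any neighbor to being unmatched, so the argument goes through uniformly whether or not $\mu'(u)$ and $\mu'(v)$ are empty).

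Next, I would assume toward a contradiction that both endpoints are matched to the same partners in $\mu$ as in $\mu'$, namely $\mu(u)=\mu'(u)$ and $\mu(v)=\mu'(v)$. Substituting these equalities into the preference inequalities above yields $v \succ_u \mu(u)$ and $u \succ_v \mu(v)$. Since $uv \notin \mu$ (otherwise $\mu(u)=v \neq \mu'(u)$, contradicting the assumption when $v \succ_u \mu'(u)$), the edge $uv$ is a blocking edge with respect to $\mu$.

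This contradicts the stability of $\mu$, so the assumption fails and at least one of $u, v$ must have different partners in $\mu$ and $\mu'$, as claimed. I do not anticipate any obstacle here — the proposition is essentially an immediate consequence of unrolling the definitions of stable matching and blocking edge, and its role in the paper appears to be as a handy lemma for later arguments that track how solutions of \ams differ from the given stable matching $\mu$.
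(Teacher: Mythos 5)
Your proposal is correct and follows essentially the same argument as the paper: assume both endpoints keep their partners, substitute into the blocking-edge inequalities, and contradict the stability of $\mu$. The extra check that $uv \notin \mu$ is a harmless bit of added care that the paper leaves implicit.
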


\begin{proof}
Let $uv$ be a blocking edge with respect to $\mu'$. Towards the contrary, suppose that $\mu'(u)=\mu(u)$ and $\mu'(v)=\mu(v)$. %, i.e., $V(\mu \triangle \mu') \cap \{u,v\}=\emptyset$. %Then, $\mu'(u)=\mu(u)$ and $\eta(v)=\mu(v)$. 
Since $uv$ is a blocking edge with respect to $\mu'$, we have that $v \succ_u \mu'(u)$, and $u \succ_v \mu'(v)$. Therefore, $v \succ_u \mu(u)$, and $u \succ_v \mu(v)$. Hence, $uv$ is also a blocking edge with respect to $\mu$, a contradiction to that $\mu$ is a stable matching in $G$.
\end{proof}

\section{W[1]-hardness of {\sc ASM}}\label{sec:construction}
%
%In this section we give the description of the reduction used prove Theorem~\ref{thm:asm}. The details of the correctness can be found in Appendix~\ref{sec:ASM}. 

We give a polynomial-time parameter preserving many-to-one reduction from the \WOH problem \mcq (\mcqsmall) (\cite{cai2008parameterized}) on regular graphs. \hide{in which we are given a regular graph $G=(V,E)$ and a partition of $V(G)$ into $k$ parts, $V_1,\ldots,V_k$, and the objective is to decide if there exists a subset $S\subseteq V(G)$ such that $| S\cap V_i| =1$, for each $i\in [k]$, and the induced subgraph $G[S]$ is a clique. }

It will be necessary for us to assume that certain sets are ordered. This ordering uniquely defines the $t^{th}$ element of the set (for an appropriately defined value of $t$), and thereby enables us to refer to the $t^{th}$ element of the set unambiguously. We assume that sets $V_{i}$ (for each $i\in [k]$) and $E_{ij}$ (for each $\{i,j\} \sse [k], i<j$) have a canonical order, and thus for an appropriately defined value $t$, $V_{i}(t)$ ($E_{ij}(t)$) and $\sigma(V_{i}, v)$ ($\sigma(E_{ij}, e)$) are uniquely defined. For ease of exposition, for any vertex $v\in V(G')$ we will refer to its set of neighbors, as an ordered set. In such a situation we will denote $N(v)=\pref{ \cdot, \cdot}$. 

Given an instance $\Co{I}=(G,(V_1,\ldots,V_k))$ of \mcqsmall, where $G$ is a regular graph whose degree is denoted by $r$, we will next describe the construction of an instance $\Co{J}=(G',\Co{L},k',t)$ of \asm. 

  \par
 {\bf Construction.} We begin by introducing some notations. For any $\{i, j\} \sse [k]$, such that $i<j$, we use $E_{ij}$ to denote the set of edges between sets $V_i$ and $V_j$. For each $i\in [k]$, we may assume that $| V_i| =n=2^p$, and for each $\{i,j\} \subseteq [k]$, we may assume that $\lvert E_{ij} \rvert =m=2^{p'}$, for some positive integers $p$ and $p'$ greater than one.\footnote{Let $p$ be the smallest positive integer greater than one such that $n<2^p$, add $2^p-n$ isolated vertices in $V_i$. Similarly, let $p'$ be the smallest positive integer greater than one such that $m<2^{p'}$, add  $2^{p'}-m$ isolated edges (an edge whose endpoints are of degree exactly one) to $E_{ij}$. Note that if $(G,(V_1,\ldots,V_k))$ was a \WOH instance of \mcqsmall earlier, then so even now.}.

 %\Ma{Recall that for any vertex $u\in V(G)$,  $E_{u}=\{ e\in E(G): u \text{ is an endpoint of the edge $e$}\}$, we assume that the elements of this set have a canonical order. Moreover, we assume that sets $V_{i}$ (for each $i\in [k]$) and $E_{ij}$ (for each $\{i,j\} \sse [k], i<j$) have a canonical order. We will use these orderings crucially to establish edge relationships between various vertices.}  

 %\sout{For the ease of exposition, we will  assume that each vertex of $V_i$, where $i\in [k]$, is denoted as $u_\ell^i$, where $\ell \in [n]$ and each edge of $E_{ij}$, where $\{i,j\}\subseteq [k]$, is denoted as $e_\ell^{ij}$, where $\ell \in [m]$.} 
 
 %\il{}
 
For each  $j\in [\log_2 (\nicefrac{n}{2})]$, let $\beta_j= \nicefrac{n}{2^j}$, and $\gamma_j=\nicefrac{n}{2^{j+1}}$. For each  $j\in [\log_2(\nicefrac{m}{2})] $, let $\rho_j=\nicefrac{m}{2^j}$, and $\tau_j= \nicefrac{m}{2^{j+1}}$. %Let $u\in V(G)$. 
%Let $E_u$ denote an ordered set\footnote{For an ordered set $X$, $X(i)$ denote the $i^{\text{th}}$ element of $X$, where $i\in [\lvert X \rvert]$.} of edges that are incident on the vertex $u\in V(G)$. 
Next, we are ready to describe the construction of the graph $G'$.

%For each $i \in [k]$, we create the following {\it base} vertices. 

\begin{itemize}%[noitemsep]
\item[ ]~~~
\noindent\hspace{-0.7cm}{\bf Base vertices:}
\item For each vertex $u \in V(G)$, we have $2r+2$ vertices in $G'$, denoted by $\{\basevertex{u}{i} : i\in [2r+2]\}$, connected via a path: $(\basevertex{u}{1},\ldots,\basevertex{u}{2r+2})$. 

 \item For each edge $e\in E(G)$, we have vertices  $\basevertex{e}{}$ and $\basevertex{\tilde{e}}{}$ in $G'$ that are neighbors. 
 
% $\{i,j\} \subseteq[k]$, $i<j$, consider an edge $\basevertex{e}{} \in E_{ij}$. We add vertices $\basevertex{e}{}$ and $\basevertex{\tilde{e}}{}$ to $V(G')$ and make them neighbors.  
 
\hide{For any $u\in V(G)$, we define $\X{E}_{u}=\{e \in V(G') | e \in E_{u}\}$ and assume that the set $\X{E}_{u}$ has the same canonical order as $E_{u}$.} %\ma{not needed}
 
% $e_{\ell 2}^{ij}$ to $V(G')$, and an edge $e_{\ell 1}^{ij}e_{\ell 2}^{ij}$ to $E(G')$. 

\item  For each $h\in [r]$, \basevertex{u}{2h+1} is a neighbor of the vertex \basevertex{e}{}, where $e=\sigma(E_{u}, h)$. %\X{E}_{u}(h)$, the $h^{th}$ vertex in $\X{E}_{u}$. 

\smallskip
\noindent\hspace{-0.3cm}{\bf Special vertices.} For each $i\in [k]$, we define a set of special vertices as follows. 
\item For each $\ell \in [\beta_1]$, we add vertices \specialvertex{p}{\ell}{i} and \specialvertex{\tilde{p}}{\ell}{i} to $V(G')$. Let $u$ and $v$ denote the $2\ell-1^{st}$ and the $2\ell^{th}$ vertices in $V_{i}$, respectively. Then, the vertex 
\specialvertex{p}{\ell}{i} is a neighbor of vertices \basevertex{u}{1} and \basevertex{v}{1}; and the vertex \specialvertex{\tilde{p}}{\ell}{i} is a neighbor of vertices \basevertex{u}{2r+2} and \basevertex{v}{2r+2} in $G'$.
 
\item For each $j\in [\log_2 (\nicefrac{n}{2})]$ and $\ell \in [\beta_j]$, we add vertices $\specialvertex{a}{j,\ell}{i}$ and $\specialvertex{\tilde{a}}{j,\ell}{i}$ to $V(G')$. 

Specifically, for the value $j=1$, we make \specialvertex{a}{1, \ell}{i} and  \specialvertex{\tilde{a}}{1, \ell}{i} a neighbor of \specialvertex{p}{\ell}{i} and \specialvertex{\tilde{p}}{\ell}{i}, respectively.

\item For each $j\in [\log_2(\nicefrac{n}{2})]$ and $\ell \in [\gamma_j]$, we add vertices $\specialvertex{b}{j,\ell}{i}$ and $\specialvertex{\tilde{b}}{j,\ell}{i}$ to $V(G')$. 
 
Moreover, for $j\in [\log_2(\nicefrac{n}{2})-1]$, we make \specialvertex{b}{j,\ell}{i} a neighbor of \specialvertex{a}{j, 2\ell-1}{i}, \specialvertex{a}{j, 2\ell}{i}, and \specialvertex{a}{j+1, \ell}{i}. Symmetrically, we make \specialvertex{\tilde{b}}{j,\ell}{i} a neighbor of  \specialvertex{\tilde{a}}{j, 2\ell-1}{i}, \specialvertex{\tilde{a}}{j, 2\ell}{i}, and \specialvertex{\tilde{a}}{j+1,\ell}{i}. For the special case, when $j = \log_2(\nicefrac{n}{2})$, \specialvertex{b}{j,1}{i} is a neighbor of \specialvertex{a}{j,1}{i} and \specialvertex{a}{j,2}{i}; and \specialvertex{\tilde{b}}{j,1}{i} is a neighbor of \specialvertex{\tilde{a}}{j,1}{i} and \specialvertex{\tilde{a}}{j,2}{i}. 

%we add an edge $u_{\ell(2h+1)}^{i}E_{u^i_\ell}(h)$. 

\medskip
\noindent For each  $\{i,j\} \subseteq [k]$, where $i<j$, we do as follows.

\item For each $\ell \in [\rho_1]$, we add vertices \specialvertex{q}{\ell}{ij} and \specialvertex{\tilde{q}}{\ell}{ij} to $V(G')$. 

Moreover, let $e$ and $e'$ denote the $2\ell-1^{st}$ and $2\ell^{th}$ elements of $E_{ij}$, respectively. Then,  \specialvertex{q}{\ell}{ij} is a neighbor of  \basevertex{e}{} and $\basevertex{e'}{}$; and symmetrically \specialvertex{\tilde{q}}{\ell}{ij} is a neighbor of \basevertex{\tilde{e}}{} and \basevertex{\tilde{e'}}{} in $G'$.

\item For each  $h\in [\log_2 (\nicefrac{m}{2})]$, and $\ell \in [\rho_h]$, we add vertices $\specialvertex{c}{h,\ell}{ij}$ and \specialvertex{\tilde{c}}{h\ell}{ij} to $V(G')$.
Moreover, for $\ell \in [\rho_1]$, $\specialvertex{c}{1,\ell}{ij}$ is a neighbor of  \specialvertex{q}{\ell}{ij}, and symmetrically \specialvertex{\tilde{c}}{1,\ell}{ij} is a neighbor of \specialvertex{\tilde{q}}{\ell}{ij} in $G'$.

\item For each $h\in [\log_2 (\nicefrac{m}{2})]$ and $\ell \in [\tau_h]$,  we add vertices \specialvertex{d}{h,\ell}{ij} and \specialvertex{\tilde{d}}{h,\ell}{ij} to $G'$.

Moreover, when $h\in [\log_2 (\nicefrac{m}{2})-1]$, \specialvertex{d}{h,\ell}{ij} is a neighbor of \specialvertex{c}{h, 2\ell-1}{ij}, \specialvertex{c}{h, 2\ell}{ij}, and \specialvertex{c}{h+1, \ell}{ij}; and symmetrically, \specialvertex{\tilde{d}}{h,\ell}{ij} is a neighbor of \specialvertex{\tilde{c}}{h, 2\ell-1}{ij}, \specialvertex{\tilde{c}}{h, 2\ell}{ij}, and \specialvertex{\tilde{c}}{h+1, \ell}{ij} in $G'$. 

For the special case, when $h=\log_2 (\nicefrac{m}{2})$, \specialvertex{d}{h,1}{ij} is a neighbor of \specialvertex{c}{h, 1}{ij} and \specialvertex{c}{h, 2}{ij}; and symmetrically, \specialvertex{\tilde{d}}{h,1}{ij} is a neighbor of \specialvertex{\tilde{c}}{h, 1}{ij}, \specialvertex{\tilde{c}}{h, 2}{ij}  in $G'$.
\end{itemize}

%\item For  $h=\log_2 (\nicefrac{m}{2})$,  we add edges $d^{ij}_{h1}c^{ij}_{h1}$, $d^{ij}_{h1}c^{ij}_{h2}$, $\tilde{d}^{ij}_{h1}\tilde{c}^{ij}_{h1}$ and $\tilde{d}^{ij}_{h1}\tilde{c}^{ij}_{h2}$ to $E(G')$.

% we add edges $d^{ij}_{h\ell}c^{ij}_{h(2\ell-1)}$, $d^{ij}_{h\ell}c^{ij}_{h(2\ell)}$, $d^{ij}_{h\ell}c^{ij}_{(h+1)\ell}$, $\tilde{d}^{ij}_{h\ell}\tilde{c}^{ij}_{h(2\ell-1)}$, $\tilde{d}^{ij}_{h\ell}\tilde{c}^{ij}_{h(2\ell)}$ and $\tilde{d}^{ij}_{h\ell}\tilde{d}^{ij}_{(h+1)\ell}$, where $i<j$, to $E(G')$.

% For each $i\in [k]$, $\ell \in [n]$, let $E_{u_\ell^i}= \{e_{s1}^{ij}\mid j\in [k], i<j, \text{ and } e_s^{ij}(\in E_{ij}) \text{ is incident on } u_\ell^i \in V_i\}$ be an ordered set\footnote{For an ordered set $X$, $X(i)$ denote the $i^{\text{th}}$ element of $X$, where $i\in [\lvert X \rvert]$.}. 

 %denote an ordered set\footnote{For an ordered set $X$, $X(i)$ denote the $i^{\text{th}}$ element of $X$, where $i\in [\lvert X \rvert]$.} of vertices of $G'$ that corresponds to edges of $G$ that are incident on the vertex $u_\ell^i \in V_i$. % in the graph $G$. 

Figure \ref{fig:hardness_degree} illustrates the construction of $G'$. The preference list of each vertex in $G'$ is presented in Table~\ref{pref_list2}.

%\il{}

% !TEX root = NASM-main.tex

\begin{table*}[!t]
      \centering
   \hspace{-0.65cm} For each vertex $u \in V_{i}$, where $i\in [k]$, we have the following preferences:\\
\begin{tabularx}{\textwidth}{ C{1cm}  C{3cm}  C{13cm}}
\basevertex{u}{1}: & \pref{\basevertex{u}{2},\specialvertex{p}{\roof{\ell/2}}{i}} & where for some $\ell \in [n]$, $u$ is the $\ell^{th}$ vertex in $V_{i}$.\\ 
% \basevertex{u}{2}: & \pref{\basevertex{u}{1}, \basevertex{u}{3}} & \\ 
   % \basevertex{u}{3}: & \pref{\basevertex{u}{ 2}, \basevertex{e}{}, \basevertex{u}{4}} & where $e$ is the $1^{st}$ element of $E_{u}$\\
 \basevertex{u}{2h+1}:  & \pref{\basevertex{u}{2h},\basevertex{e}{},u_{2h+2}} & where $e$ is the $h^{th}$ element of $E_{u}$,  $h\in [r]$\\ 
 \basevertex{u}{2h}: & \pref{\basevertex{u}{2h-1}, \basevertex{u}{2h+1}} & where $h \in [r]$ \\

 \basevertex{u}{2r+2}: & \pref{\basevertex{u}{2r+1},\specialvertex{\tilde{p}}{\roof{\nicefrac{\ell}{2}}}{i}} & where for some $\ell \in [n]$, $u$ is the $\ell^{th}$ vertex in $V_{i}$ \\
 \end{tabularx}
 \medskip\\
\hspace{-0.25cm} For the special vertices of the $i^{th}$ {\it vertex gadget}, we have the following preferences:\\
\begin{tabularx}{\textwidth}{ C{1cm}  C{3cm}  C{13cm}}     
\specialvertex{p}{\ell}{i}: &\pref{\basevertex{u}{1},\basevertex{v}{1},\specialvertex{a}{1,\ell}{i}}& where for some $\ell \in [\nicefrac{n}{2}]$, $u$ and $v$ are the \\
 & &$2\ell-1^{st}$ and $2\ell^{th}$ vertices of $V_{i}$, respectively. \\
  \specialvertex{\tilde{p}}{\ell}{i}: & \pref{\basevertex{u}{2r+2},\basevertex{v}{2r+2} ,\specialvertex{\tilde{a}}{1,\ell}{i}}  & where for some $\ell \in [\nicefrac{n}{2}]$, $u$ and $v$ are the\\
  & & $2\ell-1^{st}$ and $2\ell^{th}$ vertices of $V_{i}$, respectively. \\
\specialvertex{a}{1\ell}{i}: &\pref{\specialvertex{p}{\ell}{i}, \specialvertex{b}{1,\roof{\ell/2}}{i}} & where $\ell \in [\nicefrac{n}{2}]$\\
 \specialvertex{\tilde{a}}{1,\ell}{i}:  & \pref{\specialvertex{\tilde{p}}{\ell}{i}, \specialvertex{\tilde{b}}{1,\roof{\nicefrac{\ell}{2}}}{i}} & where $ \ell \in [\nicefrac{n}{2}]$ \\
  \specialvertex{a}{j,\ell}{i}: &\pref{\specialvertex{b}{j-1, \ell}{i}, \specialvertex{b}{j, \roof{\nicefrac{\ell}{2}}}{i}}  & where $ j\in [\log_2(\nicefrac{n}{2})] \setminus \{1\}$ and $\ell \in [\nicefrac{n}{2^j}]$ \\
\specialvertex{\tilde{a}}{j,\ell}{i}: & \pref{\specialvertex{\tilde{b}}{j-1,\ell}{i}, ~\specialvertex{\tilde{b}}{j,\roof{\ell/2}}{i}} & where $j\in[\log_2(\nicefrac{n}{2})] \setminus \{1\}$ and $\ell \in [\nicefrac{n}{2^j}]$ \\
 \specialvertex{b}{j,\ell}{i}: & \pref{\specialvertex{a}{j, 2\ell -1}{i}, ~\specialvertex{a}{j, 2\ell}{i},~ \specialvertex{a}{j+1,\ell}{i}} & where $ j\in [\log_2(\nicefrac{n}{2})\!-\!1]$ and $\ell \in [\nicefrac{n}{2^{j+1}}]$\\
 \specialvertex{\tilde{b}}{j\ell}{i}: & \pref{\specialvertex{\tilde{a}}{j(2\ell -1)}{i},~ \specialvertex{\tilde{a}}{j, 2\ell}{i},~\specialvertex{\tilde{a}}{j+1, \ell}{i}} & where $j\in [\log_2(\nicefrac{n}{2})\! - \!1]$ and $\ell \in [\nicefrac{n}{2^{j+1}}]$\\
\specialvertex{b}{j,1}{i}: & \pref{\specialvertex{a}{j,1}{i},~ \specialvertex{a}{j,2}{i}} & where $ j= \log_2(\nicefrac{n}{2})$\\
 \specialvertex{\tilde{b}}{j,1}{i}: & \pref{\specialvertex{\tilde{a}}{j,1}{i}, ~\specialvertex{\tilde{a}}{j,2}{i} } & where $ j= \log_2(\nicefrac{n}{2})$
 \end{tabularx}
\medskip\\
%\medskip
\hspace{-1.2cm} For each edge $e\in E_{ij}$, $1\leq i<j\leq k$, we have the following preferences:\\
\begin{tabularx}{\textwidth}{ C{1cm}  C{2cm}  C{13cm}} 
\basevertex{e}{}: & \pref{\basevertex{\tilde{e}}{}, ~\basevertex{u}{2h+1},~\basevertex{v}{2h'+1} , \specialvertex{q}{\roof{\ell/2}}{ij}} & where for some $\ell \in [m]$, edge $e=uv=E_{ij}(\ell)$ and\\
 & &  for some $h,h' \in  [r]$, $e=E_{u}(h)$ and $e=E_{v}(h')$. \\
\basevertex{\tilde{e}}{}: & \pref{\basevertex{e}{}, ~\tilde{q}_{\roof{\ell/2}}^{ij}} & where for some $ \ell \in [m]$, edge $e=uv$ is the $\ell^{th}$ element of $E_{ij}$\\
     \end{tabularx}
 \medskip\\
\hspace{-0.25cm} For the special vertices of the $ij^{th}$ edge gadget, we have the following preferences:\\
\begin{tabularx}{\textwidth}{ C{1cm}  C{3cm}  C{13cm}}     
  \specialvertex{q}{\ell}{ij}: & \pref{ \basevertex{e}{}, \basevertex{e'}{} ,\specialvertex{c}{1,\ell}{ij} }  & where for some $ \ell \in [\nicefrac{m}{2}]$, edges $e$ and $e'$ are the\\
  & & $2\ell-1^{st}$ and $2\ell^{th}$ elements of $E_{ij}$, respectively. \\
 \specialvertex{\tilde{q}}{\ell}{ij}: & \pref{\basevertex{\tilde{e}}{}, \basevertex{\tilde{e'}}{}, \specialvertex{\tilde{c}}{1,\ell}{ij}}  & where for some $ \ell \in [\nicefrac{m}{2}]$, edges $e$ and $e'$ are the\\
  & & $2\ell-1^{st}$ and $2\ell^{th}$ elements of $E_{ij}$, respectively. \\
\specialvertex{c}{1,\ell}{ij}: &  \pref{ \specialvertex{q}{\ell}{ij}, \specialvertex{d}{1, \roof{\ell/2}}{ij}} & where $ \ell \in [\nicefrac{m}{2}]$ \\
%\il{} & \il{} & \il{}\\
  \specialvertex{\tilde{c}}{1,\ell}{ij}: & \pref{\specialvertex{\tilde{q}}{\ell}{ij},~ \specialvertex{\tilde{d}}{1, \roof{\ell/2}}{ij}} & where $\ell \in [\nicefrac{m}{2}]$ \\
 \specialvertex{c}{h,\ell}{ij}: & \pref{\specialvertex{d}{h-1, \ell}{ij}, ~\specialvertex{d}{h, \roof{\ell/2}}{ij}} & where $ h \in[\log_2(\nicefrac{m}{2})] \setminus \{1\}, \ell \in [\nicefrac{m}{2^{h}}]$ \\
%\il{} & \il{} & \il{}\\

 \specialvertex{\tilde{c}}{h,\ell}{ij}: & \pref{\specialvertex{\tilde{d}}{h-1, \ell}{ij}, \specialvertex{\tilde{d}}{h, \roof{\ell/2}}{ij}} & where $ h \in[\log_2(\nicefrac{m}{2})] \setminus \{1\}$ and $\ell \in [\nicefrac{m}{2^{h}}]$ \\
\specialvertex{d}{h,\ell}{ij}:  &  \pref{\specialvertex{c}{h, 2\ell -1}{ij}, ~ \specialvertex{c}{h, 2\ell}{ij},~ \specialvertex{c}{h+1, \ell}{ij}} & where $ h \in [ \log_2(\nicefrac{m}{2}) \! - \! 1]$ and $\ell \in [\nicefrac{m}{2^{h+1}}]$\\
\specialvertex{\tilde{d}}{h, \ell}{ij}: & \pref{\specialvertex{\tilde{c}}{h, 2\ell -1}{ij},~ \specialvertex{\tilde{c}}{h, 2\ell}{ij},~ \specialvertex{\tilde{c}}{h+1, \ell}{ij}} & where $ h \in [ \log_2(\nicefrac{m}{2}) \!- \!1]$ and $\ell \in [\nicefrac{m}{2^{h+1}}]$\\
\specialvertex{d}{h,1}{ij}: &  \pref{\specialvertex{c}{h,1}{ij},~ \specialvertex{c}{h,2}{ij} } & where $ h = \log_2(\nicefrac{m}{2})$\\
 \specialvertex{\tilde{d}}{h, 1}{ij}: & \pref{\specialvertex{\tilde{c}}{h, 1}{ij},~ \specialvertex{\tilde{c}}{h,2}{ij}} & where $h = \log_2(\nicefrac{m}{2})$\\
    \end{tabularx}
   \caption{Preference lists in the proof of Theorem~\ref{thm:asm}; notation $\pref{\cdot, \cdot}$ denotes the order of preference over neighbors.}
   \label{pref_list2}
\end{table*}

\par
{\bf Parameter$\colon$} We set $k'=k+\nicefrac{k(k-1)}{2}$, and $t=k'$. 
\par
Clearly, this construction can be carried out in polynomial time. 
Next, we will prove that the graph $G'$ is bipartite.

 % !TEX root = NASM-main.tex
 
 %\newcommand{\hide}[1]{}

 %\section{Correctness of Theorem~\ref{thm:asm}}\label{sec:ASM}

%\section{W[1]-hardness of {\sc ASM}: Analysis}\label{sec:ASM}
 
%In this section, we will briefly describe the reduction gadget used to prove Theorem~\ref{thm:asm}

\begin{supress}

 \il{From here gone to main body}
In this section we prove Theorem~\ref{thm:asm}. We give a polynomial-time parameter preserving many-to-one reduction from the \WOH problem \mcq (\mcqsmall) (\cite{cai2008parameterized}) on regular graphs. \hide{in which we are given a regular graph $G=(V,E)$ and a partition of $V(G)$ into $k$ parts, $V_1,\ldots,V_k$, and the objective is to decide if there exists a subset $S\subseteq V(G)$ such that $| S\cap V_i| =1$, for each $i\in [k]$, and the induced subgraph $G[S]$ is a clique. }

It will be necessary for us to assume that certain sets are ordered. This ordering uniquely defines the $t^{th}$ element of the set (for an appropriately defined value of $t$), and thereby enables us to refer to the $t^{th}$ element of the set unambiguously. We assume that sets $V_{i}$ (for each $i\in [k]$) and $E_{ij}$ (for each $\{i,j\} \sse [k], i<j$) have a canonical order, and thus for an appropriately defined value $t$, $V_{i}(t)$ ($E_{ij}(t)$) and $\sigma(V_{i}, v)$ ($\sigma(E_{ij}, e)$) are uniquely defined. For ease of exposition, for any vertex $v\in V(G')$ we will refer to its set of neighbors, as an ordered set. In such a situation we will denote $N(v)=\pref{ \cdot, \cdot}$. 

\hide{For any ordered set $X$, and an appropriately defined value $t$, $X(t)$ denotes the $t^{th}$ element of the set $X$. Conversely, suppose that $x$ is $h^{th}$ element of the set $X$, then $\sigma(x, X)=h$.} %\ma{Add to Prelims or someplace} 

Given an instance $\Co{I}=(G,(V_1,\ldots,V_k))$ of \mcqsmall, where $G$ is a regular graph whose degree is denoted by $r$, we will next describe the construction of an instance $\Co{J}=(G',\Co{L},k',t)$ of \asm. %, where $k'=t=k+\nicefrac{k(k-1)}{2}$.%\todo[inline]{define regular graph}

  \par
 {\bf Construction.} We begin by introducing some notations. For any $\{i, j\} \sse [k]$, such that $i<j$, we use $E_{ij}$ to denote the set of edges between sets $V_i$ and $V_j$. For each $i\in [k]$, we may assume that $| V_i| =n=2^p$, and for each $\{i,j\} \subseteq [k]$, we may assume that $\lvert E_{ij} \rvert =m=2^{p'}$, for some positive integers $p$ and $p'$ greater than one.\footnote{Let $p$ be the smallest positive integer greater than one such that $n<2^p$, add $2^p-n$ isolated vertices in $V_i$. Similarly, let $p'$ be the smallest positive integer greater than one such that $m<2^{p'}$, add  $2^{p'}-m$ isolated edges (an edge whose endpoints are of degree exactly one) to $E_{ij}$. Note that if $(G,(V_1,\ldots,V_k))$ was a \WOH instance of \mcqsmall earlier, then so even now.}.

 %\Ma{Recall that for any vertex $u\in V(G)$,  $E_{u}=\{ e\in E(G): u \text{ is an endpoint of the edge $e$}\}$, we assume that the elements of this set have a canonical order. Moreover, we assume that sets $V_{i}$ (for each $i\in [k]$) and $E_{ij}$ (for each $\{i,j\} \sse [k], i<j$) have a canonical order. We will use these orderings crucially to establish edge relationships between various vertices.}  

 %\sout{For the ease of exposition, we will  assume that each vertex of $V_i$, where $i\in [k]$, is denoted as $u_\ell^i$, where $\ell \in [n]$ and each edge of $E_{ij}$, where $\{i,j\}\subseteq [k]$, is denoted as $e_\ell^{ij}$, where $\ell \in [m]$.} 
 
 %\il{}
 
For each  $j\in [\log_2 (\nicefrac{n}{2})]$, let $\beta_j= \nicefrac{n}{2^j}$, and $\gamma_j=\nicefrac{n}{2^{j+1}}$. For each  $j\in [\log_2(\nicefrac{m}{2})] $, let $\rho_j=\nicefrac{m}{2^j}$, and $\tau_j= \nicefrac{m}{2^{j+1}}$. %Let $u\in V(G)$. 
%Let $E_u$ denote an ordered set\footnote{For an ordered set $X$, $X(i)$ denote the $i^{\text{th}}$ element of $X$, where $i\in [\lvert X \rvert]$.} of edges that are incident on the vertex $u\in V(G)$. 
Next, we are ready to describe the construction of the graph $G'$.

\begin{itemize}%[noitemsep]
\item[ ]~~~
\noindent\hspace{-0.7cm}{\bf Base vertices:}
\item For each vertex $u \in V(G)$, we have $2r+2$ vertices in $G'$, denoted by $\{\basevertex{u}{i} : i\in [2r+2]\}$, connected via a path: $(\basevertex{u}{1},\ldots,\basevertex{u}{2r+2})$. 

 \item For each edge $e\in E(G)$, we have vertices  $\basevertex{e}{}$ and $\basevertex{\tilde{e}}{}$ in $G'$ that are neighbors. 
 
% $\{i,j\} \subseteq[k]$, $i<j$, consider an edge $\basevertex{e}{} \in E_{ij}$. We add vertices $\basevertex{e}{}$ and $\basevertex{\tilde{e}}{}$ to $V(G')$ and make them neighbors.  
 
\hide{For any $u\in V(G)$, we define $\X{E}_{u}=\{e \in V(G') | e \in E_{u}\}$ and assume that the set $\X{E}_{u}$ has the same canonical order as $E_{u}$.} %\ma{not needed}
 
% $e_{\ell 2}^{ij}$ to $V(G')$, and an edge $e_{\ell 1}^{ij}e_{\ell 2}^{ij}$ to $E(G')$. 

\item  For each $h\in [r]$, \basevertex{u}{2h+1} is a neighbor of the vertex \basevertex{e}{}, where $e=\sigma(E_{u}, h)$. %\X{E}_{u}(h)$, the $h^{th}$ vertex in $\X{E}_{u}$. 

\smallskip
\noindent\hspace{-0.3cm}{\bf Special vertices.} For each $i\in [k]$, we define a set of special vertices as follows. 
\item For each $\ell \in [\beta_1]$, we add vertices \specialvertex{p}{\ell}{i} and \specialvertex{\tilde{p}}{\ell}{i} to $V(G')$. Let $u$ and $v$ denote the $2\ell-1^{st}$ and the $2\ell^{th}$ vertices in $V_{i}$, respectively. Then, the vertex 
\specialvertex{p}{\ell}{i} is a neighbor of vertices \basevertex{u}{1} and \basevertex{v}{1}; and the vertex \specialvertex{\tilde{p}}{\ell}{i} is a neighbor of vertices \basevertex{u}{2r+2} and \basevertex{v}{2r+2} in $G'$.
 
\item For each $j\in [\log_2 (\nicefrac{n}{2})]$ and $\ell \in [\beta_j]$, we add vertices $\specialvertex{a}{j,\ell}{i}$ and $\specialvertex{\tilde{a}}{j,\ell}{i}$ to $V(G')$. 

Specifically, for the value $j=1$, we make \specialvertex{a}{1, \ell}{i} and  \specialvertex{\tilde{a}}{1, \ell}{i} a neighbor of \specialvertex{p}{\ell}{i} and \specialvertex{\tilde{p}}{\ell}{i}, respectively.

\item For each $j\in [\log_2(\nicefrac{n}{2})]$ and $\ell \in [\gamma_j]$, we add vertices $\specialvertex{b}{j,\ell}{i}$ and $\specialvertex{\tilde{b}}{j,\ell}{i}$ to $V(G')$. 
 
Moreover, for $j\in [\log_2(\nicefrac{n}{2})-1]$, we make \specialvertex{b}{j,\ell}{i} a neighbor of \specialvertex{a}{j, 2\ell-1}{i}, \specialvertex{a}{j, 2\ell}{i}, and \specialvertex{a}{j+1, \ell}{i}. Symmetrically, we make \specialvertex{\tilde{b}}{j,\ell}{i} a neighbor of  \specialvertex{\tilde{a}}{j, 2\ell-1}{i}, \specialvertex{\tilde{a}}{j, 2\ell}{i}, and \specialvertex{\tilde{a}}{j+1,\ell}{i}. For the special case, when $j = \log_2(\nicefrac{n}{2})$, \specialvertex{b}{j,1}{i} is a neighbor of \specialvertex{a}{j,1}{i} and \specialvertex{a}{j,2}{i}; and \specialvertex{\tilde{b}}{j,1}{i} is a neighbor of \specialvertex{\tilde{a}}{j,1}{i} and \specialvertex{\tilde{a}}{j,2}{i}. 

%we add an edge $u_{\ell(2h+1)}^{i}E_{u^i_\ell}(h)$. 

\medskip
\noindent For each  $\{i,j\} \subseteq [k]$, where $i<j$, we do as follows.

\item For each $\ell \in [\rho_1]$, we add vertices \specialvertex{q}{\ell}{ij} and \specialvertex{\tilde{q}}{\ell}{ij} to $V(G')$. 

Moreover, let $e$ and $e'$ denote the $2\ell-1^{st}$ and $2\ell^{th}$ elements of $E_{ij}$, respectively. Then,  \specialvertex{q}{\ell}{ij} is a neighbor of  \basevertex{e}{} and $\basevertex{e'}{}$; and symmetrically \specialvertex{\tilde{q}}{\ell}{ij} is a neighbor of \basevertex{\tilde{e}}{} and \basevertex{\tilde{e'}}{} in $G'$.

\item For each  $h\in [\log_2 (\nicefrac{m}{2})]$, and $\ell \in [\rho_h]$, we add vertices $\specialvertex{c}{h,\ell}{ij}$ and \specialvertex{\tilde{c}}{h\ell}{ij} to $V(G')$.
Moreover, for $\ell \in [\rho_1]$, $\specialvertex{c}{1,\ell}{ij}$ is a neighbor of  \specialvertex{q}{\ell}{ij}, and symmetrically \specialvertex{\tilde{c}}{1,\ell}{ij} is a neighbor of \specialvertex{\tilde{q}}{\ell}{ij} in $G'$.

\item For each $h\in [\log_2 (\nicefrac{m}{2})]$ and $\ell \in [\tau_h]$,  we add vertices \specialvertex{d}{h,\ell}{ij} and \specialvertex{\tilde{d}}{h,\ell}{ij} to $G'$.

Moreover, when $h\in [\log_2 (\nicefrac{m}{2})-1]$, \specialvertex{d}{h,\ell}{ij} is a neighbor of \specialvertex{c}{h, 2\ell-1}{ij}, \specialvertex{c}{h, 2\ell}{ij}, and \specialvertex{c}{h+1, \ell}{ij}; and symmetrically, \specialvertex{\tilde{d}}{h,\ell}{ij} is a neighbor of \specialvertex{\tilde{c}}{h, 2\ell-1}{ij}, \specialvertex{\tilde{c}}{h, 2\ell}{ij}, and \specialvertex{\tilde{c}}{h+1, \ell}{ij} in $G'$. 

For the special case, when $h=\log_2 (\nicefrac{m}{2})$, \specialvertex{d}{h,1}{ij} is a neighbor of \specialvertex{c}{h, 1}{ij} and \specialvertex{c}{h, 2}{ij}; and symmetrically, \specialvertex{\tilde{d}}{h,1}{ij} is a neighbor of \specialvertex{\tilde{c}}{h, 1}{ij}, \specialvertex{\tilde{c}}{h, 2}{ij}  in $G'$.
\end{itemize}

%\item For  $h=\log_2 (\nicefrac{m}{2})$,  we add edges $d^{ij}_{h1}c^{ij}_{h1}$, $d^{ij}_{h1}c^{ij}_{h2}$, $\tilde{d}^{ij}_{h1}\tilde{c}^{ij}_{h1}$ and $\tilde{d}^{ij}_{h1}\tilde{c}^{ij}_{h2}$ to $E(G')$.

% we add edges $d^{ij}_{h\ell}c^{ij}_{h(2\ell-1)}$, $d^{ij}_{h\ell}c^{ij}_{h(2\ell)}$, $d^{ij}_{h\ell}c^{ij}_{(h+1)\ell}$, $\tilde{d}^{ij}_{h\ell}\tilde{c}^{ij}_{h(2\ell-1)}$, $\tilde{d}^{ij}_{h\ell}\tilde{c}^{ij}_{h(2\ell)}$ and $\tilde{d}^{ij}_{h\ell}\tilde{d}^{ij}_{(h+1)\ell}$, where $i<j$, to $E(G')$.

% For each $i\in [k]$, $\ell \in [n]$, let $E_{u_\ell^i}= \{e_{s1}^{ij}\mid j\in [k], i<j, \text{ and } e_s^{ij}(\in E_{ij}) \text{ is incident on } u_\ell^i \in V_i\}$ be an ordered set\footnote{For an ordered set $X$, $X(i)$ denote the $i^{\text{th}}$ element of $X$, where $i\in [\lvert X \rvert]$.}. 

 %denote an ordered set\footnote{For an ordered set $X$, $X(i)$ denote the $i^{\text{th}}$ element of $X$, where $i\in [\lvert X \rvert]$.} of vertices of $G'$ that corresponds to edges of $G$ that are incident on the vertex $u_\ell^i \in V_i$. % in the graph $G$. 

Figure \ref{fig:hardness_degree} illustrates the construction of $G'$. The preference list of each vertex in $G'$ is presented in Table~\ref{pref_list2}.

%\il{}

\par
{\bf Parameter$\colon$} We set $k'=k+\nicefrac{k(k-1)}{2}$, and $t=k'$. 
\par
Clearly, this construction can be carried out in polynomial time. Next, we will prove 
that the graph $G'$ is bipartite. 

\il{Upto here in teh main body}

*********************************************************
\end{supress}

%Using the construction of the instance $\Co{J}=(G', \Co{L}, k', t)$ of \asm from Section~\ref{sec:construction}, we now proceed to proving the correctness.

\begin{figure*}
    \centering
\hspace{-1.1cm}\includegraphics[width=13cm,height=15cm,keepaspectratio,]{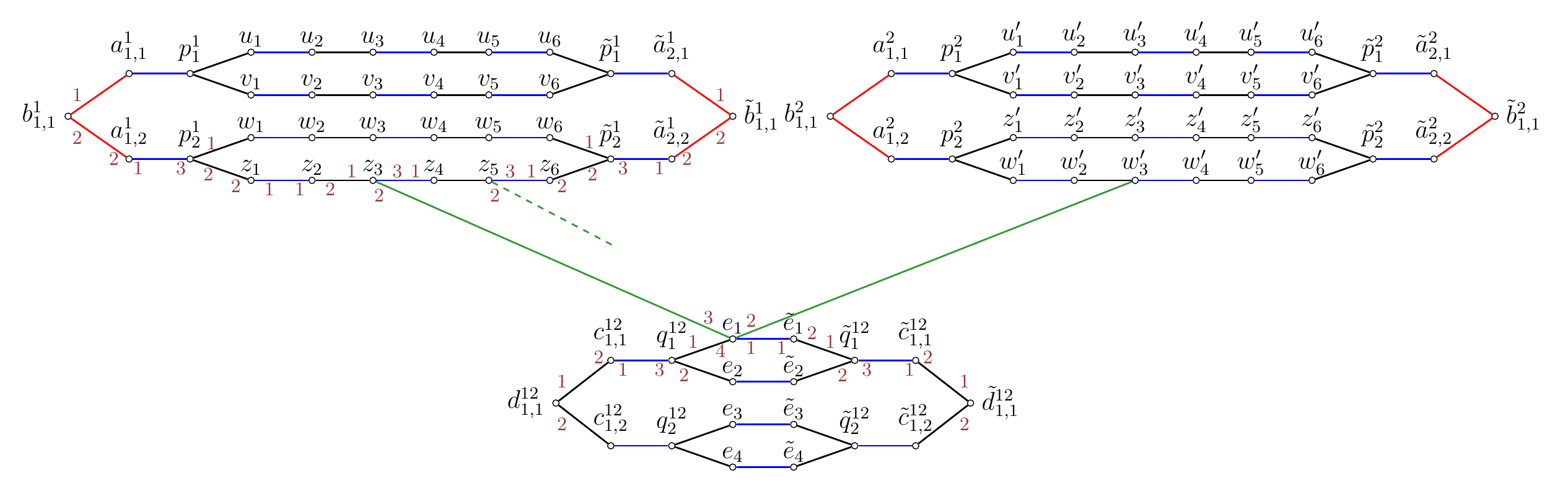}
\caption{An illustration of construction of graph $G'$ in \W[1]-hardness of \asm for constant sized preference list. Here, {\color{blue} blue} colored edges belongs to the stable matching $\mu$. Here, $n=4$, $m=4$, and $r=2$}\label{fig:hardness_degree}
\end{figure*}

%some structural properties about our construction, namely that the graph is bipartite and will analyze the size of a stable matching in $G'$.

%we will exhibit a stable matching in $G'$, Claims~\ref{cl:G'-bipartite} and \ref{cl:mu-is-stable-in-G}, respectively.

%We begin by proving that the constructed graph $G'$ is bipartite.

\begin{clm}\label{cl:G'-bipartite}
Graph $G'$ is bipartite.
\end{clm}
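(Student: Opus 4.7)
The plan is to exhibit an explicit bipartition $V(G') = A \uplus B$ and verify by a case analysis over the edge classes of $G'$ that every edge has one endpoint in $A$ and the other in $B$. Since $G'$ is built out of several well-structured families of vertices (base paths, edge-pair vertices, and two symmetric trees of ``special'' vertices hanging off of each vertex/edge gadget), a $\{0,1\}$-labelling $\chi:V(G')\to\{0,1\}$ defined uniformly on each family should suffice.

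Concretely, I would define $\chi$ as follows. For a base path vertex $u_i$ (with $u\in V(G)$, $i\in[2r+2]$), set $\chi(u_i)=i \bmod 2$, so the path $(u_1,\ldots,u_{2r+2})$ already alternates. For the edge-pair vertices set $\chi(e)=0$ and $\chi(\tilde e)=1$. For each vertex gadget $i\in[k]$, set $\chi(p^i_\ell)=0$, $\chi(\tilde p^i_\ell)=1$, and uniformly $\chi(a^i_{j,\ell})=1$, $\chi(b^i_{j,\ell})=0$, $\chi(\tilde a^i_{j,\ell})=0$, $\chi(\tilde b^i_{j,\ell})=1$ across every level $j$ and every index $\ell$. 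For each edge gadget $\{i,j\}\subseteq[k]$, set $\chi(q^{ij}_\ell)=1$, $\chi(\tilde q^{ij}_\ell)=0$, and uniformly $\chi(c^{ij}_{h,\ell})=0$, $\chi(d^{ij}_{h,\ell})=1$, $\chi(\tilde c^{ij}_{h,\ell})=1$, $\chi(\tilde d^{ij}_{h,\ell})=0$. Then take $A=\chi^{-1}(0)$ and $B=\chi^{-1}(1)$.

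The verification then proceeds by running through the edge classes listed in the construction and checking that the two endpoints of each edge receive different $\chi$-values. The path edges $u_iu_{i+1}$ alternate by definition; each cross edge $u_{2h+1}e$ sends an odd-indexed base vertex (value $1$) to a base edge-vertex (value $0$); each edge $e\tilde e$ connects values $0$ and $1$; the attachments $u_1p^i_\ell$, $u_{2r+2}\tilde p^i_\ell$, $e q^{ij}_\ell$, $\tilde e\tilde q^{ij}_\ell$ likewise flip colors. Inside each of the four binary-tree gadgets ($a/b$, $\tilde a/\tilde b$, $c/d$, $\tilde c/\tilde d$), every edge goes between an $a$-type (or $c$-type) vertex and a $b$-type (or $d$-type) vertex, which by construction carry opposite $\chi$-values regardless of their depth $j$ or $h$; and the leaves $p^i_\ell,\tilde p^i_\ell,q^{ij}_\ell,\tilde q^{ij}_\ell$ connect to the appropriate $a_{1,\cdot}$ / $c_{1,\cdot}$ vertex with opposite color.

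There is no real obstacle beyond bookkeeping: the subtle point is that all $a^i_{j,\ell}$ (respectively $b^i_{j,\ell}$, etc.) can be given the \emph{same} color independent of the level $j$, because every edge of the binary tree goes between the $a$-family and the $b$-family and never between two vertices of the same family; this is what allows the uniform definition of $\chi$ to be consistent. The enumeration of cases above exhausts every edge introduced in the construction, which establishes that $G'$ is bipartite.
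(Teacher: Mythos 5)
Your proposal is correct and follows essentially the same route as the paper: both exhibit an explicit two-coloring that is uniform on each vertex family (base-path vertices by parity of index, edge-pair vertices, and the $p/a/b$, $q/c/d$ trees with their tilde counterparts receiving opposite colors), and your assignment coincides with the paper's bipartition $(X,Y)$ up to relabeling. Your edge-by-edge verification is, if anything, slightly more explicit than the paper's, which simply asserts that the two parts are independent sets.
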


\begin{proof} 

We show that $G'$ is a bipartite graph by creating a bipartition $(X,Y)$ for $G'$ as follows. We define the following sets.
\[A=\{a_{j,\ell}^i : i\in[k], j\in [\log_2(\nicefrac{n}{2})], \ell \in [\beta_j]\}\]
 \[\tilde{A}=\{\tilde{a}_{j,\ell}^i :
  i\in[k], j\in [\log_2(\nicefrac{n}{2})], \ell \in [\beta_j]\}\] 
\[B=\{b_{j,\ell}^i : i\in[k], j\in [\log_2(\nicefrac{n}{2})], \ell \in [\gamma_j]\}\] 
 \[\tilde{B}=\{\tilde{b}_{j,\ell}^i : i\in[k], j\in [\log_2(\nicefrac{n}{2})], \ell \in [\gamma_j]\}\] 
 We add $A$ to $X$ and $\tilde{A}$ to $Y$. Note that there is no edge between the vertices in $A$ (or $\tilde{A}$). Since no vertex of $B$(or $\tilde{B}$) is adjacent to $\tilde{A}$(or $A$), we add $B$ to $Y$ and $\tilde{B}$ to $X$. 
 
 Let $P=\{p_{\ell}^i : i\in [k], \ell \in [\beta_1]\}$ and  $\tilde{P}=\{\tilde{p}_{\ell}^i : i\in [k], \ell \in [\beta_1]\}$. We add $P$ to $Y$ and $\tilde{P}$ to $X$. 
We define the following sets of vertices. 
  \[U_{odd}=\{\basevertex{u}{2h-1} : u\in V_{i}, i\in [k], h\in [r+1]\} \text{ and }\]  \[ U_{even}=\{ \basevertex{u}{2h} : u \in V_{i}, i\in [k],  h\in [r+1]\}\]
    We add $U_{odd}$ to $X$ and $U_{even}$ to $Y$. 
    We define the following sets.
    \[E_1=\{ \basevertex{e}{} : e \in E_{ij}, \{i,j\} \sse [k] \} \text{ and }\] \[ E_2=\{\basevertex{\tilde{e}}{} : e \in E_{ij}, \{i,j\} \sse [k] \}\]
      We add $E_1$ to $Y$ and $E_2$ to $X$. 

   We next  define the following sets. 
      \[Q=\{q_{\ell}^{ij} :  \{i,j\} \sse [k], i<j, \ell \in [\rho_1]\}  \text{ and }\] \[ \tilde{Q}=\{\tilde{q}_{\ell}^{ij} : \{i,j\} \sse [k], i<j, \ell \in [\rho_1]\}\]
        We add $Q$ to $X$ and $\tilde{Q}$ to $Y$. 
      Again define the the following two sets.
       \[C=\{c_{h, \ell}^{ij} : \{i,j\} \sse [k], i<j, h \in [\log_2 (\nicefrac{m}{2})], \ell \in [\rho_h]\}\]  \[ \tilde{C}=\{\tilde{c}_{h, \ell}^{ij} : \{i,j\} \sse [k], i<j, h \in [\log_2 (\nicefrac{m}{2})], \ell \in [\rho_h]\}\]
        We add $C$ to $Y$ and $\tilde{C}$ to $X$.  
     Finally we define the sets,
      \[D=\{d_{h,\ell}^{ij} :  \{i,j\} \sse [k], i<j, h \in [\log_2 (\nicefrac{m}{2})], \ell \in [\tau_h]\} \]
      \[ \tilde{D}=\{\tilde{d}_{h, \ell}^{ij} :  \{i,j\} \sse [k], i<j, h \in [\log_2 (\nicefrac{m}{2})], \ell \in [\tau_h]\}\]
       We add $D$ to $X$ and $\tilde{D}$ to $Y$.       
      Observe that $X$ and $Y$ are independent sets in $G'$. Hence, $G'$ is a bipartite graph. Figure~\ref{fig:bipartite} illustrates this bipartition of the graph $G'$. 
\end{proof}

\begin{figure*}
    \centering
\includegraphics[width=6cm,height=4cm,keepaspectratio]{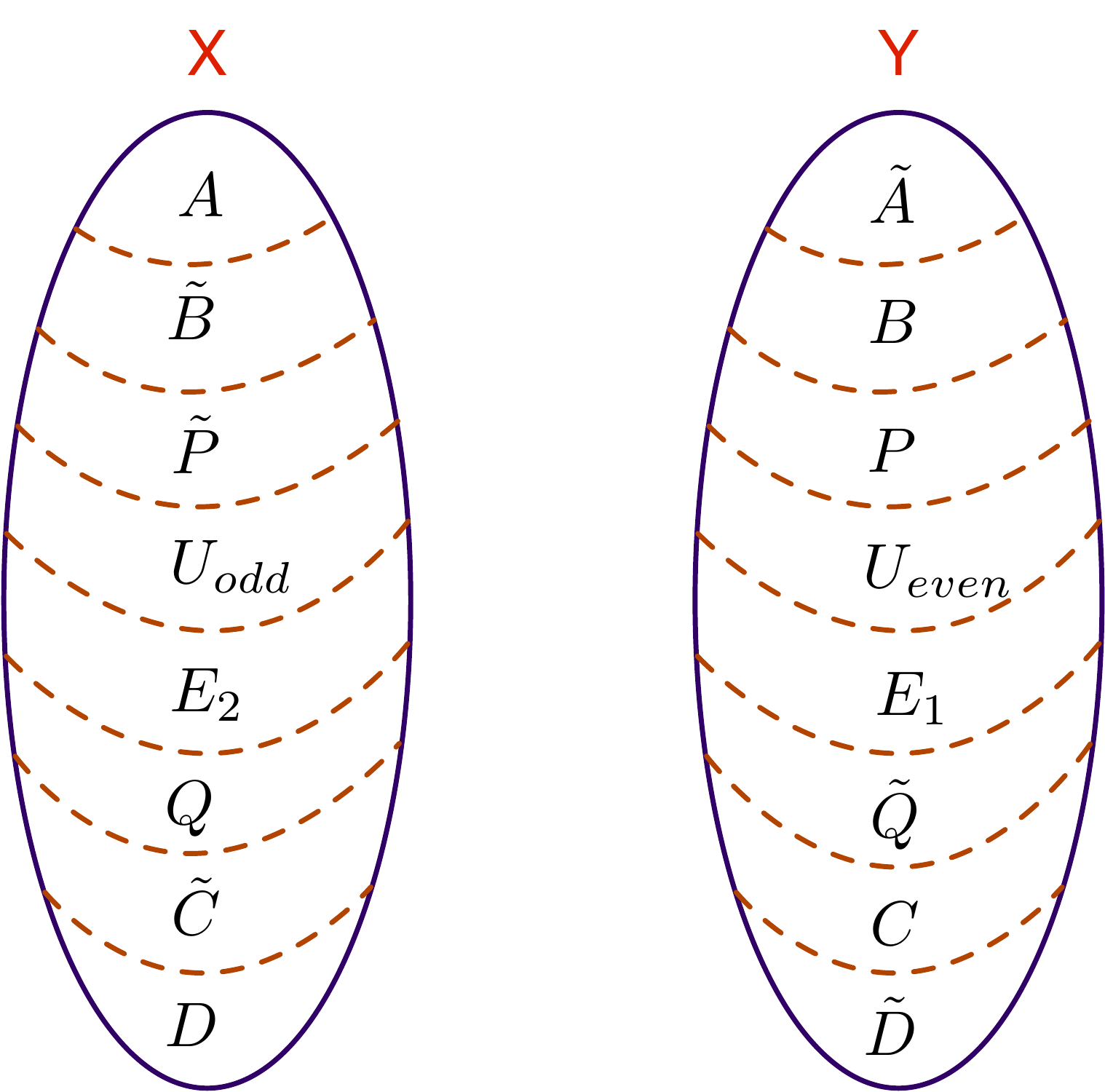}
\caption{A bipartition of the graph $G'$, constructed in the  \textsf{\textup{W[1]-hardness}} of \asm.}\label{fig:bipartite}
\end{figure*}

This completes the construction of an instance of \asm. 

\medskip

\noindent {\bf Correctness:} Since we are interested in a matching which is at least $t$ more than the size of a stable matching, we need to know the size of a stable matching. Towards this, we construct a stable matching $\mu$ that contains the following set of edge %as follows, and subsequently analyze its size.
%\[ \left(\cup_{\substack{i\in [k]}}\{ \basevertex{u}{2h-1}\basevertex{u}{2h}: u\in V_{i}, h\in [r+1] \}\right) \bigcup \left(\cup_{\substack{\{i,j\}\sse k,  i< j}}\{ \basevertex{e}{} \basevertex{\tilde{e}}{} :  e\in E_{ij}\}\right) \tag{I} \label{size-mu}\]
 
\[ \left(\cup_{\substack{u \in V(G)}}\{ \basevertex{u}{2h-1}\basevertex{u}{2h} \in E(G'): h\in [r+1] \}\right) \bigcup \left(\cup_{\substack{e\in E(G)}}\{ \basevertex{e}{} \basevertex{\tilde{e}}{} \in E(G')\}\right) \tag{I} \label{size-mu}\]

Additionally, for each $i \in [k]$ and $\ell \in [\nicefrac{n}{2}]$, we add $a_{1,\ell}^ip_{\ell}^i$ and $\tilde{a}_{1,\ell}^i\tilde{p}_{\ell}^i$ to $\mu$.\hide{thus, $kn$ edges are added.} For each $i\in [k]$, $j\in [\log_2(\nicefrac{n}{2})] \setminus \{1\}$, and $\ell \in [\beta_j]$, we add $a_{j, \ell}^ib_{j-1, \ell}^i$ and  $\tilde{a}_{j, \ell}^i\tilde{b}_{j-1,\ell}^i$ to $\mu$. For each $\{i,j\} \subseteq [k]$, $i<j$, and $\ell \in [\nicefrac{m}{2}]$, we add $c_{1,\ell}^{ij}q_{\ell}^{ij}$ and $\tilde{c}_{1,\ell}^{ij}\tilde{q}_{\ell}^{ij}$ to $\mu$. \hide{thus, $m(\nicefrac{k(k-1)}{2})$ edges are added.} For each $\{i,j\} \subseteq [k]$, $i<j$, $h\in [\log_2(\nicefrac{m}{2})] \setminus \{1\}$, and $\ell \in [\rho_h]$, we add $c_{h,\ell}^{ij}d_{h-1, \ell}^{ij}$ and $\tilde{c}_{h,\ell}^{ij}\tilde{d}_{(h-1)\ell}^{ij}$ to $\mu$.\hide{so $(m-4)(\nicefrac{k(k-1)}{2})$ edges are added.} 
This completes the construction of the matching $\mu$. 

\begin{clm}\label{size-of-mu}
Matching $\mu$ has size $kn(r+1)+\nicefrac{mk(k-1)}{2}+2k(n-2)+k(k-1)(m-2)$. Furthermore, $\mu$ is a stable matching in $G'$. 
\end{clm}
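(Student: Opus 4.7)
The plan is to dispatch the two assertions of the claim independently, since they do not interact.

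For the size, I would partition $\mu$ into four blocks matching the four families of edges appearing in its definition. The path edges $\basevertex{u}{2h-1}\basevertex{u}{2h}$ inside the vertex gadget of each $u\in V(G)$ contribute $(r+1)$ each, giving $kn(r+1)$ in total. The edges $\basevertex{e}{}\basevertex{\tilde e}{}$ contribute $|E(G)|=\binom{k}{2}m=mk(k-1)/2$. Inside the $i$-th vertex gadget, the tree edges $\specialvertex{p}{\ell}{i}\specialvertex{a}{1,\ell}{i}$ and $\specialvertex{a}{j,\ell}{i}\specialvertex{b}{j-1,\ell}{i}$ for $j\ge 2$, together with their tilde counterparts, contribute $2\sum_{j=1}^{\log_2(n/2)}\beta_j$; applying the geometric identity $\sum_{j=1}^{L}n/2^j=n-n/2^L$ with $L=\log_2(n/2)$ reduces this to $2(n-2)$, and summing over $i\in[k]$ yields $2k(n-2)$. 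An identical computation for each edge gadget gives $2(m-2)$, which sums to $k(k-1)(m-2)$ across the $\binom{k}{2}$ edge gadgets. Adding the four blocks recovers the stated total.

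For stability, I would verify that no edge of $G'$ outside $\mu$ is blocking. The pivotal structural observation, to be read off from Table~\ref{pref_list2}, is that in both gadget types almost every saturated vertex of $\mu$ is matched to its \emph{top}-ranked neighbor: this holds for every $\basevertex{u}{2h}$ (matched to $\basevertex{u}{2h-1}$), every $\basevertex{e}{}$ (matched to $\basevertex{\tilde e}{}$), every $\specialvertex{a}{j,\ell}{i}$, every $\specialvertex{c}{h,\ell}{ij}$, and their tilde copies. The only vertices matched to a non-top partner are the ``pivots'' $\specialvertex{p}{\ell}{i}$, $\specialvertex{q}{\ell}{ij}$, $\specialvertex{b}{j,\ell}{i}$ (for $j<\log_2(n/2)$), $\specialvertex{d}{h,\ell}{ij}$ (for $h<\log_2(m/2)$), and tildes; the only unsaturated vertices are the roots $\specialvertex{b}{\log_2(n/2),1}{i}$, $\specialvertex{d}{\log_2(m/2),1}{ij}$, and their tildes.

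With this observation in hand, the stability check reduces to a case enumeration over edge types: every non-matching edge is incident on at least one non-pivot, non-root saturated vertex that, by the observation, is matched to a strictly more preferred neighbor and hence refuses to block. For instance, edges $\basevertex{u}{2h}\basevertex{u}{2h+1}$ are discharged at $\basevertex{u}{2h}$; edges $\basevertex{u}{2h+1}\basevertex{e}{}$ at $\basevertex{e}{}$; edges $\basevertex{u}{1}\specialvertex{p}{\ell}{i}$ at $\basevertex{u}{1}$; edges $\specialvertex{a}{j,\ell}{i}\specialvertex{b}{j,\roof{\ell/2}}{i}$ at $\specialvertex{a}{j,\ell}{i}$; and parallel choices handle the edge-gadget side together with all tilde-side edges by symmetry. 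The real obstacle is organisational rather than mathematical: the construction has many indexed gadget levels, and the enumeration of edge types must be shown to cover every non-matching edge in $G'$. This amounts to walking through each preference list of Table~\ref{pref_list2} and checking that its top entry is supplied by $\mu$ except at the pivots and roots — a mechanical verification once the role of each layer in the gadget is understood.
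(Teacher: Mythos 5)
Your proposal is correct and follows essentially the same route as the paper: the size is obtained by summing the same four blocks of matching edges (your single geometric sum $2\sum_j \beta_j = 2(n-2)$ just merges the paper's separate counts $kn + k(n-4)$, and likewise for the edge gadgets), and stability is verified by the same structural fact that every vertex not matched to its top choice only prefers neighbors who \emph{are} matched to their top choice. Your edge-centric ``discharge each non-matching edge at an endpoint matched to its top-ranked neighbor'' is a mild reorganization of the paper's vertex-by-vertex enumeration, not a different argument.
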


\begin{proof}Due to \Cref{size-mu}, we know that $\mu$ contains at least $kn(r+1)+\nicefrac{mk(k-1)}{2}$ edges because $|V_{i}|=n$ for each $i\in [k]$ and $|E_{ij}| =m$ for each $\{i,j\} \sse [k]$. The other edges added to $\mu$ can be counted separately, leading to the following relation. 

%By counting each type of edges separately, we obtain that
\begin{align*}
| \mu| = & ~kn(r+1)+\nicefrac{mk(k-1)}{2}+ kn+ k(n-4)+ (\nicefrac{mk(k-1)}{2}) + (m-4)(\nicefrac{k(k-1)}{2})\\
=& ~kn(r+1)+\nicefrac{mk(k-1)}{2}+2k(n-2)+k(k-1)(m-2)
\end{align*}

%\end{proof}
%%\[\lvert \mu \rvert =kn(r+1)+\nicefrac{mk(k-1)}{2}+ kn+ k(n-4)+ (\nicefrac{mk(k-1)}{2}) + (m-4)(\nicefrac{k(k-1)}{2}) = kn(r+1)+\nicefrac{mk(k-1)}{2}+2k(n-2)+k(k-1)(m-2)\] 
%
%
%Next, we will show that $\mu$ is a stable matching in $G'$.
%
%\begin{clm}\label{cl:mu-is-stable-in-G}
%Matching $\mu$ is stable in $G'$.
%\end{clm}

%\begin{proof}
Next, to show that $\mu$ is a stable matching in $G'$, we will exhaustively argue for each vertex in $G'$ that there is no blocking edge incident to it. 

We begin by noting that for any vertex $u \in V(G)$, vertices $\basevertex{u}{1}$ and $\basevertex{u}{2}$ in $G'$ prefer each other over any other vertex in $G'$. Therefore, edge $\basevertex{u}{1} \basevertex{u}{2}$ is a static edge and must belong to every stable  matching in $G'$. Similarly, for any $e\in E(G)$, we note that $\basevertex{e}{}\basevertex{\tilde{e}}{}$ is a static edge in $G'$, and thus belongs to every stable matching in $G'$. For any $u\in V(G)$ and $h\in [r]$, we know that vertex $\basevertex{u}{2h+1}$ is the first preference of \basevertex{u}{2h+2}. Thus, there cannot exist a blocking edge incident to \basevertex{u}{2h+2}, where $h\in [r]$. 
Moreover, for any $h\in [r]$, the vertices that $\basevertex{u}{2h+1}$ prefers over $\basevertex{u}{2h+2}$ are matched to their top preferences. Consequently,  there cannot be a blocking edge incident to $\basevertex{u}{2h+1}$. 

Since for each $u\in V_{i}$, $i\in [k]$, vertices $u_{1}$ and  $u_{2r+2}$ are matched to their top preferences respectively, thus for any $\ell\in [\nicefrac{n}{2}]$ the edges $\basevertex{u}{1} \specialvertex{p}{\ell}{i}$ and $\basevertex{u}{2r+2} \specialvertex{\tilde{p}}{\ell}{i}$ cannot be a blocking edge \wrt $\mu$. Thus, there is no blocking edge incident to $\specialvertex{p}{\ell}{i}$ and $\specialvertex{\tilde{p}}{\ell }{i}$, for  $\ell\in [\nicefrac{n}{2}]$. Analogously, we can argue that there is no blocking edge incident on $\specialvertex{q}{h}{ij}$ and $\specialvertex{\tilde{q}}{h}{ij}$, for any $\{i,j\} \sse [k]$, $i<j$ and $h\in [\nicefrac{m}{2}]$.

Since for each $i\in [k]$, $j\in [\log_2 (\nicefrac{n}{2})]$, and $\ell \in [\beta_j]$, vertices $\specialvertex{a}{j,\ell}{i}$ and $\specialvertex{\tilde{a}}{j,\ell}{i}$ are matched to their top preferences respectively, there is no blocking edge incident to $\specialvertex{a}{j,\ell}{i}$ or $\specialvertex{\tilde{a}}{j,\ell}{i}$. Analogously, there is no blocking edge incident on $\specialvertex{c}{h,\ell}{ij}$ or $\specialvertex{\tilde{c}}{h,\ell}{ij}$, for any $\{i,j\}\subseteq [k], i<j, h\in [ \log_2(\nicefrac{m}{2})]$, and $\ell \in [\rho_h]$. 

For any $i\in [k]$, $j\in [\log_2 (\nicefrac{n}{2})]$, and $\ell \in [\gamma_j]$, vertices that $\specialvertex{b}{j, \ell}{i}$ prefers over $\specialvertex{a}{j+1,\ell}{i}$ (i.e., $\specialvertex{a}{j, 2\ell-1}{i}$ and $\specialvertex{a}{j, 2\ell}{i}$) are matched to their top preferences respectively, there is no blocking edge incident to $\specialvertex{b}{j,\ell}{i}$. By symmetry, there is no blocking edge incident to $\specialvertex{\tilde{b}}{h,\ell}{i}$. Analogously, there is also no blocking edge incident to $\specialvertex{d}{h',\ell'}{ij}$, or $\specialvertex{\tilde{d}}{h',\ell'}{ij}$, for any $\{i,j\}\subseteq [k], i<j, h'\in [ \log_2(\nicefrac{m}{2})]$, and $\ell' \in [\tau_h]$. 

Hence, we can conclude that  $\mu$ is a stable matching in $G'$.
\end{proof}

Next, we will formally prove the equivalence between the instance of \mcqsmall and the instance of \asm. In particular, we will prove the following lemma.

\begin{lemma}\label{lem:asm-hard-correctness} $\Co{I}= (G, (V_{1}, \ldots, V_{k}))$ is a \yes-instance of \mcqsmall if and only if $\Co{J}=(G', \Co{L}, \mu, k',t)$ is a \yes-instance of \asm.

\end{lemma}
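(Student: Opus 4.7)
The plan is to prove both directions of the equivalence by exploiting the tree-like structure of the gadgets together with the carefully designed preference lists. The forward direction will be an explicit construction, while the backward direction will be a structural analysis of $\mu\triangle\eta$.

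For the forward direction, fix a multicoloured clique $S=\{u^1,\ldots,u^k\}$ with $u^i\in V_i$, together with witnessing edges $e^{ij}\in E_{ij}$ for $i<j$ (so $e^{ij}=u^iu^j$ in $G$). I construct $\eta$ by augmenting $\mu$ along $k'$ pairwise vertex-disjoint $\mu$-augmenting paths, one per chosen vertex and one per chosen edge. Setting $u:=u^i$ and $\ell_0:=\lceil\sigma(u,V_i)/2\rceil$, the augmenting path corresponding to $u^i$ starts at the unmatched root $\specialvertex{b}{\log_2(\nicefrac{n}{2}),1}{i}$, descends the left binary tree to $\specialvertex{p}{\ell_0}{i}$ by choosing at each $b$-node the unique branch leading toward $\ell_0$, traverses the base path $\basevertex{u}{1},\basevertex{u}{2},\ldots,\basevertex{u}{2r+2}$, proceeds to $\specialvertex{\tilde p}{\ell_0}{i}$, and re-ascends the symmetric right tree to $\specialvertex{\tilde b}{\log_2(\nicefrac{n}{2}),1}{i}$. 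The augmenting path associated with $e^{ij}$ is analogous: from $\specialvertex{d}{\log_2(\nicefrac{m}{2}),1}{ij}$ down the left edge-tree to $\basevertex{e^{ij}}{}$, across the static edge to $\basevertex{\tilde e^{ij}}{}$, and back up the right tree to $\specialvertex{\tilde d}{\log_2(\nicefrac{m}{2}),1}{ij}$. This yields $|\eta|=|\mu|+k'=|\mu|+t$.

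The crux of this direction is to verify that $\eta$ has exactly $k'$ blocking edges. A routine preference-list inspection shows that each vertex-gadget augmentation contributes the single blocking edge $\basevertex{u}{1}\basevertex{u}{2}$ (because $\basevertex{u}{1}$ now prefers $\basevertex{u}{2}$ to its new $p$-partner, and $\basevertex{u}{2}$ prefers $\basevertex{u}{1}$ to $\basevertex{u}{3}$), and each edge-gadget augmentation contributes the single blocking edge $\basevertex{e^{ij}}{}\basevertex{\tilde e^{ij}}{}$ (the broken static edge). All other candidate blocking edges have the form $\basevertex{e^{ij}}{}\basevertex{w}{2h+1}$ where $w$ is an endpoint of $e^{ij}$; such an edge blocks $\eta$ iff $w$'s base path is left unchanged. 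Here the clique hypothesis is used decisively: since $e^{ij}=u^iu^j$, both endpoints are clique vertices, both their vertex gadgets are augmented, and hence $\basevertex{w}{2h+1}$ is matched in $\eta$ to its top preference $\basevertex{w}{2h}$, killing the potential blocking edge.

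For the backward direction, I decompose $\mu\triangle\eta$ into alternating paths and cycles. Since $|\eta|-|\mu|\geq k'$, at least $k'$ of these components are $\mu$-augmenting. First observe that the only $\mu$-unsaturated vertices of $G'$ are the $2k'$ tree roots $\specialvertex{b}{\log_2(\nicefrac{n}{2}),1}{i}$, $\specialvertex{\tilde b}{\log_2(\nicefrac{n}{2}),1}{i}$, $\specialvertex{d}{\log_2(\nicefrac{m}{2}),1}{ij}$ and $\specialvertex{\tilde d}{\log_2(\nicefrac{m}{2}),1}{ij}$, so there are exactly $k'$ augmenting paths, consuming all these roots. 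A careful trace of the forced alternation through the $a$-, $b$-, $c$-, $d$- and base-vertices shows that each augmenting path must descend a full left tree, cross the base path of either some $u\in V_i$ or some $e\in E_{ij}$, and re-ascend the corresponding right tree, thereby selecting one element from each $V_i$ and one from each $E_{ij}$. The budget of $k'$ blocking edges then forces exactly one blocking edge per augmenting path, and in particular rules out any $\basevertex{e}{}\basevertex{w}{2h+1}$-type blocking edge. Reversing the argument of the forward direction, this forces each endpoint of the selected edge of $E_{ij}$ to coincide with the selected vertex of its colour class, so the selected vertices form a multicoloured clique in $G$. The main obstacle will be ruling out ``crossing'' augmenting paths that leave the base path through some $\basevertex{w}{2h+1}\basevertex{e}{}$ edge and terminate at the root of a different gadget: such paths alternate correctly, but a careful accounting shows each such detour introduces an additional blocking edge (either $\basevertex{e}{}\basevertex{\tilde e}{}$ when the edge gadget is not augmented end-to-end, or $\basevertex{e}{}\basevertex{w'}{2h'+1}$ for some other endpoint $w'$ of $e$), so that no distribution of the $k'$ augmenting paths other than the canonical ``one path per gadget'' structure fits within the $k'$-blocking-edge budget; the quantitative tightness $k'=t=k+\nicefrac{k(k-1)}{2}$ is essential here.
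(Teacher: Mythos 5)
Your forward direction is essentially the paper's: the same augmentation of $\mu$ along one path per clique vertex and one per clique edge, followed by a preference-list check that each vertex gadget contributes only the blocking edge $\basevertex{u}{1}\basevertex{u}{2}$, each edge gadget only $\basevertex{e}{}\basevertex{\tilde e}{}$, and that the clique hypothesis kills the $\basevertex{e}{}\basevertex{w}{2h+1}$ candidates. That part is fine.

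The backward direction is where you diverge from the paper and where there is a genuine gap. You decompose $\mu\triangle\eta$ into alternating components and then claim that every non-canonical (``crossing'') augmenting path ``introduces an additional blocking edge,'' so that the budget $k'$ forces the canonical one-path-per-gadget structure. That claim is false as stated. Consider the crossing path $\tilde d^{ij}_{\log_2(\nicefrac{m}{2}),1}\to\cdots\to\tilde q^{ij}_{\ell}\to\basevertex{\tilde e}{}\to\basevertex{e}{}\to\basevertex{u}{2h+1}\to\basevertex{u}{2h+2}\to\cdots\to\basevertex{u}{2r+2}\to\tilde p^{i}_{\cdot}\to\cdots\to\tilde b^{i}_{\log_2(\nicefrac{n}{2}),1}$, where $u$ is the endpoint of $e$ that appears \emph{second} in $\basevertex{e}{}$'s preference list. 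In the resulting matching $\basevertex{e}{}$ is matched to $\basevertex{u}{2h+1}$, which it prefers to $q^{ij}_{\cdot}$ and to the other endpoint's base vertex, so the only blocking edge this path creates is $\basevertex{e}{}\basevertex{\tilde e}{}$ (note $\basevertex{u}{1}\basevertex{u}{2}$ stays matched and is \emph{not} blocking). One blocking edge per augmenting path is exactly the canonical rate, so your accounting cannot exclude this path. What actually excludes it is a pairing argument you never make: the path leaving $b^{i}_{\log_2(\nicefrac{n}{2}),1}$ is forced (by the alternation through the $a/b$-tree, $p^{i}_{\cdot}$, and the full base path) to terminate at $\tilde b^{i}_{\log_2(\nicefrac{n}{2}),1}$, and similarly $d^{ij}_{\log_2(\nicefrac{m}{2}),1}$ is forced to terminate at $\tilde d^{ij}_{\log_2(\nicefrac{m}{2}),1}$; since all $2k'$ roots must be consumed by the $k'$ augmenting paths, a crossing path that steals $\tilde b^{i}$ or $\tilde d^{ij}$ leaves its partner root unmatchable, contradicting $|\eta|\ge|\mu|+t$. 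With that repair your route goes through, and its final step (one blocking edge per path rules out $\basevertex{e}{}\basevertex{w}{2h+1}$ and forces endpoint consistency) matches the paper's Claim on consistency. For comparison, the paper avoids augmenting paths entirely in this direction: it first shows $\eta$ must be a \emph{perfect} matching by a vertex count, then argues top-down that saturation of each tree root cascades to force some $p^{i}_{\ell}\basevertex{x}{1}\in\eta$ and some $q^{ij}_{\ell}\basevertex{e}{}\in\eta$, yielding at least $k'$ static blocking edges and hence exactly the canonical structure --- a shorter argument that sidesteps the crossing-path issue altogether.
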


Before giving the proof of Lemma~\ref{lem:asm-hard-correctness}, we give a structural property of any matching in $G'$ which will be used later.

\begin{clm}\label{tilde eta is perfect}
Let $\tilde{\eta}$ be a matching in $G'$ of size $|\mu|+t$. Then, $\tilde{\eta}$ is a perfect matching in $G'$.
\end{clm}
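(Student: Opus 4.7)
The plan is a straightforward double counting: I will show that $|V(G')| = 2(|\mu|+t)$, from which the claim follows immediately since any matching of size $|\mu|+t$ saturates exactly $2(|\mu|+t)$ vertices, and hence every vertex of $G'$ is saturated by $\tilde{\eta}$.

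First I would tally the vertices of $G'$ by the ``gadget type'' used in the construction. The base vertices contribute $2(r+1)$ vertices per $u \in V(G)$ and $2$ vertices per edge $e \in E(G)$, i.e.\ $2kn(r+1) + mk(k-1)$ vertices in total. For each $i \in [k]$ the vertex gadget contributes $n$ vertices of types $p_\ell^i, \tilde p_\ell^i$, while the $a$-layer contributes $2\sum_{j=1}^{\log_2(n/2)} \beta_j = 2\sum_{j=1}^{\log_2(n/2)} n/2^j = 2(n-2)$ vertices, and the $b$-layer contributes $2\sum_{j=1}^{\log_2(n/2)} \gamma_j = n-2$ vertices; so each vertex gadget contributes $4n-6$ vertices, totalling $k(4n-6)$ across $i \in [k]$. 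Analogously, for each $\{i,j\} \subseteq [k]$ with $i<j$, the edge gadget contributes $m + 2(m-2) + (m-2) = 4m-6$ vertices, totalling $\frac{k(k-1)}{2}(4m-6)$.

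Next I would compute $2|\mu|$ from Claim~\ref{size-of-mu}, i.e.\ $2|\mu| = 2kn(r+1) + mk(k-1) + 4k(n-2) + 2k(k-1)(m-2)$, and add $2t = 2k + k(k-1)$ (recall $t = k' = k + \tfrac{k(k-1)}{2}$). Rearranging and collecting terms, both expressions simplify to
\[
2kn(r+1) + mk(k-1) + k(4n-6) + \tfrac{k(k-1)}{2}(4m-6),
\]
which confirms $|V(G')| = 2(|\mu| + t)$.

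The only real work here is the bookkeeping with the geometric series arising from the binary-tree-like $a,b,c,d$ layers; there is no structural obstacle, since once the two counts are shown equal the claim is immediate: a matching of size $|\mu|+t$ in a graph with $2(|\mu|+t)$ vertices leaves no vertex unsaturated, and is thus a perfect matching in $G'$. \qed
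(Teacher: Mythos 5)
Your proposal is correct and follows essentially the same route as the paper's proof: both count $|V(G')|$ gadget by gadget (using the geometric sums over the $a,b,c,d$ layers) and verify that it equals $2(|\mu|+t)$, so a matching of that size must be perfect. Your arithmetic checks out against the paper's totals, so there is nothing to add.
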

\begin{proof}
  We first count the number of vertices in $G'$. Note that for each vertex in $G$, we have a path  of length $2r+2$ in $G'$. Since $|V(G)|=nk$, there are $(2r+2)nk$ such vertices in $V(G')$. \hide{Next, for each $i\in [k]$, we add $\nicefrac{n}{2}$ vertices $\{ p_\ell^i,\tilde{p}_\ell^i : \ell \in [\nicefrac{n}{2}]\}$.}
  For each $i \in [k]$ and $\ell \in [\nicefrac{n}{2}]$, we added $p_\ell^i,\tilde{p}_\ell^i$.
 Hence, we have added $kn$ special vertices to $V(G')$.  Note that there are
 $2k(n-2)$ vertices in the set $\{a_{j\ell}^i, \tilde{a}_{j \ell}^i : i\in [k], j\in [\log_2(\nicefrac{n}{2})], \ell \in [\nicefrac{n}{2^j}]\}$. Additionally, we have $k(n-2)$ vertices in the set $\{b_{j,\ell}^i, \tilde{b}_{j,\ell}^i : i\in [k], j\in [\log_2(\nicefrac{n}{2})], \ell \in [\nicefrac{n}{2^{j+1}}]\}$. 
 
 Now, we count the vertices in $G'$ that corresponding to edges in $G$.  Note that for each edge in $G$, we have two vertices in $G'$. Since $|E_{ij}|=m$, where $\{i,j\}\subseteq [k]$, there are $2m(\nicefrac{k(k-1)}{2})$  vertices in the set $\{\basevertex{e}{}, \basevertex{\tilde{e}}{}: e\in E(G)\}$. There are $m(\nicefrac{k(k-1)}{2})$ vertices in the set $\{q_\ell^{ij},\tilde{q}_\ell^{ij} : \{i,j\}\subseteq [k], i<j,\ell \in [\nicefrac{m}{2}] \}$. There are $k(k-1)(m-2)$ vertices in the set $\{c_{h, \ell}^{ij}, \tilde{c}_{h,\ell}^{ij}: \{i,j\}\subseteq [k], i<j, h\in [\log_2(\nicefrac{m}{2})], \ell \in [\nicefrac{m}{2^h}]\}$. Similarly, we have $\nicefrac{k(k-1)(m-2)}{2}$ vertices in the set $\{d_{h, \ell}^{ij}, \tilde{d}_{h,\ell}^{ij} : \{i,j\}\subseteq  [k], h\in [\log_2(\nicefrac{m}{2})], \ell \in [\nicefrac{m}{2^{h+1}}]\}$. Hence, 
\[ |V(G')|=2(r+1)kn+2k(2n-3)+mk(k-1)  +2k(k-1)(m-\nicefrac{3}{2}) \]

  Recall that $\lvert \mu \rvert = (r+1)kn+\nicefrac{mk(k-1)}{2}+2k(n-2)+k(k-1)(m-2)$ and $t=k+\nicefrac{k(k-1)}{2}$. Therefore, $|\tilde{\eta}|= (r+1)kn+ \nicefrac{mk(k-1)}{2} +k(2n-3)+ k(k-1)(m-\nicefrac{3}{2})$. Hence,  $\tilde{\eta}$ is a perfect matching in $G'$. 
  \end{proof}

Now, we are ready to prove Lemma~\ref{lem:asm-hard-correctness}.
\begin{proof}[Proof of Lemma~\ref{lem:asm-hard-correctness}]

In the forward direction, let $S$ be a solution of \mcqsmall for $\Co{I}$, i.e  $|X\cap V_i|=1$, for each $i\in[k]$ and $G[X]$ is a clique in $G$. 

\noindent{\bf Defining a solution matching:} We construct a solution $\eta$ to $\Co{J}$ as follows. Initially, we set $\eta = \mu$. 
Suppose that $u=S\cap V_{i}$, then from $\eta$ we delete edges $\{\basevertex{u}{2h-1} \basevertex{u}{2h}: h\in [r+1]\}$; and add edges $\{\basevertex{u}{2h} \basevertex{u}{2h+1} : h\in [r]\}$. 

Let $\ell=\sigma(V_{i},u)$, i.e the solution $S$ contains the $\ell^{th}$ vertex of the set $V_{i}$. Then, we  delete $\{\specialvertex{a}{1, \lceil \nicefrac{\ell}{2} \rceil}{i} \specialvertex{p}{\lceil \nicefrac{\ell}{2} \rceil}{i}, \specialvertex{\tilde{a}}{1, \lceil \nicefrac{\ell}{2} \rceil}{i} \specialvertex{\tilde{p}}{\lceil \nicefrac{\ell}{2} \rceil}{i}\}$ from $\eta$ and add $\{\basevertex{u}{1} \specialvertex{p}{\lceil \nicefrac{\ell}{2} \rceil}{i}, \basevertex{u}{2r+2} \specialvertex{\tilde{p}}{\lceil \nicefrac{\ell}{2} \rceil}{i}\}$ to $\eta$. Additionally, we delete $\{\specialvertex{a}{j,h}{i} \specialvertex{b}{j-1, h}{i}, \specialvertex{\tilde{a}}{j,h}{i} \specialvertex{\tilde{b}}{j-1,h}{i} : j\in [\log_2 (\nicefrac{n}{2})] \setminus \{1\}, h=\roof{\nicefrac{\ell}{2^j}}\}$ and add 
set $\{ \specialvertex{a}{j,h}{i} \specialvertex{b}{j,h'}{i}, \specialvertex{\tilde{a}}{j,h}{i}\specialvertex{\tilde{b}}{j,h'}{i}: j\in [\log_2 (\nicefrac{n}{2})], h= \roof{ \nicefrac{\ell}{2^j}}, h'=\roof{\nicefrac{\ell}{2^{j+1}}} \}$ to $\eta$. 

Let edge $e= E(G[S])\cap E_{ij}$, \ie edge $e$ in $E_{ij}$ is in the clique solution, for some 
$\{i,j\}\subseteq [k]$. Suppose that for some $\ell \in [m]$, $e$ is the $\ell^{th}$ edge in $E_{ij}$. Then, we delete set $\{ \basevertex{e}{} \basevertex{\tilde{e}}{}, q_{\lceil \nicefrac{\ell}{2} \rceil}^{ij}c_{1, \lceil \nicefrac{\ell}{2} \rceil}^{ij}, \tilde{q}_{\lceil \nicefrac{\ell}{2} \rceil}^{ij}\tilde{c}_{1, \lceil \nicefrac{\ell}{2} \rceil}^{ij}\}$ from $\eta$ and add $\{ \basevertex{e}{} q_{\lceil \nicefrac{\ell}{2} \rceil}^{ij}, \basevertex{\tilde{e}}{} \tilde{q}_{\lceil \nicefrac{\ell}{2} \rceil}^{ij}\}$ to $\eta$. Additionally, we delete edges $\{c_{h, s}^{ij}d_{h-1, s}^{ij}, ~\tilde{c}_{h, s}^{ij}\tilde{d}_{h-1, s}^{ij}: h\in [\log_2(\nicefrac{m}{2})] \setminus \{1\} \}, s=\lceil \nicefrac{\ell}{2^h}\rceil  \}$ from $\eta$ and add set $\{c_{h,s}^{ij}d_{h, s'}^{ij}, ~\tilde{c}_{h, s}^{ij}\tilde{d}_{h, s'}^{ij}: h\in [\log_2(\nicefrac{m}{2})], s=\lceil \nicefrac{\ell}{2^h}\rceil, s'=\lceil \nicefrac{\ell}{2^{h+1}}\rceil \}$. Due to the construction of $\eta$, clearly it is a matching.

%Hence, we have the following result, that 
The following result implies that the matching $\eta$ constructed as above satisfies the size bound of a solution for our instance $\Co{J}$ of \asm. 

%$|\eta|=|\mu|+k+\nicefrac{k(k-1)}{2}$. 

\begin{clm}\label{size-eta}
Matching $\eta$ described above has size $|\mu|+k+\nicefrac{k(k-1)}{2}$. 
\end{clm}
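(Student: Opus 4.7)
The plan is to compute $|\eta|$ by tracking the net change (additions minus deletions) induced by each block of modifications performed on $\mu$ during the construction of $\eta$. Since $\eta$ is initialized as $\mu$ and the construction naturally splits into one block per vertex $u \in S$ (the ``vertex block'') and one block per edge $e \in E(G[S])$ (the ``edge block''), it suffices to show that each such block produces a net gain of exactly $+1$; summing over all blocks then yields $|\eta| = |\mu| + |S| + |E(G[S])| = |\mu| + k + \nicefrac{k(k-1)}{2}$.

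First I would verify that the blocks touch pairwise disjoint sets of edges, so their contributions can simply be added. The vertex block for $u \in V_{i}$ only modifies edges inside the $i^{\text{th}}$ vertex gadget: the path $\basevertex{u}{1},\ldots,\basevertex{u}{2r+2}$, the two boundary edges into $\specialvertex{p}{\cdot}{i}$ and $\specialvertex{\tilde{p}}{\cdot}{i}$, and one root-to-leaf alternating path in each of the symmetric binary trees on the $\specialvertex{a}{\cdot}{i},\specialvertex{b}{\cdot}{i}$ (and tilde) vertices. The edge block for $e \in E_{ij}$ only modifies edges inside the $ij^{\text{th}}$ edge gadget: the edge $\basevertex{e}{}\basevertex{\tilde{e}}{}$, the boundary edges into $\specialvertex{q}{\cdot}{ij}$ and $\specialvertex{\tilde{q}}{\cdot}{ij}$, and a root-to-leaf alternating path in each of the binary trees on $\specialvertex{c}{\cdot}{ij},\specialvertex{d}{\cdot}{ij}$. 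Since the condition $|S \cap V_{i}| = 1$ and the indexing by $\{i,j\}$ ensure one block per gadget, these edge sets are pairwise disjoint across distinct blocks.

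Next I would count the net change inside one block of each type. For a vertex block, the base path contributes $r - (r+1) = -1$; the boundary swap (delete two $a$--$p$ edges, add two $\basevertex{u}{1}$--$\specialvertex{p}{\cdot}{i}$-type edges, and likewise on the tilde side) contributes $0$; and the binary-tree cascade deletes $2(\log_{2}(\nicefrac{n}{2}) - 1)$ edges while adding $2\log_{2}(\nicefrac{n}{2})$ edges, contributing $+2$. The total is $+1$. For an edge block, the edge-swap piece removes $\basevertex{e}{}\basevertex{\tilde{e}}{}$ and the two $\specialvertex{q}{\cdot}{ij}\specialvertex{c}{1,\cdot}{ij}$-type edges (three deletions) and adds the two $\basevertex{e}{}\specialvertex{q}{\cdot}{ij}$-type edges (two additions), contributing $-1$; and the $c/d$ binary cascade contributes $+2$ by the identical count, giving $+1$ overall.

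The only real subtlety, and the principal bookkeeping point, is the ``offset by one level'' in the binary-tree cascade: the deletions range over $j = 2,\ldots,\log_{2}(\nicefrac{n}{2})$ while the additions range over $j = 1,\ldots,\log_{2}(\nicefrac{n}{2})$, and it is precisely this extra layer at $j=1$ that produces the $+2$ surplus per block. Equivalently, each block realizes a single $\mu$-augmenting path whose endpoints are unmatched in $\mu$ --- the roots $\specialvertex{b}{\log_{2}(\nicefrac{n}{2}),1}{i}$ and $\specialvertex{\tilde{b}}{\log_{2}(\nicefrac{n}{2}),1}{i}$ for a vertex block, and $\specialvertex{d}{\log_{2}(\nicefrac{m}{2}),1}{ij}, \specialvertex{\tilde{d}}{\log_{2}(\nicefrac{m}{2}),1}{ij}$ for an edge block --- which immediately certifies the $+1$ per block and yields the total $|\eta| = |\mu| + k + \nicefrac{k(k-1)}{2}$, matching the desired bound.
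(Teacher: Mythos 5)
Your proposal is correct and follows essentially the same route as the paper: the paper likewise counts, for each clique vertex, $r+2\log_2(\nicefrac{n}{2})+1$ deletions against $r+2\log_2(\nicefrac{n}{2})+2$ additions, and for each clique edge, $2\log_2(\nicefrac{m}{2})+1$ deletions against $2\log_2(\nicefrac{m}{2})+2$ additions, so each block nets $+1$. Your finer breakdown into path, boundary, and cascade pieces, the explicit disjointness check, and the augmenting-path interpretation are all consistent with (and slightly more detailed than) the paper's tally.
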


\begin{proof}For each (clique) vertex $u= S\cap V_{i}$, where $i\in [k]$,  we delete $r+2\log_2(\nicefrac{n}{2})+1$ edges from $\eta$ (which also belong to $\mu$), and add $r+2\log_2(\nicefrac{n}{2})+2$ edges to $\eta$. This gives us an an additional $k$ edges in $\eta$. 

Similarly, for each clique edge $e= E(G[S]) \cap E_{ij}$, where $\{i,j\} \subseteq [k], i<j$, we delete $2 \log_2(\nicefrac{m}{2}) +1$ edges from $\eta$ (which also belong to $\mu$), and add  $2 \log_2(\nicefrac{m}{2}) +2$ edges to $\eta$. This, gives us an additional $\nicefrac{k(k-1)}{2}$ edges in $\eta$. Thus, in total $|\eta|= |\mu|+k+\nicefrac{k(k-1)}{2}$. 
\end{proof}

Next, we prove that $\eta$ has $k'=k+\nicefrac{k(k-1)}{2}$ blocking edges. Due to Proposition~\ref{cl:bp vertices}, for a blocking edge with respect to $\eta$, at least one of its endpoint is in $V(\mu \triangle \eta)$. Therefore,  we only need to investigate the vertices of $V(\mu \triangle \eta)$. We begin by characterizing the vertices in the set $V(\mu \triangle \eta)$.

%\Ma{For a set  In the following we will use $\sigma(X,x)$ }

%\il{Instad of $\ell$ }

Note that 
\begin{multline*} 
V(\mu \triangle \eta)= 
~~\{u_{2h-1},u_{2h}: u \in S, h\in [r+1] \}\cup  \{ \basevertex{e}{},\basevertex{\tilde{e}}{} : e \in E(G[S]) \} \\
  \hspace{1.5cm}\bigcup \hspace{.5em}\cup_{ i\in [k]} \{p_{\lceil \nicefrac{\ell}{2} \rceil}^i, \tilde{p}_{\lceil \nicefrac{\ell}{2} \rceil}^i, a_{j, \roof{\nicefrac{\ell}{2^j}}}^i,   \tilde{a}_{j, \roof{\nicefrac{\ell}{2^j}}}^i,
 b_{j, \lceil \nicefrac{\ell}{2^{j+1}}\rceil}^i,\tilde{b}_{j, \lceil \nicefrac{\ell}{2^{j+1}}\rceil}^i: \\  \hspace{4cm}\text{ $S$ contains the $\ell^{th}$ vertex of $V_{i}$ }, j\in[\log_2 (\nicefrac{n}{2})]\}\\
\hspace{1.5cm} \bigcup\hspace{.5em} \cup_{\{i,j\}\subseteq [k], i<j}\{\specialvertex{q}{\roof{\ell/2}}{ij}, \specialvertex{\tilde{q}}{\roof{\ell/2}}{ij}, 
  \specialvertex{c}{h, \roof{\ell/2^h}}{ij},  \specialvertex{\tilde{c}}{h, \roof{\ell/2^h}}{ij},  \specialvertex{d}{h,\roof{\ell/2^{h+1}}}{ij},   \specialvertex{\tilde{d}}{h, \roof{\ell/2^{h+1}}}{ij} :\\
  \hspace{4cm} \text{ $G[S]$ contains the $\ell^{th}$ edge of $E_{ij}$, } h\in[\log(m/2)] \} \end{multline*}

%$V(\mu \triangle \eta)=\{u_{\ell 1}^i, u_{\ell 2}^i, u_{\ell (2h+1)}^{i},u_{\ell (2h+2)}^{i}, p_{\lceil \nicefrac{\ell}{2} \rceil}^i, \tilde{p}_{\lceil \nicefrac{\ell}{2} \rceil}^i,  a_{j(\lceil \nicefrac{\ell}{2^j} \rceil)}^i, \newline  \tilde{a}_{j(\lceil \nicefrac{\ell}{2^j}\rceil)}^i,  b_{j(\lceil \nicefrac{\ell}{2^{j+1}}\rceil)}^i,\tilde{b}_{j(\lceil \nicefrac{\ell}{2^{j+1}}\rceil)}^i \mid i\in [k], u_\ell^i \in X,   h\in [r],   j\in[\log_2 (\nicefrac{n}{2})]\}  \cup \{e_{\ell 1}^{ij},e_{\ell 2}^{ij},q_{\lceil \nicefrac{\ell}{2} \rceil}^{ij},\tilde{q}_{\lceil \nicefrac{\ell}{2} \rceil}^{ij},    c_{h(\nicefrac{\ell}{2^h})}^{ij}, \newline \tilde{c}_{h(\nicefrac{\ell}{2^h})}^{ij},  d_{h(\nicefrac{\ell}{2^{h+1}})}^{ij},\tilde{d}_{h(\nicefrac{\ell}{2^{h+1}})}^{ij} \mid \{i,j\}\subseteq [k], e_{\ell}^{ij}\in E(G[X]), h\in[\log_2(\nicefrac{m}{2})]\}$. 

\begin{clm}\label{no bp on u3,u4}For any  $u \in  S$ and any $h\in [r]$, there is no blocking edge with respect to $\eta$ that is incident to the vertex $u_{2h+1}$ or $u_{2h+2}$. \hide{\ma{Done}}
 \end{clm}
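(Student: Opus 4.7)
The plan is to inspect the matching $\eta$ restricted to the path $(u_1,u_2,\dots,u_{2r+2})$ attached to the clique vertex $u\in S$, and invoke the preference lists from Table~\ref{pref_list2}. Recall that in constructing $\eta$ from $\mu$ we deleted the edges $\{u_{2h-1}u_{2h}:h\in[r+1]\}$ and inserted the shifted matching $\{u_{2h}u_{2h+1}:h\in[r]\}$, and additionally matched $u_1$ with $p^i_{\lceil \ell/2\rceil}$ and $u_{2r+2}$ with $\tilde p^i_{\lceil \ell/2\rceil}$, where $\ell=\sigma(V_i,u)$. Thus in $\eta$, for every $h\in[r]$, the vertex $u_{2h+1}$ is matched to $u_{2h}$.

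First I will handle the odd-indexed vertex $u_{2h+1}$. Its preference list is $\langle u_{2h},\basevertex{e}{},u_{2h+2}\rangle$ where $e$ is the $h$-th edge of $E_u$. Since $\eta(u_{2h+1})=u_{2h}$ is its top preference, no edge incident to $u_{2h+1}$ can be a blocking edge with respect to $\eta$.

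Next I will handle the even-indexed vertex $u_{2h+2}$. Its preference list is $\langle u_{2h+1},u_{2h+3}\rangle$, and in $\eta$ it is matched either to $u_{2h+3}$ (when $h<r$) or to $\tilde p^i_{\lceil \ell/2\rceil}$ (when $h=r$). In either case, the only neighbor that $u_{2h+2}$ strictly prefers over its $\eta$-partner is $u_{2h+1}$, so the sole candidate for a blocking edge incident to $u_{2h+2}$ is $u_{2h+1}u_{2h+2}$. However, the previous paragraph shows that $u_{2h+1}$ is already matched in $\eta$ to its top choice, hence $u_{2h+1}$ does not strictly prefer $u_{2h+2}$ to $\eta(u_{2h+1})$, and $u_{2h+1}u_{2h+2}$ is not blocking. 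This rules out every potential blocking edge at $u_{2h+1}$ and $u_{2h+2}$ simultaneously, so there is no obstacle to completing the argument: both verifications are direct look-ups into the preference table, and the only subtle point is making sure the endpoint cases $h=r$ (where $u_{2h+2}=u_{2r+2}$ is matched to a $\tilde p$-vertex rather than to another path vertex) are treated separately, which the above case split already does.
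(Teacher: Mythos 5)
Your proof is correct and follows essentially the same route as the paper: $u_{2h+1}$ is matched in $\eta$ to its top choice $u_{2h}$, so no blocking edge touches it, and the only neighbor $u_{2h+2}$ prefers to its $\eta$-partner is $u_{2h+1}$, which cannot participate in a blocking edge; the endpoint case $u_{2r+2}$ (matched to $\specialvertex{\tilde{p}}{\roof{\nicefrac{\ell}{2}}}{i}$) is handled separately in both arguments. The only nitpick is that your sentence stating the preference list of $u_{2h+2}$ as $\pref{u_{2h+1},u_{2h+3}}$ is literally accurate only for $h<r$, but your subsequent case split repairs this.
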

 \begin{proof} 
 For any value $h\in [r]$, vertex $\basevertex{u}{2h+1}$ is matched to its most preferred vertex in $\eta$, namely $\basevertex{u}{2h}$. Therefore, there is no blocking edge incident on $\basevertex{u}{2h+1}$.
 For any $h' \in [r-1]$, we have $N(\basevertex{u}{2h'+2}) =\pref{\basevertex{u}{2h'+1},\basevertex{u}{2h'+3}}$. Thus, there is no  blocking edge incident to $\basevertex{u}{2h'+2}$. %In particular, this implies that there is no blocking edge incident to $\basevertex{u}{2r+2}$. 
 
Suppose that $u$ is the $\ell^{th}$ vertex in $V_{i}$. Then, we have $N(\basevertex{u}{2r+2})=\pref{\basevertex{u}{2r+1}, \tilde{p}_{ \lceil \nicefrac{\ell}{2}\rceil}^i}$, and we know that the edge $\basevertex{u}{2r+2}\tilde{p}_{ \lceil \nicefrac{\ell}{2}\rceil}^i$ is in $\eta$. However, since $\basevertex{u}{2r+1}$ is matched to its most preferred neighbor, it follows that there is no blocking edge incident to $\basevertex{u}{2r+2}$. 
\end{proof}

 \begin{clm}\label{u1u2 bp}For any vertex $u\in S$, $\basevertex{u}{1} \basevertex{u}{2}$ is a blocking edge in $G'$ with respect to $\eta$. Moreover, there is no other blocking edge incident to \basevertex{u}{1}  or \basevertex{u}{2} in $G'$\hide{\ma{Done}}
 \end{clm}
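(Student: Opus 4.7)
The plan is to verify the claim by direct inspection of the preference lists at $\basevertex{u}{1}$ and $\basevertex{u}{2}$, using the explicit description of $\eta$ given just above the claim. Suppose $u$ is the $\ell^{th}$ vertex of $V_i$ for some $i \in [k]$ and $\ell \in [n]$.

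First I would identify the matching partners of $\basevertex{u}{1}$ and $\basevertex{u}{2}$ in $\eta$. By construction of $\eta$ from $\mu$, for each clique vertex $u \in S$, the edges $\{\basevertex{u}{2h-1}\basevertex{u}{2h}: h \in [r+1]\}$ are deleted and replaced by $\{\basevertex{u}{2h}\basevertex{u}{2h+1}: h \in [r]\}$, together with the two ``end'' edges $\basevertex{u}{1}\specialvertex{p}{\lceil \ell/2 \rceil}{i}$ and $\basevertex{u}{2r+2}\specialvertex{\tilde{p}}{\lceil \ell/2 \rceil}{i}$. Hence $\eta(\basevertex{u}{1}) = \specialvertex{p}{\lceil \ell/2 \rceil}{i}$ and $\eta(\basevertex{u}{2}) = \basevertex{u}{3}$.

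Next, I would check the preferences. From the preference table, $N(\basevertex{u}{1}) = \langle \basevertex{u}{2}, \specialvertex{p}{\lceil \ell/2 \rceil}{i} \rangle$, so $\basevertex{u}{1}$ strictly prefers $\basevertex{u}{2}$ over its match $\specialvertex{p}{\lceil \ell/2 \rceil}{i}$. Similarly, $N(\basevertex{u}{2}) = \langle \basevertex{u}{1}, \basevertex{u}{3} \rangle$, so $\basevertex{u}{2}$ strictly prefers $\basevertex{u}{1}$ over its match $\basevertex{u}{3}$. Therefore $\basevertex{u}{1}\basevertex{u}{2}$ is a blocking edge with respect to $\eta$.

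Finally, to show there is no other blocking edge incident to $\basevertex{u}{1}$ or $\basevertex{u}{2}$, it suffices to observe that $\deg(\basevertex{u}{1}) = \deg(\basevertex{u}{2}) = 2$ in $G'$. The only other neighbor of $\basevertex{u}{1}$ besides $\basevertex{u}{2}$ is $\specialvertex{p}{\lceil \ell/2 \rceil}{i} = \eta(\basevertex{u}{1})$, which cannot form a blocking edge with its own matched partner; the same holds for $\basevertex{u}{2}$ with respect to $\basevertex{u}{3} = \eta(\basevertex{u}{2})$. No step here is a real obstacle — the entire argument reduces to reading off two preference lists and counting two neighbors — so this is a short, mechanical verification rather than a structural argument.
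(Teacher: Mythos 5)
Your proof is correct and follows essentially the same route as the paper: both establish that $\basevertex{u}{1}\basevertex{u}{2}$ blocks because it is a static edge not in $\eta$, and both rule out other blocking edges by noting that the only remaining neighbor of each of $\basevertex{u}{1}$ and $\basevertex{u}{2}$ is its own $\eta$-partner. The degree-two observation you make explicit is implicit in the paper's reading of the preference lists.
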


 \begin{proof} Since vertices \basevertex{u}{1} and \basevertex{u}{2} in $G'$ prefer each other over any other vertex, and the edge $\basevertex{u}{1} \basevertex{u}{2}$ is not in $\eta$, it must be a blocking edge with respect to $\eta$. 
 
Let $\ell = \sigma(V_{i},u)$, i.e the solution $S$ contains the $\ell^{th}$ vertex of the set $V_{i}$. Then, $N(u_{1})=\pref{\basevertex{u}{2},p_{\lceil \nicefrac{\ell}{2}\rceil}^i}$, and we know that $u_{1}p_{\lceil \nicefrac{\ell}{2}\rceil}^i \in \eta$. Thus, other than \basevertex{u}{1} \basevertex{u}{2}, there is no other blocking edge incident to \basevertex{u}{1} in $\eta$. Similarly, since $N(u_{ 2})=\pref{\basevertex{u}{1},\basevertex{u}{3}}$, and $\basevertex{u}{2} \basevertex{u}{3} \in \eta$, it follows that there is no other blocking edge incident to $\basevertex{u}{2}$. 
 \end{proof}
 
  \begin{clm}\label{no bp on p}
  For any $i\in [k]$ and $\ell \in [\nicefrac{n}{2}]$, there is no blocking edge with respect to $\eta$ that is incident to vertex $p_{\ell}^i$ or $\tilde{p}_{\ell}^i$. \hide{\ma{Done}}
 \end{clm}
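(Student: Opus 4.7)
The plan is a simple case analysis on which vertex of $V_i$ (if any) lies among the two ``base'' neighbors of $p_{\ell}^{i}$. Let $u=V_i(2\ell-1)$ and $v=V_i(2\ell)$, so that $N(p_{\ell}^{i})=\pref{\basevertex{u}{1},\basevertex{v}{1},\specialvertex{a}{1,\ell}{i}}$. Since $|S\cap V_i|=1$, at most one of $u,v$ belongs to $S$, which yields exactly three cases: (i) $S\cap V_i\notin\{u,v\}$, (ii) $S\cap V_i=u$, and (iii) $S\cap V_i=v$. In each case we first identify $\eta(p_{\ell}^{i})$ using the construction of $\eta$, and then check that every vertex preferred by $p_{\ell}^{i}$ over $\eta(p_{\ell}^{i})$ is itself matched in $\eta$ to someone strictly better than $p_{\ell}^{i}$.

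The key observations driving the case analysis are the following, all read directly off Table~\ref{pref_list2} and the definition of $\eta$: the vertex $\basevertex{u}{1}$ has preference list $\pref{\basevertex{u}{2},p_{\ell}^{i}}$ (and analogously for $\basevertex{v}{1}$); if $u\notin S$ then the edge $\basevertex{u}{1}\basevertex{u}{2}$ survives in $\eta$, so $\basevertex{u}{1}$ is matched to its top choice; if $u\in S$ then we have explicitly added $\basevertex{u}{1}p_{\ell}^{i}$ to $\eta$. The same statements hold for $v$.

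In Case (i), neither of the edges $\basevertex{u}{1}p_{\ell}^{i}$, $\basevertex{v}{1}p_{\ell}^{i}$ is added to $\eta$, and the edge $\specialvertex{a}{1,\ell}{i}p_{\ell}^{i}$ of $\mu$ is not deleted, so $\eta(p_{\ell}^{i})=\specialvertex{a}{1,\ell}{i}$. Although this is the least-preferred neighbor of $p_{\ell}^{i}$, both $\basevertex{u}{1}$ and $\basevertex{v}{1}$ are matched in $\eta$ to their respective top choices $\basevertex{u}{2}$ and $\basevertex{v}{2}$, so neither forms a blocking edge with $p_{\ell}^{i}$. In Case (ii) the construction puts $\basevertex{u}{1}p_{\ell}^{i}\in\eta$, so $p_{\ell}^{i}$ is matched to its top choice and trivially cannot be in a blocking edge. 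In Case (iii) we get $\eta(p_{\ell}^{i})=\basevertex{v}{1}$; the only neighbor strictly preferred is $\basevertex{u}{1}$, and because $u\notin S$ the edge $\basevertex{u}{1}\basevertex{u}{2}\in\eta$, giving $\basevertex{u}{1}$ its top partner and ruling out the potential blocking edge $\basevertex{u}{1}p_{\ell}^{i}$.

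The argument for $\tilde{p}_{\ell}^{i}$ is entirely symmetric, using the ``tilde'' copies $\basevertex{u}{2r+2},\basevertex{v}{2r+2},\specialvertex{\tilde{a}}{1,\ell}{i}$ of the three neighbors and the fact that $\basevertex{u}{2r+2}$ (resp.\ $\basevertex{v}{2r+2}$) has $\basevertex{u}{2r+1}$ (resp.\ $\basevertex{v}{2r+1}$) as its top choice, which is also its partner in $\eta$ whenever the corresponding vertex is not in $S$. The only mild bookkeeping obstacle is to make sure that the three cases above are mutually exclusive and exhaustive, which follows from $|S\cap V_i|=1$, and that the edge additions/deletions listed in the definition of $\eta$ are being read off correctly at index $\roof{\ell'/2}=\ell$; once this is set up, each case is a one-line check of the preference lists.
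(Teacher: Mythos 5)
Your proposal is correct and follows essentially the same argument as the paper's proof: the identical three-way case analysis on whether $S\cap V_i$ equals $u$, equals $v$, or is neither, using in each case that $\basevertex{u}{1}$ (resp.\ $\basevertex{v}{1}$) is matched to its top choice $\basevertex{u}{2}$ (resp.\ $\basevertex{v}{2}$) whenever the corresponding vertex is not in $S$, and concluding for $\specialvertex{\tilde{p}}{\ell}{i}$ by symmetry. No gaps.
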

 
 \begin{proof}
 Let $u= V_{i}(2\ell -1)$ and $v=V_{i}(2\ell)$, i.e, $u$ and $v$ denote the $2\ell-1^{st}$ and $2\ell^{th}$ elements of $V_{i}$, respectively. 
 
Suppose that $\{u, v\} \cap S = \emptyset$. Then, due to the construction of $\eta$, we know that 
 $\basevertex{u}{1} \basevertex{u}{2}$ and $\basevertex{v}{1} \basevertex{v}{2}$ are in $\eta$. Recall that $N(\specialvertex{p}{\ell}{i})=\pref{\basevertex{u}{1}, \basevertex{v}{1}, \specialvertex{a}{1\ell}{i}}$. Since  \basevertex{u}{1} and \basevertex{v}{1} are matched to their most preferred neighbor in $\eta$,  namely  \basevertex{u}{2} and \basevertex{v}{2}, so there is no blocking edge incident to \specialvertex{p}{\ell}{i}. Hence, this case is resolved.
 
 Suppose that $u\in S$. Then, $\specialvertex{p}{\ell}{i} \basevertex{u}{1} \in \eta$. Since \specialvertex{p}{\ell}{i} prefers  \basevertex{u}{1} over any other vertex, so there is no blocking edge incident to  \specialvertex{p}{\ell}{i}. 
 
Suppose that $v\in S$. Then, $\specialvertex{p}{\ell}{i} \basevertex{v}{1} \in \eta$, by the construction of $\eta$. Since $|S \cap V_{i}|=1$, and $v\in S$, it follows that $u\notin S$. Hence, $\basevertex{u}{1}\basevertex{u}{2} \in \eta$, implying that \basevertex{u}{1} is matched to its most preferred neighbor \basevertex{u}{2}. Therefore, there is no blocking edge \wrt $\eta$ that is incident to $\specialvertex{p}{\ell}{i}$. By symmetry, we can argue that  
 there  is no blocking edge incident to $\specialvertex{\tilde{p}}{\ell}{i}$. 
  \end{proof}

% \il{Upto here}
 %\Ma{Let $u$ and $v$ denote the $2\ell-1^{st}$ and $2\ell^{th}$ vertices in $V_{i}$, respectively}
 
 %\il{Make changes accordingly $u_{1}= u_{(2\ell-1) 1}$ and $v_{1}= v_{(2\ell) 1}$.}
%If neither $u_{(2\ell-1) 1}^i$ nor $u_{(2\ell) 1}^i$ belongs to $X$,\ma{X is subset of V(G) and not V(G')} then by the construction of $\eta$, edges $u_{(2\ell-1)1}^iu_{(2\ell-1)2}^i$ and $u_{(2\ell)1}^iu_{(2\ell)2}^i$ are in $\eta$. 

%Recall that $N(p_{\ell}^i)=\{u_{(2\ell-1)1}^i,u_{(2\ell) 1}^i, a_{1\ell}^i\}$. 

%Since $u_{(2\ell -1) 1}^i$ and $u_{(2 \ell)1}^i$ are matched to their most preferred vertices, that is, $u_{(2\ell -1) 2}^i$ and $u_{(2 \ell)2}^i$, respectively in $\eta$, and $p_{\ell}^ia_{1 \ell}^i \in \eta$, there is no blocking edge incident on $p_{\ell}^i$. If $u_{(2\ell -1) 1}^i \in X$, then $p_{\ell}^iu_{(2\ell -1) 1}^i \in \eta$. Since $p_{\ell}^i$ prefers $u_{(2\ell -1) 1}^i$ over any other vertex, there is no blocking incident on $p_{\ell}^i$. If $u_{2\ell}^i \in X$, then $p_{\ell}^iu_{(2\ell) 1}^i \in \eta$. Recall that $p_{\ell}^i$ prefers only $u_{(2\ell -1) 1}^i $ over $u_{(2\ell) 1}^i $. Since $|X\cap V_i|=1$ and $u_{2\ell}^i \in X$, $u_{2\ell -1}^i \notin X$. Hence, $u_{(2\ell -1) 1}^i $ is matched to its most preferred vertex $u_{(2\ell -1) 2}^i $ in $\eta$. Therefore, there  is no blocking edge incident on $p_{\ell}^i$ with respect to $\eta$. Similarly,  there  is no blocking edge incident on $\tilde{p}_{\ell}^i$ with respect to $\eta$. 

  \begin{clm}\label{no bp on a}For any $i\in [k], j\in [\log_2(\nicefrac{n}{2})]$ and $\ell \in [\beta_j]$, there is no blocking edge with respect to $\eta$ that is incident to vertex $a_{j, \ell}^i$ or $\tilde{a}_{j, \ell}^i$. %\ma{Fix this}
 \end{clm}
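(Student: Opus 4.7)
The plan is to case-split on whether $a_{j,\ell}^i$ has been ``re-matched'' in the construction of $\eta$. Let $u$ be the unique vertex in $S \cap V_i$ and set $\ell^{\ast} = \sigma(V_{i}, u)$. Inspecting the construction of $\eta$, one sees that $a_{j,\ell}^i$ differs from its $\mu$-partner precisely when $\ell = \lceil \ell^{\ast}/2^j \rceil$; in every other case, $a_{j,\ell}^i$ is still matched to its most preferred neighbor. Since $a_{j,\ell}^i$ has degree two, being matched to its top preference immediately rules out any blocking edge incident to it, which disposes of the easy case.

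It therefore remains to handle the single ``critical'' index $\ell = \lceil \ell^{\ast}/2^j\rceil$, where $a_{j,\ell}^i$ has been pushed from its first to its second preference $b_{j,\lceil \ell/2\rceil}^i$. The only edge that $a_{j,\ell}^i$ would now strictly prefer over its $\eta$-partner is the edge to its top-ranked neighbor (which is $p_\ell^i$ if $j=1$ and $b_{j-1,\ell}^i$ if $j \geq 2$), so it suffices to prove that this top-ranked neighbor is matched in $\eta$ to something it likes at least as much as $a_{j,\ell}^i$. For $j=1$, the top neighbor $p_\ell^i$ has been re-matched in $\eta$ to $u_1$, which is its first preference (by Claim~\ref{no bp on p} or by direct inspection), so the candidate edge is not blocking.

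For $j \geq 2$, the top neighbor is $b_{j-1,\ell}^i$, and the critical observation is that the re-matching of $\eta$ also affects $b_{j-1,\ell}^i$: the construction added the edge $a_{j-1,h}^i\, b_{j-1,\ell}^i$ where $h = \lceil \ell^{\ast}/2^{j-1}\rceil$. Since $\lceil h/2\rceil = \lceil \ell^{\ast}/2^j\rceil = \ell$, we have $h \in \{2\ell-1, 2\ell\}$, and the preference list $N(b_{j-1,\ell}^i) = \langle a_{j-1,2\ell-1}^i,\ a_{j-1,2\ell}^i,\ a_{j,\ell}^i\rangle$ shows that $b_{j-1,\ell}^i$ prefers its $\eta$-partner to $a_{j,\ell}^i$. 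Hence the candidate edge $a_{j,\ell}^i\, b_{j-1,\ell}^i$ is again not blocking. The symmetric statement for $\tilde{a}_{j,\ell}^i$ follows by an identical argument with $\tilde{p}, \tilde{a}, \tilde{b}$ in place of their untilded counterparts.

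The main obstacle is purely bookkeeping: one must verify carefully that the index arithmetic $\lceil\lceil \ell^{\ast}/2^{j-1}\rceil /2\rceil = \lceil \ell^{\ast}/2^j\rceil$ makes the ``critical'' $b$-vertex exactly the one that gets re-matched, and that the boundary case $j = \log_2(n/2)$ (where $b_{j,1}^i$ has only two neighbors) does not break the argument. Both are routine but must be spelled out to close the case analysis cleanly.
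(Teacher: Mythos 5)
Your proof is correct and follows essentially the same route as the paper's: split on whether $a_{j,\ell}^i$ is matched in $\eta$ to its first or its second preference, and in the latter case observe that its top-ranked neighbor ($p_\ell^i$ for $j=1$, $b_{j-1,\ell}^i$ for $j\geq 2$) is matched in $\eta$ to a strictly preferred partner, so the candidate edge cannot block. Your explicit index computation showing that $b_{j-1,\ell}^i$ is re-matched to $a_{j-1,h}^i$ with $h\in\{2\ell-1,2\ell\}$ is a (correct) refinement of the paper's terser statement; note that since $a_{j,\ell}^i$ is the \emph{last} preference of $b_{j-1,\ell}^i$, it would already suffice that $b_{j-1,\ell}^i$ is matched to any other neighbor.
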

 
 \begin{proof}
 We first consider the case when $j=1$. Recall that $N(a_{1,\ell}^i)=\pref{p_{\ell}^i,b_{1,\ell}^i}$. If $a_{1,\ell}^i p_{\ell}^i \in \eta$, then $\specialvertex{a}{1,\ell}{i}$ is matched to its most preferred vertex. Thus, there is no blocking edge incident on  $\specialvertex{a}{1,\ell}{i}$. Suppose that $a_{1,\ell}^i b_{1,\ell}^i \in \eta$. In this case, by the construction of $\eta$, either $p_{\ell}^i \basevertex{u}{1} \in \eta$ or $\specialvertex{p}{\ell}{i} \basevertex{v}{1}\in \eta$, where $u=V_{i}(2\ell-1)$ and $v=V_{i}(2\ell)$.  Note that $\specialvertex{p}{\ell}{i}$ prefers both  \basevertex{u}{1} and \basevertex{v}{1} over $a_{1,\ell}^i$. Hence, there is no blocking edge incident to $a_{1,\ell}^i$. 
 Next, we consider the case when $j \geq 2$. Recall that $N(a_{j,\ell}^i)=\pref{\specialvertex{b}{j-1,\ell}{i}, \specialvertex{b}{j,\ell}{i} }$. If $\specialvertex{a}{j, \ell}{i} \specialvertex{b}{j-1, \ell}{i} \in \eta$, then  $a_{j, \ell}^i$ is matched to its most preferred vertex. Thus, there is no blocking edge incident on  $\specialvertex{a}{j\ell}{i}$. Suppose that $\specialvertex{a}{j, \ell}{i} \specialvertex{b}{j, \ell}{i} \in \eta$. Since $\specialvertex{a}{j, \ell}{i}$ is the last preference of $b_{j-1, \ell}^i$ (and $b_{j-1, \ell}^i$ is matched to $\specialvertex{a}{j-1, \ell}{i}$ in $\eta$), we can conclude that there is no  blocking edge incident to  $\specialvertex{a}{j,\ell}{i}$. Similarly, there is no blocking edge with respect to $\eta$ that is incident to $\tilde{a}_{j,\ell}^i$  \end{proof}
 
 %for all $i\in[k], \ell \in [\nicefrac{n}{2}]$.  %$j \in \{2,\cdots, \log_2(\nicefrac{n}{2})\}$. Recall that $N(a_{j\ell}^i)=\{b_{(j-1)\ell}^i,b_{j\ell}^i\}$. If $a_{j \ell}^ib_{(j-1)\ell}^i \in \eta$, then  $a_{j\ell}^i$ is matched to its most preferred vertex. Thus, there is no blocking edge incident on  $a_{j\ell}^i$. Suppose that $a_{j\ell}^ib_{j \ell}^i \in \eta$. Since $a_{j\ell}^i$ is the last preference of $b_{(j-1)\ell}^i$, and $b_{(j-1)\ell}^i$ is saturated in $\eta$ (as $\eta$ is a perfect matching), there is no  blocking edge incident on  $a_{j\ell}^i$. We next consider the case when $j= \log_2(\nicefrac{n}{2})$. Note that either $b_{j1}^ia_{j1}^i$ or $b_{j1}^ia_{j2}^i$ belongs to $\eta$.  
 
 \begin{clm}\label{no bp on b}For any $i\in [k], j\in [\log_2(\nicefrac{n}{2})]$, and $\ell \in [\gamma_j]$, there is no blocking edge with respect to $\eta$ that is incident to vertex $b_{j\ell}^i$ or $\tilde{b}_{j\ell}^i$. \hide{\ma{NONE}}
 \end{clm}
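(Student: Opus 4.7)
The plan is to fix $b_{j,\ell}^i$ and carry out a case analysis on whether this vertex lies on the ``solution path'' used when constructing $\eta$; by symmetry the same argument handles $\tilde b_{j,\ell}^i$. Let $u$ be the unique vertex of $S \cap V_{i}$, and let $\ell^{\star}=\sigma(V_{i},u)$, so that the vertex of $V_i$ chosen in the clique solution is the $\ell^{\star}$th one. For each $j\in [\log_2(\nicefrac{n}{2})]$ set $h^{\star}_{j}=\lceil \ell^{\star}/2^{j}\rceil$ and $s^{\star}_{j}=\lceil \ell^{\star}/2^{j+1}\rceil$; by definition of the rounding, $h^{\star}_{j}\in \{2s^{\star}_{j}-1,\,2s^{\star}_{j}\}$. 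Recall that $N(b_{j,\ell}^i)=\pref{a_{j,2\ell-1}^i, a_{j,2\ell}^i, a_{j+1,\ell}^i}$ when $j<\log_2(\nicefrac{n}{2})$, and $N(b_{j,1}^i)=\pref{a_{j,1}^i, a_{j,2}^i}$ when $j=\log_2(\nicefrac{n}{2})$.

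Case A: $\ell \neq s^{\star}_{j}$, i.e.\ $b_{j,\ell}^i$ is \emph{off} the solution path. Then by construction of $\eta$ from $\mu$, neither of the two edges incident to $b_{j,\ell}^i$ in $\mu$ is touched, so $b_{j,\ell}^i$ is still matched to $a_{j+1,\ell}^i$ (its last choice) if $j<\log_{2}(\nicefrac{n}{2})$, and unmatched otherwise. A potential blocking edge must therefore involve $a_{j,2\ell-1}^i$ or $a_{j,2\ell}^i$. The key observation is that since $\ell\neq s^{\star}_{j}$, neither $2\ell-1$ nor $2\ell$ equals $h^{\star}_{j}$, i.e.\ the relevant $a$-vertices are also off the solution path. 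By the construction of $\eta$, each such $a_{j,\cdot}^i$ retains its $\mu$-partner, which is its most preferred neighbor ($p_{\cdot}^i$ when $j=1$, otherwise $b_{j-1,\cdot}^i$). Hence these $a$-vertices strictly prefer their $\eta$-partner over $b_{j,\ell}^i$, so no blocking edge with respect to $\eta$ is incident to $b_{j,\ell}^i$.

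Case B: $\ell = s^{\star}_{j}$. Then the construction of $\eta$ adds the edge $a_{j,h^{\star}_{j}}^i\,b_{j,\ell}^i$, so $b_{j,\ell}^i$ is $\eta$-matched either to $a_{j,2\ell-1}^i$ (its top choice) or to $a_{j,2\ell}^i$ (its second choice), depending on whether $h^{\star}_{j}=2\ell-1$ or $h^{\star}_{j}=2\ell$. In the first subcase $b_{j,\ell}^i$ is matched to its top neighbor, so no incident edge can block. In the second subcase the only neighbor $b_{j,\ell}^i$ prefers over $\eta(b_{j,\ell}^i)$ is $a_{j,2\ell-1}^i$; but $2\ell-1\neq h^{\star}_{j}$, so $a_{j,2\ell-1}^i$ is off the solution path and, as in Case A, is $\eta$-matched to its top preference. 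Consequently, it cannot participate in a blocking edge with $b_{j,\ell}^i$.

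The main technical observation driving both cases is the pigeonhole-style identity $h^{\star}_{j}\in\{2s^{\star}_{j}-1,2s^{\star}_{j}\}$, which forces the ``better'' neighbors of any off-path $b_{j,\ell}^i$ to themselves be off-path (and hence matched to their top choice). The argument for $\tilde b_{j,\ell}^i$ is identical after replacing every vertex by its tilded counterpart. This completes the proof.
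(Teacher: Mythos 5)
Your proof is correct and follows essentially the same route as the paper's: both reduce to showing that the $a$-neighbors that $b_{j,\ell}^i$ prefers over its $\eta$-partner are themselves matched to partners they prefer (the paper argues this via saturation of $a_{j,2\ell-1}^i$ in the perfect matching $\eta$ together with $b_{j,\ell}^i$ being its last preference, while you track the construction of $\eta$ directly to see that off-path $a$-vertices keep their top-choice $\mu$-partners). The only cosmetic slip is the parenthetical ``and unmatched otherwise'' in your Case A: for $j=\log_2(\nicefrac{n}{2})$ one has $s^{\star}_{j}=1=\ell$, so that sub-case is vacuous and, in any event, $\eta$ is a perfect matching.
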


 \begin{proof}
 We first consider the case when $j\in [\log_2(\nicefrac{n}{2}) -1]$. Recall that\\ $N(b_{j,\ell}^i)= \pref{a_{j, 2\ell -1}^i,~a_{j, 2\ell}^i, ~a_{j+1, \ell}^i }$. If $b_{j, \ell}^ia_{j, 2\ell -1}^i \in \eta$, then $b_{j, \ell}^i$ is matched to its most preferred vertex. Hence, there is no blocking edge incident on $b_{j,\ell}^i$. Suppose that $b_{j, \ell}^ia_{j ,2\ell}^i \in \eta$. Note that $a_{j, 2\ell -1}^i$ is the only vertex that $b_{j, \ell}^i$ prefers over $a_{j, 2\ell}^i$. Since $b_{j, \ell}^i$ is the last preference of $a_{j, 2\ell -1}^i$, and $a_{j, 2\ell -1}^i$ is saturated in $\eta$ (Claims~\ref{tilde eta is perfect} and \ref{size-eta} imply  that $\eta$ is a perfect matching), there is no blocking edge incident on $b_{j, \ell}^i$. If $b_{j, \ell}^ia_{j+1, \ell}^i \in \eta$, then using the same argument as earlier, there is no blocking edge incident on $b_{j, \ell}^i$. Now, consider the case when $j=\log_2(\nicefrac{n}{2})$. Since $N(b_{j,1}^i)= \pref{a_{j,1}^i,a_{j,2}^i }$, there is no blocking edge incident on $b_{j,1}^i$ using the same arguments as earlier. Similarly, there is no blocking edge  incident on $\tilde{b}_{j, \ell}^i$ with respect to $\eta$. 
 \end{proof}
 
\begin{clm}\label{no bp on q}For any $\{i,j\} \subseteq [k], i<j$, and $\ell \in [\nicefrac{m}{2}]$, there is no blocking edge with respect to $\eta$ that is incident to vertex $q_{\ell}^{ij}$ or $\tilde{q}_{\ell}^{ij}$. %[\nicefrac{m}{2}]$.
 \end{clm}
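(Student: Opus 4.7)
The plan is to mirror the case analysis used in the proofs of Claims~\ref{no bp on p} and \ref{no bp on a}, simply translated from the vertex gadget to the edge gadget. Let $e=E_{ij}(2\ell-1)$ and $e'=E_{ij}(2\ell)$ be the two edges that $\specialvertex{q}{\ell}{ij}$ is attached to via $\basevertex{e}{}$ and $\basevertex{e'}{}$, and recall that the preference list is $N(\specialvertex{q}{\ell}{ij})=\pref{\basevertex{e}{},\basevertex{e'}{},\specialvertex{c}{1,\ell}{ij}}$. The single unique clique edge in $E_{ij}$ is either $e$, or $e'$, or neither; these three cases exhaust the possibilities, and in each I will identify the matching partner of $\specialvertex{q}{\ell}{ij}$ in $\eta$ and inspect each strictly preferred neighbor.

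First I would handle the case that neither $e$ nor $e'$ is the clique edge of $E_{ij}$. By the construction of $\eta$, the swap around $\specialvertex{q}{\ell}{ij}$ was not performed, so $\specialvertex{q}{\ell}{ij}\specialvertex{c}{1,\ell}{ij}\in\eta$ and both $\basevertex{e}{}\basevertex{\tilde{e}}{}$ and $\basevertex{e'}{}\basevertex{\tilde{e'}}{}$ remain in $\eta$. Since $\basevertex{e}{}$ and $\basevertex{e'}{}$ are matched to their top-ranked neighbors (the static partners $\basevertex{\tilde{e}}{}$ and $\basevertex{\tilde{e'}}{}$), neither of the only two neighbors that $\specialvertex{q}{\ell}{ij}$ prefers over its partner can participate in a blocking edge, so none exists.

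Next I would handle the case that $e$ is the clique edge in $E_{ij}$. Then the construction places $\basevertex{e}{}\specialvertex{q}{\ell}{ij}$ into $\eta$; since $\basevertex{e}{}$ is the top preference of $\specialvertex{q}{\ell}{ij}$, no neighbor is strictly preferred and there is nothing to check. The symmetric case, where $e'$ is the clique edge, gives $\specialvertex{q}{\ell}{ij}\basevertex{e'}{}\in\eta$; the only vertex $\specialvertex{q}{\ell}{ij}$ prefers over $\basevertex{e'}{}$ is $\basevertex{e}{}$, but because $e$ is not the clique edge we still have $\basevertex{e}{}\basevertex{\tilde{e}}{}\in\eta$, i.e.\ $\basevertex{e}{}$ is matched to its most preferred neighbor. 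Hence $\specialvertex{q}{\ell}{ij}\basevertex{e}{}$ is not a blocking edge.

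The argument for $\specialvertex{\tilde{q}}{\ell}{ij}$ is completely symmetric: its preference list is $\pref{\basevertex{\tilde{e}}{},\basevertex{\tilde{e'}}{},\specialvertex{\tilde{c}}{1,\ell}{ij}}$, and in every case the strictly preferred neighbors are matched to their top choices (either the static edge partner, or $\specialvertex{\tilde{q}}{\ell}{ij}$ itself). I do not anticipate a genuine obstacle here; the only thing to be careful about is keeping the case enumeration exhaustive and remembering that $|S\cap V_i|=1$ and that $E(G[S])\cap E_{ij}$ contains at most one edge, so the ``clique edge is both $e$ and $e'$'' situation does not arise.
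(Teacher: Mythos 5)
Your case analysis is correct and is exactly the argument the paper intends: its own ``proof'' of this claim is the one-line remark that it is similar to Claim~\ref{no bp on p}, and your three cases (the clique edge of $E_{ij}$ is $e$, is $e'$, or is neither) are the faithful translation of that vertex-gadget argument to the edge gadget, including the key use of $|E(G[S])\cap E_{ij}|=1$ in the third case. Nothing is missing.
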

 \begin{proof}
 The proof is similar to the proof of Claim~\ref{no bp on p}.
%If neither $e_{(2\ell-1) 1}^{ij}$ nor $e_{(2\ell) 1}^{ij}$, belongs to $E(G[X])$, then by the construction of $\eta$, $e_{(2\ell-1)1}^{ij}e_{(2\ell-1)2}^{ij} \in \eta$ and  $e_{(2\ell)1}^{ij}e_{(2\ell)2}^{ij} \in \eta$. Recall that $N(q_{\ell}^{ij})=\{e_{(2\ell-1)1}^{ij},e_{(2\ell) 1}^{ij}, c_{1\ell}^{ij}\}$. Since $e_{(2\ell -1) 1}^{ij}$ and $e_{(2 \ell)1}^{ij}$ are matched to their most preferred vertices, that is, $e_{(2\ell -1) 2}^{ij}$ and $e_{(2 \ell)2}^{ij}$, respectively in $\eta$, and $q_{1 \ell}^{ij}c_{1 \ell}^{ij} \in \eta$, there is no blocking edge incident on $q_{\ell}^{ij}$. If $e_{(2\ell -1) 1}^{ij} \in X$, then $q_{1 \ell}^{ij}e_{(2\ell -1) 1}^{ij} \in \eta$. Since $q_{1 \ell}^i$ prefers $e_{(2\ell -1) 1}^{ij}$ over any other vertex, there is no blocking incident on $q_{1 \ell}^{ij}$. If $e_{2\ell}^{ij} \in X$, then $p_{1 \ell}^iu_{(2\ell) 1}^i \in \eta$. Recall that $p_{1 \ell}^i$ prefers only $u_{(2\ell -1) 1}^i $ over $u_{(2\ell) 1}^i $. Since $|X\cap V_i|=1$ and $u_{2\ell}^i \in X$, $u_{2\ell -1}^i \notin X$. Hence, $u_{(2\ell -1) 1}^i $ is matched to its most preferred vertex $u_{(2\ell -1) 2}^i $ in $\eta$. Therefore, there  is no blocking edge incident on $p_{\ell}^i$ with respect to $\eta$. Similarly,  there  is no blocking edge incident on $p_{\ell}^i$ with respect to $\eta$. %, for all $i\in[k], \ell \in [\nicefrac{n}{2}]$.
 \end{proof}
 \begin{clm}\label{no bp on c}For any $\{i,j\} \subseteq [k], i<j$,  $h\in [\log_2(\nicefrac{m}{2})]$ and $\ell \in [\rho_h]$, there is no blocking edge with respect to $\eta$ that is incident to vertex $\specialvertex{c}{h,\ell}{ij}$ or $\specialvertex{\tilde{c}}{h,\ell}{ij}$. 
 \end{clm}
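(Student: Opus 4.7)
The plan is to mirror the proof of Claim~\ref{no bp on a} almost verbatim, since the role played by the $c$-vertices (and their $\tilde{c}$ counterparts) in each edge gadget is structurally identical to the role played by the $a$-vertices (and $\tilde{a}$ counterparts) in each vertex gadget. In particular, by the bipartite-tree structure of the gadget, each $c_{h,\ell}^{ij}$ has exactly two neighbors: one ``below'' (namely $q_\ell^{ij}$ when $h=1$, or $d_{h-1,\ell}^{ij}$ when $h\geq 2$) which it prefers, and one ``above'' (namely $d_{h,\lceil \ell/2\rceil}^{ij}$) which it prefers less. Since by Claims~\ref{tilde eta is perfect} and~\ref{size-eta} the matching $\eta$ is perfect, $c_{h,\ell}^{ij}$ is matched in $\eta$ to exactly one of these two neighbors.

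I would split into two cases based on $h$. For $h=1$, if $c_{1,\ell}^{ij}q_\ell^{ij}\in\eta$ then $c_{1,\ell}^{ij}$ is matched to its most preferred neighbor and we are done. Otherwise $c_{1,\ell}^{ij}d_{1,\lceil\ell/2\rceil}^{ij}\in\eta$, and by the way $\eta$ is constructed from $\mu$ along the augmenting path inside the $ij$-edge gadget, $q_\ell^{ij}$ must be matched to one of $\basevertex{e}{}$ or $\basevertex{e'}{}$ where these are the $(2\ell-1)$-st and $2\ell$-th edges of $E_{ij}$; since $q_\ell^{ij}$ prefers both of these over $c_{1,\ell}^{ij}$, no blocking edge can be incident to $c_{1,\ell}^{ij}$. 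For $h\geq 2$, if $c_{h,\ell}^{ij}d_{h-1,\ell}^{ij}\in\eta$ we are again done; otherwise $c_{h,\ell}^{ij}d_{h,\lceil\ell/2\rceil}^{ij}\in\eta$, and I would use that $c_{h,\ell}^{ij}$ is the \emph{last} preference of $d_{h-1,\ell}^{ij}$, which is saturated in $\eta$ (by perfectness), so $d_{h-1,\ell}^{ij}$ cannot form a blocking edge with $c_{h,\ell}^{ij}$.

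The $\tilde{c}$ side is handled by the complete symmetry between the ``$p,a,b$'' and ``$\tilde p,\tilde a,\tilde b$'' (respectively, ``$q,c,d$'' and ``$\tilde q,\tilde c,\tilde d$'') branches of each gadget: swapping tildes throughout and replacing $u_1,v_1,\basevertex{e}{},\basevertex{e'}{}$ by $u_{2r+2},v_{2r+2},\basevertex{\tilde e}{},\basevertex{\tilde{e'}}{}$ yields exactly the same argument. There is no genuine obstacle here; the only thing to be careful about is that when $c_{h,\ell}^{ij}$ is matched ``upward'' to $d_{h,\lceil\ell/2\rceil}^{ij}$, the ``downward'' neighbor is indeed saturated in $\eta$ by one of its more-preferred neighbors (which is immediate for $h\geq 2$ from the construction of $\eta$, and follows for $h=1$ from the fact that the augmenting path used to build $\eta$ from $\mu$ always redirects $q_\ell^{ij}$ to a top-ranked $\basevertex{e}{}$ or $\basevertex{e'}{}$ whenever it frees $c_{1,\ell}^{ij}$). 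Given these observations the proof is a short case analysis, completely parallel to Claim~\ref{no bp on a}.
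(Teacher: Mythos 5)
Your proposal is correct and matches the paper's intent exactly: the paper's own proof of this claim is literally ``The proof is similar to the proof of Claim~\ref{no bp on a},'' and your case analysis ($h=1$ via $q_\ell^{ij}$ being rematched to a more-preferred $\basevertex{e}{}$ or $\basevertex{e'}{}$; $h\geq 2$ via $\specialvertex{c}{h,\ell}{ij}$ being the last preference of the saturated $\specialvertex{d}{h-1,\ell}{ij}$, with the $\tilde{c}$ side by symmetry) is precisely the intended transcription of that argument to the edge gadget.
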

 \begin{proof}
  The proof is similar to the proof of Claim~\ref{no bp on a}.
 \end{proof}
 \begin{clm}\label{no bp on d}For any $\{i,j\} \subseteq [k], i<j, h\in [\log_2(\nicefrac{m}{2})], \ell \in [\tau_h]$, there is no blocking edge with respect to $\eta$ that is incident to vertex $\specialvertex{d}{h,\ell}{ij}$ or $\specialvertex{\tilde{d}}{h,\ell}{ij}$.  
  \end{clm}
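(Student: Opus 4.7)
The plan is to mirror the argument of Claim~\ref{no bp on b} almost verbatim, exploiting the structural symmetry between the vertex gadget (with the $a$ and $b$ vertices) and the edge gadget (with the $c$ and $d$ vertices). The preference list of $\specialvertex{d}{h,\ell}{ij}$ has exactly the same shape as that of $\specialvertex{b}{j,\ell}{i}$ in Table~\ref{pref_list2}, so the same case analysis applies. Also, by Claims~\ref{tilde eta is perfect} and \ref{size-eta}, $\eta$ is a perfect matching, which we will use freely to conclude that higher-preference neighbors are saturated.

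First, I would treat the generic case $h \in [\log_2(\nicefrac{m}{2})-1]$, where $N(\specialvertex{d}{h,\ell}{ij}) = \pref{\specialvertex{c}{h,2\ell-1}{ij},\specialvertex{c}{h,2\ell}{ij},\specialvertex{c}{h+1,\ell}{ij}}$, and split into three subcases according to which of these three neighbors is the partner of $\specialvertex{d}{h,\ell}{ij}$ in $\eta$. If $\specialvertex{d}{h,\ell}{ij}$ is matched to its most preferred neighbor $\specialvertex{c}{h,2\ell-1}{ij}$, there can be no blocking edge incident to it. If $\specialvertex{d}{h,\ell}{ij}$ is matched to $\specialvertex{c}{h,2\ell}{ij}$, the only candidate strictly better than the current partner is $\specialvertex{c}{h,2\ell-1}{ij}$; from Table~\ref{pref_list2} we read off that $\specialvertex{d}{h,\ell}{ij}$ is the \emph{last} choice of $\specialvertex{c}{h,2\ell-1}{ij}$, and since $\eta$ is perfect this vertex is saturated by a strictly more preferred partner, hence no blocking edge arises. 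The subcase $\specialvertex{d}{h,\ell}{ij}\specialvertex{c}{h+1,\ell}{ij}\in \eta$ is identical: the two better candidates $\specialvertex{c}{h,2\ell-1}{ij},\specialvertex{c}{h,2\ell}{ij}$ also rank $\specialvertex{d}{h,\ell}{ij}$ last and are matched under $\eta$ to strictly preferred neighbors.

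Second, I would handle the boundary case $h = \log_2(\nicefrac{m}{2})$, where $N(\specialvertex{d}{h,1}{ij}) = \pref{\specialvertex{c}{h,1}{ij},\specialvertex{c}{h,2}{ij}}$. This is the same analysis with one fewer subcase: either $\specialvertex{d}{h,1}{ij}$ is matched to its top choice, or it is matched to $\specialvertex{c}{h,2}{ij}$, and the only higher-preference neighbor $\specialvertex{c}{h,1}{ij}$ ranks $\specialvertex{d}{h,1}{ij}$ last and is saturated by a strictly better partner. The symmetric vertex $\specialvertex{\tilde{d}}{h,\ell}{ij}$ is handled by applying the same argument to the tilded copy of the edge gadget, which is isomorphic with an analogous preference structure.

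I do not anticipate any real obstacle here; the claim is essentially a symmetric replica of Claim~\ref{no bp on b}, and the only step that requires a line of verification is that in every subcase the strictly better neighbor of $\specialvertex{d}{h,\ell}{ij}$ is indeed matched by $\eta$ to some vertex it prefers over $\specialvertex{d}{h,\ell}{ij}$. This follows directly from the definition of $\eta$ (which modifies $\mu$ only along the chosen clique vertex/edge), combined with the fact that, by construction, $\specialvertex{d}{h,\ell}{ij}$ sits at the bottom of the preference list of each of its strictly preferred neighbors in Table~\ref{pref_list2}.
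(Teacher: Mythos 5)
Your proposal is correct and matches the paper's approach exactly: the paper itself proves this claim by stating that it is analogous to Claim~\ref{no bp on b}, and your case analysis (top-choice partner, second-choice partner with the first-choice neighbor ranking $\specialvertex{d}{h,\ell}{ij}$ last and being saturated since $\eta$ is perfect, third-choice partner handled identically, plus the boundary level $h=\log_2(\nicefrac{m}{2})$ and the tilded copy by symmetry) is precisely the argument being invoked. The only detail worth double-checking, which you correctly note, is that $\specialvertex{c}{h,2\ell-1}{ij}$ and $\specialvertex{c}{h,2\ell}{ij}$ each rank $\specialvertex{d}{h,\ell}{ij}$ last in Table~\ref{pref_list2}, so saturation under the perfect matching $\eta$ suffices.
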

 \begin{proof}
  The proof is similar to the proof of Claim~\ref{no bp on b}.
  \end{proof}

 \begin{clm}\label{bp on e}Let $e$ denote an edge in the clique $G[S]$.
 Then, the edge $\basevertex{e}{} \basevertex{\tilde{e}}{}$ in $G'$ is a blocking edge with respect to $\eta$. Moreover, there is no other blocking edge with respect to $\eta$ that is incident to vertex $\basevertex{e}{}$ or $\basevertex{\tilde{e}}{}$ in $G'$. 
 
%For $\{i,j\} \subseteq [k], i<j$, let edge $e \in E_{ij}$ be an edge in the clique $G[S]$. Then, the edge $\basevertex{e}{} \basevertex{\tilde{e}}{}$ in $G'$ is a blocking edge with respect to $\eta$. Moreover, there is no other blocking edge with respect to $\eta$ that is incident to vertex $\basevertex{e}{}$ or $\basevertex{\tilde{e}}{}$ in $G'$. \ma{Done}
 \end{clm}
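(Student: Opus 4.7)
The plan splits naturally into two parts: showing that $\basevertex{e}{}\basevertex{\tilde{e}}{}$ is a blocking edge, and then ruling out every other edge incident to $\basevertex{e}{}$ or $\basevertex{\tilde{e}}{}$. For the first part, I read off from the preference tables that $\basevertex{\tilde{e}}{}$ is ranked first by $\basevertex{e}{}$ and $\basevertex{e}{}$ is ranked first by $\basevertex{\tilde{e}}{}$, so $\basevertex{e}{}\basevertex{\tilde{e}}{}$ is a static edge. Since $e \in E(G[S])$ with $e \in E_{ij}$, the construction of $\eta$ deletes $\basevertex{e}{}\basevertex{\tilde{e}}{}$ (which lies in $\mu$) and inserts $\basevertex{e}{}q_{\lceil \ell/2 \rceil}^{ij}$ and $\basevertex{\tilde{e}}{}\tilde{q}_{\lceil \ell/2 \rceil}^{ij}$, where $\ell = \sigma(E_{ij}, e)$. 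Hence $\basevertex{e}{}\basevertex{\tilde{e}}{} \notin \eta$, and being static it is a blocking edge \wrt $\eta$.

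For the second part, the argument at $\basevertex{\tilde{e}}{}$ is immediate: its only two neighbors are $\basevertex{e}{}$ (already covered) and $\tilde{q}_{\lceil \ell/2 \rceil}^{ij}$, which is $\eta$-matched to $\basevertex{\tilde{e}}{}$, and a matching edge cannot be blocking. At $\basevertex{e}{}$, writing $e = uv$ with $u \in V_i$, $v \in V_j$ and choosing $h, h'$ with $e = E_u(h) = E_v(h')$, the full neighborhood is $\pref{\basevertex{\tilde{e}}{},\basevertex{u}{2h+1},\basevertex{v}{2h'+1},q_{\lceil \ell/2\rceil}^{ij}}$; the first neighbor is already accounted for and the last is $\eta$-matched to $\basevertex{e}{}$.

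It remains to examine the edges $\basevertex{u}{2h+1}\basevertex{e}{}$ and $\basevertex{v}{2h'+1}\basevertex{e}{}$. Since $e$ lies in the clique induced by $S$, both $u$ and $v$ belong to $S$, so by the construction of $\eta$ the edges $\basevertex{u}{2h}\basevertex{u}{2h+1}$ and $\basevertex{v}{2h'}\basevertex{v}{2h'+1}$ were added to $\eta$. The preference list $\pref{\basevertex{u}{2h},\basevertex{e}{},\basevertex{u}{2h+2}}$ of $\basevertex{u}{2h+1}$ shows that $\basevertex{u}{2h+1}$ is matched to its top-ranked neighbor in $\eta$, so no edge incident to $\basevertex{u}{2h+1}$ can be blocking; the symmetric argument works for $\basevertex{v}{2h'+1}$. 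This exhausts all neighbors of $\basevertex{e}{}$ and $\basevertex{\tilde{e}}{}$ and completes the plan. The only step that requires care, and is the main ``obstacle'' such as it is, is keeping track of which edges at $\basevertex{u}{2h+1}$ and $\basevertex{v}{2h'+1}$ belong to $\eta$ versus $\mu$ when $u,v \in S$; this is forced by the explicit construction of $\eta$ and requires no further combinatorial argument.
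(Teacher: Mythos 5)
Your proposal is correct and follows essentially the same route as the paper: both establish that $\basevertex{e}{}\basevertex{\tilde{e}}{}$ is a static edge not in $\eta$, then walk through the neighborhoods $N(\basevertex{e}{})=\pref{\basevertex{\tilde{e}}{},\basevertex{u}{2h+1},\basevertex{v}{2h'+1},q_{\lceil \nicefrac{\ell}{2}\rceil}^{ij}}$ and $N(\basevertex{\tilde{e}}{})$, using that $u,v\in S$ forces $\basevertex{u}{2h+1}$ and $\basevertex{v}{2h'+1}$ to be matched to their top choices in $\eta$. Your added detail of explicitly naming the edges $\basevertex{u}{2h}\basevertex{u}{2h+1}$ inserted by the construction is a fine elaboration of the paper's one-line justification, not a different argument.
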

 
   \begin{proof}
 Since vertices $\basevertex{e}{}$ and $\basevertex{\tilde{e}}{}$ prefer each other over any other vertex in $G'$, and the edge $\basevertex{e}{}\basevertex{\tilde{e}}{}$ is not in $\eta$, it must be a blocking edge with respect to $\eta$. 
 
 Let $e=uv$, that is vertices $u \in V_{i}$ and $v \in V_{j}$ are the two endpoints of the edge $e$ in $G$. Suppose that for some $h, h'\in [r]$, we have $h=\sigma(E_{u}, e)$ and $h'=\sigma(E_{v}, e)$ i.e, $e$ is the $h^{th}$ element of $E_{u}$ and the $h'^{th}$ element of $E_{v}$. %\ma{isn't the same order in $E_{u}$ and $E_{v}$ in G? yes.} 
 
 Recall that $N_{G'}(\basevertex{e}{ })=\pref{\basevertex{\tilde{e}}{}, \basevertex{u}{2h+1}, \basevertex{v}{2h'+1}, q_{\lceil \nicefrac{\ell}{2}\rceil}^{ij}}$, where $\ell=\sigma(E_{ij},e)$, i.e, $e$ is the $\ell^{th}$ element in the set $E_{ij}$. By the construction of $\eta$, we know that the edge $\basevertex{e}{} q_{\lceil \nicefrac{\ell}{2}\rceil}^{ij}$ is in $\eta$. Moreover, since $\{u, v\} \subseteq S$, we know that vertices $\basevertex{u}{2h+1}$ and $\basevertex{v}{2h'+1}$ are matched to their most preferred vertices in $\eta$.
Hence, $\basevertex{e}{} \basevertex{\tilde{e}}{}$ must be the only blocking edge with respect to $\eta$ that is incident to $\basevertex{e}{}$. Similarly, we note that since $N(\basevertex{\tilde{e}}{})=\pref{\basevertex{e}{}, \tilde{q}_{\lceil \nicefrac{\ell}{2}\rceil}^{ij}}$ and edge $\basevertex{\tilde{e}}{} \tilde{q}_{\lceil \nicefrac{\ell}{2}\rceil}^{ij}$ is in $\eta$, the only blocking edge that is incident to the vertex $\basevertex{\tilde{e}}{}$ is $\basevertex{e}{}\basevertex{\tilde{e}}{}$. Thus, the claim is proved.
   \end{proof}
   
 %$e_{\ell 1}^{ij}e_{\ell 2}^{ij} \notin \eta$,  $e_{\ell 1}^{ij}e_{\ell 2}^{ij}$ is a blocking edge with respect to $\eta$. 
 
% Let $u_z^i$ and $u_{z'}^j$ be the endpoints of edge $e_\ell^{ij}$  in $V_i$ and $V_j$, respectively. Let $e_{\ell 1}^{ij}=E_{u_z^i}(h)=E_{u_{z'}^j}(h')$, where $\{h,h'\} \subseteq [r]$. %Recall that $N(e_{\ell 1}^{ij})=\{e_{\ell 2}^{ij}, u_{z(2h+1)}^{i}, u_{z'(2h'+1)}^{j}, q_{\lceil \nicefrac{\ell}{2}\rceil}^{ij}\}$ and by construction of $\eta$, $e_{\ell 1}^{ij}q_{\lceil \nicefrac{\ell}{2}\rceil}^{ij}\in \eta$. Since $\{u_z^i, u_{z'}^j\} \subseteq X$, $u_{z(2h+1)}^{i}$ and $u_{z'(2h'+1)}^{j}$ are matched to their most preferred vertices in $\eta$ (by construction of $\eta$).  %Hence, $e_{\ell 1}^{ij}e_{\ell 2}^{ij}$ is the only  blocking edge incident on $e_{\ell 1}^{ij}$ with respect to $\eta$. %Since $N(e_{\ell 2}^{ij})=\{e_{\ell 2}^{ij}, \tilde{q}_{\lceil \nicefrac{\ell}{2}\rceil}^{ij}\}$, and $e_{\ell 2}^{ij} \tilde{q}_{\lceil \nicefrac{\ell}{2}\rceil}^{ij} \in \eta$, there is no other blocking edge incident on $e_{\ell 2}^{ij}$ with respect to $\eta$.

Note that Claims~\ref{no bp on u3,u4} and \ref{u1u2 bp} imply that for each vertex $u\in S$, there is a unique blocking edge \wrt $\eta$ (namely $\basevertex{u}{1} \basevertex{u}{2}$); and Claim~\ref{bp on e} implies that for each edge $e$ in $G[S]$, there is a unique blocking edge (namely $\basevertex{e}{} \basevertex{\tilde{e}}{}$ )\wrt $\eta$. Moreover, Claims~\ref{no bp on p}--\ref{no bp on d} imply that there are no other blocking edges \wrt $\eta$. Hence, in total there are $k'=k+\nicefrac{k(k-1)}{2}$ blocking edges \wrt $\eta$. Thus, we can conclude that the forward direction is proved $(\Rightarrow$). 

%\il{}  
%  Using Claims ~\ref{no bp on u3,u4} to ~\ref{bp on e}, for each vertex in $S$, there is a unique blocking edge \wrt $\eta$; and for each edge in the clique $G[S]$, there is a unique blocking edge \wrt $\eta$. Since $|S|=k$, clique $G[S]$ contains $\nicefrac{k(k-1)}{2}$ edges, the $e$-type and $u$-type blocking edges account for $k+\nicefrac{k(k-1)}{2}$ blocking edges. Since $k'= k+\nicefrac{k(k-1)}{2}$, there cannot be any other blocking edges.  This completes the proof in the forward direction $(\Rightarrow$).
%\il{}  
  
 $(\Leftarrow)$ In the reverse direction, let $\eta$ be a matching of size at least $\lvert \mu \rvert+k+\nicefrac{k(k-1)}{2}$ such that $\eta$ has at most $k+\nicefrac{k(k-1)}{2}$ blocking edges. Due to the size of $\eta$, we can infer that it is a perfect matching.
 
 Let ${\mathtt B}_{\eta}$ be the set of blocking edges with respect to $\eta$. We first note some properties of matching  $\eta$ and the set ${\mathtt B}_{\eta}$. We start by identifying the edges in ${\mathtt B}_{\eta}$.

%   Next, we note  some properties of the set $\mathtt B$. Towards this, we first define the notion of fixed edges. We show that fixed edges are the only blocking edges with respect to $\eta$. % and the matching $\eta$. % is a subset of stable matching in $G'$.
%\begin{clm}[u-type fixed edge]\label{claim:solution intersect each part}
%For each $i\in [k]$, there exists a vertex $u_\ell^i \in V_i$, where $\ell \in [n]$,  such that the fixed edge $u_{\ell 1}^iu_{\ell 2}^i\in {\mathtt B}$.

  %Towards this, we first define the notion of {\em static edges}. We call an edge as static edge if its endpoints prefer each other over any other vertex. }

Note that in our instance, the static edges in $G'$ are of the following type: For any $u\in V(G)$, edge 
$\basevertex{u}{1} \basevertex{u}{2}$ in $G'$ is a static edge and is called the {\it u-type static edge}; for any $e\in E(G)$, edge $\basevertex{e}{ } \basevertex{\tilde{e}}{}$ in $G'$ is a static edge and is called the {\it e-type static edge}. 

%, where $i\in [k]$, $\ell \in [n]$ (we call these edges as {\em u-type static edge}) or $e_{\ell1}^{ij}e_{\ell2}^{ij}$, where $\{i,j\}\subseteq [k], \ell \in [m]$ (we call these edges as {\em e-type static edge}).}  

In the following claims, we prove that a blocking edge with respect to $\eta$ is  either a u-type static edge or e-type static edge. In fact, for each $i\in [k]$, there is unique u-type static edge which is a blocking edge, and for each $\{i,j\}\subseteq [k]$, there is unique e-type static edge which is a blocking edge.  

   %   \il{Up to here}

\begin{clm}[{\bf u-type static edge}]
\label{claim:solution intersect each part}
For each $i\in [k]$, there exists $u\in V_{i}$,  such that $u_{1}u_{2}$ is a blocking edge \wrt $\eta$. 
\end{clm}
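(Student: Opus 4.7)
The plan is to exploit the bipartite structure of the $i$-th vertex gadget together with the fact that $\eta$ is a perfect matching. Define the ``upper part'' of the $i$-th vertex gadget as
\[
S^i \;=\; \{a^i_{j,\ell} : j\in[\log_2(n/2)],\,\ell\in[\beta_j]\}\,\cup\,\{b^i_{j,\ell} : j\in[\log_2(n/2)],\,\ell\in[\gamma_j]\}\,\cup\,\{p^i_\ell : \ell\in[\beta_1]\}.
\]
First I would verify directly from the construction (and the adjacency lists in Table~\ref{pref_list2}) that the only edges leaving $S^i$ are those joining some $p^i_\ell$ to a base vertex of the form $\basevertex{u}{1}$ with $u\in V_i$; in particular, no $a^i_{j,\ell}$ and no $b^i_{j,\ell}$ has a neighbour outside $S^i$.

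Next I would compute the two sides of the bipartition of $G'$ restricted to $S^i$. Using the partition $(X,Y)$ from Claim~\ref{cl:G'-bipartite}, the $a^i$'s all lie in $X$ while the $b^i$'s and $p^i$'s all lie in $Y$. Telescoping the geometric sums yields
\[
|S^i\cap X| \;=\; \sum_{j=1}^{\log_2(n/2)}\beta_j \;=\; n-2, \qquad
|S^i\cap Y| \;=\; \sum_{j=1}^{\log_2(n/2)}\gamma_j \,+\, \beta_1 \;=\; \Bigl(\tfrac{n}{2}-1\Bigr)+\tfrac{n}{2} \;=\; n-1,
\]
so $|S^i\cap Y|$ exceeds $|S^i\cap X|$ by exactly one.

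Now the perfectness of $\eta$, which is guaranteed by Claims~\ref{tilde eta is perfect} and~\ref{size-eta}, does the rest: every vertex of $S^i\cap Y$ must be saturated by $\eta$, and the matches that stay inside $S^i$ consume vertices of $S^i\cap X$, of which there are only $n-2$. Hence at least one vertex of $S^i\cap Y$ has its $\eta$-partner outside $S^i$. By the first step, such a partner can only be accessed from a $p^i_\ell$, so there exist $\ell\in[\beta_1]$ and $u\in V_i$ with $\basevertex{u}{1}\specialvertex{p}{\ell}{i}\in\eta$.

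To conclude, since $\eta$ is perfect, $\basevertex{u}{2}$ must be matched, and since $\basevertex{u}{1}$ is already taken its only remaining option is $\basevertex{u}{3}$; thus $\basevertex{u}{2}\basevertex{u}{3}\in\eta$ and $\basevertex{u}{1}\basevertex{u}{2}\notin\eta$. From Table~\ref{pref_list2}, both $\basevertex{u}{1}$ and $\basevertex{u}{2}$ rank each other as their top choice, so $\basevertex{u}{1}\basevertex{u}{2}$ is a static edge, and hence a blocking edge with respect to $\eta$. The step that needs the most care is the bipartition and counting: one must verify precisely that the $Y$-surplus of exactly one inside $S^i$ cannot be absorbed in $S^i$ itself, so that the unique ``escape'' to the base path is forced through some $p^i_\ell$, giving the desired $u\in V_i$.
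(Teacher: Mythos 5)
Your proof is correct, but it takes a genuinely different route from the paper. The paper argues by a level-by-level descent through the gadget: since $\eta$ is perfect, $\specialvertex{b}{j,1}{i}$ at the top level $j=\log_2(\nicefrac{n}{2})$ must be matched to one of $\specialvertex{a}{j,1}{i},\specialvertex{a}{j,2}{i}$; this removes a potential partner from the $b$-vertex one level down, forcing it to match to one of its two $a$-neighbours at that level, and so on, until some $\specialvertex{p}{h'}{i}$ loses its $a$-neighbour and is forced onto a base vertex $\basevertex{x}{1}$, whence $\basevertex{x}{1}\basevertex{x}{2}$ blocks. You instead make a global counting argument: after checking that the only edges leaving the set $S^i$ of special vertices of the $i$-th vertex gadget go from some $p^i_\ell$ to a base vertex $\basevertex{u}{1}$, you observe that the $Y$-side of $S^i$ has $(\nicefrac{n}{2}-1)+\nicefrac{n}{2}=n-1$ vertices while the $X$-side has only $n-2$, so perfectness of $\eta$ forces at least one $Y$-vertex, necessarily a $p^i_\ell$, to escape to a base vertex. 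Both arguments hinge on the same prerequisite (perfectness of $\eta$, correctly sourced from Claim~\ref{tilde eta is perfect}; the citation of Claim~\ref{size-eta}, which concerns the forward direction, is superfluous), and your arithmetic checks out ($\sum_j\beta_j=n-2$, $\sum_j\gamma_j=\nicefrac{n}{2}-1$). Your surplus argument is shorter and arguably cleaner for this particular claim; the paper's descent argument is more constructive in that it exhibits the forced $\mu$-alternating structure explicitly, which mirrors the analogous proof for the $e$-type claim where the stronger conclusion $\specialvertex{q}{\roof{\ell/2}}{ij}\basevertex{e}{}\in\eta$ is also needed. Your counting argument as stated would not by itself deliver that extra conclusion, but the $u$-type claim does not require it, so your proof is complete for the statement at hand.
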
   
\begin{proof}
Since $\eta$ is a perfect matching, for each $i\in [k]$ and $j=\log_2(\nicefrac{n}{2})$, vertex $b_{j,1}^i$ is saturated by $\eta$. Recall that $N(\specialvertex{b}{j, 1}{i})=\pref{\specialvertex{a}{j,1}{i}, \specialvertex{a}{j,2}{i}}$. Therefore, there exists a (unique) $z\in [2]$, such that $\specialvertex{b}{j, 1}{i} \specialvertex{a}{j, z}{i} \in \eta$. 

%either $\specialvertex{b}{j, 1}{i} \specialvertex{a}{j 1}{i} \in \eta$ or $\specialvertex{b}{j, 1}{i} \specialvertex{a}{j, 2}{i} \in \eta$. 

Since $\eta$ is a perfect matching and \specialvertex{b}{j-1, z}{i} has two other neighbors \specialvertex{a}{j-1, 2z-1}{i} and \specialvertex{a}{j-1,2z}{i}, it follows that either $\specialvertex{b}{j-1, z}{i} \specialvertex{a}{j-1, 2z-1}{i} \in \eta$ or $\specialvertex{b}{j-1, z}{i} \specialvertex{a}{j-1, 2z}{i} \in \eta$. We view the index $j$ as indicating a level, the highest being $\log_2(\nicefrac{n}{2})$. As we go down each level starting from the highest, we obtain a matching edge in $\eta$. The lowest level is reached when for some value $h\in [\nicefrac{n}{4}]$, we reach the vertex $\specialvertex{b}{1, h}{i}$. For this vertex, there are two possible matching partners in $\eta$: $\specialvertex{a}{1, 2h-1}{i} $ or $\specialvertex{a}{1, 2h}{i}$. Thus, for some value $h' \in \{2h-1, 2h\}$,  edge $\specialvertex{b}{1, h}{i}\specialvertex{a}{1, h'}{i} \in \eta$.

%The vertex \specialvertex{a}{1, h'}{i} is a neighbor of \specialvertex{p}{h'}{i}. 

Since $\eta$ is a perfect matching, \specialvertex{p}{h'}{i} must be matched to either $\basevertex{x}{1}$ or $\basevertex{y}{1}$ (its other two neighbors) in $\eta$, where $x=V_{i}(2h'-1)$ and $y=V_{i}(2h')$ i.e, $x$ is the $2h'-1^{st}$ element of $V_{i}$ and $y$ is the $2h'^{th}$ element of $V_{i}$. If $\specialvertex{p}{h'}{i} \basevertex{x}{1} \in \eta$, then since $\basevertex{x}{1}$ and $\basevertex{x}{2}$ are each others first preference, the edge $\basevertex{x}{1}\basevertex{x}{2} \in {\mathtt B}_{\eta}$. Otherwise, if $\specialvertex{p}{h'}{i} \basevertex{y}{1} \in \eta$, then with analogous argument, it follows that the edge $\basevertex{y}{1}\basevertex{y}{2} \in {\mathtt B}_{\eta}$. Hence, the result is proved.
\end{proof}

%Therefore, $b_{j 1}^ia_{j z}^i \in \eta$, where $z \in [2]$. Now, since $\eta$ is a perfect matching, and $N(b_{(j-1) z}^i)=\{a_{(j-1) (2z-1)}^i,a_{(j-1)(2z)}^i,a_{j z}^i \}$, either $b_{(j-1) z}^ia_{(j-1) (2z-1)}^i \in \eta$ or $b_{(j-1) z}^i a_{(j-1)(2z)}^i \in \eta$, and so on. %{\color{blue}(check if this so on is understandable.)}.
%Thus, there exists $h \in [\nicefrac{n}{4}]$ such that $b_{1h}^ia_{1h'}^i \in \eta$, where $h'\in \{2h -1, 2h\}$. \Ma{Hence, $p_{h'}^i$ is matched to either $u_{(2h'-1)1}^i$ or $u_{(2h')1}^i$ in $\eta$. If $p_{h'}^iu_{(2h'-1)1}^i \in \eta$, then since $u_{(2h'-1)1}^i$ and $u_{(2h'-1)2}^i$ prefer each other over any other vertex,  $u_{(2h'-1)1}^iu_{(2h'-1)2}^i \in {\mathtt B}_{\eta}$. Otherwise, $p_{h'}^iu_{(2h')1}^i \in \eta$, and hence, $u_{(2h')1}^iu_{(2h')2}^i \in {\mathtt B}_{\eta}$ using the same argument.}

\begin{clm}[{\bf e-type static edge}]
\label{claim:solution intersect clique edge}
For each $\{i,j\}\subseteq [k]$, there exists $e \in E_{ij}$, such that $\basevertex{e}{}\basevertex{\tilde{e}}{}$ is a blocking edge \wrt $\eta$ and $q_{\lceil \nicefrac{\ell}{2} \rceil}^{ij}\basevertex{e}{}\in \eta$, where $e$ is the $\ell^{th}$ element of $E_{ij}$.
\end{clm}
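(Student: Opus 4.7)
The proof proposal for this claim parallels the argument for Claim~\ref{claim:solution intersect each part} (u-type static edges), but carried out on the binary-tree structure sitting above $E_{ij}$ instead of above $V_i$. The plan is to start from the top of this tree (whose root is unsaturated in $\mu$ but must be saturated in $\eta$ since Claims~\ref{tilde eta is perfect} and \ref{size-eta} ensure $\eta$ is perfect), trace a path of forced matching edges level-by-level downwards, and conclude that some $q_\ell^{ij}$ is matched in $\eta$ to a vertex $\basevertex{e}{}$, which immediately forces $\basevertex{e}{}\basevertex{\tilde{e}}{}$ to block.

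More concretely, fix $\{i,j\} \subseteq [k]$ with $i<j$ and set $h^* = \log_2(\nicefrac{m}{2})$. The vertex $d_{h^*,1}^{ij}$ has exactly two neighbors $c_{h^*,1}^{ij}$ and $c_{h^*,2}^{ij}$, so there is a unique $z_1 \in \{1,2\}$ with $d_{h^*,1}^{ij}\, c_{h^*,z_1}^{ij} \in \eta$. Since $c_{h^*,z_1}^{ij}$ is now matched upward, its other neighbor $d_{h^*-1,z_1}^{ij}$ must be matched to one of $c_{h^*-1,2z_1-1}^{ij}$ or $c_{h^*-1,2z_1}^{ij}$; call the chosen index $z_2$. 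Iterating, I would define $z_1, z_2, \ldots, z_{h^*}$ such that $d_{h,z_{h^*-h+1}}^{ij}\, c_{h,z_{h^*-h+2}}^{ij}\in \eta$ at each level $h \in [h^*]$, always descending to one of the two ``children'' at level $h-1$. At the bottom, $c_{1, z_{h^*+1}}^{ij}$ is matched upward to $d_{1, \lceil z_{h^*+1}/2 \rceil}^{ij}$, so $q_{z_{h^*+1}}^{ij}$ must be matched to one of its two remaining neighbors, namely $\basevertex{e}{}$ or $\basevertex{e'}{}$ where $e$ and $e'$ are the $(2z_{h^*+1}-1)^{\text{st}}$ and $(2z_{h^*+1})^{\text{th}}$ edges of $E_{ij}$.

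Say without loss of generality $q_{z_{h^*+1}}^{ij}\, \basevertex{e}{} \in \eta$. Then $\basevertex{e}{}$ is not matched to $\basevertex{\tilde{e}}{}$. Since $\basevertex{e}{}$ and $\basevertex{\tilde{e}}{}$ prefer each other over every other neighbor (they form an e-type static edge), the pair $\basevertex{e}{}\basevertex{\tilde{e}}{}$ is a blocking edge with respect to $\eta$. Setting $\ell = 2z_{h^*+1}-1$ so that $e = E_{ij}(\ell)$ and $\lceil \ell/2 \rceil = z_{h^*+1}$ yields precisely the statement $q_{\lceil \ell/2 \rceil}^{ij}\basevertex{e}{} \in \eta$.

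The only substantive step is the induction that carries the ``forced descent'' from the root to a leaf of the binary tree; this is essentially a cleaner version of the descent already carried out in Claim~\ref{claim:solution intersect each part}, and it is made routine by the fact that $\eta$ is perfect and that each $d$-vertex (resp.\ $c$-vertex) above level $1$ has only three (resp.\ two) neighbors whose roles are fixed by the construction. The only potential pitfall is bookkeeping the indices $z_h$ correctly across levels, but no conceptually new ingredient beyond the argument already used for the vertex gadgets is required.
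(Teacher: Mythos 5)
Your proposal is correct and takes essentially the same route as the paper's own proof: both use the perfectness of $\eta$ to force a level-by-level descent of matching edges from $\specialvertex{d}{\log_2(\nicefrac{m}{2}),1}{ij}$ down to some $\specialvertex{q}{\bar h}{ij}$, which must then be matched to an edge-vertex $\basevertex{e}{}$, whence the static edge $\basevertex{e}{}\basevertex{\tilde{e}}{}$ blocks. The only blemishes are minor index bookkeeping (your $z$-subscripts drift by one across levels, and perfectness of $\eta$ in the reverse direction follows from Claim~\ref{tilde eta is perfect} together with the assumed size bound on $\eta$, not from Claim~\ref{size-eta}), neither of which affects the argument.
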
   
\begin{proof}
Since $\eta$ is a perfect matching, for each $\{i,j\}\subseteq [k]$ and $h=\log_2(\nicefrac{m}{2})$, 
the vertex $\specialvertex{d}{h,1}{ij}$ must be saturated by $\eta$. Recall that $N(\specialvertex{d}{h, 1}{ij})=\pref{\specialvertex{c}{h,1}{ij}, \specialvertex{c}{h,2}{ij}}$. Therefore, there exists a (unique) $z\in [2]$, such that $\specialvertex{d}{h, 1}{ij} \specialvertex{c}{h, z}{ij} \in \eta$. 

Now, since $\eta$ is a perfect matching, and $N(\specialvertex{d}{h-1, z}{ij})=\pref{\specialvertex{c}{h-1, 2z-1}{ij},\specialvertex{c}{h-1, 2z}{ij}, \specialvertex{c}{h, z}{ij}}$, either $\specialvertex{d}{h-1, z}{ij}\specialvertex{c}{h-1, 2z-1}{ij} \in \eta$ or $\specialvertex{d}{h-1, z}{ij} \specialvertex{c}{h-1, 2z}{ij} \in \eta$. We view the index $h$ as indicating a level, the highest being $\log_2(\nicefrac{m}{2})$. As we go down each level starting from the highest, we obtain a matching edge in $\eta$. The lowest level is reached when for some value $h'\in [\nicefrac{m}{4}]$, we reach the vertex $\specialvertex{d}{1, h'}{ij}$. For this vertex, there are two possible matching partners in $\eta$: $\specialvertex{c}{1, 2h'-1}{i j} $ or $\specialvertex{c}{1, 2h'}{ij}$. Thus, for some $\bar{h}\in \{2h' -1, 2h'\}$, edge $\specialvertex{d}{1,h'}{ij} \specialvertex{c}{1, \bar{h}}{ij} \in \eta$. 

Since $\eta$ is a perfect matching, $ \specialvertex{q}{\bar{h}}{ij}$ is matched to either $\basevertex{e}{}$ or $\basevertex{e'}{}$ in $\eta$, where $e=E_{ij}(2\bar{h}-1)$ and $e'=E_{ij}(2\bar{h})$, i.e, $e$ is the $2\bar{h}-1^{st}$ element of $E_{ij}$ and $e'$ is the $2\bar{h}^{th}$ element of $E_{ij}$. If $\specialvertex{q}{\bar{h}}{ij} \basevertex{e}{} \in \eta$, then since $\basevertex{e}{}$ and $\basevertex{\tilde{e}}{}$ are each others first preference, edge $\basevertex{e}{}\basevertex{\tilde{e}}{}\in {\mathtt B}_{\eta}$. Else if $\specialvertex{q}{\bar{h}}{ij} \basevertex{e'}{} \in \eta$, then with analogous argument, it follows that $\basevertex{e'}{} \basevertex{\tilde{e'}}{} \in {\mathtt B}_{\eta}$. Hence, the result is proved.
\end{proof}

%\il{upto here}
%Recall that $N(d_{h 1}^{ij})=\{c_{h1}^{ij}, c_{h2}^{ij}\}$. Therefore, $d_{h1}^{ij}c_{h z}^{ij} \in \eta$, where $z \in [2]$. Now, since $\eta$ is a perfect matching, and $N(d_{(h-1) z}^{ij})=\{c_{(h-1) (2z-1)}^{ij},c_{(h-1)(2z)}^{ij},c_{h z}^{ij} \}$, either $d_{(h-1) z}^{ij}c_{(h-1) (2z-1)}^{ij} \in \eta$ or $d_{(h-1) z}^{ij} c_{(h-1)(2z)}^{ij} \in \eta$, and so on. 

%Thus,  there exists $h \in [ \nicefrac{m}{4}]$ such that $d_{1h}^{ij}c_{1h'}^{ij} \in \eta$, where $h'\in \{2h -1, 2h\}$. %\Ma{Hence, $q_{h'}^{ij}$ is matched to either $e_{2h'-1,1}$ or $e_{2h',1}$ in $\eta$. If $q_{h'}^{ij}e_{2h'-1, 1}^{ij} \in \eta$, then since $e_{2h'-1,1}$ and $e_{2h'-1, 2}^{ij}$ prefer each other over any other vertex,  $e_{2h'-1,1}^{ij}e_{2h'-1, 2}^{ij} \in {\mathtt B}_{\eta}$. Otherwise, $q_{h'}^{ij}e_{(2h')1}^{ij} \in \eta$, and hence, $e_{2h', 1}^{ij}e_{2h', 2}^{ij} \in {\mathtt B}_{\eta}$ using the same argument.}

%Using the arguments in the proof of Observation ~\ref{claim:solution intersect clique edge}, we also obtain the following property of $\eta$. 
%
%\begin{observation}\label{obs:qe matching edge}
%For each $\{i,j\}\subseteq [k]$, there exists $h\in [\rho_1]$ such that either $q_{1h}^{ij}e_{(2h-1)1}^{ij} \in \eta$ or $q_{1h}^{ij}e_{(2h)1}^{ij} \in \eta$
%\end{observation}

\begin{corollary}\label{obs:size B}
 For each $i\in [k]$, there exists a unique $u\in V_{i}$,  such that the edge $\basevertex{u}{1} \basevertex{u}{2}$ is a blocking edge \wrt $\eta$; and for each $\{i,j\}\subseteq [k]$, there exists a unique $e \in E_{ij}$, such that $\basevertex{e}{} \basevertex{\tilde{e}}{}$ is a blocking edge \wrt $\eta$. 
\end{corollary}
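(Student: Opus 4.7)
The plan is to derive the Corollary as a direct counting consequence of Claims~\ref{claim:solution intersect each part} and \ref{claim:solution intersect clique edge}, together with the budget constraint $|{\mathtt B}_\eta| \leq k' = k + \nicefrac{k(k-1)}{2}$.

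First, I would observe that for each $i \in [k]$, Claim~\ref{claim:solution intersect each part} guarantees at least one u-type static blocking edge of the form $\basevertex{u}{1}\basevertex{u}{2}$ with $u\in V_{i}$. Since the sets $V_{1},\ldots,V_{k}$ are pairwise disjoint, summing over $i$ yields at least $k$ distinct blocking edges in ${\mathtt B}_{\eta}$ of u-type. Next, for each $\{i,j\} \subseteq [k]$ with $i<j$, Claim~\ref{claim:solution intersect clique edge} supplies at least one e-type static blocking edge $\basevertex{e}{}\basevertex{\tilde{e}}{}$ with $e\in E_{ij}$; since the sets $E_{ij}$ are pairwise disjoint across choices of $\{i,j\}$, this contributes a further $\nicefrac{k(k-1)}{2}$ distinct blocking edges. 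Crucially, u-type and e-type edges are disjoint by construction, because the endpoints of any u-type edge are vertices of the form $\basevertex{u}{h}$ for $u\in V(G)$, whereas the endpoints of any e-type edge are of the form $\basevertex{e}{},\basevertex{\tilde e}{}$ for $e \in E(G)$.

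Thus ${\mathtt B}_{\eta}$ contains at least $k + \nicefrac{k(k-1)}{2} = k'$ distinct blocking edges. Since we assumed $|{\mathtt B}_{\eta}|\le k'$, equality holds, and the blocking edges identified above exhaust ${\mathtt B}_{\eta}$. In particular, if for some $i\in [k]$ there existed two distinct vertices $u,u' \in V_{i}$ giving rise to u-type blocking edges $\basevertex{u}{1}\basevertex{u}{2}$ and $\basevertex{u'}{1}\basevertex{u'}{2}$, then combined with the other $k-1$ parts and the $\nicefrac{k(k-1)}{2}$ e-type blocking edges, we would have at least $k'+1$ blocking edges, a contradiction. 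The analogous argument shows that for each $\{i,j\}\subseteq [k]$, the edge $e \in E_{ij}$ yielding the e-type blocking edge is unique.

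The only subtlety, and what I would state explicitly, is that the counted blocking edges really are pairwise distinct. This is immediate: u-type edges from different $V_i$'s involve different base vertices because $V(G)$ is partitioned by the $V_i$'s, e-type edges from different $E_{ij}$'s involve different edges of $G$, and the two categories are disjoint as noted above. Hence the counting argument yields both existence (already known) and uniqueness, completing the proof of the corollary.
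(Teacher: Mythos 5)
Your proposal is correct and follows essentially the same route as the paper: it combines the existence guarantees of Claims~\ref{claim:solution intersect each part} and~\ref{claim:solution intersect clique edge} with the budget $|{\mathtt B}_\eta|\le k'=k+\nicefrac{k(k-1)}{2}$ to force uniqueness by counting. The only difference is that you make explicit the pairwise distinctness of the counted blocking edges (disjointness of the $V_i$'s, of the $E_{ij}$'s, and of the u-type and e-type categories), which the paper's two-line proof leaves implicit; this is a welcome clarification rather than a deviation.
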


\begin{proof}
Using Claims ~\ref{claim:solution intersect each part} and ~\ref{claim:solution intersect clique edge}, 
we know that there are at least $k+\nicefrac{k(k-1)}{2}$ blocking edges \wrt $\eta$. Since $k' =k+\nicefrac{k(k-1)}{2}$, the uniqueness condition follows.
\end{proof}

Conversely, we can also argue the following.

\begin{corollary}\label{obs:type of bp}
Any blocking edge \wrt $\eta$ is either a $u$-type static edge or an $e$-type static edge.  
\end{corollary}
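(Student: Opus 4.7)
The plan is to observe that this corollary is essentially a pigeonhole consequence of what has already been established, together with the budget assumption on blocking edges. Concretely, Corollary~\ref{obs:size B} produces, for each $i \in [k]$, a specific $u$-type static edge $\basevertex{u}{1}\basevertex{u}{2}$ in ${\mathtt B}_\eta$ (with the unique $u \in V_i$ guaranteed there), and for each pair $\{i,j\} \subseteq [k]$ with $i<j$, a specific $e$-type static edge $\basevertex{e}{}\basevertex{\tilde{e}}{}$ in ${\mathtt B}_\eta$ (with the unique $e \in E_{ij}$ guaranteed there). These edges are pairwise distinct because $u$-type and $e$-type static edges involve disjoint vertex families in $G'$, and distinct vertices in $V(G)$ yield distinct $u$-type static edges while distinct edges in $E(G)$ yield distinct $e$-type static edges.

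Counting, this gives at least $k + \tfrac{k(k-1)}{2} = k'$ blocking edges of $u$-type or $e$-type in ${\mathtt B}_\eta$. By the hypothesis on $\eta$ in the instance $\Co{J}$, we have $|{\mathtt B}_\eta| \le k'$. Combining both bounds, $|{\mathtt B}_\eta| = k'$ and the set ${\mathtt B}_\eta$ is exactly the disjoint union of the $u$-type static edges produced in Claim~\ref{claim:solution intersect each part} and the $e$-type static edges produced in Claim~\ref{claim:solution intersect clique edge}. In particular, no blocking edge of any other form can exist with respect to $\eta$, which is the desired conclusion.

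\textbf{Expected obstacle.} There is essentially no technical obstacle here; the corollary is a direct bookkeeping consequence of Corollary~\ref{obs:size B} and the budget $k'$ on the number of blocking edges. The only care needed is to verify disjointness of the $k + \tfrac{k(k-1)}{2}$ blocking edges identified above, which is immediate since $u$-type edges are incident to vertices of the form $\basevertex{u}{1},\basevertex{u}{2}$ with $u\in V(G)$, while $e$-type edges are incident to vertices $\basevertex{e}{},\basevertex{\tilde{e}}{}$ with $e\in E(G)$, and these two vertex families are disjoint in $G'$ by construction.
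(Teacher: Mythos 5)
Your proposal is correct and matches the paper's own argument: both invoke Corollary~\ref{obs:size B} to exhibit $k$ distinct $u$-type and $\nicefrac{k(k-1)}{2}$ distinct $e$-type blocking edges and then use the budget $k'=k+\nicefrac{k(k-1)}{2}$ to rule out any others. Your added remark on disjointness of the two edge families is a harmless (and valid) elaboration of the same counting step.
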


\begin{proof}
Using Corollary~\ref{obs:size B}, we know that there are at least $k$ u-type blocking edges and $\nicefrac{k(k-1)}{2}$ $e$-type blocking edges \wrt $\eta$. Since $k' =k+\nicefrac{k(k-1)}{2}$, there cannot exist any other (besides $u$-type and $e$-type) blocking edge \wrt $\eta$. %Hence, the result follows . 
\end{proof}

Next, we prove that the $e$-type (static) blocking edges force certain edges to be in the matching $\eta$. %Specifically, we strengthen Claim~\ref{} to say that for any value of $\ell$

%\il{Didn't go beyond this}
\begin{clm}\label{q1 pair}For any $\{i,j\}\subseteq [k]$, consider some $e\in E_{ij}$ such that  $\basevertex{e}{}\basevertex{\tilde{e}}{}$ is a blocking edge \wrt $\eta$. Then, for the value $\ell = \sigma(E_{ij}, e)$, the edge $\specialvertex{q}{\roof{ \nicefrac{\ell}{2}}}{ij} \basevertex{e}{}$ is in $\eta$. 

%suppose that there exists $e \in E_{ij}$, such that $e$ is the $\ell^{th}$ element of $E_{ij}$, and that $\basevertex{e}{}\basevertex{\tilde{e}}{}$ is a blocking edge \wrt $\eta$. Then, the edge $q_{ \lceil \nicefrac{\ell}{2} \rceil}^{ij} \basevertex{e}{}$ is in $\eta$.

\end{clm}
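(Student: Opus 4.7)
The plan is to derive the claim as a direct consequence of results that have already been established in this reverse direction. Specifically, Claim~\ref{claim:solution intersect clique edge} shows, for each pair $\{i,j\}\subseteq [k]$, the existence of \emph{some} $e^*\in E_{ij}$ such that both conditions hold simultaneously: $\basevertex{e^*}{}\basevertex{\tilde{e^*}}{}\in{\mathtt B}_{\eta}$ and $\specialvertex{q}{\roof{\ell^*/2}}{ij}\basevertex{e^*}{}\in\eta$, where $\ell^*=\sigma(E_{ij},e^*)$. Corollary~\ref{obs:size B} then asserts that within $E_{ij}$ there is a \emph{unique} edge whose corresponding $e$-type static edge is a blocking edge with respect to $\eta$.

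The argument is then essentially a uniqueness argument. Given any $e\in E_{ij}$ satisfying the hypothesis of the claim (namely that $\basevertex{e}{}\basevertex{\tilde{e}}{}$ is a blocking edge \wrt $\eta$), we invoke Corollary~\ref{obs:size B} to conclude $e=e^*$, and hence $\ell=\ell^*$. Consequently, the second half of Claim~\ref{claim:solution intersect clique edge}, which is guaranteed for $e^*$, transfers to $e$: the edge $\specialvertex{q}{\roof{\ell/2}}{ij}\basevertex{e}{}$ lies in $\eta$.

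The only subtle point that must be verified is that the hypotheses needed for Corollary~\ref{obs:size B} are in force; specifically, that $\eta$ is a perfect matching (so that the existence results of Claims~\ref{claim:solution intersect each part} and \ref{claim:solution intersect clique edge} apply) and that we are indeed working with at most $k+\binom{k}{2}$ blocking edges. Both are part of the standing setup for the reverse direction of Lemma~\ref{lem:asm-hard-correctness}: Claim~\ref{tilde eta is perfect} together with $|\eta|\geq |\mu|+t$ guarantees that $\eta$ is perfect, and the hypothesis $|{\mathtt B}_{\eta}|\leq k'=k+\binom{k}{2}$ is assumed throughout. Thus the only work in the proof is to explicitly cite these two results and remark on the uniqueness, making the proof essentially a one-line invocation. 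The main obstacle, if any, is notational bookkeeping to make sure the indices $\ell$ and $\roof{\ell/2}$ in the reference to $\specialvertex{q}{\roof{\ell/2}}{ij}$ line up correctly with those used in Claim~\ref{claim:solution intersect clique edge}.
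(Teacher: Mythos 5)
Your proposal is correct and follows essentially the same route as the paper: the paper's proof likewise invokes Claim~\ref{claim:solution intersect clique edge} to obtain an edge $e'$ with both $\basevertex{e'}{}\basevertex{\tilde{e'}}{}\in{\mathtt B}_{\eta}$ and $\specialvertex{q}{\roof{\nicefrac{\ell'}{2}}}{ij}\basevertex{e'}{}\in\eta$, and then uses the uniqueness in Corollary~\ref{obs:size B} to conclude $e'=e$.
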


\begin{proof} By Claim~\ref{claim:solution intersect clique edge}, there exists an edge $e^{\prime}\in E_{ij}$ such that $\basevertex{e'}{}\basevertex{\tilde{e'}}{}$ is a blocking edge \wrt $\eta$ and $\specialvertex{q}{ \roof{ \nicefrac{\ell'}{2}}}{ij} \basevertex{e'}{}$ is in $\eta$, where $\ell'=\sigma(E_{ij},e')$. By Corollary~\ref{obs:size B}, we know that $e^{\prime}=e$. 
 % the conclusion holds for $e$.
\end{proof}

%Suppose that $q_{\lceil \nicefrac{\ell}{2} \rceil}^{ij}\basevertex{e}{} \notin \eta$. \Ma{Since $\basevertex{e}{}\basevertex{\tilde{e}}{}$ is a blocking edge, by Claim \ref{claim:solution intersect clique edge}, there exists $h\in [\nicefrac{m}{2}]$ such that $q_{h}^{ij}e_{h'1}^{ij} \in \eta$, where $h'\in \{2h-1,2h\}$.}\ma{It only says about $l/2$! } Since $e_{h'1}^{ij}$ prefers $e_{h'2}^{ij}$ over any other vertex in $G'$, it follows that $e_{h'1}^{ij}e_{h'2}^{ij} \in {\mathtt B}$, a contradiction to Corollary~\ref{obs:size B} (as $h'\neq \ell$). 

\begin{sloppypar}
\begin{clm}[{\bf consistency between u-type static edge and e-type static edge}]\label{consistency} 
Suppose that for some $\{i,j\}\subseteq [k], i<j$, we have $e\in E_{ij}$ such that $\basevertex{e}{}\basevertex{\tilde{e}}{}$ is a blocking edge \wrt $\eta$. Let $u $ and $v$ denote the two endpoints of the edge $e$ in $G$. Then, both 
 $\basevertex{u}{1}\basevertex{u}{2}$ and $\basevertex{v}{1}\basevertex{v}{2}$ are blocking edges \wrt $\eta$. \hide{\ma{DOne}}

%\il{}$e_{\ell 1}^{ij}e_{\ell 2}^{ij} \in {\mathtt B}_{\eta}$, where $\{i,j\}\subseteq [k], i<j, \ell\in [m]$. Let the endpoints of edge $e_\ell^{ij} \in E(G)$ be $u_z^i$ and $u_{z'}^j$ in $V_i$ and $V_j$, respectively.  Then,  $\{u_{z1}^iu_{z2}^i,u_{z'1}^ju_{z'2}^j\}\subseteq {\mathtt B}$.
\end{clm}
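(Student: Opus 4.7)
The plan is to propagate, via a cascade along the path $\basevertex{u}{1}-\basevertex{u}{2}-\cdots-\basevertex{u}{2r+2}$, the fact that $\basevertex{e}{}$ is matched externally (to its $q$-vertex) all the way down to the conclusion that $\basevertex{u}{1}$ is matched externally to its $p$-vertex; this immediately gives that $\basevertex{u}{1}\basevertex{u}{2}\notin\eta$, and hence (since $\basevertex{u}{1}$ and $\basevertex{u}{2}$ are each other's top preferences) $\basevertex{u}{1}\basevertex{u}{2}$ is a blocking edge with respect to $\eta$. Let $h=\sigma(E_{u},e)$ and $\ell=\sigma(E_{ij},e)$, so that $\basevertex{u}{2h+1}$ is the unique path-vertex of $u$ adjacent to $\basevertex{e}{}$. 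Claim~\ref{q1 pair} gives $\specialvertex{q}{\lceil \ell/2\rceil}{ij}\basevertex{e}{}\in\eta$, and by Claim~\ref{tilde eta is perfect} the matching $\eta$ is perfect, so $\basevertex{u}{2h+1}$ must be matched to one of its two remaining neighbors $\basevertex{u}{2h}$ or $\basevertex{u}{2h+2}$.

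The base of the cascade rules out $\basevertex{u}{2h+1}\basevertex{u}{2h+2}\in\eta$. Inspecting the preferences, $\basevertex{u}{2h+1}$ would prefer $\basevertex{e}{}$ (rank~$2$) over $\basevertex{u}{2h+2}$ (rank~$3$), and $\basevertex{e}{}$ would prefer $\basevertex{u}{2h+1}$ (rank~$2$) over $\specialvertex{q}{\lceil \ell/2\rceil}{ij}$ (rank~$4$), so $\basevertex{u}{2h+1}\basevertex{e}{}$ would be a blocking edge. But this is neither a $u$-type nor an $e$-type static edge, contradicting Corollary~\ref{obs:type of bp}. Hence $\basevertex{u}{2h+1}\basevertex{u}{2h}\in\eta$.

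For the inductive step, I would proceed through $j=h-1,h-2,\ldots,1$, maintaining the invariant that $\basevertex{u}{2j+3}\basevertex{u}{2j+2}\in\eta$. Given the invariant, $\basevertex{u}{2j+1}$ has only two available neighbors, $\basevertex{u}{2j}$ and $\basevertex{e_j}{}$, where $e_j=E_{u}(j)$. If $\basevertex{u}{2j+1}\basevertex{e_j}{}\in\eta$, then $\basevertex{\tilde{e_j}}{}$, having no other unmatched neighbor, must be matched to the corresponding $\specialvertex{\tilde{q}}{}{ij''}$ vertex (where $e_j\in E_{ij''}$). Since $\basevertex{e_j}{}$ and $\basevertex{\tilde{e_j}}{}$ are each other's top preferences, the edge $\basevertex{e_j}{}\basevertex{\tilde{e_j}}{}$ is then a blocking edge, and applying Claim~\ref{q1 pair} to this blocking edge forces $\basevertex{e_j}{}$ to be matched to $\specialvertex{q}{}{ij''}$, contradicting its supposed match to $\basevertex{u}{2j+1}$. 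Hence $\basevertex{u}{2j+1}\basevertex{u}{2j}\in\eta$, and the invariant propagates.

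After the final iteration at $j=1$, we have $\basevertex{u}{3}\basevertex{u}{2}\in\eta$; by perfectness, $\basevertex{u}{1}$ is then forced to be matched to its only remaining neighbor $\specialvertex{p}{\lceil \sigma(V_{i},u)/2\rceil}{i}$. Thus $\basevertex{u}{1}\basevertex{u}{2}\notin\eta$, and it is a blocking edge. The argument for $\basevertex{v}{1}\basevertex{v}{2}$ is entirely symmetric, applied along the $v$-path using $h'=\sigma(E_{v},e)$ at its base. The main obstacle is the cascade step: the contradiction relies on simultaneously invoking perfectness of $\eta$, Corollary~\ref{obs:type of bp} (to forbid non-static blocking edges of the form $\basevertex{u}{2j+1}\basevertex{e_j}{}$), and the reverse direction of Claim~\ref{q1 pair} (to forbid spawning an inconsistent $e$-type blocking edge at $e_j$).
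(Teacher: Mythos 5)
Your proof is correct and uses exactly the same ingredients as the paper's — perfectness of $\eta$, Claim~\ref{q1 pair}, and Corollary~\ref{obs:type of bp} — the only difference being that you argue directly, cascading from $\basevertex{e}{}$'s forced match with its $q$-vertex down the path to force $\basevertex{u}{1}$ onto its $p$-vertex, whereas the paper argues by contradiction, assuming $\basevertex{u}{1}\basevertex{u}{2}\in\eta$ and propagating upward to exhibit the forbidden blocking edge $\basevertex{u}{2h+1}\basevertex{e}{}$. These are contrapositive presentations of the same path argument, so there is no substantive difference.
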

\end{sloppypar}
\begin{proof}

For the sake of contradiction, suppose that both $\basevertex{u}{1} \basevertex{u}{2}$ and $\basevertex{v}{1} \basevertex{v}{2}$ are not blocking edges \wrt $\eta$. Without loss of generality, we may assume that $\basevertex{u}{1} \basevertex{u}{2}$ is not a blocking edge. Since $u_1$ and $u_2$ prefer each other over any other vertex, $u_1u_2 \in \eta$, otherwise it will contradict the fact that $u_1u_2$ is not a blocking edge. For any $h\in [r]$, suppose that $u_{2h+1}e' \in \eta$, where $e'=\sigma(E_u,h)$. Since $e'$ and $\tilde{e'}$ prefer each other over any other vertex $e'\tilde{e'}\in \mathtt{B}_\eta$. Since $u_{2h+1}e' \in \eta$, the edge $\specialvertex{q}{\roof{\nicefrac{\ell}{2}}}{ij} \basevertex{e'}{} \notin \eta$, where $\ell = \sigma(E_{ij}, e')$, a contradiction to Claim~\ref{q1 pair}. Thus, for any $h\in [r]$,  $u_{2h+1}e' \notin \eta$, where $e'=\sigma(E_u,h)$. Since $u_1u_2 \in \eta$ and $\eta$ is a perfect matching, we can infer that for each $h\in [r]$, $u_{2h+1}u_{2h+2}\in \eta$. Since $e\tilde{e} \in \mathtt{B}_\eta$, due to  Claim~\ref{q1 pair}, $\specialvertex{q}{\roof{\nicefrac{\ell}{2}}}{ij} \basevertex{e}{} \in \eta$, where $\ell = \sigma(E_{ij}, e)$. Note that there exists $h\in [r]$ such that $u_{2h+1}e \in E(G')$. Since $u_{2h+1}$ prefers $e$ more than its matched partner in $\eta$, i.e., $u_{2h+2}$, and $e$ prefers $u_{2h+1}$ more than its matched partner in $\eta$, $u_{2h+1}e \in \mathtt{B}_\eta$, a contradiction to Corollary~\ref{obs:type of bp}. 

\end{proof}

\begin{supress}
\il{OLD proof of Claim~\ref{consistency}: I don't understand}
Suppose that $e_{\ell 1}^{ij}e_{\ell 2}^{ij} \in {\mathtt B}$ but $\{u_{z1}^iu_{z2}^i,u_{z'1}^ju_{z'2}^j\}\nsubseteq {\mathtt B}$. Without loss of generality, let $u_{z1}^iu_{z2}^i \notin {\mathtt B}$. Let $e_{\ell 1}^{ij}=E_{u_z^i}(h)$, where $h\in [r]$. Since $u_{z1}^i$ and $u_{z2}^i$ prefer each other over any other vertex, and $u_{z1}^iu_{z2}^i \notin {\mathtt B}$, it follows that $u_{z1}^iu_{z2}^i \in \eta$. 

Recall that for each $\hat{i}\in [k]$, $\hat{\ell} \in [n]$,  $E_{u_{\hat{\ell}}^{\hat{i}}}= \{e_{s1}^{\hat{i}\hat{j}}\mid \hat{j}\in [k], \hat{i}<\hat{j}, \text{ and } e_s^{{\hat{i}\hat{j}}}(\in E_{\hat{i}\hat{j}}) \text{ is incident on } u_{\hat{\ell}}^{\hat{i}} \in V_{\hat{i}}\}$ is an ordered set. 

\Ma{Suppose that $E_{u_z^i}(h')u_{z(2h'+1)}^{i} \in \eta$,}\ma{PJ, this cant be $E_{u_z^i}(h')$ is an edge!} where $h'\in [r]$. Let  $e_{\ell'1}^{ij'}=E_{u_z^i}(h')$, where $\ell' \in [m], j'\in [k]$. Since $e_{\ell'1}^{ij'}$ and $e_{\ell'2}^{ij'}$  prefer each other over any other vertex, $e_{\ell'1}^{ij'}e_{\ell'2}^{ij'} \in {\mathtt B}$. Since $e_{\ell'1}^{ij'}u_{z(2h'+1)}^{i} \in \eta$, $e_{\ell'1}^{ij'}q_{ \lceil \nicefrac{\ell'}{2} \rceil}^{ij'} \notin \eta$, a contradiction to Claim~\ref{q1 pair}. Thus, for any $h'\in [r]$, $E_{u_z^i}(h')u_{z(2h'+1)}^{i} \notin \eta$. Since $u_{z1}^iu_{z2}^i \in \eta$, and $\eta$ is a perfect matching, we can infer that $u_{z(2h'+1)}^{i}u_{z(2h'+2)}^{i} \in \eta$, for all $h'\in [r]$. Since $e_{\ell 1}^{ij}e_{\ell 2}^{ij} \in {\mathtt B}$,  $e_{\ell1}^{ij}q_{\lceil \nicefrac{\ell}{2} \rceil}^{ij} \in \eta$ due to  Claim~\ref{q1 pair}. Since $u_{z(2h+1)}^{i}$ prefers $e_{\ell 1}^{ij}$ over $u_{z(2h+2)}^{i}$($= \eta(u_{z(2h+1)}^{i})$), and $e_{\ell 1}^{ij}$ prefers $u_{z(2h+1)}^{i}$ over $q_{ \lceil \nicefrac{\ell}{2} \rceil}^{ij}$($= \eta(e_{\ell 1}^{ij})$), it follows that $u_{z(2h+1)}^{i}e_{\ell 1}^{ij} \in {\mathtt B}$, a contradiction to Corollary~\ref{obs:type of bp}.  %Note that if  Next, we show that for all $h\in [r]$, $u_{s3}^{ih}u_{s3}^{ih} \in \eta$. 

\il{}

\end{supress}

Next, we construct two sets $S$ and $E_S$ as follows. Let $S=\{u\in V(G):  
\basevertex{u}{1}\basevertex{u}{2}\in {\mathtt B}_{\eta}\}$, i.e, the set of vertices in $G$ that correspond to a $u$-type static blocking edge. Let $E_S=\{e\in E(G): 
\basevertex{e}{}\basevertex{\tilde{e}}{} \in  {\mathtt B}_{\eta}\}$, i.e, the set of edges in $G$ that correspond to a $e$-type static blocking edge.

We claim that $G_S=(S,E_S)$ is a clique, and $|S\cap V_i|=1$, for each $i\in [k]$. Using Claim \ref{consistency}, we know that for each edge $e \in E_{S}$, we have $\{u, v\} \sse S$, where $u$ and $v$ are the two endpoints of the edge $e$. 

Moreover, using Corollary~\ref{obs:size B}, we that $|V_{i}\cap S|=1$ for each $i\in [k]$ and $|E_S|=\nicefrac{k(k-1)}{2}$. Hence, we may conclude that $G_S$ is a clique on $k$ vertices. This completes the proof of the lemma.
\end{proof}

Thus, Theorem~\ref{thm:asm} is proved.

%\end{proof}

%\il{}
%for each edge $e_\ell^{ij}(=u_s^iu_{s'}^j) \in E_S$, $\{u_s^i,u_{s'}^j\}\subseteq S$. Moreover, using Corollary  \ref{obs:size B}, %$|S|=k$, and 
%$|S\cap V_i|=1$, i.e., $|S|=k$ , and 
%$|E_S|=\nicefrac{k(k-1)}{2}$. Hence, % Since for each edge $e_\ell^{ij} =u_s^iu_{s'}^j$ in $E_S$, $u_s^i$ and $u_{s'}^j$ are in $S$, and $|S|=k$, and $|E_S|=\nicefrac{k(k-1)}{2}$, it follows that 
%$G_S$ is a clique such that $|S\cap V_i|=1$, where $i\in [k]$. This completes the proof. 

% !TEX root = NASM-main.tex

\section{\W$[1]$-hardness of \ams}
In this section, we show the parameterized intractability of \ams with respect to several parameters. In particular, we prove Theorem~\ref{thm:whard-k-and-t} and Theorem~\ref{thm:smd w-hard}. 

\subsection{Proof of Theorem~\ref{thm:whard-k-and-t}}

We again give a polynomial-time parameter preserving many-to-one reduction from \mcqsmall on regular graphs. Let $(G,k)$ be an instance of \mcqsmall. To construct an instance $(G',\Co{L},\mu,k',q,t)$ of \ams, we construct a graph $G'$, a set of $\Co{L}$ containing the preference list of each vertex of $G'$, and a stable matching $\mu$ as defined in the proof of Theorem~\ref{thm:asm}. %1. 
We set the parameters $k'$ and $t$ also as in the proof of Theorem 1. %1. 
We set parameter $q$ as follows:
$$ q= (2r+3)k+\frac{3k(k-1)}{2}+4k\log_2 \Big(\frac{n}{2}\Big)+2k(k-1) \log_2 \Big(\frac{m}{2}\Big)$$ Next, we show that $(G,k)$ is a \yes-instance of \mcqsmall if and only if $(G',\pref,\mu,k',q,t)$ is a \yes-instance of \ams. 
In the forward direction, let $X$ be a solution of \mcqsmall for $(G,k)$. We construct a matching $\eta$ as defined in the above proof. As proved above, $|\eta|=|\mu|+t$ and the number of blocking edges with respect to $\eta$ is $k'$. Now, we show that $|\mu \triangle \eta|\leq q$. Recall that for each vertex in $X$, we delete $r+2\log_2(\nicefrac{n}{2})+1$ edges from $\eta$ (which also belongs to $\mu$), and add $r+2\log_2(\nicefrac{n}{2})+2$ edges to $\eta$. Similarly, for each edge in $E(G[X])$, we delete $2 (\log_2 \nicefrac{m}{2}) +1$ edge from $\eta$ which is also in $\mu$, and add  $2 (\log_2 \nicefrac{m}{2}) +2$ edges to $\eta$. Hence, 
$$|\mu \triangle \eta|= (2r+3)k+\frac{3k(k-1)}{2}+4k\log_2 \Big(\frac{n}{2}\Big)+2k(k-1) \log_2 \Big(\frac{m}{2}\Big)$$ This completes the proof in the forward direction. The proof of backward direction is same as the proof of the backward direction of Theorem~\ref{thm:asm}. %1. 

\subsection{Proof of Theorem~\ref{thm:smd w-hard}}

% !TEX root = NASM-main.tex
% To denote an arbitraty ordering of a set.
\newcommand{\ordering}[1]{\ensuremath{ [#1] }}

%\section{Hardness of \ams, Theorem~\ref{thm:whard-k-q-t}}\label{sec:LS-ASM2}
%\begin{theorem}
%\ams is \WOH with respect to $k+q+t$.
%\end{theorem}

We again give a polynomial-time parameter preserving many-to-one reduction from \mcqsmall similar to the one in Theorem~\ref{thm:asm}. Here, we do not need graph to be a regular graph. %1. In this reduction, an instance of \mcqsmall need not be a regular graph. In order to bound $q$, we try to decrease the length of paths by contracting paths in the above construction. Unfortunately, this increases the degree of graph. Here we give the detailed description of the construction of an instance of \ams. 
\par

\noindent{\bf Construction.} Given an instance $\Co{I}=(G,(V_1,\ldots,V_k))$ of \mcqsmall, we construct an instance $\Co{J}=(G',\Co{L},\mu,k',q,t)$ of \ams as follows.  
 For any $\{i, j\} \sse [k]$, such that $i<j$, we use $E_{ij}$ to denote the set of edges between sets $V_i$ and $V_j$. 
\begin{sloppypar}
\begin{itemize}[noitemsep]
\item For each vertex $v \in V(G)$, we add four vertices in $G'$, denoted by $\{\basevertex{u}{i} : i\in [4]\}$, connected via a path: $(\basevertex{u}{1},\basevertex{u}{2},\basevertex{u}{3},\basevertex{u}{4})$ in $G'$. %Let $u\in V_i$ and $v\in V_j$, where $\{i,j\}\subseteq [k]$, and $i<j$. 
\item  For each edge $e \in E_{ij}$, we add vertices $\basevertex{e}{}$ and $\basevertex{\tilde{e}}{}$ to $V(G')$, and the edge $\basevertex{e}{}\basevertex{\tilde{e}}{}$ to $E(G')$.  %\todo{similarly here}%Now, f
\item For each $i\in[k]$, we add two vertices $p_1^i, p_2^i$, and for each $\{i,j\} \subseteq [k]$ where $i<j$,  we add two vertices $q_1^{ij}, q_2^{ij}$ to $V(G')$. 
\item %We %also add 
For each $i\in[k]$ and for each vertex $u \in V_i$, we add two edges $\basevertex{u}{1}p_1^i$ and $\basevertex{u}{4}p_2^i$ to $E(G')$. %let $Z_1^i=\{u^i_1p_1^i,u^i_4p_2^i\colon u\in V_i\}$.
 For each $\{i,j\} \subseteq [k]$, $i < j$, and for each edge $e \in E_{ij}$,
 %with endpoints $u \in V_i$ and $v \in V_j$,\todo{no need of $V_i,V_j$}
   we add four edges $q_1^{ij}\basevertex{e}{}, q_2^{ij}\basevertex{\tilde{e}}{}, \basevertex{e}{}\basevertex{u}{3}$, and $\basevertex{e}{}\basevertex{v}{3}$ to $E(G')$.  
 \end{itemize}
\end{sloppypar}
Figure \ref{fig:hardness} describes the construction of $G'$. 
Note that $V(G')=4\lvert V(G)\rvert+2\lvert E(G)\rvert+2k+k(k-1)$. Recall that in the construction of an instance in the proof of Theorem~\ref{thm:asm}, for each vertex in $V(G)$, we added a path of length $2r+2$, while here we add a path of length $4$. Moreover, instead of adding $n$ vertices $p_{\ell}^i$ and $\tilde{p}_\ell^i$, for each $i\in [k]$, $\ell\in [\nicefrac{n}{2}]$, we only add two vertices $p_1^i$ and $p_2^i$. Similarly, we added only two vertices $q_1^{ij}$ and $q_2^{ij}$ instead of adding $m$ such vertices. Furthermore, here we did not add the other special vertices which we added in the previous reduction. This is how we decrease the length of augmenting paths. But, note that degree of vertices $\basevertex{u}{3}$, $p_1^i, p_2^i,q_1^{ij},q_2^{ij}$, where $u\in V_i$, $\{i,j\}\subseteq [k], i<j$, is large.
 
 %let $Z_2^{ij}=\{q_1^{ij}\basevertex{e}{}, q_2^{ij}\basevertex{\tilde{e}}{}, \basevertex{e}{}\basevertex{u}{3, \basevertex{e}{}v_3^{j}\colon e(=uv)\in E(G),u\in V_i, v\in V_j, \text{ and } i<j\}$. We add $Z_1^i$ and $Z_2^{ij}$ to $E(G')$, for all $i\in[k]$ and $\{i,j\} \subseteq [k]$, where $i<j$. 
 %Then, $E(G')=Z_1\cup Z_2$. %If $u\in V_i$, then add edges $u^i_1p_1^i$ and $u^i_4p_2^i$ to $E(G')$. For each $\{i,j\} \subseteq [k]$, we add two vertices $q_1^{ij}$ and $q_2^{ij}$ to $V(G')$. Let $u\in V_i$ and $v\in V_j$, where $\{i,j\}\subseteq [k]$. If $e=uv\in E(G)$, then add edges $q_1^{ij}\basevertex{e}{}$, $q_2^{ij}e_2{ij}$, $\basevertex{e}{},u_3$, and $\basevertex{\tilde{e}}{},v_3^{i}$ to $E(G')$.  

%For each $i \in [k]$ and each vertex $u\in V_i$, we define $\X{E}_{u}=\{\basevertex{e}{}\colon e(=uv)\in E(G) \text{ for some } v\in V_j\}$. 
 For any vertex $u\in V(G)$, we define \[\X{E}_{u}=\{ \basevertex{e}{}\in V(G') \colon e \in E(G) \text{ and } u \text{ is an endpoint of } e\}\]
The preference list of each vertex in $G'$ is presented in Table \ref{pref_list}.
\par
{\bf Matching $\mu$:} Let $\mu=\{\basevertex{u}{1}\basevertex{u}{2}, \basevertex{u}{3}\basevertex{u}{4}, \basevertex{e}{}\basevertex{\tilde{e}}{}\colon u \in V(G), e\in E(G), \text{ and } i \in [k]\}$. Clearly, $\mu$ is a matching.  Note that $\lvert \mu\rvert = 2\lvert V(G)\rvert +\lvert E(G)\rvert$.

\par
{\bf Parameter:} We set $k'=k+\nicefrac{k(k-1)}{2}$, $q=5k+\nicefrac{3k(k-1)}{2}$, and $t=k'$. 
\par
Clearly, this construction can be carried out in polynomial time. Next, we will prove some structural properties about our construction, namely that the graph $G'$ is bipartite (Claim~\ref{clm:nasm-bipartite-graph}) and $\mu$ is a stable matching (Claim~\ref{clm:nasm-stable-matching}). %, respectively.

 %V(G) \text{ and } \{u,v\}\subseteq V_i\cup V_j, \{i,j\}\subseteq [k]\}$. Now, we assign preference list to each vertex in $G'$ as shown in Table \ref{pref_list}. %In the preference list (Table \ref{pref_list}), the symbol $[S]$ denotes that the vertices in the set $S$ are listed in some arbitrarily strict order. 
% For each $u\in V(G)$, let $\X{E}_{u}=\{\basevertex{e}{}\colon e(=uv)\in E(G) \text{ for some } v\in V(G) \text{ and } \{u,v\}\subseteq V_i\cup V_j, \{i,j\}\subseteq [k]\}$.
% For each $u\in V_i$, $i\in[k]$, let $\X{E}_{u}=\{\basevertex{e}{}\colon e(=uv)\in E(G) \text{ for some } v\in V_j, j\in[k]\}\cup \{e_1^{ji}\colon e(=uv)\in E(G) \text{ for some } v\in V_j, j\in[k]\}$. %Let $S_{p_1^i}=\{u_1^i\colon u\in V_i\}$ and $S_{p_2^i}=\{u_4^i\colon u\in V_i\}$, where $i\in [k]$. %Let $u\in V_i$ and $v\in V_j$. For an edge $e=(uv)$, $S[e]=\{u_3^i,v_3^i\}$
%For each $\{i,j\}\subseteq [k]$, where $i<j$, $S_{q_1^{ij}}=\{\basevertex{e}{}\colon e(=uv)\in E(G), u\in V_i, \text{ and } v\in V_j\}$ and  $S_{q_2^{ij}}=\{\basevertex{\tilde{e}}{}\colon e(=uv)\in E(G), u\in V_i, \text{ and } v\in V_j\}$.  
%Note that $G'$ is a bipartite graph as it does not contain any odd cycle.

 %Now, we will show that $G'$ is a bipartite graph.

%\noindent
\begin{clm}\label{clm:nasm-bipartite-graph}
Graph $G'$ is bipartite. 
\end{clm}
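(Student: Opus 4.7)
The plan is to exhibit an explicit bipartition $(A, B)$ of $V(G')$ and then verify by inspection that every edge constructed has its two endpoints in different parts. I would put into $A$ the vertices $\basevertex{u}{1}$ and $\basevertex{u}{3}$ for each $u\in V(G)$, the vertex $\basevertex{\tilde{e}}{}$ for each $e\in E(G)$, the vertex $\specialvertex{q}{1}{ij}$ for each $\{i,j\}\sse[k]$ with $i<j$, and the vertex $\specialvertex{p}{2}{i}$ for each $i\in[k]$. I would put into $B$ the remaining vertices, namely $\basevertex{u}{2}$ and $\basevertex{u}{4}$ for $u\in V(G)$, $\basevertex{e}{}$ for $e\in E(G)$, together with $\specialvertex{q}{2}{ij}$ and $\specialvertex{p}{1}{i}$.

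The bulk of the argument is a short case analysis over the edge families listed in the construction. The path edges $\basevertex{u}{1}\basevertex{u}{2}$, $\basevertex{u}{2}\basevertex{u}{3}$, $\basevertex{u}{3}\basevertex{u}{4}$ alternate across $(A,B)$ by design. The pairing edge $\basevertex{e}{}\basevertex{\tilde{e}}{}$ joins $B$ to $A$. The ``port'' edges $\basevertex{u}{1}\specialvertex{p}{1}{i}$ and $\basevertex{u}{4}\specialvertex{p}{2}{i}$ join $A$ to $B$ and $B$ to $A$ respectively, and the edges $\specialvertex{q}{1}{ij}\basevertex{e}{}$ and $\specialvertex{q}{2}{ij}\basevertex{\tilde{e}}{}$ are similarly cross-edges.

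The place where the placement is genuinely constrained (and thus the only step that requires real thought) is the incidence edge $\basevertex{e}{}\basevertex{u}{3}$ for $e\in E_{ij}$ and $u$ an endpoint of $e$: this edge forces $\basevertex{e}{}$ and $\basevertex{u}{3}$ onto opposite sides, which is precisely why the $\basevertex{u}{3}$-vertices are put in $A$ while the $\basevertex{e}{}$-vertices are put in $B$. Consistency must be checked here because every $\basevertex{e}{}$ with $e=uv\in E_{ij}$ is simultaneously adjacent to $\basevertex{u}{3}$, $\basevertex{v}{3}$, $\basevertex{\tilde{e}}{}$, and $\specialvertex{q}{1}{ij}$; I would verify in one step that under the chosen assignment all four of these neighbors lie in $A$, so no edge has both endpoints in $B$.

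I expect no obstacle beyond this bookkeeping; the only nontrivial adjacency, the $\basevertex{e}{}\basevertex{u}{3}$ edge, is already accommodated by the above choice, so $(A,B)$ yields a valid 2-coloring and $G'$ is bipartite.
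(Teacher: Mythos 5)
Your proposal is correct and is essentially the paper's own proof: your bipartition $(A,B)$ coincides (up to swapping the names of the two sides) with the paper's assignment of $p_1^i,\basevertex{u}{2},\basevertex{u}{4},\basevertex{e}{},q_2^{ij}$ to one part and $p_2^i,\basevertex{u}{1},\basevertex{u}{3},\basevertex{\tilde{e}}{},q_1^{ij}$ to the other, and the verification is the same edge-family case check. The paper likewise singles out the adjacency of $\basevertex{e}{}$ to $\basevertex{u}{3}$ and $\basevertex{v}{3}$ as the constraint dictating the placement of $\basevertex{e}{}$, so no further comment is needed.
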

\begin{proof}
 We show that $G'$ is a bipartite graph by creating a bipartition for $G'$ as follows. %A bipartition for $G'$ can be created as follows. 
 For each $i \in [k]$, and each $u \in V_i$, we assign $p_1^i$, $\basevertex{u}{2}$ and $\basevertex{u}{4}$ to one part and $p_2^i$, $\basevertex{u}{1}$ and $\basevertex{u}{3}$ to another part. %, we add the vertices $v^i_1,v^i_2,v^i_3,v^i_4$ to partition two, one, two, one, respectively.
 For each $\{i,j\} \subseteq [k]$, $i<j$, since a vertex $\basevertex{e}{}\in V(G')$ (corresponding to the edge $e=uv, u \in V_i, v \in V_j$) is connected to $\basevertex{u}{3}$ and $\basevertex{v}{3}$, we assign $\basevertex{e}{}$ and $q_2^{ij}$ to the  part containing $p_1^i$, and assign $\basevertex{\tilde{e}}{}$ and $q_1^{ij}$ to the part containing $p_2^i$. Observe that each part is an independent set. Hence $G'$ is a bipartite graph. 
 \end{proof}
 
 \begin{clm}\label{clm:nasm-stable-matching}
 $\mu$ is a stable matching.
 \end{clm}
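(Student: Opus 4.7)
The plan is to verify directly that no edge of $E(G') \setminus \mu$ is a blocking edge with respect to $\mu$. The key structural fact, inherited from the design of the gadgets in the proof of Theorem~\ref{thm:asm}, is that the edges $\basevertex{u}{1}\basevertex{u}{2}$ (for every $u \in V(G)$) and $\basevertex{e}{}\basevertex{\tilde{e}}{}$ (for every $e \in E(G)$) are static edges---each endpoint ranks the other as its unique top choice---and all of these lie in $\mu$. Additionally, in the remaining matched pair $\basevertex{u}{3}\basevertex{u}{4}$, the vertex $\basevertex{u}{4}$ has $\basevertex{u}{3}$ at the top of its preference list (its only other neighbor being $p_2^i$).

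First I would list the vertices that are matched under $\mu$ to their most preferred neighbor: these are $\basevertex{u}{1}$, $\basevertex{u}{2}$, $\basevertex{u}{4}$ for every $u \in V(G)$, and $\basevertex{e}{}$, $\basevertex{\tilde{e}}{}$ for every $e \in E(G)$. No edge incident to such a vertex can be blocking, since that vertex will never prefer any alternative over its current partner.

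Next I would handle the only two remaining possibilities. The first is the vertex $\basevertex{u}{3}$: it is matched to $\basevertex{u}{4}$ but prefers $\basevertex{u}{2}$ and the neighbors $\basevertex{e}{}$ (for $e \in \X{E}_u$) above $\basevertex{u}{4}$. Each of those preferred neighbors is, by the previous step, already matched to its own top choice ($\basevertex{u}{1}$ and $\basevertex{\tilde{e}}{}$, respectively), hence none of them can participate in a blocking edge with $\basevertex{u}{3}$. The second is the set of unmatched vertices $p_1^i, p_2^i, q_1^{ij}, q_2^{ij}$. Their neighbors are exclusively vertices of type $\basevertex{u}{1}$, $\basevertex{u}{4}$, $\basevertex{e}{}$, or $\basevertex{\tilde{e}}{}$, all matched to their top choice, so no such neighbor would ever deviate in favor of an unmatched partner.

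Putting these observations together yields that every edge outside $\mu$ fails the blocking condition, so $\mu$ is stable. The argument is a straightforward case analysis over vertex types, and I expect no real obstacle beyond carefully reading off the preference lists (which follow the same template as in Table~\ref{pref_list2} of Theorem~\ref{thm:asm}, specialized to the simplified gadgets used here).
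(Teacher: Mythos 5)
Your proof is correct and follows essentially the same route as the paper's: identify the static edges $\basevertex{u}{1}\basevertex{u}{2}$ and $\basevertex{e}{}\basevertex{\tilde{e}}{}$, note that $\basevertex{u}{4}$ is matched to its top choice $\basevertex{u}{3}$, and observe that every vertex $\basevertex{u}{3}$ prefers over $\basevertex{u}{4}$ is itself matched to its first preference. Your explicit treatment of the unmatched vertices $p_1^i, p_2^i, q_1^{ij}, q_2^{ij}$ is implicit in the paper's argument but adds nothing new.
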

\begin{proof}
We begin by noting that for any vertex $u\in V(G)$, vertices $u_1$ and $u_2$ prefer each other over any other vertex in $G'$. Therefore, edge $u_1u_2$ is a static edge and must belong to every stable matching in $G'$. Similarly, for each $e \in E(G)$,  we note that $e\tilde{e}$ is a static edge in $G'$, and thus belongs to every stable matching in $G'$.  Since $\basevertex{u}{3}$ is the first preference of $\basevertex{u}{4}$, and the vertices which  $\basevertex{u}{3}$ prefers over $\basevertex{u}{4}$ ( i.e., $\basevertex{u}{2}$ and vertices in $\X{E}_{u}$) are matched to their first preferred vertices, it follows that there is no blocking edge with respect to $\mu$. Hence, $\mu$ is a stable matching in $G'$.
\end{proof}
\begin{table*}[t]
   \centering
 % \begin{tabularx}{\textwidth}{@{} lcl @{} lcl @{}} 
 For each $i \in [k]$ and each $u \in V_i$, we have the following preference lists:
 \par
 \smallskip
%\centering
	\begin{tabularx}{10em}{@{} X  X @{}}
    $\basevertex{u}{1} \colon$ & $\pref{\basevertex{u}{2},p_1^i}$ \\
    $\basevertex{u}{2} \colon$ & $\pref{\basevertex{u}{1},\basevertex{u}{3}}$ \\
    $\basevertex{u}{3} \colon$ & $\pref{\basevertex{u}{2},\ordering{\X{E}_{u}},\basevertex{u}{4}}$ \\
    $\basevertex{u}{4} \colon$ & $\pref{\basevertex{u}{3},p_2^i}$  \\
    \end{tabularx}
    \par
    \medskip
    For each edge $e\in E_{ij}$ with endpoints $u\in V_i$ and $v \in V_j$ , where $\{i,j\} \subseteq [k]$, $i< j$, we have the following preference lists:
    \par
    \smallskip
    \begin{tabularx}{10em}{@{} X  X @{}}
    $\basevertex{e}{}\colon$ & $\pref{\basevertex{\tilde{e}}{},\basevertex{u}{3},\basevertex{v}{3}, q_1^{ij}}$ \\
    $\basevertex{\tilde{e}}{}\colon$ & $\pref{\basevertex{e}{}, q_2^{ij}}$\\
     \end{tabularx}
     \par
     \medskip
    For each $i \in [k]$ and $\{i,j\}\subseteq [k]$, $i<j$, we have the following preference lists for the remaining vertices:
         \par
    \smallskip
    \begin{tabularx}{10em}{@{} X  X@{}}
    $p_1^i \colon$ & $\pref{\ordering{N(p_1^i)}}$%[S_{p_1^i}]$  & $i\in [k]$ 
    \\
    $p_2^i \colon$ & $\pref{\ordering{N(p_2^i)}}$ %& $i\in [k]$ 
    \\ 
    $q_1^{ij}\colon$ & $\pref{\ordering{N(q_1^{ij})}}$ %& $\{i,j\}\subseteq [k], i<j$
    \\
    $q_2^{ij}\colon$ & $\pref{\ordering{ N(q_2^{ij}) }}$ %& $\{i,j\}\subseteq [k], i<j$
    \\
    \end{tabularx}
   \caption{Preference lists in the constructed instance of \textsf{\textup{W[1]-hardness}} of \ams when parameterized by $k+q+t$. Here, for a set $S$, the symbol $\ordering{S}$ denotes that the vertices in this set are listed in some arbitrarily strict order and the notation $\pref{\cdot, \cdot}$ denotes the order of preference over neighbors. }
   \label{pref_list}
\end{table*}
\begin{figure*}
    \centering
\includegraphics[width=10cm,height=5cm,keepaspectratio]{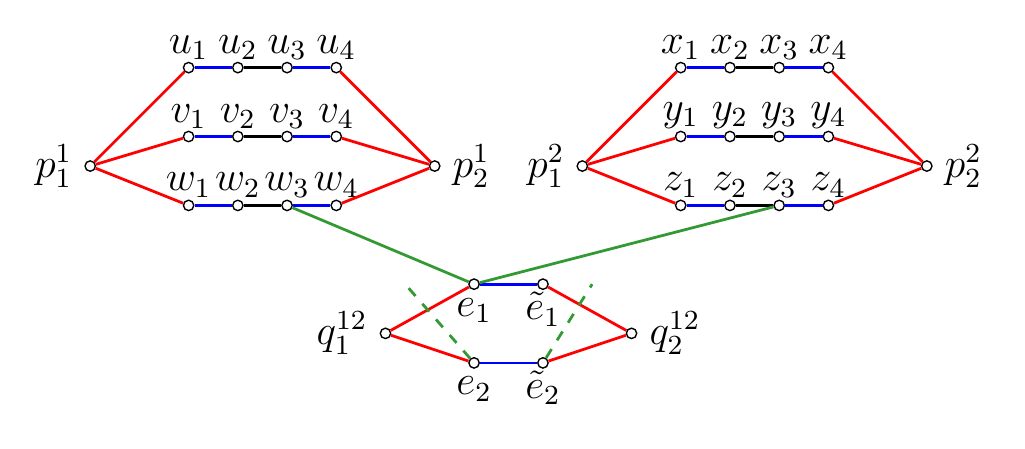}
\caption{An illustration of the construction of graph $G'$ in \W[1]-hardness of \ams. Here, {\color{blue} blue} colored edges belongs to the stable matching $\mu$. Note that $V_1=\{u,v,w\}$ and $V_2=\{x,y,z\}$, and $e_1$ and $e_2$ are edges in $E_{12}$.
}\label{fig:hardness}
\end{figure*}
\par
\noindent{\bf Correctness.} Next, we show the equivalence between the instance $\Co{I}$ of \mcqsmall and $\Co{J}$ of \ams. Formally, we prove the following:
\begin{sloppypar}
\begin{lemma}\label{lem:nasm-hardness-correcness}
$\Co{I}=(G,(V_1,\ldots,V_k))$ is a \yes-instance of \mcqsmall if and only if $\Co{J}=(G',\Co{L},\mu,k',q,t)$ is a \yes-instance of \ams.
\end{lemma}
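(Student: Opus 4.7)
The plan is to argue both directions of the equivalence by exploiting a tight counting of $\mu$-unmatched vertices and $\mu$-augmenting paths, followed by an edge-by-edge accounting of blocking pairs.

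\textbf{Forward direction.} Given a multicolored clique $X=\{u^1,\dots,u^k\}$ with $u^i\in V_i$, I construct $\eta$ from $\mu$ by swapping along two families of $\mu$-augmenting paths: for each $u^i\in X$, the length-$5$ path $(p_1^i,\basevertex{u^i}{1},\basevertex{u^i}{2},\basevertex{u^i}{3},\basevertex{u^i}{4},p_2^i)$; and for each clique edge $e\in E(G[X])\cap E_{ij}$, the length-$3$ path $(q_1^{ij},\basevertex{e}{},\basevertex{\tilde{e}}{},q_2^{ij})$. These paths are pairwise vertex-disjoint, so $\eta$ is a matching with $|\eta|=|\mu|+k+\nicefrac{k(k-1)}{2}=|\mu|+t$ and $|\mu\triangle\eta|=5k+\nicefrac{3k(k-1)}{2}=q$. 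A case analysis over the edges of $G'$ using the preference lists in Table~\ref{pref_list} shows that the only blocking edges with respect to $\eta$ are the $k$ static pairs $\basevertex{u^i}{1}\basevertex{u^i}{2}$ and the $\nicefrac{k(k-1)}{2}$ static pairs $\basevertex{e}{}\basevertex{\tilde{e}}{}$ for clique edges $e$; in particular, no $\basevertex{u}{3}\basevertex{e}{}$ edge blocks, because for $u\in X$ the vertex $\basevertex{u}{3}$ is matched to its top preference $\basevertex{u}{2}$, and for $u\notin X$ the vertex $\basevertex{e}{}$ is matched to its top preference $\basevertex{\tilde{e}}{}$.

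\textbf{Reverse direction, structural extraction.} Suppose $\eta$ is a solution. The $\mu$-unmatched vertices of $G'$ are exactly $\{p_1^i,p_2^i\}_{i\in[k]}\cup\{q_1^{ij},q_2^{ij}\}_{\{i,j\}\subseteq[k]}$, of total cardinality $2k+k(k-1)$. Since $|\eta|-|\mu|\geq t=k+\nicefrac{k(k-1)}{2}$, the graph $\mu\triangle\eta$ contains at least $k+\nicefrac{k(k-1)}{2}$ $\mu$-augmenting paths, each consuming two distinct $\mu$-unmatched endpoints; writing $a,b,c$ for the numbers of augmenting paths with respect to $\mu$, augmenting paths with respect to $\eta$, and mixed paths respectively, the inequalities $a-b\geq t$ and $2a+c\leq 2k+k(k-1)=2t$ force $a=t$ and $b=c=0$, so every $\mu$-unmatched vertex is the endpoint of exactly one $\mu$-augmenting path and (using the budget $q$) no alternating cycles survive in $\mu\triangle\eta$. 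Because each of $\basevertex{u}{2}$, $\basevertex{u}{4}$, and $\basevertex{\tilde{e}}{}$ has degree two in $G'$, any $\mu$-augmenting path starting at $p_1^i$ is forced to traverse a complete vertex gadget for some $u\in V_i$ and terminate at $p_2^i$, and any $\mu$-augmenting path starting at $q_1^{ij}$ is forced to terminate at $q_2^{ij}$. Since $p_1^i$ must be an endpoint of some augmenting path, its partner is necessarily $p_2^i$; this rules out any ``mixed'' path from $p_2^i$ to a $q$-vertex (which would strand $p_1^i$), hence each $q_1^{ij}$ is likewise paired with $q_2^{ij}$. This identifies a unique $u^i\in V_i$ for each $i\in[k]$ and a unique edge-vertex $\basevertex{e^{ij}}{}\in E_{ij}$ for each $\{i,j\}\subseteq[k]$, yielding $X=\{u^i:i\in[k]\}$ with $|X\cap V_i|=1$.

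\textbf{Reverse direction, clique closure.} The remaining task is to show that for every $\{i,j\}\subseteq[k]$ the endpoints of $e^{ij}$ in $G$ are precisely $u^i$ and $u^j$. By the same local analysis as in the forward direction, each $\basevertex{u^i}{1}\basevertex{u^i}{2}$ and each $\basevertex{e^{ij}}{}\basevertex{\tilde{e^{ij}}}{}$ is necessarily a blocking edge with respect to $\eta$, exhausting all $k'=k+\nicefrac{k(k-1)}{2}$ permitted blocking edges with zero slack. Suppose for contradiction that $e^{ij}=wv$ has an endpoint $w\in V_i$ with $w\neq u^i$. Then $w\notin X$, so the augmentation skips $w$'s gadget and $\basevertex{w}{3}\basevertex{w}{4}\in\eta$; reading $\basevertex{w}{3}$'s preference list $\pref{\basevertex{w}{2},[\X{E}_w],\basevertex{w}{4}}$ shows that $\basevertex{w}{3}$ prefers $\basevertex{e^{ij}}{}$ to its match $\basevertex{w}{4}$, and reading $\basevertex{e^{ij}}{}$'s preference list $\pref{\basevertex{\tilde{e^{ij}}}{},\basevertex{w}{3},\basevertex{v}{3},q_1^{ij}}$ shows that $\basevertex{e^{ij}}{}$ prefers $\basevertex{w}{3}$ to its match $q_1^{ij}$. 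Hence $\basevertex{w}{3}\basevertex{e^{ij}}{}$ would be an additional blocking edge, contradicting the budget $k'$. The symmetric argument rules out $v\neq u^j$; therefore both endpoints of $e^{ij}$ lie in $X$, i.e., $u^iu^j\in E(G)$ for every $i<j$, so $G[X]$ is a multicolored clique of size $k$, as required. The main obstacle lies in the last paragraph: having to rule out all other configurations of clique-candidate edges using only the tight blocking budget, which is what makes the $\basevertex{u}{3}\basevertex{e}{}$ preference design essential.
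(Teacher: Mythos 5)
Your proof is correct, and your forward direction coincides with the paper's (same augmenting-path construction, same count of $k'$ blocking edges, with the decisive observation about why no $\basevertex{u}{3}\basevertex{e}{}$ edge can block). The reverse direction, however, reaches the same structural conclusions by a genuinely different route. The paper first deduces from the vertex count $|V(G')|=4|V(G)|+2|E(G)|+2k+k(k-1)$ and the bound $|\eta|\ge|\mu|+t$ that $\eta$ is a \emph{perfect} matching, and then argues locally: saturation of $p_1^i$ and $q_1^{ij}$ produces the $u$-type and $e$-type static blocking edges, the budget $k'$ gives uniqueness, a separate claim forces $q_1^{ij}\basevertex{e}{}\in\eta$, and the consistency claim (if $\basevertex{e}{}\basevertex{\tilde{e}}{}$ blocks then so do $\basevertex{u}{1}\basevertex{u}{2}$ and $\basevertex{v}{1}\basevertex{v}{2}$) is proved by eliminating the possible $\eta$-partners of $\basevertex{u}{3}$ one by one. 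You instead count components of $\mu\triangle\eta$: since there are exactly $2t$ $\mu$-unmatched vertices, the inequalities $a-b\ge t$ and $2a+c\le 2t$ force exactly $t$ $\mu$-augmenting paths and nothing else, the degree-two bottlenecks at $\basevertex{u}{2}$, $\basevertex{u}{4}$, $\basevertex{\tilde{e}}{}$ pin each path to a single gadget, and the budget $q$ is consumed exactly, killing all residual components; your clique-closure step is then the contrapositive of the paper's consistency claim and amounts to the same computation (the edge $\basevertex{w}{3}\basevertex{e}{}$ would block). The trade-off is instructive: the paper's reverse direction never actually invokes $|\mu\triangle\eta|\le q$, so its argument shows the equivalence is robust to that constraint; your argument leans on $q$ but avoids the global perfect-matching calculation and yields a sharper description of $\mu\triangle\eta$ (exactly $t$ paths of prescribed shapes), from which $\eta(\basevertex{w}{3})=\basevertex{w}{4}$ for $w\notin X$ falls out immediately rather than by case analysis.
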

\end{sloppypar}

\begin{proof}
In the forward direction, let $X$ be a solution of \mcqsmall for $\Co{I}$, i.e., for each $i\in [k]$, $|X\cap V_i|=1$, and $G[X]$ is a clique.
%$X\subseteq V(G)$ such that $G[X]$ is a clique and $|X\cap V_i|=1$, where $i\in[k]$. 
We construct a solution $\eta$ to $\Co{J}$ as follows. Initially, we set $\eta = \mu$. For each $i \in [k]$, if $u\in X\cap V_i$, we delete edges $\basevertex{u}{1}\basevertex{u}{2}$ and $\basevertex{u}{3}\basevertex{u}{4}$ from $\eta$, and add edges $\basevertex{u}{1}p_1^i,\basevertex{u}{2}\basevertex{u}{3}$, and $\basevertex{u}{4}p_2^i$ to $\eta$. Also, for each $\{i,j\} \sse [k]$, $i<j$, if $e\in E(G[X]) \cap E_{ij}$, then we remove the edge $\basevertex{e}{}\basevertex{\tilde{e}}{}$ from $\eta$ and add edges $\basevertex{e}{}q_1^{ij}$ and $\basevertex{\tilde{e}}{}q_2^{ij}$ to $\eta$. %Note that $\eta$ is a perfect matching in $G$ \todo{check if it's required}. 
\begin{clm}\label{clm:nasm-eta-matching}
$\eta$ is a matching
\end{clm}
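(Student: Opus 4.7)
The plan is to verify that no vertex of $G'$ is incident to more than one edge of $\eta$, by a short case analysis based on the type of vertex. Since $\mu$ itself is a matching (it consists of pairwise vertex-disjoint static edges $\basevertex{u}{1}\basevertex{u}{2}$, $\basevertex{u}{3}\basevertex{u}{4}$ and $\basevertex{e}{}\basevertex{\tilde{e}}{}$), it suffices to check that the local modifications made when processing each clique vertex $u\in X$ and each clique edge $e\in E(G[X])$ do not cause any conflict.

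First I would handle the base-path vertices. For $u\notin X$, the edges $\basevertex{u}{1}\basevertex{u}{2}$ and $\basevertex{u}{3}\basevertex{u}{4}$ are never removed, and no new edge is added at any $\basevertex{u}{i}$, so these four vertices are each saturated exactly once. For $u\in X\cap V_i$, we remove the two edges of $\mu$ incident to $\basevertex{u}{1},\basevertex{u}{2},\basevertex{u}{3},\basevertex{u}{4}$, and add exactly the three edges $\basevertex{u}{1}p_1^i$, $\basevertex{u}{2}\basevertex{u}{3}$, $\basevertex{u}{4}p_2^i$; these three edges cover each of $\basevertex{u}{1},\basevertex{u}{2},\basevertex{u}{3},\basevertex{u}{4}$ exactly once. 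Crucially, although $\basevertex{u}{3}$ has neighbours $\basevertex{e}{}$ for $e\in \X{E}_{u}$, none of those edges is added by the construction of $\eta$, so $\basevertex{u}{3}$ is not overloaded.

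Next I would handle the edge gadget vertices. For $e\notin E(G[X])$, the edge $\basevertex{e}{}\basevertex{\tilde{e}}{}$ remains in $\eta$ and nothing else is added at $\basevertex{e}{}$ or $\basevertex{\tilde{e}}{}$. For $e\in E(G[X])\cap E_{ij}$, the edge $\basevertex{e}{}\basevertex{\tilde{e}}{}$ is removed and replaced by $\basevertex{e}{}q_1^{ij}$ and $\basevertex{\tilde{e}}{}q_2^{ij}$, saturating $\basevertex{e}{}$ and $\basevertex{\tilde{e}}{}$ exactly once each.

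Finally I would handle the special vertices $p_1^i,p_2^i,q_1^{ij},q_2^{ij}$, which are unmatched in $\mu$. The main point here, and arguably the only non-trivial one in the whole claim, is to use the properties of the clique $X$: since $|X\cap V_i|=1$, exactly one edge of the form $\basevertex{u}{1}p_1^i$ (resp.\ $\basevertex{u}{4}p_2^i$) is added, so $p_1^i$ and $p_2^i$ are each saturated exactly once. Similarly, since $G[X]$ is a clique with $|X\cap V_i|=|X\cap V_j|=1$, there is a unique edge $e\in E(G[X])\cap E_{ij}$, and so exactly one edge of the form $\basevertex{e}{}q_1^{ij}$ (resp.\ $\basevertex{\tilde{e}}{}q_2^{ij}$) is added, saturating $q_1^{ij}$ and $q_2^{ij}$ exactly once. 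This exhausts all vertex types of $G'$, completing the verification that $\eta$ is a matching.
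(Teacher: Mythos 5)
Your proposal is correct and follows essentially the same route as the paper: both arguments verify vertex by vertex that each endpoint is covered at most once, relying on $|X\cap V_i|=1$ and the uniqueness of the clique edge in each $E_{ij}$ to see that $p_1^i,p_2^i,q_1^{ij},q_2^{ij}$ receive exactly one incident edge. Your version is merely a more exhaustive write-up of the same case analysis.
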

\begin{proof}
%We claim that $\eta$ is a matching in $G'$. 
For each $i\in [k]$ and $u \in X\cap V_i$, %and $j \in [4]$, there is only one matching edge incident to $\basevertex{u}{j}$, namely, 
edges $\basevertex{u}{1}p_1^i$, $\basevertex{u}{2}\basevertex{u}{3}$, and $\basevertex{u}{4}p_2^i$ are in $\eta$, and no other edge incident to $\basevertex{u}{1},\basevertex{u}{2},\basevertex{u}{3}$ or $\basevertex{u}{4}$ is in $\eta$. Since for each $i \in [k]$, $|X \cap V_i| = 1$, there is only one edge incident to each $p_1^i$ and $p_2^i$. Similarly, for each $e \in E(G[X])$, there is only one matching edge incident to $\basevertex{e}{}$ and $\basevertex{\tilde{e}}{}$, namely $\basevertex{e}{}q_1^{ij}$ and $\basevertex{\tilde{e}}{}q_2^{ij}$. Since, the remaning edges of $\eta$ are the same as in $\mu$,
%  for each $\{i,j\} \sse [k]$, $|E(G[X]) \cap E_{ij}| =1$, there is only one edge incident to each $q_1^{ij}$ and $q_2^{ij}$. For each vertex $v \in V(G) \sse X$, for each $j \in [4]$,  the edge incident to the vertex $v_j$ is the same as in $\mu$. Similarly, for each edge $e \in E(G) \sse E(G[X])$, the edge $\basevertex{e}{}\basevertex{\tilde{e}}{}$ is in both $\mu$ and $\eta$.
 this implies $\eta$ is a matching.
\end{proof}

\begin{clm}\label{clm:nasm-eta-size}
$|\eta| = |\mu|+t$ and $|\mu \triangle \eta| = q$
\end{clm}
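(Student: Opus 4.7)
The plan is a direct counting argument that tallies, separately for each of the two types of local modifications used to construct $\eta$ from $\mu$, the net change in matching size and the contribution to the symmetric difference; then sum across all modifications.

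First, I would observe that the construction of $\eta$ touches $\mu$ in exactly two independent ways. For each $i\in[k]$, the unique vertex $u\in X\cap V_i$ contributes a ``vertex modification'': the two edges $\basevertex{u}{1}\basevertex{u}{2}$ and $\basevertex{u}{3}\basevertex{u}{4}$ are removed from $\mu$, and the three edges $\basevertex{u}{1}p_1^i$, $\basevertex{u}{2}\basevertex{u}{3}$, $\basevertex{u}{4}p_2^i$ are inserted. For each of the $\binom{k}{2}$ edges $e\in E(G[X])$, the corresponding ``edge modification'' removes the single edge $\basevertex{e}{}\basevertex{\tilde{e}}{}$ and inserts the two edges $\basevertex{e}{}q_1^{ij}$ and $\basevertex{\tilde{e}}{}q_2^{ij}$. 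By construction, these modifications involve pairwise disjoint sets of edges (the inserted edges for distinct clique vertices touch distinct $\basevertex{u}{\cdot}$ paths and distinct $p$-vertices, and the inserted edges for distinct clique edges touch distinct $\basevertex{e}{}$-pairs), so all counts add up without double-counting.

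For the size, each vertex modification contributes a net $+1$ to $|\eta|-|\mu|$ and each edge modification also contributes a net $+1$. Since there are $k$ vertex modifications and $\binom{k}{2}$ edge modifications, we obtain
\[
|\eta|-|\mu| \;=\; k\cdot 1 \;+\; \frac{k(k-1)}{2}\cdot 1 \;=\; k+\frac{k(k-1)}{2} \;=\; t,
\]
as required. For the symmetric difference, each vertex modification contributes $2+3=5$ edges to $\mu\triangle\eta$ (the two deleted plus the three inserted, all distinct), and each edge modification contributes $1+2=3$ edges. Summing over all modifications gives
\[
|\mu\triangle\eta| \;=\; 5k \;+\; 3\cdot\frac{k(k-1)}{2} \;=\; 5k+\frac{3k(k-1)}{2} \;=\; q.
\]

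The only nontrivial point to verify is the disjointness claim that justifies simple addition across modifications; this follows immediately from the fact that the gadgets attached to different $u\in V(G)$ and different $e\in E(G)$ share no edges in $G'$, together with the observation that for fixed $i$ (resp.\ fixed $\{i,j\}$) only one vertex (resp.\ one edge) from $V_i$ (resp.\ $E_{ij}$) is selected by a multicolored clique, so the $p$-vertices and $q$-vertices each receive at most one new matching edge. With that in hand, both equalities reduce to the elementary arithmetic displayed above, and no further case analysis is needed.
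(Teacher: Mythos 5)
Your proposal is correct and follows essentially the same counting argument as the paper: each clique vertex's modification deletes two edges and adds three (net $+1$ to size, $5$ to the symmetric difference), and each clique edge's modification deletes one and adds two (net $+1$ to size, $3$ to the symmetric difference), summing to $t$ and $q$ respectively. The explicit disjointness justification you add is a minor elaboration that the paper leaves implicit but changes nothing substantive.
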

\begin{proof}
Note that for each $u\in X$, we delete two edges from $\eta$ (which also belongs to $\mu$), and add three edges to $\eta$. Similarly, for an edge $e\in E(G[X])$, we delete one edge from $\eta$ which is also in $\mu$, and add two edges to $\eta$. Hence, $|\eta|=|\mu|+k+\nicefrac{k(k-1)}{2}=|\mu|+t$, and $|\mu \triangle \eta|= 5k+\nicefrac{3k(k-1)}{2}=q$. 
\end{proof}
%Let us denote edge set of $G[S]$ by $\cal{E}$.  
%Let $P=\{\basevertex{u}{1}\basevertex{u}{2}\in E(G')\colon u\in X\cap V_i, i\in [k]\}$, $Q=\{\basevertex{u}{3}\basevertex{u}{4}\in E(G')\colon u\in X\cap V_i, i\in [k]\}$, $R=\{\basevertex{u}{2}\basevertex{u}{3},\basevertex{u}{1}p_1^i,\basevertex{u}{4}p_2^i\in E(G')\colon u\in X\cap V_i, i\in [k]\}$, $S=\{\basevertex{e}{}\basevertex{\tilde{e}}{}\in E(G')\colon e\in E(G[X])\}$, $T=\{\basevertex{e}{}q_1^{ij},\basevertex{\tilde{e}}{}q_2^{ij}\in E(G')\colon e\in E(G[X])\}$. Since $|X|=k$, and $G[X]$ is a clique, clearly $|P|=k$, and $|S|=k(k-1)/2$. %Using Claim \ref{claim:sm size}, $G'$ has a unique stable matching $\mu=\{\basevertex{u}{1}\basevertex{u}{2}, \basevertex{u}{3}\basevertex{u}{4}, \basevertex{e}{}\basevertex{\tilde{e}}{}\colon \{i,j\}\subseteq [k]\}$ of size $2n+m$. 
%We claim that $\eta=(\mu\setminus (P\cup Q\cup S))\cup R \cup T$ is a maximal matching of size $2n+m+k'$ in $G'$ which contains $k'$ blocking pairs. %$G''=G'-(P\cup S)$ has a stable matching $\eta=(\mu\setminus (P\cup Q\cup S))\cup R \cup T$ of size $n+2m+k'$%, where $k'=k+k(k-1)/2$.
% Note that for each $u\in X$, we delete two edges from $\mu$ and add three edges to $\eta$. Similarly, for an edge $e\in E(G[X])$, we delete one edge from $\mu$ and add two edges to construct $\eta$. Since $|\mu|=2n+m$, $|\eta|=2n+m+k+k(k-1)/2$. 
Next, we prove that $\eta$ has $k'=k+\nicefrac{k(k-1)}{2}$ blocking edges. %Towards this we first prove the following property of the blocking edges. 
%\begin{clm}\label{type of bp}\label{cl:bp vertices}
%If $uv\in E(G')$ is a blocking edge with respect to $\eta$, then either $u$, or $v$ belongs to $V(\mu \triangle \eta)$.
% \end{clm}
% \begin{proof}
% Suppose $\{u,v\}\cap V(\mu \triangle \eta)=\emptyset$. Then $\eta(u)=\mu(u)$, and $\eta(v)=\mu(v)$. Since $uv$ is a blocking edge with respect to $\eta$, $v \succ_u \eta(u)$, and $u \succ_v \eta(v)$. Therefore, $v \succ_u \mu(u)$, and $u \succ_v \mu(v)$. Hence, $uv$ is also a blocking edge with respect to $\mu$, a contradiction to that $\mu$ is a stable matching in $G'$.
% \end{proof}
%Using Claim~\ref{type of bp}, at least one of the end point of blocking edge for $\eta$,  belongs to $V(\mu \triangle \eta)$.
 Due to Proposition ~\ref{cl:bp vertices}, to count the blocking edges with respect to $\eta$, we only investigate the vertices of $V(\mu \triangle \eta)$.  Note that 
 \begin{equation*}
 \begin{aligned}
 V(\mu \triangle \eta) = {} & \{\basevertex{u}{j},p_\ell^i \in V(G')\colon u\in X\cap V_i, j\in[4], \ell\in[2],i\in[k]\} \\
  & \bigcup \quad \{\basevertex{e}{}, \basevertex{\tilde{e}}{},q_\ell^{ij} \in V(G')\colon e\in E(G[X]), \{i,j\}\subseteq [k], i<j,\ell \in[2]\}  \end{aligned}
\end{equation*}
 \begin{clm}\label{clm:nasm-no bp on u3,u4}
 Let $u\in X$. There is no blocking edge incident to $\basevertex{u}{3}$ or $\basevertex{u}{4}$ with respect to $\eta$.
 \end{clm}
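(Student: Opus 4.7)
The plan is to unpack the structure of $\eta$ in the local neighborhood of any $u \in X \cap V_i$ and then check, for each of the two target vertices $\basevertex{u}{3}$ and $\basevertex{u}{4}$, that no edge incident to it can block $\eta$. By the construction of $\eta$ from $\mu$ in the forward direction, for $u \in X \cap V_i$ the matching $\eta$ contains precisely the edges $\basevertex{u}{1}p_1^i$, $\basevertex{u}{2}\basevertex{u}{3}$ and $\basevertex{u}{4}p_2^i$ among those incident to $\{\basevertex{u}{1},\basevertex{u}{2},\basevertex{u}{3},\basevertex{u}{4}\}$. So $\eta(\basevertex{u}{3})=\basevertex{u}{2}$ and $\eta(\basevertex{u}{4})=p_2^i$; this is the only ingredient of $\eta$ I need for the argument.

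The first step, handling $\basevertex{u}{3}$, is immediate from the preference list $N(\basevertex{u}{3})=\pref{\basevertex{u}{2},\ordering{\X{E}_u},\basevertex{u}{4}}$ given in Table~\ref{pref_list}: $\basevertex{u}{3}$ is matched under $\eta$ to its most preferred neighbor $\basevertex{u}{2}$, so $\basevertex{u}{3}$ cannot be the ``prefers-each-other'' endpoint of any blocking edge, and hence no edge incident to $\basevertex{u}{3}$ is blocking with respect to $\eta$.

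The second step, handling $\basevertex{u}{4}$, uses the list $N(\basevertex{u}{4})=\pref{\basevertex{u}{3},p_2^i}$: since $\eta(\basevertex{u}{4})=p_2^i$, the only candidate for a blocking edge incident to $\basevertex{u}{4}$ is $\basevertex{u}{3}\basevertex{u}{4}$. To rule that out, observe (by the first step) that $\basevertex{u}{3}$ is matched to its top preference $\basevertex{u}{2}$, so $\basevertex{u}{3}$ does not strictly prefer $\basevertex{u}{4}$ over $\eta(\basevertex{u}{3})$. Therefore $\basevertex{u}{3}\basevertex{u}{4}$ fails the blocking condition at $\basevertex{u}{3}$, completing the proof.

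There is no real obstacle here; the claim is essentially a bookkeeping check on the two preference lists and on the three $\eta$-edges introduced at $u$. The only thing worth being careful about is to state the argument in a way that is consistent with Proposition~\ref{cl:bp vertices}: because $\basevertex{u}{3}$ and $\basevertex{u}{4}$ both have matching partners in $\eta$ that differ from those in $\mu$, they do lie in $V(\mu \triangle \eta)$ and must in principle be inspected, which is exactly what these two steps accomplish.
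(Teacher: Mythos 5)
Your proof is correct and follows essentially the same route as the paper: $\basevertex{u}{3}$ is matched to its top choice $\basevertex{u}{2}$, and the only possible blocking edge at $\basevertex{u}{4}$ would be $\basevertex{u}{3}\basevertex{u}{4}$, which fails at $\basevertex{u}{3}$. Your version merely spells out the intermediate steps slightly more explicitly than the paper does.
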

 \begin{proof}
 Since $\basevertex{u}{2}\basevertex{u}{3} \in \eta$, and $\basevertex{u}{3}$ prefers $\basevertex{u}{2}$ over any other vertex, there is no blocking edge incident to $\basevertex{u}{3}$. %Hence, $\basevertex{u}{3}\basevertex{u}{4}$ is not a blocking edge for $\eta$. 
 Let $u \in V_i$, for some $i\in [k]$.
 Recall that the preference list of $\basevertex{u}{4}$ is $\pref{\basevertex{u}{3},p_2^i}$. Since there is no blocking edge incident to $u_3$ and $\basevertex{u}{4}p_2^i \in \eta$, it follows that there is no blocking edge incident to $\basevertex{u}{4}$.
 \end{proof}  
 \begin{clm}\label{clm:nasm-u1u2 bp}
 Let $u\in X$. Then, $\basevertex{u}{1}\basevertex{u}{2}$ is a blocking edge with respect to $\eta$. Moreover, there is no other blocking edge incident to $\basevertex{u}{1}$ or $\basevertex{u}{2}$.
 \end{clm}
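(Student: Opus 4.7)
The plan is to exploit the very small degrees of $\basevertex{u}{1}$ and $\basevertex{u}{2}$ in $G'$ together with the explicit form of $\eta$ around a clique vertex. First, I would recall from the construction that $N_{G'}(\basevertex{u}{1}) = \{\basevertex{u}{2}, p_1^i\}$ and $N_{G'}(\basevertex{u}{2}) = \{\basevertex{u}{1}, \basevertex{u}{3}\}$, where $u\in V_i$. So each of these vertices has only two candidate edges incident to it, and it suffices to decide for each of them whether it is blocking.

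Next, I would use the definition of $\eta$ for $u\in X$: the edges $\basevertex{u}{1}\basevertex{u}{2}$ and $\basevertex{u}{3}\basevertex{u}{4}$ of $\mu$ were removed, and the edges $\basevertex{u}{1}p_1^i$, $\basevertex{u}{2}\basevertex{u}{3}$, $\basevertex{u}{4}p_2^i$ were added. Hence $\eta(\basevertex{u}{1}) = p_1^i$ and $\eta(\basevertex{u}{2}) = \basevertex{u}{3}$. Consulting the preference lists in Table~\ref{pref_list}, $\basevertex{u}{1}$ ranks $\basevertex{u}{2}$ above $p_1^i$, and $\basevertex{u}{2}$ ranks $\basevertex{u}{1}$ above $\basevertex{u}{3}$. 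Therefore both endpoints of $\basevertex{u}{1}\basevertex{u}{2}$ strictly prefer each other to their $\eta$-partners, so $\basevertex{u}{1}\basevertex{u}{2}$ is a blocking edge with respect to $\eta$.

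For the uniqueness part, I would simply enumerate the remaining edges incident to $\basevertex{u}{1}$ and $\basevertex{u}{2}$. The only non-$\mu$-static edge touching $\basevertex{u}{1}$ is $\basevertex{u}{1}p_1^i$, which lies in $\eta$ and hence cannot block. Likewise, the only other edge touching $\basevertex{u}{2}$ is $\basevertex{u}{2}\basevertex{u}{3} \in \eta$. Since an edge already in a matching is never blocking, neither of these can be a blocking edge, so $\basevertex{u}{1}\basevertex{u}{2}$ is the unique blocking edge incident to $\basevertex{u}{1}$ or $\basevertex{u}{2}$.

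I expect no genuine obstacle here: the claim is essentially a two-line degree-checking argument. The only thing to be careful about is correctly reading off the two-element preference lists of $\basevertex{u}{1}$ and $\basevertex{u}{2}$ to verify the strict preference on both sides of the edge $\basevertex{u}{1}\basevertex{u}{2}$, which is immediate from Table~\ref{pref_list}.
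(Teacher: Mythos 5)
Your proof is correct and matches the paper's own argument essentially verbatim: both establish that $\basevertex{u}{1}$ and $\basevertex{u}{2}$ are each other's top preference while $\basevertex{u}{1}\basevertex{u}{2}\notin\eta$, and both dispose of the uniqueness part by observing that the only other incident edges, $\basevertex{u}{1}p_1^i$ and $\basevertex{u}{2}\basevertex{u}{3}$, lie in $\eta$ and hence cannot block.
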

 \begin{proof}
 Since $\basevertex{u}{1}$ and $\basevertex{u}{2}$ prefer each other over any other vertex and $\basevertex{u}{1}\basevertex{u}{2}\notin \eta$, it is a blocking edge with respect to $\eta$.
 Let $u \in V_i$, for some $i\in [k]$. Since the preference list of $\basevertex{u}{1}$ is $\pref{\basevertex{u}{2},p_1^i}$ and $\basevertex{u}{1}p_1^i \in \eta$, there is no other blocking edge incident to $\basevertex{u}{1}$. Similarly, since the preference list of $\basevertex{u}{2}$ is $\pref{\basevertex{u}{1},\basevertex{u}{3}}$ and $\basevertex{u}{2}\basevertex{u}{3} \in \eta$, it follows that there is no other blocking edge incident to  $\basevertex{u}{2}$. 
 \end{proof}
Using Claims~\ref{clm:nasm-no bp on u3,u4} and~\ref{clm:nasm-u1u2 bp}, for each $i\in [k]$ and $u\in X\cap V_i$, we introduce exactly one blocking edge with respect to $\eta$ by deleting $\basevertex{u}{1}\basevertex{u}{2}$ and $\basevertex{u}{3}\basevertex{u}{4}$ from $\eta$, and adding edges $\basevertex{u}{1}p_1^i,\basevertex{u}{2}\basevertex{u}{3}$, and $\basevertex{u}{4}p_2^i$ to it. Since $|X| = k$, in total we introduce $k$ blocking edges with respect to $\eta$ due to the said alternation.

 \begin{clm}
 For each $i\in [k]$, there is no blocking edge incident to $p_1^i$ or $p_2^i$ with respect to $\eta$. %, where $i\in [k]$.
 \end{clm}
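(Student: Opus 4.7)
The plan is to show that every neighbor of $p_1^i$ (resp.\ $p_2^i$) other than its own $\eta$-partner is already matched to a vertex it strictly prefers to $p_1^i$ (resp.\ $p_2^i$), so by the definition of a blocking edge none of those neighbors can participate in such an edge with $p_1^i$ (resp.\ $p_2^i$).

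Concretely, fix $i\in[k]$ and let $u$ be the unique element of $X\cap V_i$. By construction of $\eta$, the only edges incident to $p_1^i$ and $p_2^i$ that have been altered from $\mu$ are $\basevertex{u}{1}p_1^i$ and $\basevertex{u}{4}p_2^i$, both of which lie in $\eta$. Now recall $N_{G'}(p_1^i)=\{\basevertex{v}{1}:v\in V_i\}$ and $N_{G'}(p_2^i)=\{\basevertex{v}{4}:v\in V_i\}$. It therefore suffices to check, for every $v\in V_i\setminus\{u\}$, that $\basevertex{v}{1}$ prefers $\eta(\basevertex{v}{1})$ to $p_1^i$, and $\basevertex{v}{4}$ prefers $\eta(\basevertex{v}{4})$ to $p_2^i$.

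For such a $v$, since $v\notin X$, none of the deletion/addition steps in the construction of $\eta$ touched the edges $\basevertex{v}{1}\basevertex{v}{2}$ or $\basevertex{v}{3}\basevertex{v}{4}$; hence both of these edges still belong to $\eta$ (they are inherited from $\mu$). Consulting Table~\ref{pref_list}, the preference list of $\basevertex{v}{1}$ is $\langle \basevertex{v}{2},p_1^i\rangle$, so $\basevertex{v}{1}$ prefers its $\eta$-partner $\basevertex{v}{2}$ to $p_1^i$. Symmetrically, the preference list of $\basevertex{v}{4}$ is $\langle \basevertex{v}{3},p_2^i\rangle$, so $\basevertex{v}{4}$ prefers its $\eta$-partner $\basevertex{v}{3}$ to $p_2^i$. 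Thus no edge between $p_1^i$ (resp.\ $p_2^i$) and such a $\basevertex{v}{1}$ (resp.\ $\basevertex{v}{4}$) is blocking, and the matched edges $\basevertex{u}{1}p_1^i$, $\basevertex{u}{4}p_2^i$ themselves are by definition not blocking.

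The argument is essentially bookkeeping, so I do not anticipate a real obstacle; the only thing one must be careful about is to confirm that for $v\in V_i\setminus\{u\}$ the edges $\basevertex{v}{1}\basevertex{v}{2}$ and $\basevertex{v}{3}\basevertex{v}{4}$ indeed survive in $\eta$, which is immediate from the fact that the only local modifications made when building $\eta$ from $\mu$ are keyed on members of $X$ and on edges of $E(G[X])$.
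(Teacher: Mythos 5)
Your proof is correct and follows essentially the same route as the paper's: for each $v\in V_i\setminus\{u\}$ you observe that $\basevertex{v}{1}$ and $\basevertex{v}{4}$ remain matched to their top choices $\basevertex{v}{2}$ and $\basevertex{v}{3}$ in $\eta$, so no neighbor of $p_1^i$ or $p_2^i$ can form a blocking edge. The paper writes out only the $p_2^i$ case and appeals to symmetry, whereas you spell out both; the content is identical.
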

 \begin{proof}
 Let $u\in X\cap V_i$. Then, by the construction of $\eta$, $\basevertex{u}{4}p_2^i \in  \eta$. Let $v \in V_i \setminus \{u\}$. Since $|X\cap V_i| = 1$, $v \notin X$. Hence, $\basevertex{v}{4}\basevertex{v}{3} \in \eta$. Since $\basevertex{v}{4}$ prefers $\basevertex{v}{3}$ over $p_2^i$, $\basevertex{v}{4}p_2^i$ is not a blocking edge. Hence, there is no blocking edge incident to $p_2^i$ as $N(p_2^i)=\{\basevertex{w}{4}\colon w\in V_i\}$. Similarly, there is no blocking edge incident to $p_1^i$. 
 \end{proof}
% \begin{clm}
% Let $u\in X\cap V_i$, where $i\in [k]$. Then $\basevertex{u}{1}\basevertex{u}{2}$ is a blocking edge with respect to $\eta$. Moreover, there is no other blocking edge incident to $\basevertex{u}{1}$ or $\basevertex{u}{2}$.
% \end{clm}
% \begin{proof}
% Since $\basevertex{u}{1}$ and $\basevertex{u}{2}$ prefer each other over any other vertex, and $\basevertex{u}{1}\basevertex{u}{2}\notin \eta$, it is a blocking edge with respect to $\eta$. Since $N(\basevertex{u}{1})=\{p_1^i,\basevertex{u}{2}\}$, and $\basevertex{u}{1}p_1^i \in \eta$, there is no other blocking edge incident to $\basevertex{u}{1}$. Similarly, there is no other blocking edge incident to  $\basevertex{u}{2}$. 
% \end{proof}
% Hence, for each $u\in X$, we introduce one blocking edge with respect to $\eta$. 
 \begin{clm}
Let $e\in E(G[X])$. Then, $\basevertex{e}{}\basevertex{\tilde{e}}{}$ is a blocking edge with respect to $\eta$. Moreover, there is no other blocking edge incident to $\basevertex{e}{}$ or $\basevertex{\tilde{e}}{}$.
\end{clm}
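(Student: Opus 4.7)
The plan is to analyze each neighbor of $\basevertex{e}{}$ and $\basevertex{\tilde{e}}{}$ in the preference lists given in Table~\ref{pref_list}, and use the definition of $\eta$ together with the fact that the endpoints $u,v$ of the clique edge $e=uv$ (with $u\in V_i$, $v\in V_j$, $i<j$) lie in the solution~$X$.

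First I would establish that $\basevertex{e}{}\basevertex{\tilde{e}}{}$ is indeed a blocking edge: by Table~\ref{pref_list}, $\basevertex{\tilde{e}}{}$ is the top preference of $\basevertex{e}{}$ and $\basevertex{e}{}$ is the top preference of $\basevertex{\tilde{e}}{}$, and by the construction of $\eta$ (for $e \in E(G[X])$ we removed $\basevertex{e}{}\basevertex{\tilde{e}}{}$ and inserted $\basevertex{e}{}q_1^{ij}$ and $\basevertex{\tilde{e}}{}q_2^{ij}$ instead), the edge $\basevertex{e}{}\basevertex{\tilde{e}}{}$ is not in $\eta$, so it blocks.

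Next I would rule out every other candidate blocking edge incident to $\basevertex{e}{}$. The preference list of $\basevertex{e}{}$ is $\pref{\basevertex{\tilde{e}}{},\basevertex{u}{3},\basevertex{v}{3},q_1^{ij}}$, and in $\eta$ the vertex $\basevertex{e}{}$ is matched to $q_1^{ij}$, its least preferred neighbor; so aside from $\basevertex{\tilde{e}}{}$, the only edges that $\basevertex{e}{}$ could block are $\basevertex{e}{}\basevertex{u}{3}$ and $\basevertex{e}{}\basevertex{v}{3}$. Since $u\in X$, the construction of $\eta$ inserted $\basevertex{u}{2}\basevertex{u}{3}$, so $\basevertex{u}{3}$ is matched to its top preference $\basevertex{u}{2}$ and cannot be part of a blocking edge; the symmetric argument for $v\in X$ handles $\basevertex{e}{}\basevertex{v}{3}$.

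Finally, for $\basevertex{\tilde{e}}{}$, its preference list is just $\pref{\basevertex{e}{},q_2^{ij}}$, and by construction $\basevertex{\tilde{e}}{}q_2^{ij}\in\eta$; hence the only neighbor it prefers to its $\eta$-partner is $\basevertex{e}{}$, giving the already-identified edge $\basevertex{e}{}\basevertex{\tilde{e}}{}$. This exhausts all neighbors, so no other blocking edge is incident to $\basevertex{e}{}$ or $\basevertex{\tilde{e}}{}$. The only nontrivial step is making sure both endpoints of every clique edge lie in $X$, which is immediate from $e \in E(G[X])$; everything else is a direct lookup in the preference table, so no serious obstacle is expected.
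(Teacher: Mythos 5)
Your proposal is correct and follows essentially the same route as the paper: both establish that $\basevertex{e}{}\basevertex{\tilde{e}}{}$ blocks because the two vertices are each other's top choice and the edge was removed from $\eta$, then rule out $\basevertex{e}{}\basevertex{u}{3}$ and $\basevertex{e}{}\basevertex{v}{3}$ by noting that $u,v\in X$ forces $\eta(\basevertex{u}{3})=\basevertex{u}{2}$ (and symmetrically for $v$), and finally dispose of $\basevertex{\tilde{e}}{}$ via its two-element preference list. No gaps.
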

\begin{proof}
 Since $\basevertex{e}{}\basevertex{\tilde{e}}{}\notin \eta$, and $\basevertex{e}{}$ and $\basevertex{\tilde{e}}{}$ prefer each other over any other vertex, $\basevertex{e}{}\basevertex{\tilde{e}}{}$ is a blocking edge with respect to $\eta$. Let $e = uv$ where $u\in V_i$, and $v\in V_j$. Recall that the preference list of $\basevertex{e}{}$ is $\pref{\basevertex{\tilde{e}}{},\basevertex{u}{3},\basevertex{v}{3},q_1^{ij}}$. Since $\basevertex{u}{3}$ does not prefer $\basevertex{e}{}$ over $\basevertex{u}{2}(=\eta(\basevertex{u}{3}))$, $\basevertex{u}{3}\basevertex{e}{}$ is not a blocking edge with respect to $\eta$. Similarly, $\basevertex{v}{3}\basevertex{e}{}$ is not a blocking edge with respect to $\eta$. %Since $N(\basevertex{e}{})=\{\basevertex{\tilde{e}}{},\basevertex{u}{3},\basevertex{v}{3},q_1^{ij}\}$, 
 Since $\basevertex{e}{}q_1^{ij} \in \eta$, $\basevertex{e}{}\basevertex{\tilde{e}}{}$ is the only blocking edge incident to $\basevertex{e}{}$ for $\eta$. Since $N(\basevertex{\tilde{e}}{})=\{\basevertex{e}{},q_2^{ij}\}$, and $\basevertex{\tilde{e}}{}q_2^{ij}\in \eta$, there is no other blocking edge incident to $\basevertex{\tilde{e}}{}$ with respect to $\eta$. %Similarly, $\basevertex{v}{3}\basevertex{\tilde{e}}{}$ is not a blocking edge with respect to $\eta$. Hence, there is  no other blocking edge incident to $\basevertex{e}{}$ or $\basevertex{\tilde{e}}{}$. 
 \end{proof}
 \begin{clm}\label{clm:nasm-no bp on q1,q2}
 For each $\{i,j\}\subseteq [k], i<j$, there is no blocking edge incident to $q_1^{ij}$ or $q_2^{ij}$ with respect to $\eta$. %, where . %, and $i<j$.
 \end{clm}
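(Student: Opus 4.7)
The plan is to exploit the fact that in the instance $\Co{J}$, the matching $\eta$ differs from $\mu$ on $E_{ij}$ at exactly one edge, namely the unique clique edge $e^* = E(G[X]) \cap E_{ij}$, and to use the first-preference structure on the remaining endpoints in $E_{ij}$. First, I would identify the matching partners of $q_1^{ij}$ and $q_2^{ij}$ in $\eta$: by the construction of $\eta$, we have $q_1^{ij}\basevertex{e^*}{} \in \eta$ and $q_2^{ij}\basevertex{\tilde{e^*}}{} \in \eta$. Hence any potential blocking edge incident to $q_1^{ij}$ (resp.\ $q_2^{ij}$) that $q_1^{ij}$ (resp.\ $q_2^{ij}$) could prefer over its partner must go to some $\basevertex{e}{}$ (resp.\ $\basevertex{\tilde{e}}{}$) with $e \in E_{ij} \setminus \{e^*\}$.

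Next, I would argue that no such edge can block. For every $e \in E_{ij} \setminus \{e^*\}$, the edge $\basevertex{e}{}\basevertex{\tilde{e}}{}$ lies in $\mu$ and, since we did not alter it when constructing $\eta$, it also lies in $\eta$. But $\basevertex{\tilde{e}}{}$ is the top preference of $\basevertex{e}{}$ in its list $\pref{\basevertex{\tilde{e}}{},\basevertex{u}{3},\basevertex{v}{3},q_1^{ij}}$, so $\basevertex{e}{}$ does not prefer $q_1^{ij}$ to its $\eta$-partner. Therefore $\basevertex{e}{}q_1^{ij}$ fails the blocking-edge condition at the $\basevertex{e}{}$ endpoint. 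The symmetric statement for $\basevertex{\tilde{e}}{}$ (whose preference list is $\pref{\basevertex{e}{}, q_2^{ij}}$ with $\basevertex{e}{}$ at the top) rules out $\basevertex{\tilde{e}}{}q_2^{ij}$ as a blocking edge.

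Combining these two observations, every neighbor of $q_1^{ij}$ in $G'$ is either $q_1^{ij}$'s $\eta$-partner itself or is already $\eta$-matched to its top-preference neighbor, and likewise for $q_2^{ij}$; in either case the blocking-edge condition fails at the $\basevertex{e}{}$ (resp.\ $\basevertex{\tilde{e}}{}$) endpoint. Thus no blocking edge with respect to $\eta$ is incident to $q_1^{ij}$ or $q_2^{ij}$, which completes the claim.

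I do not expect a real obstacle here: the entire argument is a direct look-up in the preference table together with the very simple structure of $\mu \triangle \eta$ on the $ij$-gadget (a single swap at the clique edge $e^*$). The only care needed is to explicitly invoke that $\basevertex{\tilde{e}}{}$ (respectively $\basevertex{e}{}$) is the top preference of $\basevertex{e}{}$ (respectively $\basevertex{\tilde{e}}{}$), so that the blocking check fails at the non-$q$ endpoint rather than at $q_1^{ij}$ or $q_2^{ij}$.
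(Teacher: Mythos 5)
Your proof is correct and takes essentially the same route as the paper's: it identifies the unique clique edge $e^\ast\in E(G[X])\cap E_{ij}$ whose gadget vertices are matched to $q_1^{ij}$ and $q_2^{ij}$, notes that every other $e\in E_{ij}$ retains its static edge $\basevertex{e}{}\basevertex{\tilde{e}}{}$ in $\eta$, and concludes that the blocking condition fails at the non-$q$ endpoint because $\basevertex{e}{}$ and $\basevertex{\tilde{e}}{}$ are each other's top preference. No gaps.
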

 \begin{proof} 
 Let $e_1 \in E(G[X])\cap E_{ij}$. Hence, by the construction of $\eta$,
$\basevertex{{e_1}}{}q_1^{ij}, \basevertex{\tilde{e_1}}{}q_2^{ij} \in \eta$. Let $e_2 \in E_{ij} \setminus \{e_1\}$.
Since $|E(G[X])\cap E_{ij}| =1$, $e_2$ does not belong to $E(G[X])$.
Therefore, by the construction of $\eta$, $\basevertex{{e_2}}{}\basevertex{\tilde{e_2}}{}\in \eta$. %, $\tilde{e} \neq e$. 
 Since $\basevertex{{e_2}}{}$ and $\basevertex{\tilde{e_2}}{}$ prefer each other over any other vertex, there is no blocking edge incident to $\basevertex{{e_2}}{}$ or $\basevertex{\tilde{e_2}}{}$. Hence, $\basevertex{{e_2}}{}q_1^{ij}$ and $\basevertex{\tilde{e_1}}{}q_2^{ij}$ are not blocking edges. Since $N(q_1^{ij})=\{\basevertex{e}{} \in V(G')\colon e\in E_{ij}\}$, and $N(q_2^{ij})=\{\basevertex{\tilde{e}}{} \in V(G')\colon e\in E_{ij}\}$, there is no blocking edge incident to $q_1^{ij}$ or $q_2^{ij}$.
 \end{proof}
 Hence, for each $e\in E(G[X])$, we introduce one blocking edge $\basevertex{e}{}\basevertex{\tilde{e}}{}$ with respect to $\eta$. That is, we introduce $\nicefrac{k(k-1)}{2}$ blocking edges. Using Claims \ref{clm:nasm-no bp on u3,u4} to \ref{clm:nasm-no bp on q1,q2}, there are $k+\nicefrac{k(k-1)}{2}$ blocking edges for $\eta$. This completes the proof in the forward direction.
\par
In the reverse direction, let $\eta$ be a matching of size at least $|\mu|+t$ such that $|\mu \triangle \eta| \leq 5k+\nicefrac{3k(k-1)}{2}$, and $\eta$ has at most $k'$ blocking edges. Recall  that  $|V(G')|=4\lvert V(G)\rvert+2\lvert E(G)\rvert+2k+k(k-1)$, $\mu=2\lvert V(G)\rvert + \lvert E(G)\rvert$, and $t = k+\nicefrac{k(k-1)}{2}$. Hence, $\eta$ is a perfect matching in $G'$. 

Note that, similar to Theorem~\ref{thm:asm}, in our instance, the static edges in $G'$ are of the following type: For any $ u \in V (G)$, edge $\basevertex{u}{1}\basevertex{u}{2}$ in $G'$ is a static edge and is called the $u$-type static edge; for any $e \in E(G)$, edge $\basevertex{e}{}\basevertex{\tilde{e}}{}$ in $G'$ is a static edge and is called the $e$-type static edge.

  Let ${\mathtt B}_{\eta}$ be the set of blocking edges with respect to $\eta$.  Let us  note the following properties of the set ${\mathtt B}_{\eta}$. Specifically we show that an edge in ${\mathtt B}_\eta$ is either a $u$-type static edge or an $e$-type static edge. In fact, for each $i \in [k]$, there is a unique $u$-type static edge which is a blocking edge, and for each $\{i,j\} \sse [k]$, there is a unique $e$-type static edge in ${\mathtt B}_\eta$.
 % is a subset of stable matching in $G'$.
\begin{clm}[$u$-type static edge]\label{clm:nasm-claim:solution intersect each part}
For each $i\in [k]$, there exists a vertex $u \in V_i$  such that the edge $\basevertex{u}{1}\basevertex{u}{2}\in {\mathtt B}_{\eta}$.
\end{clm}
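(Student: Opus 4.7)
The plan is to exploit the fact that $\eta$ is a perfect matching together with the extremely restricted neighborhood of the vertex $p_1^i$. Fix $i \in [k]$. Since $\eta$ is a perfect matching in $G'$, the vertex $p_1^i$ must be saturated. By construction, $N_{G'}(p_1^i) = \{\basevertex{u}{1} : u \in V_i\}$, so there exists some $u \in V_i$ with $\basevertex{u}{1} p_1^i \in \eta$. This $u$ is our candidate, and we claim $\basevertex{u}{1}\basevertex{u}{2}$ is the desired blocking edge.

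To verify this, observe first that $\basevertex{u}{1} p_1^i \in \eta$ immediately forces $\basevertex{u}{1}\basevertex{u}{2} \notin \eta$. Next, recall from the preference lists that $N(\basevertex{u}{1}) = \pref{\basevertex{u}{2}, p_1^i}$, so $\basevertex{u}{1}$ strictly prefers $\basevertex{u}{2}$ over its $\eta$-partner $p_1^i$. Symmetrically, $N(\basevertex{u}{2}) = \pref{\basevertex{u}{1}, \basevertex{u}{3}}$, so $\basevertex{u}{2}$ prefers $\basevertex{u}{1}$ over any possible $\eta$-partner. Hence both endpoints of $\basevertex{u}{1}\basevertex{u}{2}$ prefer each other over their current partners in $\eta$, which is exactly the definition of a blocking edge. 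Therefore $\basevertex{u}{1}\basevertex{u}{2} \in \mathtt{B}_\eta$, completing the proof.

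There is essentially no obstacle here: the argument reduces to the observation that the only way for $\eta$ to saturate $p_1^i$ is to ``steal'' the top-ranked partner of some $\basevertex{u}{1}$ with $u \in V_i$, and such a theft is automatically witnessed by the static edge $\basevertex{u}{1}\basevertex{u}{2}$. (The companion statement for the other side of each vertex gadget, obtained analogously from $p_2^i$, is not needed for this particular claim but will be useful in the subsequent uniqueness argument when combined with the blocking-edge budget $k' = k + \nicefrac{k(k-1)}{2}$.)
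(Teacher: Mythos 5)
Your proof is correct and follows exactly the paper's argument: use perfectness of $\eta$ to saturate $p_1^i$, note that $N(p_1^i)=\{\basevertex{u}{1}:u\in V_i\}$ forces $p_1^i\basevertex{u}{1}\in\eta$ for some $u\in V_i$, and conclude that the static edge $\basevertex{u}{1}\basevertex{u}{2}$ blocks $\eta$. The extra sentence spelling out why both endpoints prefer each other over their $\eta$-partners is just a slightly more explicit rendering of the same step.
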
   
\begin{proof}
Since $\eta$ is a perfect matching, $p_1^i$ is saturated by $\eta$, for each $i\in [k]$. Since $N(p_1^i)=\{\basevertex{u}{1} \colon u \in V_i\}$, we have that $p_1^i\basevertex{u}{1} \in \eta$, for some $u \in V_i$. Since $\basevertex{u}{1}$ and $\basevertex{u}{2}$ prefer each other over any other vertex, it follows that $\basevertex{u}{1}\basevertex{u}{2} \in {\mathtt B}_{\eta}$. %We claim that if $p_1^i\basevertex{u}{1} \in \eta$, then $\basevertex{u}{1}\basevertex{u}{2} \in X$. Suppose not, then since $\basevertex{u}{1}$ and $\basevertex{u}{2}$ prefer each other over any other vertex in $G''$, it contradicts the fact that $\eta$ is a stable matching in $G''$.
\end{proof}

\begin{clm}[$e$-type static edge]\label{clm:nasm-claim:solution intersect clique edge}
For each $\{i,j\}\subseteq [k]$, there exists an edge $e \in E_{ij}$ such that the edge $\basevertex{e}{}\basevertex{\tilde{e}}{}\in {\mathtt B}_{\eta}$.
\end{clm}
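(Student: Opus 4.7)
The plan is to mirror the argument used for the $u$-type static edge claim (Claim~\ref{clm:nasm-claim:solution intersect each part}), exploiting the symmetric role that $q_1^{ij}$ plays for the edge-gadget that $p_1^i$ played for the vertex-gadget. Fix $\{i,j\} \subseteq [k]$ with $i<j$. Since it has already been established that $\eta$ is a perfect matching in $G'$, the vertex $q_1^{ij}$ must be saturated by $\eta$. By construction, the neighborhood of $q_1^{ij}$ consists exactly of the vertices $\{\basevertex{e}{} : e \in E_{ij}\}$, so there exists some $e \in E_{ij}$ with $q_1^{ij}\basevertex{e}{} \in \eta$.

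Having identified such an $e$, I would then argue that $\basevertex{e}{}\basevertex{\tilde{e}}{}$ blocks $\eta$. First, $\basevertex{e}{}\basevertex{\tilde{e}}{} \notin \eta$ because $\basevertex{e}{}$ is already matched via $\eta$ to $q_1^{ij}$. Second, from the preference lists (Table~\ref{pref_list}), $\basevertex{\tilde{e}}{}$ is the top preference of $\basevertex{e}{}$ and vice versa, so both endpoints strictly prefer each other to whichever partner $\eta$ assigns them (in particular, $\basevertex{e}{}$ prefers $\basevertex{\tilde{e}}{}$ over $q_1^{ij}$). Hence $\basevertex{e}{}\basevertex{\tilde{e}}{}$ is a blocking edge with respect to $\eta$, which gives the desired $e$.

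I do not anticipate a real obstacle in this step: the whole argument is essentially a one-line consequence of perfectness of $\eta$ together with the fact that $\basevertex{e}{}\basevertex{\tilde{e}}{}$ is a static edge (each endpoint's top choice is the other). The only subtlety worth double-checking is that the size conditions on $\eta$ genuinely force perfectness; this is precisely the counting argument $|V(G')| = 4|V(G)| + 2|E(G)| + 2k + k(k-1)$ against $|\mu| = 2|V(G)| + |E(G)|$ and $t = k + \nicefrac{k(k-1)}{2}$ that was carried out immediately before the claim, so I would simply invoke it rather than redo it.
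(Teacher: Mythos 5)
Your proposal is correct and takes essentially the same route as the paper's proof: perfectness of $\eta$ forces $q_1^{ij}$ to be saturated, its neighborhood $\{\basevertex{e}{}: e\in E_{ij}\}$ yields some $e\in E_{ij}$ with $q_1^{ij}\basevertex{e}{}\in\eta$, and since $\basevertex{e}{}$ and $\basevertex{\tilde{e}}{}$ are each other's top preferences, $\basevertex{e}{}\basevertex{\tilde{e}}{}\in {\mathtt B}_{\eta}$. The extra details you supply (that $\basevertex{e}{}\basevertex{\tilde{e}}{}\notin\eta$ and that perfectness comes from the earlier counting argument) are consistent with, and implicitly used by, the paper.
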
  
\begin{proof}
Since $\eta$ is a perfect matching, $q_1^{ij}$ is saturated by $\eta$, for each $\{i,j\}\subseteq [k]$, where $i<j$. Since $N(q_1^{ij}) = \{ e \in V(G') \colon e \in E_{ij}\}$, $\basevertex{e}{}q_1^{ij} \in \eta$, for some $\basevertex{e}{}\in V(G')$. % (corresponding to the edge $e \in E(G)$). 
Since $\basevertex{e}{}$ and $\basevertex{\tilde{e}}{}$ prefer each other over any other vertex, it follows that $\basevertex{e}{}\basevertex{\tilde{e}}{} \in {\mathtt B}_{\eta}$. %We claim that if $\basevertex{e}{}q_1^{ij} \in \eta$, $\basevertex{e}{}\basevertex{\tilde{e}}{} \in X$. Suppose not, then since $\basevertex{e}{}$ and $\basevertex{\tilde{e}}{}$ prefer each other over any other vertex in $G''$, it contradicts the fact that $\eta$ is a stable matching in $G''$.
\end{proof} 
Using Claims ~\ref{clm:nasm-claim:solution intersect each part} and ~\ref{clm:nasm-claim:solution intersect clique edge}, and the fact that $|{\mathtt B}_{\eta}|=k+\nicefrac{k(k-1)}{2}$, we have following two properties of ${\mathtt B}_{\eta}$.
%\begin{sloppypar}
% \begin{observation}\label{obs:size B}
% $\big|B\cap \big(\cup_{u\in V_i} \{\basevertex{u}{1}\basevertex{u}{2}\}\big)\big|=1$, and \[\big|B\cap \big(\cup_{\substack{e=uv\in E(G) \colon \\ u \in V_i, v\in V_j \\ \{i,j\} \subseteq[k], i<j}} \{\basevertex{e}{}\basevertex{\tilde{e}}{}\}\big)\big|=1.\]
% \end{observation}
\begin{corollary}\label{clm:nasm-obs:size B}
 For each $i\in [k]$, there exists a unique vertex $u \in V_i$  such that the edge $\basevertex{u}{1}\basevertex{u}{2}\in {\mathtt B}_{\eta}$; and for each $\{i,j\}\subseteq [k]$ where $i<j$, there exists a unique edge $e \in E_{ij}$ such that the edge $\basevertex{e}{}\basevertex{\tilde{e}}{}\in {\mathtt B}_{\eta}$.
\end{corollary}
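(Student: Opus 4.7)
My plan is a short pigeonhole argument that combines the two preceding existence claims with the hypothesis $|{\mathtt B}_\eta|\le k'$. First I would invoke Claim~\ref{clm:nasm-claim:solution intersect each part} on each of the $k$ parts $V_i$ to obtain at least one $u$-type static blocking edge $\basevertex{u}{1}\basevertex{u}{2}$ with $u\in V_i$, and invoke Claim~\ref{clm:nasm-claim:solution intersect clique edge} on each of the $\binom{k}{2}$ pairs $\{i,j\}\subseteq [k]$ with $i<j$ to obtain at least one $e$-type static blocking edge $\basevertex{e}{}\basevertex{\tilde{e}}{}$ with $e\in E_{ij}$.

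Next, I would observe that the $k + \binom{k}{2}$ blocking edges collected in this way are pairwise distinct. The $u$-type family and the $e$-type family live on disjoint vertex sets, namely $\{\basevertex{u}{1},\basevertex{u}{2} : u\in V(G)\}$ versus $\{\basevertex{e}{},\basevertex{\tilde{e}}{} : e\in E(G)\}$; and within each family, different choices of $u$ (respectively $e$), whether inside the same part or in different parts, yield vertex-disjoint static edges. Hence $|{\mathtt B}_\eta|\ge k+\binom{k}{2}=k'$, and combined with the hypothesis $|{\mathtt B}_\eta|\le k'$ this forces equality. Equality then forces each $V_i$ to contribute exactly one $u$-type blocking edge and each $E_{ij}$ to contribute exactly one $e$-type blocking edge, which is precisely the uniqueness statement of the corollary. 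I foresee no genuine obstacle here; the only point requiring any care is the explicit verification of the disjointness above, and that is immediate from the labelling of the gadget vertices in the construction.
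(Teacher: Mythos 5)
Your proposal is correct and follows essentially the same route as the paper: the paper also derives the corollary by combining the two existence claims with the bound $|{\mathtt B}_\eta|\le k'=k+\nicefrac{k(k-1)}{2}$ and concluding uniqueness by counting. Your explicit verification that the $k+\binom{k}{2}$ witnessed blocking edges are pairwise distinct is a detail the paper leaves implicit, but it is the same argument.
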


%\end{sloppypar}
\begin{corollary}\label{clm:nasm-obs:type of bp}
Any edge in the set ${\mathtt B}_{\eta}$ %, i.e., a blocking edge with respect to $\eta$. 
is either a $u$-type static edge or an $e$-type static edge.
%Then, either $b=\basevertex{u}{1}\basevertex{u}{2}$ where $u \in V_i$ and $i\in [k]$ or $b=\basevertex{e}{}\basevertex{\tilde{e}}{}$ where $e\in E_{ij}$, $\{i,j\}\subseteq [k]$, and $i<j$.

%If $xy\in B$, then either $xy=\basevertex{u}{1}\basevertex{u}{2}$, where $i\in [k]$ or $xy=\basevertex{e}{}\basevertex{\tilde{e}}{}$, where $\{i,j\}\subseteq [k]$, $i<j$.
\end{corollary}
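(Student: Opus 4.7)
The plan is to derive this corollary as an immediate consequence of the preceding Corollary~\ref{clm:nasm-obs:size B} by a straightforward counting argument. The reduction was set up with the budget $k' = k + \nicefrac{k(k-1)}{2}$, and Corollary~\ref{clm:nasm-obs:size B} already exhibits exactly that many blocking edges distributed among the static edges: one $u$-type static blocking edge per index $i \in [k]$ (contributing $k$ edges) and one $e$-type static blocking edge per pair $\{i,j\} \subseteq [k]$ with $i<j$ (contributing $\nicefrac{k(k-1)}{2}$ edges). These two families of edges are pairwise disjoint in $E(G')$ since $u$-type edges have both endpoints among $\{\basevertex{u}{1}, \basevertex{u}{2} : u \in V(G)\}$ while $e$-type edges have both endpoints among $\{\basevertex{e}{}, \basevertex{\tilde{e}}{} : e \in E(G)\}$.

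First I would note that the hypothesis of the backward direction gives $|{\mathtt B}_\eta| \le k' = k + \nicefrac{k(k-1)}{2}$. Next I would invoke Corollary~\ref{clm:nasm-obs:size B} to conclude that ${\mathtt B}_\eta$ already contains at least $k + \nicefrac{k(k-1)}{2}$ distinct edges of the two specified static types. Combining the upper bound with this lower bound forces equality: $|{\mathtt B}_\eta| = k + \nicefrac{k(k-1)}{2}$, and every edge of ${\mathtt B}_\eta$ must lie in the union of the two families enumerated by Corollary~\ref{clm:nasm-obs:size B}. In particular, there is no ``room'' in the budget for any additional blocking edge of a third type.

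The proof is essentially a one-line pigeonhole/counting observation rather than a technical argument, so I do not expect any real obstacle. The only minor subtlety to articulate cleanly is that the $u$-type and $e$-type edges are genuinely disjoint as subsets of $E(G')$, which is immediate from the construction since their endpoints belong to disjoint vertex subsets of $V(G')$. Once that is noted, the two lower bounds of $k$ and $\nicefrac{k(k-1)}{2}$ can be added without double-counting, matching the upper bound exactly and closing the argument.
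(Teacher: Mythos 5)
Your proposal is correct and matches the paper's own argument: the paper likewise derives this corollary from the preceding existence claims together with the budget $k'=k+\nicefrac{k(k-1)}{2}$, concluding that no blocking edge of any other type can fit. Your explicit remark that the $u$-type and $e$-type families are disjoint is a minor clarification the paper leaves implicit, but the counting argument is the same.
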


Next, we note a property that forces an edges in the matching $\eta$. % relates a blocking edge and $$
\begin{clm}\label{clm:nasm-q1 pair}
 For any $\{i,j\}\subseteq [k]$, consider some $e \in E_{ij}$ such that $\basevertex{e}{}\basevertex{\tilde{e}}{} \in {\mathtt B}_{\eta}$. Then $q_1^{ij}\basevertex{e}{}\in \eta$.
\end{clm}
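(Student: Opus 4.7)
The plan is to derive the claim directly from the two structural results already established, namely Claim~\ref{clm:nasm-claim:solution intersect clique edge} (existence of a blocking $e$-type static edge paired up with $q_1^{ij}$ in every gadget) and Corollary~\ref{clm:nasm-obs:size B} (uniqueness of the $e$-type blocking static edge within each gadget $E_{ij}$). No new combinatorial argument is needed, only a careful use of these two facts.

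First, fix $\{i,j\}\subseteq [k]$ with $i<j$ and any $e\in E_{ij}$ such that $\basevertex{e}{}\basevertex{\tilde{e}}{}\in \mathtt{B}_\eta$. Apply Claim~\ref{clm:nasm-claim:solution intersect clique edge} to the pair $\{i,j\}$: this produces some edge $e'\in E_{ij}$ for which $\basevertex{e'}{}\basevertex{\tilde{e'}}{}\in \mathtt{B}_\eta$ and, moreover, $q_1^{ij}\basevertex{e'}{}\in \eta$ (the latter fact follows from the proof of Claim~\ref{clm:nasm-claim:solution intersect clique edge}, where the edge $e'$ is obtained precisely as the neighbour of $q_1^{ij}$ that is saturated by $\eta$).

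Next, invoke Corollary~\ref{clm:nasm-obs:size B}: within the gadget corresponding to $E_{ij}$ there is a \emph{unique} edge of $G$ whose associated $e$-type static edge lies in $\mathtt{B}_\eta$. Since both $e$ and $e'$ satisfy this property, we conclude $e=e'$, and hence $q_1^{ij}\basevertex{e}{}\in \eta$, as required.

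The main (and only) subtlety is ensuring that the edge $e'$ delivered by Claim~\ref{clm:nasm-claim:solution intersect clique edge} is indeed the one matched to $q_1^{ij}$ under $\eta$; this is immediate from the construction inside that claim, where the saturating neighbour of $q_1^{ij}$ in $\eta$ is exactly the vertex $\basevertex{e'}{}$ from which the blocking $e$-type static edge $\basevertex{e'}{}\basevertex{\tilde{e'}}{}$ is derived. Thus no further case analysis is needed.
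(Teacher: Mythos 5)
Your proposal is correct and follows essentially the same route as the paper: both arguments rest on the fact that $\eta$ is perfect, so $q_1^{ij}$ is matched to some $\basevertex{e'}{}$ with $e'\in E_{ij}$, whence $\basevertex{e'}{}\basevertex{\tilde{e'}}{}\in\mathtt{B}_\eta$, and the uniqueness in Corollary~\ref{clm:nasm-obs:size B} forces $e'=e$. The paper phrases this as a proof by contradiction and re-derives the saturation of $q_1^{ij}$ inline rather than importing it from the proof of Claim~\ref{clm:nasm-claim:solution intersect clique edge}, but the content is identical.
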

\begin{proof}
Suppose $q_1^{ij}\basevertex{e}{}\notin \eta$, then since $\eta$ is a perfect matching, there exists a vertex $e' \in V(G')$ such that $q_1^{ij}\basevertex{e'}{} \in \eta$. Since $\basevertex{e'}{}$ and $\basevertex{\tilde{e'}}{}$ prefer each other over any other vertex, $\basevertex{e'}{}\basevertex{\tilde{e'}}{} \in {\mathtt B}_{\eta}$. Recall that $N(q_1^{ij})  = \{ e' \in V(G') \colon e' \in E_{ij}\}$. Therefore, $\basevertex{e'}{} \in E_{ij}$,  a contradiction to %the fact that $|B\cap \{\basevertex{e}{}\basevertex{\tilde{e}}{}\}|=1$, for each $\{i,j\}\subseteq [k]$ (
the uniqueness criteria in Corollary \ref{clm:nasm-obs:size B}. Therefore, $q_1^{ij}\basevertex{e}{}\in \eta$.
\end{proof}
\begin{clm}\label{clm:nasm-consistency}
 Let $e=uv$, $u\in V_i$ and $v\in V_j$ where $\{i,j\}\subseteq [k]$ and $i<j$. If $\basevertex{e}{}\basevertex{\tilde{e}}{} \in {\mathtt B}_{\eta}$, then  $\{\basevertex{u}{1}\basevertex{u}{2},\basevertex{v}{1}\basevertex{v}{2}\}\subseteq {\mathtt B}_{\eta}$, 
\end{clm}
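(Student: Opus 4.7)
\medskip

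\textbf{Proof proposal.} I will argue by contradiction, assuming WLOG that $\basevertex{u}{1}\basevertex{u}{2}\notin \mathtt{B}_\eta$. Since $\basevertex{u}{1}$ and $\basevertex{u}{2}$ mutually top-rank each other, this assumption forces $\basevertex{u}{1}\basevertex{u}{2}\in \eta$, for otherwise the edge would itself be blocking. Because $\eta$ is perfect (by the size/$q$ count already established), the vertex $\basevertex{u}{3}$ must be matched by $\eta$, and the remaining neighborhood $N(\basevertex{u}{3})\setminus\{\basevertex{u}{2}\}$ consists of $\basevertex{u}{4}$ together with the vertices of $\X{E}_u$. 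The plan is to derive a contradiction in each of these two sub-cases, using the fact (from Claim~\ref{clm:nasm-q1 pair}) that $\basevertex{e}{}$ is matched to $q_1^{ij}$ in $\eta$.

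First I would handle the case $\basevertex{u}{3}\basevertex{u}{4}\in\eta$. Here I exhibit an explicit blocking edge $\basevertex{u}{3}\basevertex{e}{}$: in the preference list of $\basevertex{u}{3}$, every vertex of $\X{E}_u$ is ranked above $\basevertex{u}{4}$, while in the preference list of $\basevertex{e}{}$, the vertex $\basevertex{u}{3}$ is ranked above $q_1^{ij}=\eta(\basevertex{e}{})$. Hence $\basevertex{u}{3}\basevertex{e}{}\in\mathtt{B}_\eta$, yet this edge is neither a $u$-type nor an $e$-type static edge, contradicting Corollary~\ref{clm:nasm-obs:type of bp}.

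Next I would rule out the case $\basevertex{u}{3}\basevertex{e'}{}\in\eta$ for some $\basevertex{e'}{}\in\X{E}_u$. Note $e'\neq e$, since otherwise both $\basevertex{u}{3}\basevertex{e}{}$ and $q_1^{ij}\basevertex{e}{}$ would lie in $\eta$, impossible. Let $j'$ be the index with $e'\in E_{ij'}$ (after reordering $\{i,j'\}$ as needed). Since $\basevertex{e'}{}$ and $\basevertex{\tilde{e'}}{}$ mutually top-rank each other and $\basevertex{e'}{}\basevertex{\tilde{e'}}{}\notin\eta$, we have $\basevertex{e'}{}\basevertex{\tilde{e'}}{}\in\mathtt{B}_\eta$. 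If $j'=j$, the uniqueness clause of Corollary~\ref{clm:nasm-obs:size B} together with $\basevertex{e}{}\basevertex{\tilde{e}}{}\in\mathtt{B}_\eta$ forces $e'=e$, a contradiction; if $j'\neq j$, then Claim~\ref{clm:nasm-q1 pair} applied to the pair $\{i,j'\}$ demands $q_1^{ij'}\basevertex{e'}{}\in\eta$, again contradicting $\basevertex{u}{3}\basevertex{e'}{}\in\eta$.

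The two cases exhaust the possibilities for $\eta(\basevertex{u}{3})$, so the initial assumption fails and $\basevertex{u}{1}\basevertex{u}{2}\in\mathtt{B}_\eta$; the symmetric argument on the $V_j$ side yields $\basevertex{v}{1}\basevertex{v}{2}\in\mathtt{B}_\eta$. I expect the only delicate point to be the $e'\neq e$ sub-case: one must invoke both Claim~\ref{clm:nasm-q1 pair} and the uniqueness in Corollary~\ref{clm:nasm-obs:size B} in tandem to rule out any alternative matching partner of $\basevertex{u}{3}$ among $\X{E}_u$, and to handle the two possibilities $j'=j$ and $j'\neq j$ separately.
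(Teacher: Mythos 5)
Your proposal is correct and follows essentially the same route as the paper: both arguments reduce to a case analysis on the matching partner of $\basevertex{u}{3}$, use Claim~\ref{clm:nasm-q1 pair} to pin $\basevertex{e}{}$ to $q_1^{ij}$, and derive a contradiction with Corollary~\ref{clm:nasm-obs:type of bp} via the non-static blocking edge $\basevertex{u}{3}\basevertex{e}{}$ in the case $\eta(\basevertex{u}{3})=\basevertex{u}{4}$. The only cosmetic difference is that in the sub-case $\eta(\basevertex{u}{3})=\basevertex{e'}{}\in\X{E}_u$ you split on $j'=j$ versus $j'\neq j$, whereas the paper disposes of all such $e'$ uniformly by applying Claim~\ref{clm:nasm-q1 pair} to $e'$ directly; both are valid.
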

\begin{proof}
We first show that $\basevertex{u}{1}\basevertex{u}{2} \in {\mathtt B}_{\eta}$. Recall that the preference list of $\basevertex{u}{3}$ is $\pref{\basevertex{u}{2}, \ordering{\X{E}_u}, \basevertex{u}{4}}$. If $\eta(\basevertex{u}{3}) = \basevertex{u}{2}$, then since $\basevertex{u}{1}$ and $\basevertex{u}{2}$ prefer each other over any other vertex, $\basevertex{u}{1}\basevertex{u}{2} \in {\mathtt B}_{\eta}$.
% otherwise it will contradict the fact that $\basevertex{u}{1}\basevertex{u}{2}$ is not a blocking edge. %We first claim that if $\basevertex{e}{}\basevertex{\tilde{e}}{} \in B$, then $q_1^{ij}\basevertex{e}{}\in \eta$. Suppose not, then since $\eta$ is a perfect matching, there exists $\tilde{e}_1^{ij}$ such that $q_1^{ij}\tilde{e}_1^{ij} \in \eta$. Since $\tilde{e}_1^{ij}$ and $\tilde{e}_2^{ij}$ prefer each other over any other vertex, $\tilde{e}_1^{ij}\tilde{e}_2^{ij} \in B$, a contradiction to %the fact that $|B\cap \{\basevertex{e}{}\basevertex{\tilde{e}}{}\}|=1$, for each $\{i,j\}\subseteq [k]$ (
%Observation \ref{obs:size B}. Therefore, $q_1^{ij}\basevertex{e}{}\in \eta$. 
%Since, $\basevertex{e}{}\basevertex{\tilde{e}}{} \in B$, by Observation ~\ref{q1 pair} we have that $q_1^{ij}\basevertex{e}{}\in \eta$.

Suppose that $\eta(\basevertex{u}{3}) = \basevertex{e'}{}$ where $\basevertex{e'}{} \in \X{E}_u$, then since $\basevertex{e'}{}$ and $\basevertex{\tilde{e'}}{}$ prefer each other over any other vertex, $\basevertex{e'}{}\basevertex{\tilde{e'}}{} \in {\mathtt B}_{\eta}$. Since $\basevertex{u}{3} \basevertex{e'}{} \in \eta$,  we get a contradiction to Claim~\ref{clm:nasm-q1 pair}. 
Therefore, $\eta(\basevertex{u}{3}) \notin \X{E}_u$. Since $\eta$ is a perfect matching, $\eta(\basevertex{u}{3}) = \basevertex{u}{4}$.
%Recall that for each $i \in [k]$ and each vertex $u \in V_i$, $S_u=\{\basevertex{e}{}\colon e(=uv)\in E(G) \text{ for some } v\in V_j\}$. Next, we claim that $\eta(\basevertex{u}{3})$ does not belong to $S_u$. Suppose that $\eta(\basevertex{u}{3})=\tilde{e}_1^{i\ell}$ such that $\tilde{e}(=uw) \in E(G)$, and $w\in V_\ell$. Since $\tilde{e}_1^{i\ell}$ and $\tilde{e}_2^{i\ell}$ prefer each other over any other vertex, $\tilde{e}_1^{il}\tilde{e}_2^{il} \in {\mathtt B}_{\eta}$. Since $\basevertex{u}{3}\tilde{e}_1^{il}\in \eta$, this contradicts Claim ~\ref{clm:nasm-q1 pair}.  Since $\basevertex{u}{2}= \eta(\basevertex{u}{1})$, and $\eta(\basevertex{u}{3})\notin S_u$, we have that either $\eta(\basevertex{u}{3})=\basevertex{u}{4}$, or $\eta(\basevertex{u}{3})=\emptyset$. Since $\eta$ is a perfect matching the later case does not occur. In the former case
Note that $\basevertex{u}{3}$ prefers the vertex $\basevertex{e}{}$ over $\basevertex{u}{4}$. Since, $\basevertex{e}{}\basevertex{\tilde{e}}{} \in {\mathtt B}_{\eta}$, by Claim ~\ref{clm:nasm-q1 pair} we have that $q_1^{ij}\basevertex{e}{}\in \eta$.
Note that $\basevertex{e}{}$ also prefers $\basevertex{u}{3}$ over $q_1^{ij}$. Therefore, $\basevertex{u}{3}\basevertex{e}{} \in {\mathtt B}_{\eta}$. This contradicts  Corollary~\ref{clm:nasm-obs:type of bp}. Similarly, we can show that $\basevertex{v}{1}\basevertex{v}{2} \in {\mathtt B}_{\eta}$.
%Let $\basevertex{e}{}\basevertex{\tilde{e}}{} \in B$. Without loss of generality, suppose $\basevertex{u}{1}\basevertex{u}{2} \notin B$. Since $\basevertex{u}{1}\basevertex{u}{2} \in E(G')$ and $\basevertex{u}{1},\basevertex{u}{2}$ prefer each other over any other vertex in $G'$, $\basevertex{u}{1}\basevertex{u}{2} \in \eta$. Since $\basevertex{e}{}\basevertex{\tilde{e}}{} \notin E(G'')$, $\basevertex{u}{3}$ and $\basevertex{e}{}$ prefer each other over any other vertex in $G''$. Hence, $\basevertex{u}{3}\basevertex{e}{}$ belongs to $\eta$. Since using Claim \ref{claim:solution intersect clique edge}, $|X\cap {\mathcal E}_{ij}|=1$ for each $\{i,j\}\subseteq [k]$, for every $\hat{e}\neq e$, $\hat{e}_1^{ij}$ and $\hat{e}_2^{ij}$ prefer each other over any other vertex in $G''$. Hence, $q_1^{ij}$ is unsaturated in the matching $\eta$ of $G''$, it contradicts the fact that $\eta$ is a perfect matching of $G''$.
\end{proof}
Next, we construct two sets $S$ and $E_S$ as follows. Let $S=\{u\in V(G)\colon \basevertex{u}{1}\basevertex{u}{2} \in {\mathtt B}_{\eta}, i\in[k]\}$, and $E_S=\{e\in E(G)\colon \basevertex{e}{}\basevertex{\tilde{e}}{}\in {\mathtt B}_{\eta}, \{i,j\}\subseteq [k]\}$. We claim that $G_S=(S,E_S)$ is a clique, and $|S\cap V_i|=1$, where $i\in [k]$. Let $e=uv$, where $u\in V_i$, and $v\in V_j$. Using Claim \ref{clm:nasm-consistency}, for each $\basevertex{e}{}\basevertex{\tilde{e}}{}\in {\mathtt B}_{\eta}$, $\{\basevertex{u}{1}\basevertex{u}{2},\basevertex{v}{1}\basevertex{v}{2}\} \subseteq {\mathtt B}_{\eta}$. Hence for each $uv\in E_S$, $\{u,v\}\subseteq S$. Using Corollary  %Observations \ref{claim:solution intersect each part}, and  
\ref{clm:nasm-obs:size B}, %$|S|=k$, and 
$|S\cap V_i|=1$, i.e., $|S|=k$%. Using Observations \ref{claim:solution intersect clique edge}
, and %~\ref{obs:size B} 
$|E_S|=\nicefrac{k(k-1)}{2}$. %Since for each edge $uv$ in $E_S$, $u$ and $v$ are in $S$, and $|S|=k$, and $|E_S|=k(k-1)/2$, it follows that 
Hence, $G_S$ is a clique. This completes the proof.

\begin{supress}
\begin{lemma}\label{lem:restricted_domain2}
Even in the presence of master lists on both side,  {\sf ASM} and \ams are \WOH with respect to $k + q+t$. %, when $d$ is constant.
\end{lemma}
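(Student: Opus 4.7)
The plan is to reuse the reduction from \mcqsmall in the proof of Theorem~\ref{thm:smd w-hard} and argue that the preference lists constructed there can already be interpreted as arising from master lists on both sides of the bipartition $(A,B)$ of $G'$ exhibited in Claim~\ref{clm:nasm-bipartite-graph}; here $A$ contains the vertices $\{p_1^i, \basevertex{u}{2}, \basevertex{u}{4}, \basevertex{e}{}, q_2^{ij}\}$ and $B$ contains $\{p_2^i, \basevertex{u}{1}, \basevertex{u}{3}, \basevertex{\tilde{e}}{}, q_1^{ij}\}$. First I would exhibit total orders $\mathcal{M}_A$ on $A$ and $\mathcal{M}_B$ on $B$, and then verify that the preference list of every vertex is exactly the restriction of the opposite-side master list to its neighborhood.

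Concretely, I would take $\mathcal{M}_A$ to be the concatenation, in order, of the blocks (i) all $\basevertex{u}{2}$, (ii) all $\basevertex{e}{}$, (iii) all $\basevertex{u}{4}$, (iv) all $q_2^{ij}$, and (v) all $p_1^i$, with any fixed internal order inside each block. Analogously, $\mathcal{M}_B$ would be the concatenation of (i) all $\basevertex{u}{1}$, (ii) all $\basevertex{\tilde{e}}{}$, (iii) all $\basevertex{u}{3}$ but grouped by the color class of $u$ in the canonical order $V_1 \prec V_2 \prec \cdots \prec V_k$, (iv) all $p_2^i$, and (v) all $q_1^{ij}$.

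The verification is mostly mechanical. For $\basevertex{u}{3} \in B$ with neighborhood $\{\basevertex{u}{2}\} \cup \X{E}_u \cup \{\basevertex{u}{4}\}$ in $A$, the restriction of $\mathcal{M}_A$ yields $\pref{\basevertex{u}{2}, [\X{E}_u], \basevertex{u}{4}}$, matching Table~\ref{pref_list}. For $\basevertex{e}{} \in A$ with $e=uv$, $u\in V_i$, $v\in V_j$, $i<j$, the neighborhood in $B$ is $\{\basevertex{\tilde{e}}{}, \basevertex{u}{3}, \basevertex{v}{3}, q_1^{ij}\}$; grouping the $\basevertex{u}{3}$'s by color class in $\mathcal{M}_B$ ensures that $\basevertex{u}{3}$ precedes $\basevertex{v}{3}$, so the restriction is $\pref{\basevertex{\tilde{e}}{}, \basevertex{u}{3}, \basevertex{v}{3}, q_1^{ij}}$, again matching the table. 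The preferences of $\basevertex{u}{1}, \basevertex{u}{2}, \basevertex{u}{4}, \basevertex{\tilde{e}}{}$ only restrict two-element sets and are therefore immediate, while those of $p_1^i, p_2^i, q_1^{ij}, q_2^{ij}$ were already declared to be arbitrary strict orderings and so impose no extra constraint.

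The main obstacle I anticipate is ensuring the mutual consistency of the family of constraints $\{\basevertex{u}{3} \prec \basevertex{v}{3} : uv \in E(G),~u \in V_i,~v \in V_j,~i<j\}$ that the different $\basevertex{e}{}$ vertices impose on $\mathcal{M}_B$; the color-class grouping resolves all of them simultaneously because the required inequality is always dictated by the canonical order on $V_1, \ldots, V_k$. Since the underlying graph $G'$, the stable matching $\mu$, and the parameters $k', q, t$ remain identical to those in Theorem~\ref{thm:smd w-hard}, the entire correctness argument (forward and backward directions) transfers unchanged, and we obtain \W$[1]$-hardness of both \asm\ (with respect to $k+t$) and \ams\ (with respect to $k+q+t$) even when all preferences come from master lists on both sides.
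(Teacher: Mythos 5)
Your proof is correct and follows essentially the same route as the paper's: reuse the construction from Theorem~\ref{thm:smd w-hard} verbatim and exhibit block-structured master lists on the two sides of the bipartition, with the one nontrivial consistency issue — the ordering of the $\basevertex{u}{3}$ vertices forced by the preference lists of the $\basevertex{e}{}$ vertices — resolved by sorting that block according to the canonical order of the color classes $V_1,\ldots,V_k$. Your write-up is in fact more explicit than the paper's (which leaves several blocks and the verification implicit), but no new idea is involved.
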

\begin{proof}
We show that the preference lists of the vertices in each side of the partition, in the construction in the proof of Theorem~\ref{thm:smd w-hard}, respects a master ordering of vertices. Recall the bipartite graph constructed in the  proof of Theorem~\ref{thm:smd w-hard}. Consider the partition that contains $p_1^i,\basevertex{v}{2},\basevertex{v}{4},\basevertex{e}{},q_2^{ij}$, where $\{i,j\}\subseteq [k]$. Let us call this partition as $X$. Let $P=\{p_1^i \colon i\in [k]\}$, $V_{even}=\{\basevertex{v}{2},\basevertex{v}{4} \colon i\in [k]\}$, $E_1=\{\basevertex{e}{}\colon \{i,j\}\in [k]\}$, and $Q_2=\{q_2^{ij}\colon \{i,j\}\subseteq [k]\}$.  % Note that vertices of $A$ are adjacent to only vertices of $P$ and $\tilde{B}$. 
Note that no two vertices of the set $P$ are adjacent to a vertex in $G'$. Therefore, we can place vertices of $P$ in any arbitrary order.  We place the vertices of $U_{odd}$, in the sorted order of $i\in [k]$ as two vertices of $U_{odd}$ can be adjacent to $\basevertex{e}{}$, where $\{i,j\}\subseteq [k], i<j$. Since no two vertices of $E_2$ have common neighbor in $G'$, we can place vertices of $E_2$ also in any arbitrary order. Similarly, we can place vertices of $Q$ in any arbitrary order. Now, in the master list corresponding to vertices in the set $X$, we first place the vertices of $E_2$ followed by vertices of $U_{odd}$ in the above mentioned order. Similarly, we can construct master list of the other part in the bipartition.
\end{proof}
\end{supress}
\end{proof}

\section{\fpt Algorithm for \ams}\label{sec:fpt}
In this section, we give \FPT\ algorithm for \ams with respect to $q+d$ (Theorem~\ref{th:fptAlgo}). Recall that $d$ is the degree of the graph $G$, and $q$ is the symmetric difference between a solution matching and the given stable matching $\mu$. Suppose  $\eta$ is a {\em hypothetical} solution to $(G,\Co{L},\mu,k,q,t)$. Let matchings $\mu=\mu_1 \uplus \mu_2$ and $\eta =\mu_2 \uplus \eta_2$. Observe that we can obtain  
$\eta$ from $\mu$, by deleting $\mu_2$, and adding the edges in $\eta_2$. Equivalently, we can find $\eta$, if we know $\mu \triangle \eta$, as $\mu \triangle \eta = \mu_{2} \uplus \eta_{2}$. Thus, our goal is reduced to find $\mu \triangle \eta$.  Now, we begin with the description of our algorithm, which has three phases: Vertex Separation, Edge Separation, and Size-Fitting. An example describes the algorithm in Figure~\ref{fig:algo}. We begin with the description of a randomized algorithm which will be derandomized later using $n$-$p$-$q$-{\em lopsided universal} family~\cite{FominLPS16}. Given an instance $(G,\Co{L},\mu,k,q,t)$ of \ams, we proceed as follows.

\medskip
\noindent
{\bf \phase I: Vertex Separation}\label{phase1}

%\begin{mdframed}[backgroundcolor=gray!20] 
\begin{tcolorbox}[colback=red!5!white,colframe=red!75!black]
% \begin{mdframed}[backgroundcolor=yellow!20,hidealllines=true]
Let $f$ be a function that colors each vertex of the graph  $G$ independently with color $1$ or $2$ with probability $1/2$ each. 
\end{tcolorbox} 

Then, the following properties hold for $G$ that is colored using the function $f$:
    
\begin{itemize}%[wide=2.5pt, topsep=1pt]
\item  Every vertex in $V(\mu \triangle \eta)$ is colored $1$ 
with probability at least $\frac{1}{2^{2q}}$. 
\item Let $B$ be a set of the neighbors of the vertices in $V(\mu \triangle \eta)$ outside the set $V(\mu \triangle \eta)$, that is, $B=N_{G}(V(\mu \triangle \eta))$, and $D$ be the set of matching partners of the vertices in $B$, in the matching $\mu$, if they exist. Every vertex in $B \cup D$ is colored $2$ 
with probability at least $\frac{1}{2^{4qd}}$. To see this note that  $|\mu\triangle \eta| \leq q$ and the  maximum degree of a vertex in the graph $G$ is $d$, and so $|B \cup D|\leq 2|B| = 2|N_{G}(V(\mu \triangle \eta))|  \leq 4qd$.
\end{itemize}
\smallskip
For $i\in [2]$, let $V_i$ denote the set of vertices of the graph $G$, that are colored $i$ using the function $f$. Summarizing the above mentioned properties we get the following. 
\begin{lemma}\label{label:vertexsep}
Let $V_{1},V_2$, $B$ and $D$ be as defined above. Then, with probability at least $\frac{1}{2^{2q+4qd}}$, 
$V(\mu \triangle \eta) \subseteq V_1$ and $B\cup D \sse V_2$. 
\end{lemma}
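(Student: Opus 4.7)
The plan is to combine the two observations stated in the bullet points preceding the lemma and multiply the resulting probabilities using independence of the coloring $f$. The argument has three ingredients, and I would carry them out in the following order.

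First, I would pin down the size bounds explicitly. Since $|\mu \triangle \eta| \le q$ and each edge has two endpoints, $|V(\mu \triangle \eta)| \le 2q$. The set $B = N_G(V(\mu \triangle \eta)) \setminus V(\mu \triangle \eta)$ satisfies $|B| \le 2qd$, because each of the at most $2q$ vertices in $V(\mu \triangle \eta)$ contributes at most $d$ neighbors to $B$. Each vertex in $B$ contributes at most one $\mu$-partner to $D$, so $|D| \le |B|$, yielding $|B \cup D| \le 4qd$.

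Second, I would verify that $V(\mu \triangle \eta)$ and $B \cup D$ are disjoint, which I expect to be the only slightly subtle step, since the rest of the argument is routine counting. Disjointness of $B$ from $V(\mu \triangle \eta)$ is immediate from the definition of $B$. For $D$, the key observation is the following: if $w \in D$, then $w = \mu(v)$ for some $v \in B$; since $v \notin V(\mu \triangle \eta)$, the vertex $v$ has the same partner in $\mu$ and in $\eta$, so the edge $vw$ lies in $\mu \cap \eta$. Consequently $w$ also has the same partner (namely $v$) in both matchings, so no edge in $\mu \triangle \eta$ is incident to $w$, and hence $w \notin V(\mu \triangle \eta)$.

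Finally, since $f$ colors each vertex independently with probability $1/2$ for each color, the event that every vertex in $V(\mu \triangle \eta)$ receives color $1$ occurs with probability at least $1/2^{|V(\mu \triangle \eta)|} \ge 1/2^{2q}$, and, by the disjointness established above, it occurs independently of the event that every vertex in $B \cup D$ receives color $2$, whose probability is at least $1/2^{|B \cup D|} \ge 1/2^{4qd}$. Multiplying these two bounds gives the claimed lower bound of $1/2^{2q + 4qd}$ on the probability that $V(\mu \triangle \eta) \subseteq V_1$ and $B \cup D \subseteq V_2$ hold simultaneously.
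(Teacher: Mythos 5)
Your proof is correct and follows the same route as the paper, which simply combines the two probability bounds stated in the bullet points before the lemma (using the same size estimates $|V(\mu \triangle \eta)| \le 2q$ and $|B \cup D| \le 4qd$ and the independence of the vertex coloring). The only addition is your explicit verification that $B \cup D$ is disjoint from $V(\mu \triangle \eta)$, a detail the paper leaves implicit but which your argument handles correctly.
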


Due to Lemma~\ref{label:vertexsep}, we have the following: 
\begin{corollary}\label{obs:component of colored graph}
 Every component in $G[V(\mu \triangle \eta)]$ is also a component in $G[V_{1}]$ with probability at least $\frac{1}{2^{2q+4qd}}$.
\end{corollary}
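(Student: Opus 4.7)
The plan is to condition on the high-probability event from Lemma~\ref{label:vertexsep}, namely that $V(\mu \triangle \eta) \subseteq V_1$ and $B \cup D \subseteq V_2$, and then deduce the structural conclusion deterministically. Once we have this, the probability bound $\frac{1}{2^{2q+4qd}}$ transfers directly from Lemma~\ref{label:vertexsep} to the statement, so no further randomization analysis is needed.

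First, I would fix an arbitrary component $C$ of $G[V(\mu \triangle \eta)]$ and show that $V(C)$ is exactly the vertex set of a component of $G[V_1]$. Since $V(\mu \triangle \eta) \subseteq V_1$, we have $V(C) \subseteq V_1$, and every edge of $C$ is an edge of $G[V_1]$, so the vertices of $V(C)$ lie in a single connected subgraph of $G[V_1]$. It remains to show that $V(C)$ is not strictly contained in a larger component of $G[V_1]$. For contradiction, suppose some vertex $w \in V_1 \setminus V(\mu \triangle \eta)$ is adjacent in $G[V_1]$ to some $v \in V(C)$. Then $w$ is a neighbor of $v \in V(\mu \triangle \eta)$ that lies outside $V(\mu \triangle \eta)$, so by definition $w \in B = N_G(V(\mu \triangle \eta))$. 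But $B \subseteq V_2$ by hypothesis, contradicting $w \in V_1$ (as $V_1 \cap V_2 = \emptyset$). Hence no such $w$ exists, so $V(C)$ is closed under taking neighbors in $G[V_1]$, which means $V(C)$ is exactly the vertex set of a component of $G[V_1]$.

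Therefore, whenever the event of Lemma~\ref{label:vertexsep} holds, every component of $G[V(\mu \triangle \eta)]$ is precisely a component of $G[V_1]$. Since this event holds with probability at least $\frac{1}{2^{2q+4qd}}$, the corollary follows. The argument is essentially a direct structural consequence of the coloring separating $V(\mu \triangle \eta)$ from its external neighborhood; there is no genuine obstacle here, only the need to formalize the two directions (containment in a component, and maximality of the component).
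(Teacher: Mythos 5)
Your proposal is correct and follows essentially the same route as the paper, which proves the corollary in one line by noting that it follows from $V(\mu \triangle \eta) \subseteq V_1$ and $B = N_G(V(\mu \triangle \eta)) \subseteq V_2$ as guaranteed by Lemma~\ref{label:vertexsep}. You merely spell out the deterministic structural argument (connectivity plus maximality of each component) that the paper leaves implicit, and the probability bound transfers exactly as you describe.
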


The proof of Corollary~\ref{obs:component of colored graph} follows from the fact that $V(\mu \triangle \eta) \subseteq V_1$ and $B = N_G(V(\mu \triangle \eta))$ is a subset of $V_2$. Thus, due to Corollary~\ref{obs:component of colored graph}, 
if there exists a component in $C$ containing a vertex $u \in V(G)$ such that $\mu(u) \notin C$, then $C$ is not a component in $G[V(\mu \triangle \eta)]$. Thus, we get the following reduction rule. 
%Hence, {\em we delete the component $C$ from $G[V_1]$.} 
 
\begin{reduction}
If there exists a component in $C$ containing a vertex $u \in V(G)$ such that $\mu(u) \notin C$, then 
delete the component $C$ from $G[V_1]$. 
\end{reduction}

In light of Corollary~\ref{obs:component of colored graph}, to find $\mu \triangle \eta$, 
 in \phase~II, we color the edges of $G[V_{1}]$ in order to identify the components of the graph that {\em only contains edges of  $\mu\triangle\eta$.} 

\begin{figure}
\begin{center}
\includegraphics[scale=1]{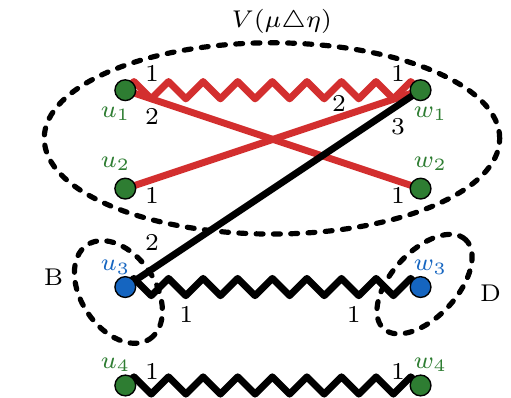}
\end{center}
\caption{The zigzag edges represent the edges of the stable matching $\mu$. The matching $\eta=\{u_1w_2, u_2w_1, u_3w_3, u_4w_4\}$, and sets $B$ and $D$ are as defined in the \phase I of the algorithm. Vertex colors $1$ and $2$ in \phase I are represented by  \textcolor{mygreen}{\bf{green}} and \textcolor{rvwvcq}{{\bf blue}}, respectively. Hence, $G_{1}= G[\{u_1,u_2,u_4,w_1,w_2,w_4\}]$. The \textcolor{dtsfsf}{{\bf red}} edges represent the edges in $\mu\triangle\eta$ in \phase II.} 
\label{fig:algo}
\end{figure}

\medskip
\noindent
{\bf \phase II: Edge Separation\label{phase2}}

%\begin{mdframed}[backgroundcolor=gray!20] 
 %\begin{mdframed}[backgroundcolor=blue!20] 
%  \begin{mdframed}[backgroundcolor=yellow!20,hidealllines=true]
\begin{tcolorbox}[colback=red!5!white,colframe=red!75!black]
Let $g$ be a function that colors each edge of the subgraph $G[V_{1}]$ independently with colors $1$ or $2$ \wp $1/2$ each. 
\end{tcolorbox}

 Let $G_{1}=G[V_{1}]$ and let $G'=G_{1}[V(\mu \triangle \eta)]$. Then, the following properties hold for the graph $G_1$ that is colored using the function $g$: 
   
\begin{itemize}%[wide=2.5pt, topsep=1pt]
\item Every edge in $\mu \triangle \eta$ is colored $1$ 
\wp at least $\frac{1}{2^q}$. 
\item Every edge in $E(G')\setminus\!(\mu\triangle\eta)$ is colored $2$  
\wp at least $\frac{1}{2^{2qd}}$, because $|V(\mu \triangle \eta)|\leq 2q$ and $d$ is the maximum degree of a vertex in the graph $G$, so $|E(G')|\leq 2qd$. 
\end{itemize}

 For $i\in [2]$, let $E_i$ denote the set of edges of the graph $G_1$ that are colored $i$ using the function $g$.   
Then, due to the above mentioned coloring properties of the graph $G_1$, we have the following result:

\begin{lemma}\label{label:edgesep}
Let $G'$, $E_1$, and $E_2$ be as defined above. Then, with probability at least $\frac{1}{2^{q+2qd}}$, 
$\mu \triangle \eta \subseteq E_1$ and $E(G')\setminus\!(\mu\triangle\eta) \sse E_2$. 
\end{lemma}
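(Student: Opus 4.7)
The plan is to exhibit the desired event as the intersection of two independent sub-events and then bound each factor via the product of per-edge probabilities. Concretely, I would define $A$ to be the event that every edge of $\mu\triangle\eta$ receives color $1$ under $g$, and $B$ to be the event that every edge of $E(G')\setminus(\mu\triangle\eta)$ receives color $2$. Since $g$ assigns colors to the edges of $G_1$ independently, and the edge sets governing $A$ and $B$ are disjoint (they partition the edges of $G'$), the events $A$ and $B$ are independent, so $\Pr[A\cap B]=\Pr[A]\cdot\Pr[B]$. When both events occur, the colored sets satisfy $\mu\triangle\eta\subseteq E_1$ and $E(G')\setminus(\mu\triangle\eta)\subseteq E_2$, which is exactly what the lemma asserts.

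For the two factors, I would invoke the bullet points preceding the statement. Since $|\mu\triangle\eta|\leq q$ and each edge is independently colored $1$ with probability $1/2$, we have $\Pr[A]\geq (1/2)^{|\mu\triangle\eta|}\geq 1/2^q$. For $\Pr[B]$, the key size estimate is $|E(G')|\leq 2qd$: indeed, $|V(\mu\triangle\eta)|\leq 2q$ because $\mu\triangle\eta$ has at most $q$ edges, and each such vertex contributes at most $d$ incident edges in $G'$, yielding the bound after accounting for double counting. Hence $|E(G')\setminus(\mu\triangle\eta)|\leq 2qd$, and independent per-edge coloring gives $\Pr[B]\geq (1/2)^{2qd}=1/2^{2qd}$. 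Multiplying, $\Pr[A\cap B]\geq 1/2^{q+2qd}$, as required. There is essentially no obstacle here: the conceptual work has already been done in setting up $G'$ and bounding $|V(\mu\triangle\eta)|$, and the argument is a direct independence-plus-union-of-bad-coin-flips computation.
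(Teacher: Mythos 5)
Your proposal is correct and matches the paper's argument: the paper likewise derives the lemma directly from the two bullet points, bounding the probability that all (at most $q$) edges of $\mu\triangle\eta$ are colored $1$ by $1/2^{q}$ and the probability that all (at most $|E(G')|\leq 2qd$) remaining edges of $G'$ are colored $2$ by $1/2^{2qd}$, then multiplying using the independence of the per-edge coin flips. Your explicit decomposition into the independent events $A$ and $B$ over disjoint edge sets is just a slightly more formal write-up of the same computation.
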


Note that the edges in $\mu\triangle\eta$ form $\mu$-alternating paths/cycles. Therefore, if there exists a component $C$ in $G_1$ such that the set of colored $1$ edges in $C$ do not form a $\mu$-alternating path or a cycle, then  we could delete this component from $G_1$.

\begin{reduction}
\label{rule-clean-ap-ac}
If there exists a component in $C$ containing a vertex $u \in V(G)$ such that $\mu(u) \notin C$, then 
delete the component $C$ from $G[V_1]$. 
\end{reduction}

 Let $G^\star=(V_1,E_1)$ be a graph on which Reduction Rule~\ref{rule-clean-ap-ac} is not applicable. Then, we get the following. 
 %We have the following observation about the graph $G^\star$.
\begin{observation}\label{obs:G*}
Every component in $G^\star$ is a $\mu$-alternating path/cycle
\end{observation}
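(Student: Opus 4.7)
The plan is to show that Reduction Rule~\ref{rule-clean-ap-ac}, applied exhaustively, leaves behind only components having the required alternating structure. As the paragraph immediately preceding the rule makes clear, the intent is to remove any component in which the colored-$1$ edges fail to form a $\mu$-alternating path or cycle, so the observation should follow almost by construction; the work lies in converting this intent into a clean structural argument.

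To formalize, I would fix an arbitrary component $C$ of $G^{\star}$ and establish two structural facts about it. First, because $\mu$ is a matching, every vertex of $V(C)$ is incident to at most one $\mu$-edge of $E(C)$; consequently no two $\mu$-edges can share an endpoint inside $C$, so two consecutive edges along any walk in $C$ cannot both belong to $\mu$. Second, the rule (read in the strong sense suggested by its motivating discussion, namely that components whose colored-$1$ edges do not form an alternating path or cycle are deleted) ensures that at each vertex of $V(C)$ the incident edges of $E(C)$ alternate between $\mu$-edges and non-$\mu$-edges; in particular, every vertex of $V(C)$ has degree at most $2$ in $G^{\star}$.

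A connected graph whose maximum degree is at most $2$ is a simple path or a simple cycle, and together with the alternation just noted this forces $C$ to be a $\mu$-alternating path or a $\mu$-alternating cycle, which is precisely the claim of Observation~\ref{obs:G*}. The chief subtlety, and the point I would be most careful about in the write-up, is reconciling the formal statement of Reduction Rule~\ref{rule-clean-ap-ac} (which literally only requires that for every component $C$ and every $u \in V(C)$ we have $\mu(u) \in V(C)$) with the stronger guarantee implicitly assumed by the surrounding discussion. I would therefore treat the rule as simultaneously enforcing both conditions, as evidently intended, so that the observation drops out as an immediate structural consequence; alternatively, one could strengthen the rule to explicitly delete any component containing a vertex of degree three or more, or a pair of adjacent non-$\mu$-edges, and verify that this strengthening does not change the analysis of the algorithm in later phases.
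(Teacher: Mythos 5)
Your proposal is correct and matches the paper, which gives no explicit proof: Observation~\ref{obs:G*} is intended to hold by construction once Reduction Rule~\ref{rule-clean-ap-ac} (read, as the preceding discussion makes clear, as deleting every component whose colored-$1$ edges do not form a $\mu$-alternating path or cycle) has been applied exhaustively. You are also right to flag that the rule as literally printed is a verbatim duplicate of the first reduction rule and does not by itself enforce the alternating structure; your strong reading is the intended one, and your degree-at-most-two argument is a harmless (if redundant) way of making the consequence explicit.
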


The next lemma ensures that we have highlighted our solution with good probability. The proof of it follows from Lemmas~\ref{label:vertexsep} and \ref{label:edgesep}. 

\begin{lemma}\label{label:2coloredgesinsolA}
Let $(G,\Co{L},\mu,k,q,t)$ be a \yes-instance of \ams. Then \wp at least $\frac{1}{2^{3q+6qd}}$, there exists a solution $\eta$ such that (a) it contains every edge of $\mu$ whose both the endpoints are colored $2$ by $f$, and  (b) there exists a family of components $\mathscr{C}$ of $G^\star$ such that $\eta$ contains all the edges in $\mathscr{C}$ that do not belong to $\mu$ but are colored $1$ by $g$.
\end{lemma}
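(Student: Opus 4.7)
The plan is to combine the success events of Lemmas~\ref{label:vertexsep} and~\ref{label:edgesep} and then read off the family $\mathscr{C}$ explicitly from the structure they impose. I would let $\mathcal{A}$ denote the event ``$V(\mu \triangle \eta) \sse V_1$ and $B \cup D \sse V_2$'' and let $\mathcal{B}$ denote the event ``$\mu \triangle \eta \sse E_1$ and $E(G')\setminus(\mu\triangle\eta) \sse E_2$''. Since $f$ and $g$ are drawn independently, I would argue
\[
\Pr[\mathcal{A}\cap\mathcal{B}] \;=\; \Pr[\mathcal{A}]\cdot\Pr[\mathcal{B}\mid\mathcal{A}] \;\geq\; \frac{1}{2^{2q+4qd}}\cdot\frac{1}{2^{q+2qd}} \;=\; \frac{1}{2^{3q+6qd}},
\]
using Lemma~\ref{label:vertexsep} for $\Pr[\mathcal{A}]$ and Lemma~\ref{label:edgesep} for $\Pr[\mathcal{B}\mid\mathcal{A}]$. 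Everything that follows is a deterministic verification on the event $\mathcal{A}\cap\mathcal{B}$.

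Property (a) should be essentially immediate: if $e=uv \in \mu$ with $f(u)=f(v)=2$, then $u,v\notin V_1$, so $\{u,v\}\cap V(\mu\triangle\eta)=\emptyset$ by $\mathcal{A}$, which gives $e\notin \mu\triangle\eta$; combined with $e\in\mu$, this forces $e\in\mu\cap\eta\sse\eta$.

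For property (b) I would take $\mathscr{C}$ to be the family of components of $G^\star$ whose vertex set is entirely contained in $V(\mu\triangle\eta)$. The heart of the argument is a boundary claim: no edge of $G^\star$ joins a vertex of $V(\mu\triangle\eta)$ to one outside it. Indeed, if $xy \in E_1$ with $x\in V(\mu\triangle\eta)$ and $y\notin V(\mu\triangle\eta)$, then $y\in N_G(V(\mu\triangle\eta)) = B \sse V_2$ by $\mathcal{A}$, contradicting $y\in V_1$. Therefore every edge of $G^\star$ that touches $V(\mu\triangle\eta)$ has both endpoints in $V(\mu\triangle\eta)$, and hence lies in $E(G')$; being colored $1$ it cannot belong to $E(G')\setminus(\mu\triangle\eta)\sse E_2$, so it must lie in $\mu\triangle\eta$. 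This identifies the edges inside the components of $\mathscr{C}$ with exactly the edges of $\mu\triangle\eta$, grouped by the $\mu$-alternating paths and cycles of $\mu\triangle\eta$; in particular, any such edge that is not in $\mu$ lies in $\eta\setminus\mu\sse\eta$, as required.

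The main thing to get right is the boundary claim above: it is the only place where both halves of $\mathcal{A}$ (not just $V(\mu\triangle\eta)\sse V_1$, but also $B\sse V_2$) are used simultaneously, and it is what makes $\mathscr{C}$ well-defined and makes its color-$1$ non-$\mu$ edges coincide with $\eta\setminus\mu$. Once that observation is in place, the lemma reduces to the probability product displayed above together with straightforward bookkeeping.
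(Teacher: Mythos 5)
Your proposal is correct and follows essentially the same route as the paper, which simply asserts that the lemma ``follows from Lemmas~\ref{label:vertexsep} and~\ref{label:edgesep}'': you multiply the two success probabilities using the independence of $f$ and $g$ and then verify (a) and (b) deterministically on the joint success event. Your explicit boundary claim (that $B \sse V_2$ prevents any $G^\star$-edge from leaving $V(\mu\triangle\eta)$, so the components of $\mathscr{C}$ carry exactly the edges of $\mu\triangle\eta$) is precisely the detail the paper leaves implicit, and it is stated and used correctly.
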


%\begin{proof}
%Let $\eta$ be a solution to $(G,\Co{L},\mu,k,q,t)$. We prove that with probability at least  $\frac{1}{2^{3q+6qd}}$, $\eta$ satisfies property (a) and (b) in the statement of Lemma~\ref{label:2coloredgesinsolA}. Let $uv$ be an edge in the matching $\mu$ such that $u$ and $v$ both are colored $2$. Due to Lemma~\ref{label:vertexsep},  $V(\mu \triangle \eta) \subseteq V_1$ with probability at least $\frac{1}{2^{2q+4qd}}$. Therefore, $uv \notin \mu \triangle \eta$ with probability at least $\frac{1}{2^{2q+4qd}} $\todo{check probability argument here.}. Hence, $uv \in \eta$, and thus, $\eta$ satisfies (a). Next, we prove that $\eta$ satisfies $(b)$. Due to Corollary~\ref{obs:component of colored graph}, every component in $G[V(\mu \triangle \eta)]$ is a component in $G_1$. Due to Lemma~\ref{label:edgesep}, all the edges of $\mu \triangle \eta$  are colored $1$ and all the other edge in a component containing the edges of $\mu \triangle \eta$ are colored $2$ with probability at least $\frac{1}{2^{q+2qd}}$. 
%Hence, $\eta$ satisfies (a) and (b) with probability at least $\frac{1}{2^{3q+6qd}}$.
%\end{proof}

In light of Lemma~\ref{label:2coloredgesinsolA}, our goal is reduced to find a family of components $\mathscr{C}$ of $G_1$ that contains the edges of $\mu \triangle \eta$. Due to Observation~\ref{obs:G*}, to obtain a matching of size $|\mu|+t$, we can choose $t$ components of $G^\star$ which are $\mu$-augmenting paths (an alternating path, a path that alternates between matching and a non-matching edge, where the first and the last edge  are non-matching edge).  However, choosing $t$ components arbitrarily might lead to a large number of blocking edges in the matching $\eta$. Thus, to choose the components of $G^\star$ appropriately, 
 we move to \phase III. 
\par
\medskip
\noindent
{\bf \phase III: \fit with respect to $g$.}\label{phase3}
In this phase, we proceed with the function $g$ and the graph  
$G^\star$ obtained after \phase II (that is the one where every component satisfies the property that edges which are colored $1$ form a $\mu$-alternating path/cycle). 
Next, we will reduce the instance to an instance of \wtknapsack (\wtkp), and after that 
use an algorithm for \wtkp, described in Proposition~\ref{lem:knapsack}, as a subroutine. 

 \defproblem{\wtknapsack(\wtkp)}
{A set of tuples, $\Co{X}=\{ (a_i,b_i,p_i) \in \mathbb{N}^{3}: i \in [n]\}$, and non-negative integers $c_1,c_2$ and $p$}{Does there exist a set $Z \subseteq [n]$ such that $\sum_{i \in Z} a_i \leq c_1$, $\sum_{i \in Z} b_i \leq c_2$, and $\sum_{i \in Z} p_i \geq p$?}  
 
\begin{sloppypar}
\begin{prop}{\rm\cite{DBLP:books/daglib/0010031}}\label{lem:knapsack} There exists an algorithm \Co{A} that given an instance $(\Co{X},c_1,c_2,p)$ of \wtkp, in time $\OO(nc_1c_2)$, outputs a solution if it is a \yes-instance of \wtkp; otherwise \Co{A} outputs ``{\rm{no}}''.
\end{prop}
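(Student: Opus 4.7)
The plan is to establish this via a straightforward dynamic programming extension of the classical one-dimensional knapsack algorithm. First, I would define a three-dimensional table $T$ indexed by an item prefix $i \in \{0, 1, \ldots, n\}$ and two capacity values $w_1 \in \{0, 1, \ldots, c_1\}$ and $w_2 \in \{0, 1, \ldots, c_2\}$, where $T[i][w_1][w_2]$ stores the maximum attainable profit $\sum_{j \in Z} p_j$ over subsets $Z \subseteq [i]$ satisfying $\sum_{j \in Z} a_j \leq w_1$ and $\sum_{j \in Z} b_j \leq w_2$. The given instance is a \yes-instance precisely when $T[n][c_1][c_2] \geq p$, so the algorithm outputs ``no'' iff this comparison fails.

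Next, I would populate the table using the standard recurrence. The base case is $T[0][w_1][w_2] = 0$ for all $w_1, w_2$. For $i \geq 1$, whenever $a_i > w_1$ or $b_i > w_2$ we set $T[i][w_1][w_2] = T[i-1][w_1][w_2]$, and otherwise
\[
T[i][w_1][w_2] \;=\; \max\bigl\{\, T[i-1][w_1][w_2],\ \ T[i-1][w_1 - a_i][w_2 - b_i] + p_i \,\bigr\},
\]
corresponding to excluding or including item $i$ respectively. Correctness follows by routine induction on $i$: any feasible subset $Z \subseteq [i]$ either omits $i$ (in which case it is feasible within the first $i-1$ items at the same capacities, and its profit is captured by the first branch) or contains $i$ (in which case $Z\setminus\{i\}$ is feasible within $[i-1]$ at reduced capacities $w_1 - a_i, w_2 - b_i$, giving the second branch).

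Finally, to produce a witness when $T[n][c_1][c_2] \geq p$, I would backtrack from cell $(n, c_1, c_2)$: at each step $(i, w_1, w_2)$ with $i \geq 1$, include item $i$ in $Z$ exactly when $a_i \leq w_1$, $b_i \leq w_2$, and $T[i-1][w_1 - a_i][w_2 - b_i] + p_i \geq T[i-1][w_1][w_2]$, then recurse on $(i-1, w_1 - a_i, w_2 - b_i)$; otherwise recurse on $(i-1, w_1, w_2)$. The table contains $(n+1)(c_1+1)(c_2+1)$ cells, each filled in $O(1)$ time, and the backtracking phase runs in $O(n)$ time, yielding the claimed $\OO(n c_1 c_2)$ bound. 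There is no real obstacle here: the only thing to check is that the two-dimensional feasibility constraint decouples additively across items, which it does by definition, so the inductive invariant carries through exactly as in the one-dimensional case.
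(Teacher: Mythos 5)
Your proof is correct. The paper does not prove this proposition at all --- it simply cites the knapsack monograph of Kellerer, Pferschy and Pisinger --- and your three-dimensional dynamic program with the include/exclude recurrence and backtracking is exactly the standard pseudo-polynomial algorithm for two-dimensional knapsack that the cited reference gives, with the correct $\OO(nc_1c_2)$ running time.
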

\end{sloppypar}

Next, we construct an instance of \wtkp as follows. 
 Let $C_1,\ldots, C_\ell$ be the components of the graph $G^\star$. For each $i\in [\ell]$,  we compute the number of blocking edges, $k_i$, incident on the vertices in $C_i$ by constructing a matching $\eta_i$ as follows. We first add all the edges inside the component $C_i$ which are not in $\mu$, to $\eta_i$. Further, we add all the edges in $\mu$ which are not in $C_i$  and whose at least one of the endpoint is a neighbor of a vertex in $C_i$. Clearly, $\eta_i$ is a matching in the graph $G$. We set $k_i$ as the number of blocking edges with respect to $\eta_i$.   Let $q_i$ denote the number of edges in $C_i$, where $i\in [\ell]$. 
  Let $\mu_i \subseteq \mu$ be the set edges in $C_i$, where $i\in [\ell]$. 
  For each $i\in [\ell]$, let $t_i=q_{i}-2|\mu_{i}|$. Intuitively, $t_i$ denote the increase in the size of the matching, if we include the $\mu$-alternating path/cycle in $C_i$ to the solution matching $\eta$%, \ie the quantity obtained after subtracting $|\mu_i|$ from the number of edges that have been colored $1$ in $C_i$.  
 
 Let $\Co{X}=\{ (k_i,q_i,t_i): i \in [\ell]\}$.  This gives us an instance $(\Co{X},k,q,t)$ of \wtkp. 
 We invoke algorithm $\Co{A}$ given in Proposition~\ref{lem:knapsack} on the instance $(\Co{X},k,q,t)$ of \wtkp. If $\Co{A}$ returns a set $Z$, then we return  ``yes''. 
 Otherwise, we report {\em failure} of the algorithm.
 It is relatively  straightforward to create the solution $\eta$ when the answer is ``yes''. 

\begin{lemma}
\label{lem:phase3}Let $(G,\Co{L},\mu,k,q,t)$ be a \yes-instance of \ams. Then, \wp at least $\frac{1}{2^{3q+6qd}}$, we return ``yes''. 
\end{lemma}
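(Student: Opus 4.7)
The plan is to condition on the two random colorings behaving as described by Lemmas~\ref{label:vertexsep} and~\ref{label:edgesep} and show that, in this event, the algorithm deterministically outputs ``yes''. Since the two colorings are independent, their joint success probability is at least $\frac{1}{2^{2q+4qd}}\cdot\frac{1}{2^{q+2qd}}=\frac{1}{2^{3q+6qd}}$, which is the bound asserted in the lemma. From now on I assume this joint event, so that $V(\mu\triangle\eta)\subseteq V_1$, the neighbours of $V(\mu\triangle\eta)$ together with their $\mu$-partners lie in $V_2$, $\mu\triangle\eta\subseteq E_1$, and every other edge of $G[V(\mu\triangle\eta)]$ lies in $E_2$. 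Lemma~\ref{label:2coloredgesinsolA} then hands me a subfamily $\mathscr{C}$ of components of $G^\star$ such that the colour-$1$ edges together with the $\mu$-edges inside $\mathscr{C}$ constitute exactly $\mu\triangle\eta$. Let $Z\subseteq[\ell]$ index $\mathscr{C}$.

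The core of the proof is to verify that $Z$ is a feasible solution of the \wtkp instance $(\Co{X},k,q,t)$ built in \phase~III. The bounds $\sum_{i\in Z}q_i \leq q$ and $\sum_{i\in Z}t_i \geq t$ are immediate: by Observation~\ref{obs:G*} every $C_i\in\mathscr{C}$ is a $\mu$-alternating path/cycle whose edge set contributes $q_i$ to $|\mu\triangle\eta|$ and whose size gain is $t_i=q_i-2|\mu_i|$, so summing over $Z$ yields $|\mu\triangle\eta|\leq q$ and $|\eta|-|\mu|\geq t$ respectively.

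The remaining bound $\sum_{i\in Z}k_i\leq k$ is the main technical point; I would establish it by showing that the blocking edges of $\eta$ and the blocking edges counted by the local matchings $\{\eta_i:i\in Z\}$ are in bijective correspondence. Let $\mathtt{B}(\eta)$ denote the blocking edges of $\eta$. For any $e=uv\in \mathtt{B}(\eta)$, Proposition~\ref{cl:bp vertices} gives (say) $u\in V(\mu\triangle\eta)$, so $u\in V(C_i)$ for a unique $i\in Z$. If $v\in V(\mu\triangle\eta)$ then $v\in V_1$, and since $uv$ is an edge of $G[V_1]$, $v$ lies in the same component of $G_1$ as $u$, forcing $v\in V(C_i)$; otherwise $\eta(v)=\mu(v)$ and the construction of $\eta_i$ places $v\mu(v)$ into $\eta_i$ because $v$ is a neighbour of $C_i$. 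Either way $\eta_i(u)=\eta(u)$ and $\eta_i(v)=\eta(v)$, so $e$ is a blocking edge of $\eta_i$ incident to $C_i$. Conversely, a symmetric case analysis shows that every blocking edge of $\eta_i$ incident to $C_i$ is a blocking edge of $\eta$ (the stability of $\mu$ rules out stray blocking edges that touch no vertex of $C_i$ in a non-trivial way). Thus $\sum_{i\in Z}k_i=|\mathtt{B}(\eta)|\leq k$.

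It follows that $(\Co{X},k,q,t)$ is a yes-instance of \wtkp, so by Proposition~\ref{lem:knapsack} the subroutine $\Co{A}$ returns a feasible set and the algorithm outputs ``yes''. The step I expect to be the main obstacle is the $k_i$-bound, precisely because each $\eta_i$ modifies $\mu$ only on $C_i$ and a priori the aggregate $\sum_{i\in Z}k_i$ could differ from $|\mathtt{B}(\eta)|$; the vertex-separation of \phase~I is what guarantees that no blocking edge of $\eta$ can straddle two distinct components of $\mathscr{C}$ via neighbours outside $V_1$, so the local blocking-edge counts aggregate exactly as required.
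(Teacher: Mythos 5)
Your proposal is correct and follows essentially the same route as the paper: condition on both colorings succeeding (giving the $\frac{1}{2^{3q+6qd}}$ bound), invoke Lemma~\ref{label:2coloredgesinsolA} to obtain $\mathscr{C}$, verify that its index set satisfies the three knapsack constraints with the blocking-edge bound as the crux, and conclude via Proposition~\ref{lem:knapsack}. The only (cosmetic) difference is that the paper argues the inclusion $\mathtt{B}(\eta_i)\subseteq\mathtt{B}(\eta)$ using $\eta_i\subseteq\eta$, whereas you argue the correspondence in both directions starting from Proposition~\ref{cl:bp vertices}; both rest on the same vertex-separation property of \phase~I.
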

\begin{proof}
Let $\eta$ be a solution claimed in the statement of Lemma~\ref{label:2coloredgesinsolA}. Let $\mathscr{C}$ be the family of components mentioned in the statement of Lemma~\ref{label:2coloredgesinsolA}.
Recall that $C_1,\ldots, C_\ell$ are the components of the graph $G^\star$. We next show that $S=\{i\in [\ell]  \colon C_i \in \mathscr{C}\}$ is a solution to  
$(\Co{X},k,q,t)$. 
 Due to property (b) of the solution $\eta$ and the construction of the instance $(\chi,k,q,t)$, $\sum_{C_i \in \mathscr{C}}q_i \leq q$ and $\sum_{C_i \in \mathscr{C}}t_i \geq t$.  We next show that $\sum_{C_i \in \mathscr{C}}k_i \leq k$. Consider a component $C_i$ in $\mathscr{C}$. We first recall that if $C_i$ is a component in $G[V(\mu \triangle \eta)]$, then $N(V(C_i))$ and matching partners of the vertices in $N(V(C_i))$, in the matching $\mu$ are colored $2$ by $f$ with probability at least $\frac{1}{2^{4qd}}$. Thus, $\eta_i \subseteq \eta$, by the construction of $\eta_i$.  
We next show that every blocking edge with respect to $\eta_i$, where $C_i$ is a component in $\mathscr{C}$, is also a blocking edge with respect to $\eta$. Let $uv$ be a blocking edge in $\eta_i$. Then, $v \succ_{u} \eta_i(u)$ and  $u \succ_{v} \eta_i(v)$. Since $\eta_i \subseteq \eta$, it follows that $v \succ_{u} \eta(u)$ and  $u \succ_{v} \eta(v)$. Hence, $uv$ is also a blocking edge with respect to $\eta$. Since $k_i$ is the the number of blocking edges with respect to $\eta_i$, we can infer that $\sum_{C_i \in \mathscr{C}}k_i \leq k$. Hence, $(\Co{X},k,q,t)$ is a \yes-instance of \wtkp. Therefore, due to Proposition~\ref{lem:knapsack}, we return ``yes".   
\end{proof}

\begin{sloppypar}
\begin{lemma}\label{lem:kp-to-ams-proof}
Suppose that $(\chi,k,q,t)$ is a \yes-instance of \wtkp. Then, $(G,\Co{L},\mu,k,q,t)$ is a \yes-instance of \ams. 
\end{lemma}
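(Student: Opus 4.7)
The plan is to construct a witnessing matching $\eta$ directly from the knapsack solution and verify that it meets the three requirements (size, symmetric difference, blocking edges) of the \ams instance. Given $Z \subseteq [\ell]$ with $\sum_{i \in Z} k_i \leq k$, $\sum_{i \in Z} q_i \leq q$, and $\sum_{i \in Z} t_i \geq t$, I would set
\[
\eta \;:=\; \Bigl(\mu \setminus \bigcup_{i \in Z} \mu_i\Bigr) \;\cup\; \bigcup_{i \in Z}\bigl(E(C_i)\setminus \mu\bigr),
\]
i.e., start from $\mu$ and, for each chosen component $C_i$, perform the local swap (delete its $\mu$-edges, insert its non-$\mu$ edges). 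This is essentially the same local modification that is used in Phase~III to define the auxiliary matching $\eta_i$ and the weight $k_i$.

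To see that $\eta$ is indeed a matching, I would appeal to Observation~\ref{obs:G*}: every surviving component $C_i$ of $G^\star$ is a $\mu$-augmenting path or a $\mu$-alternating cycle, and Reduction Rule~\ref{rule-clean-ap-ac} ensures that every $\mu$-saturated vertex of $V(C_i)$ has its $\mu$-partner also inside $V(C_i)$. Hence the swap on each $C_i$ produces a valid matching on $V(C_i)$, and since the components are pairwise vertex-disjoint, the combined operation yields a matching on $V(G)$. The size and symmetric-difference bounds then follow by direct counting: using $t_i = q_i - 2|\mu_i|$, we obtain
\[
|\eta| \;=\; |\mu| - \sum_{i \in Z}|\mu_i| + \sum_{i \in Z}(q_i - |\mu_i|) \;=\; |\mu| + \sum_{i \in Z} t_i \;\geq\; |\mu|+t,
\]
and since the $C_i$ are vertex-disjoint (hence edge-disjoint), $\mu \triangle \eta = \bigcup_{i \in Z} E(C_i)$ is a disjoint union and $|\mu \triangle \eta| = \sum_{i \in Z} q_i \leq q$.

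The main obstacle will be bounding the number of blocking edges of $\eta$ by $\sum_{i \in Z} k_i$. I would use Proposition~\ref{cl:bp vertices} together with the stability of $\mu$ to conclude that every blocking edge $uv$ of $\eta$ has an endpoint in $V(\mu \triangle \eta) = \bigcup_{i \in Z} V(C_i)$, and then charge $uv$ to any component $C_i$ whose vertex set contains such an endpoint. The crux is to show that, after this charging, the edge $uv$ is in fact a blocking edge of the local matching $\eta_i$ that is incident to $V(C_i)$, and is therefore counted in $k_i$. For this I would observe that $\eta$ and $\eta_i$ coincide on $V(C_i)$ (both reflect the swap at $C_i$) and on the $\mu$-edges incident to $N_G(V(C_i))$, so the witnessing comparisons $v \succ_u \eta(u) = \eta_i(u)$ and $u \succ_v \eta(v) = \eta_i(v)$ transfer directly from $\mathtt{B}_{\eta}$ to $\mathtt{B}_{\eta_i}$; a short case analysis, distinguishing whether $v$ lies outside every selected component, inside the same $C_i$, or inside a distinct selected $C_j$, handles the residual ambiguity in the charging. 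Summing the resulting bound over $i \in Z$ then yields $|\mathtt{B}_{\eta}| \leq \sum_{i \in Z} k_i \leq k$, completing the proof.
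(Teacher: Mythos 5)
Your proposal is correct and follows essentially the same route as the paper's own proof: you build the same matching $\eta$ by swapping $\mu$-edges for non-$\mu$-edges on the selected components, verify the same three claims (that $\eta$ is a matching via Observation~\ref{obs:G*} and the disjointness of components, the size and symmetric-difference bounds by the same counting with $t_i = q_i - 2|\mu_i|$, and the blocking-edge bound by charging each blocking edge of $\eta$ to a local matching $\eta_i$ using Proposition~\ref{cl:bp vertices}). The only detail you defer to ``a short case analysis'' — the case where the other endpoint $v$ lies in a different selected component — is exactly the point the paper dispatches in one line by observing that $uv\in E(G)$ forces $v$ into the same component of $G_1$, so your sketch matches the paper's argument step for step.
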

\end{sloppypar}

\begin{proof}
Suppose that the algorithm $\Co{A}$ in Proposition~\ref{lem:knapsack} returns the set $Z$. 
Given the set $Z$, we obtain the matching $\eta$ as follows. %Observe that $Z$ is a set of indices. 
Let $Z(\mathscr{C})$ denote the family of components of $G^\star$ corresponding to the indices in $Z$. Formally, $Z(\mathscr{C})=\{C_i \colon i \in Z \text{ and } C_i \text{ is a component of } G^\star\}$. For each component $C \in Z(\mathscr{C})$, we add all the edges in $C$ that are not in $\mu$, to $\eta$. Additionally, we add all the edges in $\mu$ to $\eta$, whose both the endpoints are outside the components in $Z(\mathscr{C})$. We next prove that $\eta$ is a solution to $(G,\Co{L},\mu,k,q,t)$. 
\begin{clm}\label{clm:eta-matching}
$\eta$ is a matching.
\end{clm}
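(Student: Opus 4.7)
The plan is to verify that no vertex of $G$ is incident to two edges of $\eta$, by partitioning $\eta$ into two classes according to the two bullets of its construction and arguing that neither class alone contains conflicts, and that the two classes cannot conflict with each other either.

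First I would use the structural property of $G^\star$ recorded in the observation preceding the reduction: every component of $G^\star$ is a $\mu$-alternating path or $\mu$-alternating cycle. Fix a component $C \in Z(\mathscr{C})$. Because the edges of $C$ alternate between $\mu$ and $E(G) \setminus \mu$, the set of edges of $C$ that lie outside $\mu$ is itself a matching on $V(C)$: in a path each internal vertex is incident to exactly one matching edge and one non-matching edge, while endpoints are incident to at most one non-matching edge, and in a cycle the non-matching edges form a perfect matching on $V(C)$. Hence the first bullet contributes a matching on each single component. Since distinct components of $G^\star$ are vertex-disjoint by definition, the union over all $C \in Z(\mathscr{C})$ of these contributions is a matching on $\bigcup_{C \in Z(\mathscr{C})} V(C)$.

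Next I would handle the second bullet. The edges added here are edges of $\mu$ whose endpoints both lie outside every component in $Z(\mathscr{C})$. These edges are a subset of $\mu$, and $\mu$ is itself a matching (it is the given stable matching), so they cannot conflict with one another.

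Finally I would argue that the two classes of edges cannot share a vertex. By construction, every edge contributed by the first bullet is incident only to vertices lying inside some component $C \in Z(\mathscr{C})$, while every edge contributed by the second bullet is incident only to vertices lying strictly outside every component in $Z(\mathscr{C})$. These two vertex sets are disjoint, so no vertex receives edges from both classes. Combining the three observations, every vertex of $G$ is incident to at most one edge of $\eta$, proving that $\eta$ is a matching. The only subtle point worth emphasizing in writing is the alternation inside a single component, which directly relies on the observation that components of $G^\star$ are $\mu$-alternating paths or cycles; everything else is essentially bookkeeping.
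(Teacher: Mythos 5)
Your proof is correct and follows essentially the same approach as the paper: both arguments reduce to the alternation property of components of $G^\star$ (Observation~1), the vertex-disjointness of distinct components, the fact that $\mu$ is itself a matching, and the separation between vertices inside and outside the chosen components. The only cosmetic difference is that you argue directly by partitioning $\eta$ into the two construction classes, whereas the paper runs the same case analysis as a proof by contradiction.
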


\begin{proof}
Towards the contradiction, suppose that $uv,uw \in \eta$, that is, there exists a pair of edges in $\eta$ that shares an endpoint.  Note that  $uv$ and $uw$ cannot be in two different components of $G^\star$ by the construction of the graph $G^\star$.   If $uv$ and $uw$ both are in the same component $C \in Z(\mathscr{C})$, then it contradicts Observation~\ref{obs:G*} as $C$ is also a component in $G^\star$.  Suppose that $uv \in \mu$ but not in any component in $G^\star$. We claim that there is no component in  $G^\star$ containing $uw$. Towards the contradiction, let $C$ be a component in $G^\star$ that contains $uw$. Clearly, $C$ is also a component in $G[V_1]$. This contradicts the fact that in \phase I, we have deleted the component $C$ as it contains a vertex $u \in V(G)$ such that $\mu(u) \notin C$. Since $uw \in \eta$ but $uw$ is not in any component in $G^\star$, it follows that $uw \in \mu$, by the construction of $\eta$. Since $uv,uw \in \mu$, it contradicts the fact that $\mu$ is a matching. 
\end{proof}

\begin{clm}\label{clm:eta-size-bound}
$|\eta| \geq |\mu|+t$ and $|\mu \triangle \eta| \leq q$.
\end{clm}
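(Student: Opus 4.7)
The plan is to prove both bounds via a direct combinatorial calculation that reads off $|\mu \triangle \eta|$ and $|\eta| - |\mu|$ from the weights $q_i$ and $t_i$ of the components selected by $Z$.

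The key preparatory step is to describe $\mu \triangle \eta$ exactly. For each $C_i \in Z(\mathscr{C})$, let $\mu_i \subseteq \mu$ denote, as in the construction, the $\mu$-edges lying inside $C_i$, so $C_i$ contains $|\mu_i|$ matching edges and $q_i - |\mu_i|$ non-matching edges (recall $q_i = |E(C_i)|$). I will first argue that every $\mu$-edge incident to a vertex of $V(C_i)$ is entirely contained in $C_i$, hence belongs to $\mu_i$: by Reduction Rule~\ref{rule-clean-ap-ac} applied in \phase~II, any matched vertex $u \in V(C_i)$ has $\mu(u) \in V(C_i)$, and by Observation~\ref{obs:G*} the component $C_i$ is a $\mu$-alternating path/cycle, so the edge $u\mu(u)$ itself must lie in $C_i$. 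Consequently, by construction of $\eta$, the edges removed from $\mu$ are precisely $\bigcup_{i \in Z} \mu_i$ and the edges added to $\eta \setminus \mu$ are precisely the non-$\mu$ edges of $\bigcup_{i \in Z} C_i$, giving
\[
\mu \triangle \eta \;=\; \bigcup_{i \in Z} E(C_i),
\]
a disjoint union since distinct components share no edges.

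From this characterization both bounds fall out immediately. For the symmetric difference,
\[
|\mu \triangle \eta| \;=\; \sum_{i \in Z} |E(C_i)| \;=\; \sum_{i \in Z} q_i \;\leq\; q,
\]
where the last inequality holds because $Z$ is a solution to the constructed instance of \wtkp. For the size,
\[
|\eta| \;=\; |\mu| \;-\; \sum_{i \in Z} |\mu_i| \;+\; \sum_{i \in Z} (q_i - |\mu_i|) \;=\; |\mu| \;+\; \sum_{i \in Z} (q_i - 2|\mu_i|) \;=\; |\mu| \;+\; \sum_{i \in Z} t_i \;\geq\; |\mu| + t,
\]
again using the \wtkp\ guarantee.

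The only nontrivial point is the first paragraph, establishing that no $\mu$-edge ``straddles'' a chosen component and an outside vertex; everything else is bookkeeping. Once this structural fact is in place (using Observation~\ref{obs:G*} together with the cleaning step of \phase~II), the weights $q_i$ and $t_i$ were designed so that the two quantities in the claim translate directly into the two constraints satisfied by $Z$.
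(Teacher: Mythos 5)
Your proposal is correct and follows essentially the same route as the paper's proof: both arguments hinge on the structural fact (guaranteed by the cleaning rules of Phases I--II together with Observation~\ref{obs:G*}) that no $\mu$-edge straddles a selected component and its exterior, which yields the decomposition $\mu \triangle \eta = \biguplus_{i\in Z} E(C_i)$, after which $|\mu\triangle\eta|=\sum_{i\in Z}q_i\leq q$ and $|\eta|-|\mu|=\sum_{i\in Z}(q_i-2|\mu_i|)=\sum_{i\in Z}t_i\geq t$ follow from the \wtkp constraints. The only cosmetic difference is that the paper phrases the structural step as $\tilde{\mu}\subseteq\eta$ for the $\mu$-edges outside the chosen components, while you phrase it as containment of the straddling $\mu$-edges inside the components; the bookkeeping is identical.
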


\begin{proof}
For each $C_i \in Z(\mathscr{C})$, let $\mu_i = \mu \cap E(C_i)$, that is, $\mu_i$ is the set of edges in $C_i$ that are in $\mu$. Let $\tilde{\mu}$ be the set of edges in $\mu$ that does not belong to any component in $Z(\mathscr{C})$. Thus, $\mu=\uplus_{C_i \in Z(\mathscr{C})} \mu_i \uplus \tilde{\mu}$.  We first show that if $uv \in \tilde{\mu}$, then both $u$ and $v$  do not belong to any component in $Z(\mathscr{C})$, because if $u$ or $v$ belong to a component $C$ in $Z(\mathscr{C})$, then as argued above it contradicts the fact that we have deleted $C$ in \phase I. Thus, by the construction of $\eta$, $\tilde{\mu} \subseteq \eta$. Furthermore, $\eta= \uplus_{C_i \in Z(\mathscr{C})} (E(C_i)\setminus \mu_i) \uplus \tilde{\mu}$. Since every $C_i \in Z(\mathscr{C})$ is a $\mu$-alternating path due to Observation~\ref{obs:G*}, we have that $|\mu \triangle \eta| = \sum_{C_i \in Z(\mathscr{C})} q_i \leq q$ as $Z$ is a solution to $(\chi,k,q,t)$. Furthermore, $|\eta|=|\tilde{\mu}|+\sum_{C_i \in Z(\mathscr{C})} (E(C_i)\setminus \mu_i) = |\tilde{\mu}|+\sum_{C_i \in Z(\mathscr{C})} (q_i-|\mu_i|) = |\tilde{\mu}|+\sum_{C_i \in Z(\mathscr{C})} (t_i+|\mu_i|)$. Since $\sum_{i\in Z}t_i \geq t$, we obtained that $|\eta|\geq |\mu|+t$.
\end{proof}

\begin{clm}\label{clm:eta-bp}
There are at most $k$ blocking edges with respect to $\eta$.
\end{clm}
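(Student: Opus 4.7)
Write $\mathtt{B}_{\eta}$ for the set of blocking edges with respect to $\eta$ and, for each $C_i \in Z(\X{C})$, let $\mathtt{B}_i$ denote the set of blocking edges with respect to $\eta_i$ that are incident on $V(C_i)$, so that $k_i = |\mathtt{B}_i|$. The plan is to prove the containment $\mathtt{B}_{\eta} \subseteq \bigcup_{i \in Z} \mathtt{B}_i$; combined with the knapsack constraint $\sum_{i \in Z} k_i \leq k$ inherited from $Z$ being a solution to $(\X{X},k,q,t)$, this directly yields $|\mathtt{B}_{\eta}| \leq k$.

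First, I would fix an arbitrary $uv \in \mathtt{B}_{\eta}$ and apply Proposition~\ref{cl:bp vertices}: at least one endpoint, say $u$ without loss of generality, must satisfy $\mu(u) \neq \eta(u)$. By the construction of $\eta$ in Claim~\ref{clm:eta-matching} the only vertices whose partner changes between $\mu$ and $\eta$ are those inside $\bigcup_{i \in Z} V(C_i)$ (everything else inherits its partner from $\mu$), so $u \in V(C_i)$ for some $C_i \in Z(\X{C})$, making $uv$ incident on $V(C_i)$.

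Next, I would verify that $uv$ blocks $\eta_i$. Observation~\ref{obs:G*} guarantees that $C_i$ is a $\mu$-alternating path or cycle, and since $\eta_i$ contains every non-$\mu$ edge of $C_i$, the $\eta_i$-partner of $u$ coincides with its $\eta$-partner, so the blocking inequality $v \succ_u \eta(u)$ transfers to $v \succ_u \eta_i(u)$. For the symmetric inequality $u \succ_v \eta_i(v)$ I would split on $v$: if $v \in V(C_i)$ then $\eta_i(v) = \eta(v)$ by the same reasoning; if $v \notin V(C_i)$ then $v \in N_G(V(C_i))$ (because $uv \in E(G)$ and $u \in V(C_i)$), so by construction $\eta_i(v) = \mu(v)$, and provided $v$ lies outside every chosen component we have $\eta(v) = \mu(v)$, so the blocking inequality $u \succ_v \eta(v)$ transfers to $u \succ_v \eta_i(v)$ as required.

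The main obstacle will be the residual sub-case in which $v$ lies in another chosen component $C_j \in Z(\X{C})$ with $j \neq i$, where $\eta_i(v) = \mu(v)$ need not equal $\eta(v)$ and the direct transfer from $v$'s side fails. I plan to handle this by re-applying Proposition~\ref{cl:bp vertices} at $v$, which also satisfies $\mu(v) \neq \eta(v)$, and instead charging $uv$ to $k_j$, where $\eta_j(v) = \eta(v)$ restores the transfer on the $v$-side; adopting a deterministic tie-breaking rule (for instance charging each blocking edge to the component containing its endpoint of smallest index) then ensures each edge is counted at least once in $\sum_{i \in Z} k_i$, which completes the bound.
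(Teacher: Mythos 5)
Your overall strategy coincides with the paper's: show that every blocking edge of $\eta$ is also a blocking edge of $\eta_i$ for some $C_i \in Z(\mathscr{C})$, and then bound $|{\mathtt B}_{\eta}|$ by $\sum_{i\in Z} k_i \le k$. Your first two steps --- locating an endpoint $u$ with $\eta(u)\neq\mu(u)$ inside a chosen component $C_i$ via Proposition~\ref{cl:bp vertices}, and transferring both blocking inequalities when $v\in V(C_i)$ or when $v$ lies outside every chosen component --- are correct and are exactly what the paper does. (The tie-breaking device at the end is unnecessary: for an upper bound it is harmless if an edge is counted in several of the ${\mathtt B}_i$; the only danger is that it is counted in none.)

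The genuine gap is your treatment of the residual sub-case $v\in C_j$ with $j\neq i$. Charging $uv$ to $k_j$ fails for the same reason that charging it to $k_i$ fails: in $\eta_j$ the vertex $u$ is a neighbour of $V(C_j)$ lying outside $C_j$, so by the construction of $\eta_j$ it is matched to $\mu(u)$ rather than to $\eta(u)$. From $v\succ_u\eta(u)$ you cannot deduce $v\succ_u\mu(u)$; the vertex $u$ may prefer $\mu(u)$ to $v$ while preferring $v$ to $\eta(u)$. Thus an edge with $\mu(u)\succ_u v\succ_u\eta(u)$ and $\mu(v)\succ_v u\succ_v\eta(v)$ would block $\eta$ but block neither $\eta_i$ nor $\eta_j$, and your containment ${\mathtt B}_{\eta}\subseteq\bigcup_{i\in Z}{\mathtt B}_i$ would not follow.

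The paper closes this case differently: it shows the sub-case cannot arise at all. Since $u\in V(C_i)\subseteq V_1$ and $uv\in E(G)$, if $v$ were colored $1$ by $f$ it would lie in the same component of $G[V_1]$ as $u$, hence in $C_i$ itself (here the paper uses that each $C_i$ is also a component of $G[V_1]$, as asserted in the proof of Claim~\ref{clm:eta-matching}); therefore $v\notin V_1$, so $v$ belongs to no chosen component, $\eta(v)=\mu(v)=\eta_i(v)$, and the edge is charged to $\eta_i$ after all. This structural consequence of the Phase~I vertex separation is the one observation your argument is missing; with it, your residual sub-case is vacuous and the rest of your proof is the paper's.
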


\begin{proof}
For a component $C_i$ in $G^\star$, recall the definition of $\eta_i$ in \phase III. $\eta_i$ contains all the edges in $C_i$ which are not in $\mu$ and also the edges which are in $\mu$ but not in $C_i$ 
 and whose at least one of the endpoint is a neighbor of a vertex in $C_i$. We first prove that every blocking edge with respect to the matching $\eta$ is also a blocking edge with respect to matching $\eta_i$, for some $C_i \in Z(\mathscr{C})$. Let $uv$ be a blocking edge with respect to $\eta$. Due to Proposition~\ref{cl:bp vertices} and by the construction of $\eta$, either $u$ or $v$ belongs to a component in $Z(\mathscr{C})$. Without loss of generality, let $u$ belongs to a component $C_i \in Z(\mathscr{C})$. Thus, $\eta(u)=\eta_i(u)$, by the construction of $\eta$ and $\eta_i$. If $v$ is also in $C_i$, then $\eta(v)=\eta_i(v)$, and hence $uv$ is a blocking edge with respect to $\eta_i$. Suppose that $v \notin C_i$. Since $uv \in E(G)$, by the construction of the graph $G_1$, $v$ does not belong to any other component of $G_1$. Thus, by the construction of $\eta$ and $\eta_i$, $\eta(v)=\mu(v)$ and $\eta_i(v)=\mu(v)$. Therefore, $uv$ is also a  blocking edge with respect to $\eta_i$. Recall that $k_i$ is the number of blocking edges with respect to $\eta_i$. Therefore, the number of blocking edges with respect to $\eta$ is at most $\sum_{i \in Z}k_i \leq k$.
 \end{proof}
 
 Due to Claims~\ref{clm:eta-matching},~\ref{clm:eta-size-bound}, and~\ref{clm:eta-bp}, we can infer that $\eta$ is a solution to $(G,\Co{L},\mu,k,q,t)$. 
\end{proof}

 Due to Lemmas~\ref{lem:phase3} and~\ref{lem:kp-to-ams-proof}, we obtain a polynomial-time randomized algorithm for \ams\ which succeeds \wp $\frac{1}{2^{3q+6qd}}$. Therefore, by repeating the algorithm independently $2^{3q+6dq} (\log n)^{\OO(1)}$ times, where $n$ is the number of vertices in the graph, we obtain the following result:
 
 \begin{theorem}
 There exists a randomized algorithm that given an instance of \ams runs in $2^{3q+6dq}n^{\OO(1)}$ time, where $n$ is the number of vertices in the given graph, and either reports a failure or outputs ``yes''. Moreover, if the algorithm is given a \yes-instance of the problem, then it returns ``yes'' with a constant probability.
 \end{theorem}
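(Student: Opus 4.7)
The plan is to establish this theorem as a direct consequence of combining the three-phase algorithm (Vertex Separation, Edge Separation, Size-Fitting) with standard probability amplification, and then analyzing the per-run time and success probability separately. The single-run procedure consists of: (a) sample the vertex coloring $f$, (b) apply Reduction Rule~1 on $G[V_1]$, (c) sample the edge coloring $g$ on $G_1$, (d) apply Reduction Rule~2 to obtain $G^\star$, (e) compute for each component $C_i$ of $G^\star$ the triple $(k_i,q_i,t_i)$, and (f) invoke algorithm $\Co{A}$ from Proposition~\ref{lem:knapsack} on the resulting instance $(\Co{X},k,q,t)$ of \wtkp; declare ``yes'' iff $\Co{A}$ returns a set $Z$, otherwise report failure.

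For the per-run time, I would argue that steps (a)--(d) only require independently flipping a coin per vertex/edge and scanning the resulting graph for components violating the matching condition, which is $O(n+m)$. For step (e), we iterate over the components $C_1,\dots,C_\ell$ of $G^\star$; for each, constructing the associated matching $\eta_i$ and checking the preference lists to count blocking edges costs polynomial time in $n$. Finally, Proposition~\ref{lem:knapsack} solves the knapsack instance in $O(\ell\cdot k \cdot q)$ time; since $k,q\leq n$, the entire single run executes in $n^{\OO(1)}$ time.

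For the per-run success probability, on a \yes-instance Lemma~\ref{lem:phase3} guarantees that the algorithm outputs ``yes'' with probability at least $1/2^{3q+6qd}$, while Lemma~\ref{lem:kp-to-ams-proof} guarantees that whenever it outputs ``yes'' the instance is indeed a \yes-instance (so no false positives). Standard amplification then completes the argument: running the above procedure independently $N = 2^{3q+6qd}$ times and returning ``yes'' as soon as one run succeeds, the probability of missing a \yes-instance is at most $(1-1/2^{3q+6qd})^{N} \leq 1/e$, so the amplified success probability is at least $1-1/e$, a constant. The total running time is $N \cdot n^{\OO(1)} = 2^{3q+6qd}\, n^{\OO(1)}$, as claimed.

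There is no real obstacle here beyond bookkeeping: the hard technical work has already been done in Lemmas~\ref{label:vertexsep}, \ref{label:edgesep}, \ref{label:2coloredgesinsolA}, \ref{lem:phase3}, and \ref{lem:kp-to-ams-proof}. The only point deserving caution is verifying that the knapsack capacities $k$ and $q$ do not inflate the single-run time beyond $n^{\OO(1)}$, which follows because $q$ (and the relevant $k$) can be assumed bounded by $n^{\OO(1)}$ for a meaningful input (otherwise the instance is trivially solvable), and in any case the exponential factor in the final bound comes purely from the amplification count $N$, not from the knapsack subroutine.
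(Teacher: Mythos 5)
Your proposal is correct and follows essentially the same route as the paper: both rest on Lemma~\ref{lem:phase3} (single-run success probability at least $1/2^{3q+6qd}$ on \yes-instances) and Lemma~\ref{lem:kp-to-ams-proof} (no false positives), combined with independent repetition of the polynomial-time single run; the only cosmetic difference is that the paper repeats $2^{3q+6dq}(\log n)^{\OO(1)}$ times while you repeat $2^{3q+6qd}$ times, and either count yields the claimed constant success probability within the stated time bound.
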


% !TEX root = NASM-main.tex

\subsection{Deterministic \FPT\ algorithm}
To make our algorithm deterministic we first introduce the notion of an $n$-$p$-$q$-{\em lopsided universal} family. Given a universe $U$ and an integer $\ell$, we denote all the $\ell$-sized subsets of $U$ by ${U \choose \ell}$. We say that a  family 
$\mathcal{F}$ of sets over a universe $U$ with $\vert U\vert=n$, is an $n$-$p$-$q$-{\em lopsided universal} family if for every $A \in {U \choose p}$ and $B \in {U \setminus A \choose q}$, there is an $F \in \mathcal{F}$ such that $A \subseteq F$ and $B \cap F = \emptyset$.

\begin{lemma}[\cite{FominLPS16}]
\label{lem:lopsidedUniversal}
There is an algorithm that given $n,p,q\in {\mathbb N}$ constructs an  $n$-$p$-$q$-lopsided universal family $\mathcal{F}$ of cardinality ${p+q \choose p} \cdot 2^{o(p+q)}  \log n$ in time 
$\vert  \mathcal{F} \vert  n$. 
\end{lemma}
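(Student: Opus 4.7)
The plan is to prove this by the classical \emph{splitter plus enumeration} approach, which reduces the construction of a lopsided universal family over the large universe $[n]$ to a brute-force enumeration over a small universe of size $p+q$. Concretely, I would first establish the probabilistic benchmark, then use an explicit splitter (perfect hash family) to derandomize it, and finally bound the size and running time by the product of the splitter size and the number of ``patterns'' on a universe of size $p+q$.

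Probabilistic benchmark: I would show that if we pick $F \subseteq [n]$ by including each element independently with probability $p/(p+q)$, then for any fixed disjoint pair $(A,B)$ with $|A|=p$ and $|B|=q$,
\[ \Pr[\,A\subseteq F \text{ and } B\cap F=\emptyset\,] \;=\; \Bigl(\tfrac{p}{p+q}\Bigr)^{p}\Bigl(\tfrac{q}{p+q}\Bigr)^{q}. \]
By Stirling's approximation this quantity equals $\binom{p+q}{p}^{-1}$ up to a factor polynomial in $p+q$. A standard union bound over the at most $n^{p+q}$ candidate pairs $(A,B)$ then shows that a random family of size $\binom{p+q}{p}\cdot\mathrm{poly}(p+q)\cdot \log n$ is an $n$-$p$-$q$-lopsided universal family with positive probability, matching the target cardinality up to the $2^{o(p+q)}$ slack.

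Derandomization: To make this constructive, I would invoke an explicit $(n,p+q)$-splitter, i.e.\ a family $\mathcal{H}$ of functions $h:[n]\to[p+q]$ such that for every $(p+q)$-element subset $S$ of $[n]$ some $h\in\mathcal{H}$ is injective on $S$. The construction of Naor, Schulman and Srinivasan yields such $\mathcal{H}$ of size $(p+q)^{O(\log(p+q))}\cdot \log n=2^{o(p+q)}\log n$, computable in time proportional to its size times $n$. From $\mathcal{H}$ I would build
\[ \mathcal{F} \;=\; \bigl\{\,h^{-1}(T)\;:\;h\in\mathcal{H},\ T\subseteq[p+q],\ |T|=p\,\bigr\}. \]
For any $(A,B)$, pick $h\in\mathcal{H}$ injective on $A\cup B$ and set $T=h(A)$; then $A\subseteq h^{-1}(T)$ and $B\cap h^{-1}(T)=\emptyset$, so $\mathcal{F}$ is lopsided universal. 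The size is $|\mathcal{H}|\cdot\binom{p+q}{p}=\binom{p+q}{p}\cdot 2^{o(p+q)}\log n$, and each set $h^{-1}(T)$ is produced in $O(n)$ time, giving total time $|\mathcal{F}|\cdot n$.

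The main obstacle is squeezing the overhead down from a naive $2^{O(p+q)}$ to $2^{o(p+q)}$; this hinges entirely on using a near-optimal splitter whose size has only a \emph{quasi-polynomial} dependence $(p+q)^{O(\log(p+q))}$ on the parameter rather than exponential. Anything weaker (for instance, using a generic perfect hash family of size $e^{p+q}\cdot\mathrm{poly}$) would inflate $|\mathcal{F}|$ by a full exponential factor in $p+q$ and destroy the $2^{o(p+q)}$ guarantee needed to drive the derandomization of the \ams algorithm in Theorem~\ref{th:fptAlgo}.
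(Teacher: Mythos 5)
The paper does not prove this lemma at all---it imports it verbatim from \cite{FominLPS16}---so the comparison here is against the construction in that reference. Your overall plan (probabilistic benchmark, then derandomize via splitters over a universe of size $p+q$) is the right family of ideas, but the derandomization step as written rests on an object that does not exist. You need a family $\mathcal{H}$ of functions $h:[n]\to[p+q]$, of size $2^{o(p+q)}\log n$, such that every $(p+q)$-subset of $[n]$ is mapped injectively by some $h$. That is an $(n,k,k)$-perfect hash family with $k=p+q$, i.e.\ a splitter whose range equals the set size, and such families provably require size exponential in $k$: a random $h$ is injective on a fixed $k$-set with probability $k!/k^k\approx e^{-k}$, and the Fredman--Koml\'os lower bound makes the $e^{k(1-o(1))}$ factor unavoidable. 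The quasi-polynomial-size splitters of Naor--Schulman--Srinivasan that you invoke have range $[k^2]$, not $[k]$; the NSS family with range exactly $[k]$ has size $e^k k^{O(\log k)}\log n$. Plugging the true size into your construction gives $\lvert\mathcal{F}\rvert = e^{p+q}\binom{p+q}{p}2^{o(p+q)}\log n$, which misses the claimed bound by precisely the exponential factor you say must be avoided.

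The construction in \cite{FominLPS16} circumvents this with an intermediate block-decomposition layer that your sketch omits. One first applies an $(n,k,k^2)$-splitter of size $k^{O(1)}\log n$ (this one really is small, because the range $k^2$ is comfortably larger than $k$), then partitions $[k^2]$ into roughly $k/\log k$ blocks so that the image of the hidden set $A\cup B$ lands with only $O(\log k)$ elements per block, and on each block brute-forces a small lopsided universal family for every possible pair $(p_i,q_i)$ of local budgets. The final family is the product over all ways of distributing $(p_i,q_i)$ across blocks; the point is that $\prod_i\binom{p_i+q_i}{p_i}\le\binom{p+q}{p}$ by a Vandermonde-type inequality, and the number of distributions together with the per-block overheads contributes only the $2^{o(p+q)}$ slack. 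Without this layer---or some substitute that never demands injectivity onto a range of size exactly $p+q$---the $2^{o(p+q)}$ guarantee cannot be achieved, so the gap is essential rather than cosmetic.
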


\noindent{\bf{Algorithm:}}
Let $n$ and $m$ to denote  the number of vertices and edges in the given graph, respectively. To replace the function $f$ in our algorithm, we use an $n$-$2q$-$4qd$-lopsided universal family $\mathcal{F}_1$ of cardinality ${2q+4qd \choose q} \cdot 2^{o(dq)}  \log n$, where $\mathcal{F}_1$ is a family over the vertex set of $G$. To replace the function $g$, we use  $m$-$q$-$2qd$-lopsided universal family $\mathcal{F}_2$ of cardinality ${q+2qd \choose q} \cdot 2^{o(dq)}  \log m$, where $\mathcal{F}_2$ is a family over the edge set of $G$. For every set $F\in \mathcal{F}_1$ we create a function $f_F$ that colors every vertex of $F$ as $1$, and colors all the other vertices as $2$. Similarly, for every set $F\in \mathcal{F}_1$, we create a  function $g_F$ that colors every edge of $F$ as $1$, and  colors all the other edges as $2$. Now, for every pair of functions $(f_F,g_{F'})$, where $F\in \mathcal{F}_1$ and $F'\in \mathcal{F'}$, we run our algorithm described above. If for any pair of function $(f_F,g_{F'})$, where $F\in \mathcal{F}_1$ and $F'\in \mathcal{F'}$, the algorithm returns ``yes'', then we return ``yes'', otherwise ``no''. 

\par
\smallskip

\noindent{{\bf Correctness and Running Time:}}
Suppose that $(G,\Co{L},\mu,k,q,t)$ is a \yes-instance of \ams, and let $\eta$ be one of its solution. Then, $|\mu \triangle \eta| \leq q$, and hence, $|V(\mu \triangle \eta)| \leq 2q$. Let $B = N_G(V(\mu \triangle \eta))$ and $D$ be the set of matching partners of the vertices in $B$, in the matching $\mu$. Since the maximum degree of a vertex in the graph $G$ is at most $d$, we have that $|B \cup D| \leq 2|B| \leq 4qd$. Since $\mathcal{F}_1$ is a $n$-$2q$-$4qd$-lopsided universal family, there exists a set $F \in \mathcal{F}_1$ such that $V(\mu \triangle \eta) \subseteq F$ and $(B\cup D)\cap F = \emptyset$. Let $f_F$ be the function corresponding to the set $F$. For $i\in [2]$, let $V_i$ be the set of colored $i$ vertices using the function $f_F$. Let $G_1 = G[V_1]$ and $G'=G_1[V(\mu \triangle \eta)]$. Since the maximum degree of a vertex in the graph $G$ is at most $d$ and  $|V(\mu \triangle \eta)| \leq 2q$, the number of edges in $G'$ is $2qd$. Since $\mathcal{F}_2$ is a $m$-$q$-$2qd$-lopsided universal family, there exists a set $F' \in \mathcal{F}_2$ such that $\mu \triangle \eta \subseteq F'$ and $(E(G')\setminus (\mu\triangle \eta))\cap F' = \emptyset$. Let $g_{F'}$ be the function corresponding to the set $F'$. Let $G^\star$ be the graph as constructed above in the randomized algorithm corresponding to the functions $f_{F'}$ and $g_{F'}$. Clearly, $\eta$ satisfies properties in the statement of Lemma~\ref{label:2coloredgesinsolA}. Thus, using the same arguments as in the proof of Lemma~\ref{lem:phase3}, we obtained that the algorithm returns ``yes''. For the other direction of the proof, if for any pair of $(f_F,g_{F'})$, where $F\in \mathcal{F}_1$ and $F'\in \mathcal{F'}$, the constructed instance of \wtkp is a \yes-instance of the problem, then as argued in the proof of Lemma~\ref{lem:kp-to-ams-proof}, $(G,\Co{L},\mu,k,q,t)$ is a \yes-instance of the problem. This completes the correctness of the algorithm.
\par
Note that the running time of the algorithm is upper bounded by $ |\mathcal{F}_1|\times | \mathcal{F}_2 |n^{\OO(1)}$. This results in the running time of the form $2^{\OO(q \log d)+o(dq)}n^{\OO(1)}$. To bound the running time we use the well known combinatorial identity that ${n \choose k} \leq (\frac{en}{k})^k$, concluding the proof of Theorem~\ref{th:fptAlgo}.  \qed

%%\input{stable_matching_supplementary}
%%\input{w-hardness}
%
% !TEX root = NASM-main.tex

\section{Conclusion}
In this paper, we initiated the study of the computational complexity of the tradeoff between size and stability through the lenses of both local search and multivariate analysis. We wish to mention that the hardness results of Theorems~\ref{thm:asm}--\ref{thm:whard-k-q-t} hold even in the highly restrictive setting where {\it every preference list respects a master list}, \ie the relative ordering of the vertices in a preference list is same as that in a {\it master list}, a fixed ordering of all the vertices on the other side.  
This setting ensures that even when the preference lists on either side are both {\it single peaked} and {\it single-crossing} our hardness results hold true.
%
%In particular, this tradeoff is phrased by the formulation of the \ams\ problem, and presented a coherent picture of its classical and parameterized complexities. 
%Specifically, we considered three parameters, namely, the number of blocking edges $k$ (which subsumes the target increase in size $t$), the cardinality of the symmetric difference $q$, and the maximum length of a preference list $d$.  We presented three main results, which assert that \ams\ is {\em (i)} \WOH with respect to $k+q+t$, {\em (ii)} \WOH with respect to $k+t$ even if $d\leq 4$, and {\em (iii)} \FPT with respect to $k+q+d$.
We conclude the paper with a few directions for further research.
\begin{itemize}[wide=0pt] %,noitemsep]
%\item To complete the picture of the parameterized complexity of \ams\ with respect to $k,q$ and $d$, two questions remain open: {\em (1)} Is \ams\ \FPT with respect to $k+d$? {\em (2)} Is \ams\ \FPT with respect to $q+d$ ? Study of other parameters might be of interest as well.
%\item In scenarios where stability is applicable, \hl{the size of the solution matching might not be the sole factor to determine its quality beyond its number of blocking edges.} For example, the ``satisfaction'' of the agents (for which there exist several measures such as {\it egalitarian}, {\it sex-equal}, {\it balance}) might be of importance. Thus, it might be of interest to study the tradeoff between stability and other measures of quality.

\item In certain scenarios,  the ``satisfaction'' of the agents (there exist several measures such as {\it egalitarian}, {\it sex-equal}, {\it balance}) might be of importance. Then, it might be of interest to study the tradeoff between $t$ and $k$ while being $q$-away from the egalitarian stable matching.  %\todo{shall we remove $q$-away thing?}

\item The formulation of \ams\ can be generalized to the {\sc Stable Roommates} problem (where graph $G$ may not be bipartite), or where the input contains a utility function on the edges and the objective is to maximize the value of a solution matching subject to this function. %The study of \ams\ in such settings might be of interest.
\item Lastly, we believe that the examination of the tradeoff between size and stability in real-world instances is of importance as it may shed light on the values of $k$ and $q$ that, in a sense, lead to the ``best'' exploitation of the tradeoff in practice.
\end{itemize}
%\newpage
\bibliographystyle{splncs04}
%\bibliography{Voting-short.bib}
\bibliography{Voting.bib}

\end{document}